\documentclass[11pt,a4paper]{article}

\usepackage{graphicx}

\usepackage[T1]{fontenc}
\usepackage[utf8]{inputenc}
\usepackage{newunicodechar}
\newunicodechar{̈}{\"{}}
\usepackage{microtype}
\usepackage{booktabs}
\usepackage[margin=2.3cm]{geometry}
\usepackage{todonotes}

\usepackage{amsmath}
\usepackage{mathtools}
\usepackage{amssymb}
\usepackage{amsthm}
\usepackage{bm}
\usepackage{mathrsfs}
\usepackage{dsfont}
\usepackage{thmtools}
\usepackage{xcolor}
\usepackage{float}

\usepackage{tikz}
\usetikzlibrary{arrows,calc,positioning,external}

\usepackage{hyperref}

\usepackage{caption}
\usepackage{subcaption}
\usepackage{wrapfig}
\usepackage[toc,page]{appendix}

\usepackage[shortlabels]{enumitem}

\usepackage[
  backend=biber,
  style=alphabetic,
  maxbibnames=40,
  doi=false,
  isbn=false,
  url=false,
  eprint=false
]{biblatex}
\theoremstyle{plain}
\newtheorem{theorem}{Theorem}[section]
\newtheorem{proposition}[theorem]{Proposition}
\newtheorem{lemma}[theorem]{Lemma}
\newtheorem{corollary}[theorem]{Corollary}

\theoremstyle{definition}
\newtheorem{definition}[theorem]{Definition}
\newtheorem{assumption}[theorem]{Assumption}
\newtheorem{remark}[theorem]{Remark}
\newtheorem{fact}{Fact}
\newcommand{\R}{\mathbb{R}}
\newcommand{\N}{\mathbb{N}}
\newcommand{\E}{\mathbb{E}}

\DeclareMathOperator{\Var}{Var}
\DeclareMathOperator{\Cov}{Cov}

\newcommand{\op}{\mathrm{op}}
\DeclareMathOperator{\tr}{tr}

\newcommand{\hess}{\nabla^2}
\DeclarePairedDelimiter\tond{(}{)}
\providecommand{\Tilde}{\widetilde}
\newcommand{\pete}[1]{}
\newcommand{\g}{\mathsf{g}}                
\newcommand{\gsub}[1]{\mathsf{g}_{#1}}

\title{Theoretical guarantees for stochastic gradient sampling methods via Gaussian convolution inequalities}
\author{%
  Daniel Paulin\thanks{These authors contributed equally.}
  \and
  Peter A.\ Whalley\footnotemark[1]
}

\begin{document}

\maketitle

\begin{abstract}
We derive first-order (in the stepsize) bounds on the bias in Wasserstein distances of the invariant measure of stochastic gradient kinetic Langevin dynamics with minimal assumptions on the stochastic gradient noise. These bounds sharpen existing non-asymptotic guarantees for stochastic-gradient MCMC methods and provide a quantitative resolution of a previously open problem on invariant measure accuracy. The main technical ingredients are new Gaussian convolution inequalities controlling the Wasserstein-$p$ distance between a Gaussian convolved with a mean-zero perturbation and the Gaussian itself. We anticipate that these inequalities will be of independent interest beyond the present application. To complement the theory, we illustrate the benefits of stochastic gradient sampling methods on a one-dimensional model, a Bayesian logistic regression example and a Bayesian random-effect model including comparison with pseudo-marginal MCMC methods.
\end{abstract}

\tableofcontents

\section{Introduction}
Sampling from a target probability distribution $\pi$ on $\mathbb{R}^d$ is a core computational task in modern statistics and applied probability, underpinning Bayesian inference, uncertainty quantification, and inverse problems \cite{Roberts2004,liu2001monte}. In many contemporary applications $\pi$ is high-dimensional and available only up to a normalising constant, making direct sampling infeasible. Markov chain Monte Carlo (MCMC) methods address this by simulating a Markov chain $(X_k)_{k\geq 0}$ whose distribution approaches $\pi$ (or an approximation of $\pi$) as $k\to\infty$. For such methods, two questions are central: how quickly the chain converges to its limiting measure, and how close that limiting measure is to the intended target $\pi$, as a function of the stepsize and problem parameters such as dimension and conditioning.

A widely used family of MCMC algorithms exploit first-order structure through the potential representation $\pi(dx)\propto \exp(-V(x))dx$ and the use of $\nabla V$. Such gradient-based schemes, including Langevin-type algorithms and their Metropolis-adjusted variants, often exhibit favourable scaling with dimension and conditioning, and are therefore a natural default in large-scale statistical computation \cite{roberts1998optimal,dalalyan2017theoretical,durmus2017nonasymptotic,che-book-2024+}.

The simplest gradient-based samplers are based on overdamped Langevin dynamics. Given a potential $V:\mathbb{R}^d\to\mathbb{R}$ and initialization $\mu\in\mathcal{P}(\mathbb{R}^d)$, overdamped Langevin dynamics is defined by the solution to 
\begin{equation}\label{eq:ovld}
    X_0\sim \mu, \quad dX_t = -\nabla V(X_t)dt + \sqrt{2}dW_t,
\end{equation}
where $(W_t)_{t\geq 0}$ is a standard $d$-dimensional Brownian motion. Under standard conditions on $V$, \eqref{eq:ovld} is ergodic with invariant measure having density proportional to $\exp(-V(x))$ (see \cite{pavliotis2014stochastic}), i.e. the target measure $\pi$. In practice one discretises \eqref{eq:ovld}; the resulting Markov chain is then used as an approximate sampler. The discrepancy between the invariant measure of the discretisation and $\pi$ is commonly referred to as the asymptotic bias \cite{DuEb24} or perfect sampling bias \cite{lemast2016}.

A recurring computational bottleneck, however, is the evaluation of $\nabla V$. In many statistical models $V$ decomposes as a sum of many contributions (for instance, a sum over observations), so that computing $\nabla V$ (and, for Metropolis corrections, evaluating $V$) can dominate the per-iteration cost. A standard remedy is to replace $\nabla V$ by an unbiased stochastic approximation $\mathcal{G}$, typically obtained by subsampling \cite{robbins1951stochastic}. This leads to stochastic-gradient MCMC methods, with stochastic gradient Langevin dynamics (SGLD) \cite{welling2011bayesian} as a canonical example.

In stochastic-gradient methods, approximation errors enter at two distinct levels. First, time discretisation perturbs the invariant distribution, so the discretised chain targets only an approximation to $\pi$. Second, replacing $\nabla V$ by an unbiased stochastic approximation introduces an additional perturbation in the invariant measure. We refer to the resulting contribution to the invariant measure error, and to the induced error in ergodic averages, as stochastic gradient bias. Understanding how this additional error scales with the stepsize and the stochastic-gradient noise is important for principled algorithm design and for assessing the reliability of stochastic-gradient MCMC in large-data settings.

In this paper, we focus on gradient-based samplers based on kinetic Langevin dynamics (also referred to as underdamped Langevin dynamics), which is defined on an extended state space to include a velocity variable. Given initialization $(X_{0},V_{0}) \sim \mu$, where $\mu \in \mathcal{P}(\mathbb{R}^{2d})$, kinetic Langevin dynamics is instead given by the solution to the following stochastic differential equation:
\begin{equation}\label{eq:kinetic_langevin}
    \begin{split}
    dX_{t} &= V_{t}dt,\\
    dV_{t} &= -\nabla V(X_{t}) dt - \gamma V_{t}dt + \sqrt{2\gamma}dW_{t},
\end{split}
\end{equation}
where $\gamma>0$ is a friction coefficient. Under weak assumptions, the unique invariant measure of the process $\{X_t,V_t\}_{t\geq 0}$ is of the form
\begin{equation}
\label{eq:inv}
\overline{\pi}(dxdv) \propto \exp\left(-V(x)-\frac{\|v\|^2}{2}\right)dxdv.
\end{equation}
Hence, marginally in $x$, one recovers the target measure of interest.

Kinetic Langevin dynamics forms the basis for many modern sampling methods \cite{brunger1984stochastic,leimkuhler2013rational,SGHMC} and often outperforms the overdamped diffusion \eqref{eq:ovld} in terms of convergence rate (see \cite{lu2022explicit,lu2026sharp}). As with overdamped Langevin, practical implementations discretise time, introducing asymptotic bias \cite{cheng2018underdamped,dalalyan2020sampling}; this is frequently ignored or corrected by Metropolization \cite{monmarche2021high}. Suitable discretisations of the kinetic dynamics can also yield smaller asymptotic bias than standard overdamped schemes, with more favourable dependence on key parameters \cite{cheng2018underdamped,dalalyan2020sampling,monmarche2021high,leimkuhler2023contractiona}. Due to the improved convergence rate in combination with computationally cheap integrators with reduced asymptotic bias we focus on gradient-based Monte Carlo methods based on \eqref{eq:kinetic_langevin}.

The aim of this paper is to characterise the additional bias induced by stochastic gradients in Wasserstein distance, establishing the first explicit $\mathcal{O}(h)$ bounds on the invariant measure, in Wasserstein distance, for a stochastic-gradient MCMC algorithm, namely stochastic gradient kinetic Langevin dynamics (sometimes also referred to as a variant of stochastic gradient Hamiltonian Monte Carlo \cite{SGHMC}).

We next recall the unadjusted Langevin algorithm (referred to as ULA and a Euler-Maruyama discretization of \eqref{eq:ovld}), introduce the UBU integrator for \eqref{eq:kinetic_langevin} \cite{alamo2016technique,sanz2021wasserstein,skeel2002impulse}, and describe their stochastic-gradient counterparts, including SGLD \cite{welling2011bayesian}. We also summarise the standard assumptions used to obtain quantitative Wasserstein guarantees, which will be used throughout.

\subsection{Discretised Langevin dynamics}\label{sec:integrators}

\paragraph{Euler-Maruyama discretisation of overdamped Langevin (ULA)}
A standard discretisation of \eqref{eq:ovld} is the Euler-Maruyama scheme: for stepsize $h>0$ and initial law $x_0\sim \mu$,
\begin{equation}\label{eq:ula}
    x_{k+1} = x_k - h\nabla V(x_k) + \sqrt{2h}\xi_{k+1},
\end{equation}
where $(\xi_k)_{k\geq 1}$ are i.i.d.\ $\mathcal{N}(0,I_d)$. The update \eqref{eq:ula} is obtained by freezing the drift and integrating the resulting SDE over a time interval of length $h$. Due to discretisation error, \eqref{eq:ula} is biased: its invariant measure $\pi_h$ (for fixed $h>0$) differs from the target $\pi$. Under appropriate assumptions, one can quantify this discrepancy via the asymptotic bias, i.e.\ the distance between $\pi$ and $\pi_h$. Then, when combined with convergence of $\mu_k:=\mathrm{Law}(X_k)$ to $\pi_h$, this yields quantitative non-asymptotic guarantees between $\mu_k$ and $\pi$.

A common setting for quantitative guarantees is strong log-concavity and log-smoothness, which is given in the following assumption.
\begin{assumption}\label{ass:strong-log-conc-log-smooth}
The potential $V\in C^2\tond*{\R^d}$ satisfies
\[
m I_d \preccurlyeq \hess V \preccurlyeq L I_d,
\]
for some $0<m\leq L$. We denote by $\kappa \coloneqq \frac{L}{m}$ the condition number.
\end{assumption}

For $p\in[1,\infty)$, let $\mathcal{P}_{p}\left(\mathbb{R}^{d}\right)$ denote the set of probability measures with finite $p$-th moment. To quantify the distance between the relevant probability measures (elements of $\mathcal{P}_{p}\left(\mathbb{R}^{d}\right)$) we introduce the Wasserstein distance. 
\begin{definition}\label{def:Wp_Rd}
For $p \in [1,\infty)$,  and  $\mu,\nu \in \mathcal{P}_{p}\left(\mathbb{R}^{d}\right)$, the Wasserstein-$p$ distance between $\mu$ and $\nu$ is defined by
\[
W_{p}\left(\mu,\nu\right) = \left( \inf_{\tau \in \Pi\left( \mu, \nu \right)} \int_{\mathbb{R}^{d} \times \mathbb{R}^{d}}\|z_{1} - z_{2}\|^{p}d\tau\left(z_{1},z_{2}\right)\right)^{1/p},
\]
where $\Pi\left(\mu,\nu\right)$ is the set of couplings of $\mu$ and $\nu$.
\end{definition}

Under Assumption~\ref{ass:strong-log-conc-log-smooth}, \cite{dur-maj-mia-2019} show that for $h\leq 1/L$,
\begin{equation}\label{eq:w2-bias-strong-log-conc}
    W_2\tond*{\pi,\pi_{h}} \lesssim \sqrt{\kappa d h}.
\end{equation}
Related bias bounds have been studied extensively; see, e.g., \cite{dalalyan2017theoretical,durmus2017nonasymptotic,durmus2019high,DuEb24}.

\paragraph{UBU discretisation of kinetic Langevin dynamics}
For kinetic Langevin dynamics, the integrator, stepsize, and friction $\gamma$ jointly determine both the convergence rate and the discretisation bias \cite{BoOw10,leimkuhler2013rational,gouraud2022hmc}, and their effect can change further when stochastic gradients are used. While one can apply an Euler-type scheme to \eqref{eq:kinetic_langevin} (as in SG-HMC \cite{SGHMC}), accurate second-order splitting integrators have been developed and analysed; see \cite{leimkuhler2013rational,sanz2021wasserstein,BUBthesis}.

An efficient splitting method was introduced in \cite{BUBthesis,alamo2016technique,skeel1999integration} and further studied in \cite{sanz2021wasserstein,ububu}. It requires one gradient evaluation per step and has strong order two. The method splits \eqref{eq:kinetic_langevin} into
\[
\begin{pmatrix}
dX_{t} \\
dV_{t}
\end{pmatrix}
=
\underbrace{\begin{pmatrix}
0 \\
-\nabla V(X_{t})dt
\end{pmatrix}}_{\mathcal{B}}
+
\underbrace{\begin{pmatrix}
V_{t}dt \\
-\gamma V_{t} dt + \sqrt{2\gamma}dW_t
\end{pmatrix}}_{\mathcal{U}},
\]
where each sub-dynamics can be integrated exactly over a step of size $h>0$. Given $\gamma > 0$, let $\eta = \exp\left(-\gamma h/2\right)$, define the solution maps
\begin{equation}\label{eq:Bdef}
\mathcal{B}(x,v,h) = (x,v - h\nabla V(x)),
\end{equation}
and
\begin{align}\label{eq:Udef}
\mathcal{U}(x,v,h/2,\xi^{(1)},\xi^{(2)}) =\Big(x + \frac{1-\eta}{\gamma}&v+ \sqrt{\frac{2}{\gamma}}\left(\mathcal{Z}^{(1)}\left(h/2,\xi^{(1)}\right) - \mathcal{Z}^{(2)}\left(h/2,\xi^{(1)},\xi^{(2)}\right) \right),\\
\nonumber
&\quad \eta v + \sqrt{2\gamma}\mathcal{Z}^{(2)}\left(h/2,\xi^{(1)},\xi^{(2)}\right)\Big),
\end{align}
where
\begin{align}
\label{eq:Z1def}&\mathcal{Z}^{(1)}\left(h/2,\xi^{(1)}\right) = \sqrt{\frac{h}{2}}\xi^{(1)},\\
\label{eq:Z2def}&\mathcal{Z}^{(2)}\left(h/2,\xi^{(1)},\xi^{(2)}\right) = \quad \sqrt{\frac{1-\eta^{2}}{2\gamma}}\Bigg(\sqrt{\frac{1-\eta}{1+\eta}\cdot \frac{4}{\gamma h}}\xi^{(1)} + \sqrt{1-\frac{1-\eta}{1+\eta}\cdot\frac{4}{\gamma h}}\xi^{(2)}\Bigg),
\end{align}
where $\xi^{(1)},\xi^{(2)} \sim \mathcal{N}(0_{d},I_{d})$ are independent $d$-dimensional standard Gaussians.

The UBU scheme applies a half-step of $\mathcal{U}$ (with $h$ replaced by $h/2$), then one $\mathcal{B}$ step, followed by another half-step of $\mathcal{U}$. Its symmetry contributes to its accuracy by cancelling the leading-order term in the error expansion (in the absence of gradient noise) \cite{alamo2016technique,sanz2021wasserstein}.

Other symmetric splittings have also been studied, for example, BAOAB, ABOBA and OBABO \cite{OBABO,leimkuhler2013rational,lemast2016} are second order in the weak (sampling bias) sense. These schemes can be viewed as further decomposing the $\mathcal{U}$ stage and interleaving the resulting sub-steps with $\mathcal{B}$ updates.

\paragraph{Stochastic gradient Langevin dynamics}
As noted above, computing $\nabla V$ can be prohibitively expensive (or sometimes not possible), and it is often advantageous to replace it in \eqref{eq:ula} by a cheaper unbiased estimator. The resulting algorithm is stochastic gradient Langevin dynamics (SGLD). We introduce and follow the stochastic-gradient setup of \cite{leimkuhler2023contractionb}.

\begin{definition}\label{def:stochastic_gradient}
A \textit{stochastic gradient approximation} of a potential $V$ is specified by a measurable map $\mathcal{G}:\R^d \times \Omega \to \R^d$ and a probability distribution $\Gamma$ on a Polish space $\Omega$ such that for every $x\in \R^d$ and $\omega\sim \Gamma$,
\[
\E(\mathcal{G}(x,\omega)) = \nabla V(x).
\]
We denote the stochastic gradient by $(\mathcal{G}, \Gamma)$.
\end{definition}

\begin{assumption}\label{Assumption:Bounded_Variance}
Assume that for $\Gamma$-almost every $\omega$, the map
$x\mapsto \mathcal{G}(x,\omega)$ is $C^1$ on $\mathbb{R}^d$.
Moreover, there exists $C_{G}\geq 0$ such that
\[
\sup_{x\in \mathbb{R}^{d}}
\E_{\omega\sim\Gamma}\left\|D_{x}\mathcal{G}(x,\omega) - \nabla^{2}V(x)\right\|_{\op}^{2}
\leq C_{G}.
\]
Finally, assume that there exists $x_{\star}\in\mathbb{R}^{d}$ such that
\[
\E_{\omega\sim\Gamma}\|\mathcal{G}(x_{\star},\omega)\|^{2}<\infty.
\]
\end{assumption}

Throughout, when $C_G=0$, a stepsize restriction containing $1/C_G$ is interpreted as vacuous (equivalently, the corresponding upper bound is $+\infty$).

\begin{assumption}\label{assum:sg_variance}
Let $\pi$ denote the $x$-marginal of the invariant law $\overline{\pi}$ in \eqref{eq:inv}, then for $p\geq 1$, define
\[
\sigma_{p} := \left(\E_{X \sim \pi,\ \omega\sim \Gamma}\|\mathcal{G}(X,\omega) - \nabla V(X)\|^{p}\right)^{1/p}.
\]
Assume $\sigma_p<\infty$ for the values of $p$ used below.
\end{assumption}

We define SGLD by replacing $\nabla V(x_k)$ in \eqref{eq:ula} with $\mathcal{G}(x_k,\omega_{k+1})$ for i.i.d.\ $\omega_{k+1}\sim \Gamma$. Similarly, we define stochastic-gradient UBU by replacing $\nabla V(x)$ in \eqref{eq:Bdef} with $\mathcal{G}(x,\omega)$.

For SGLD, it has been shown (see \cite{DalalyanSG}) that under Assumption~\ref{ass:strong-log-conc-log-smooth} and Assumptions on the gradient noise that the asymptotic bias satisfies
\begin{equation}\label{eq:sgld_bias_basic}
W_2(\pi_h,\pi)
\ \lesssim\
\sqrt{h}\frac{L\sqrt d}{m}
+
\sigma_2\sqrt{\frac{h}{m}}.
\end{equation}
Additionally if you assume $\nabla^2V$ is $L_1$-Lipschitz in operator norm, i.e.\ for all $x,y\in\R^d$,
\[
\|\nabla^2V(x)-\nabla^2V(y)\|_{\op}\leq L_1\|x-y\|,
\]
then
\begin{equation}\label{eq:sgld_bias_hesslip}
W_2(\pi_h,\pi)
\ \lesssim\
h\left(\frac{L_1 d}{m}+\frac{L\sqrt{Ld}}{m}\right)
+
\sigma_2\sqrt{\frac{h}{m}}.
\end{equation}
Noting that the dependence on the stepsize in this bound reduces from order $h$ to order $h^{1/2}$ when stochastic gradients are introduced.
Additionally a similar bound has also been shown for stochastic gradient kinetic Langevin dynamics (or more generally stochastic gradient generalised Hamiltonian Monte Carlo; see \cite{gouraud2022hmc}) where the dependence on the stepsize in their bounds also reduces from order $h$ to order $h^{1/2}$ when stochastic gradients are introduced. 

In numerical experiments (as illustrated by examples in Section~\ref{sec:simulations}), one typically observes first-order, $\mathcal{O}(h)$, accuracy in the stepsize and not $\mathcal{O}(h^{1/2})$ as indicated by the existing theoretical guarantees. Additionally, for sufficiently smooth test functions $\phi$, \cite{vollmer2016exploration} prove $\mathcal{O}(h)$ weak (ergodic-average) bias for SGLD-type schemes, with finite-time bias/variance guarantees. However, these approaches, unlike Wasserstein-based analyses, do not bound the invariant measure directly and do not give non-asymptotic guarantees with explicit dependence on key parameters such as dimension and conditioning.

Closest to the present work, \cite{lu2025mean} establish first-order, $\mathcal{O}(h)$, bias for the same stochastic gradient UBU scheme, as well as for its variance-reduced SVRG-UBU and SAGA-UBU counterparts, under a log-concavity assumption. Their analysis is of a different nature to ours: it controls the mean square error of finite-time ergodic averages of sufficiently smooth test functions, rather than the invariant measure itself. The two sets of results are therefore complementary, with different strengths. On the one hand, \cite{lu2025mean} require only log-concavity, whereas we work under the stronger assumption of strong log-concavity and log-smoothness (Assumption~\ref{ass:strong-log-conc-log-smooth}). On the other hand, our Wasserstein bounds control the invariant measure $\overline{\pi}_{h}$ directly, and hence the bias of every Lipschitz observable, with explicit dependence on the dimension $d$ and the condition number $\kappa$; moreover, they require fewer derivatives, namely $V \in C^{2}$ together with a single-derivative (bounded Jacobian variance) condition on the stochastic gradient and a finite second moment at one reference point (Assumption~\ref{Assumption:Bounded_Variance}), and in particular no smoothness of the test functions or of the solution of the associated Poisson equation, as is typically needed in weak-error and mean-square-error expansions. Finally, the transition from first-order to second-order accuracy that \cite{lu2025mean} establishes for variance-reduced gradients is closely related to the $\mathcal{O}(h^{2})$ behaviour under epoch-wise without-replacement subsampling that we discuss in the conclusion (Section~\ref{sec:conclusion}).

\section{Main results}

Our main results have two parts. First, we establish Gaussian convolution inequalities that control
$W_p(\mu*\g,\g)$ for centred perturbations $\mu$. Second, we use these inequalities to quantify
the asymptotic bias of stochastic-gradient UBU, achieving for the first time first-order accuracy in the invariant measure, agreeing with practice. When combined with the subsequently stated Wasserstein convergence rates for the stochastic gradient UBU schemes these provide state-of-the-art non-asymptotic guarantees for an MCMC method which utilises Robbins-Monro \cite{robbins1951stochastic} stochastic gradient subsampling.

\subsection{Gaussian convolution inequalities}

We first state three Gaussian convolution inequalities under a varying set of assumptions.

\subsubsection{General \texorpdfstring{$p$}{p}-Wasserstein inequality}
We begin with a Gaussian convolution inequality for centred measures, valid for all $p\geq 1$.

\begin{restatable}{theorem}{convGeneral}\label{thm:conv_general}
Fix $d\geq 1$ and $p\geq 1$. Let $\mu \in \mathcal{P}(\mathbb{R}^d)$ satisfy
\begin{equation}\label{eq:centred}
\int_{\mathbb{R}^d} x\mu(dx)=0
\qquad\text{and}\qquad
\int_{\mathbb{R}^d}\|x\|^{2p}\mu(dx)<\infty.
\end{equation}
Let $\g=\mathcal{N}(0, I_d)$, then
\begin{equation}\label{eq:general_mu_convolution_bound_main}
W_p(\mu*\g,\g)
\le
\frac{K_p}{1-(1-2^{-2p})^{1/p}}
\left(\int_{\mathbb{R}^d}\|x\|^{2p}\mu(dx)\right)^{1/p},
\end{equation}
where $K_p = \max\left\{1, \frac{C_p}{2} + \frac{1}{3}\right\}$ and
$C_p = \big(\mathbb{E}_{\xi \sim \mathcal{N}(0,1)}|\xi|^p\big)^{1/p}$.
\end{restatable}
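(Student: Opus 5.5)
The plan is to prove \eqref{eq:general_mu_convolution_bound_main} by a self-improving (bootstrap) argument. A one-step coupling estimate pays a constant $K_p$ times the $2p$-th moment, and leaves a residual of the same form, contracted by the factor $q:=(1-2^{-2p})^{1/p}<1$. Summing the resulting geometric series gives the prefactor $K_p/(1-q)$. Throughout write $M:=\bigl(\int\|x\|^{2p}\mu(dx)\bigr)^{1/p}$.

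\textbf{Step 1 (explicit coupling for one shifted Gaussian).} For $x\in\R^d$ I would couple $x+Z$ with a standard Gaussian $Z'=T_x(x+Z)$, where $T_x$ is a transport map chosen to remove the linear part of the shift. One candidate is a map acting only along the direction $e=x/\|x\|$. On the orthogonal complement it is the identity. Along $e$ it pushes $\mathcal{N}(\|x\|,1)$ to $\mathcal{N}(0,1)$ after averaging with the opposite shift $-x$. The centring $\int x\,\mu(dx)=0$ is what makes averaging over $x$ cancel the first-order term. A second-order Taylor expansion of the displacement in $\|x\|$ should give a displacement of size roughly $\tfrac12\|x\|^2|\xi|+\tfrac13\|x\|^2$ for small $\|x\|$, where $\xi\sim\mathcal{N}(0,1)$. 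Taking $L^p$ norms produces $(\tfrac{C_p}{2}+\tfrac13)\|x\|^2$. For large $\|x\|$ the trivial coupling $Z'=Z$ gives displacement $\|x\|\le\|x\|^2$ when $\|x\|\ge1$. This explains the form $K_p=\max\{1,\tfrac{C_p}{2}+\tfrac13\}$. I expect this step to require care, because the cancellation only happens after mixing over $\mu$, not pointwise.

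\textbf{Step 2 (splitting and recursion).} The cancellation will be valid only on a "good" part of $\mu$. I would split $\mu=(1-\theta)\mu_{\mathrm{good}}+\theta\mu_{\mathrm{res}}$, where the good part is handled by Step 1 at cost at most $K_pM$. The residual part is again a centred measure, whose convolution error is of the same type, possibly after the rescaling $\g=\g_{3/4}*\g_{1/4}$ and using Gaussian scaling $W_p(\mu_s*\g,\g)$. I would aim for the inequality
\[
W_p(\mu*\g,\g)\le K_pM+q\,W_p(\tilde\mu*\g,\g),
\]
with $\tilde\mu$ centred and $\int\|x\|^{2p}d\tilde\mu\le\int\|x\|^{2p}d\mu$. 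The identity $q^p=1-2^{-2p}$ should arise as the fraction of $p$-th power cost not consumed in one step: the $2^{-2p}$ comes from halving the scale, since the error is quadratic and $(1/2)^{2}$ raised to the power $p$. To combine the two parts I would use the mixture convexity $W_p^p(\sum\lambda_i\alpha_i,\sum\lambda_i\beta_i)\le\sum\lambda_iW_p^p(\alpha_i,\beta_i)$.

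\textbf{Step 3 (closing).} A priori $W_p(\mu*\g,\g)\le(\int\|x\|^p d\mu)^{1/p}<\infty$, so the quantity $S:=\sup W_p(\nu*\g,\g)/M_\nu$ over admissible centred $\nu$ is finite. This holds at least after a truncation/approximation argument, using the assumed finiteness of the $2p$-th moment. The recursion then gives $S\le K_p+qS$, hence $S\le K_p/(1-q)$.

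The main obstacle is Step 2: designing the decomposition so that the residual is still centred and has controlled $2p$-th moment, with exactly the factor $q$. If the direct split fails, I would first try a dyadic decomposition of $\mu$ by the shells $\{2^{-k-1}<\|x\|\le2^{-k}\}$. Each shell would be symmetrised to restore centring, and the shells iterated rather than using a single fixed-point step.
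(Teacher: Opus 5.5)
\emph{Verdict: there is a genuine gap.} Step 1 only handles symmetric pieces of $\mu$, Step 2 has no mechanism for the general centred case, and Step 3 is circular as written.

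Step 1 is the right building block. It is the paper's bound $W_p\bigl(\tfrac12\mathcal N(\delta,1)+\tfrac12\mathcal N(-\delta,1),\mathcal N(0,1)\bigr)\le K_p\delta^2$, lifted to $\mathbb{R}^d$ by acting only along $e$. The paper makes your Taylor heuristic rigorous via the quantile flow $\partial_t q_t=\tanh(t\,q_t)$ and $|\tanh u|\le\min(|u|,1)$, which gives $\int_0^\delta\min\{t(C_p+t),1\}\,dt\le K_p\delta^2$.

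However, Step 1 only handles \emph{antipodal} pairs $\pm x$ of equal mass, and Step 2 offers no way to handle anything else. A centred measure need not have any symmetric part. For $\mu=\tfrac23\delta_{1}+\tfrac13\delta_{-2}$ on $\mathbb{R}$, or a generic centred point cloud in $\mathbb{R}^d$, the largest symmetric measure below $\mu$ is zero. Your split then has $\theta=1$ and the recursion makes no progress. Centring is a global property of $\mu$. To exploit it piecewise you would need a quadratic bound for general centred pieces, and Step 1 does not provide one. Your dyadic-shell fallback has the same defect: the shells are not individually centred, so symmetrising them costs first order in their means.

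The missing idea, which is what the paper does, is not to aim at $\g$ in one step.
\begin{itemize}
\item Pair \emph{arbitrary} points $x_i,x_j$ and replace $\tfrac12\mathcal N(x_i,I_d)+\tfrac12\mathcal N(x_j,I_d)$ by $\mathcal N(\tfrac{x_i+x_j}{2},I_d)$. After translation this is Step 1 with $\delta=\|x_i-x_j\|/2$, so it costs $\tfrac{K_p}{4}\|x_i-x_j\|^2$.
\item The residual is then the \emph{whole} midpoint measure $\mu'$, which is automatically centred. Writing $\Phi_{2p}(\nu):=\int\|x\|^{2p}\nu(dx)$, the recursion is the triangle inequality $W_p(\mu*\g,\g)\le K_p\Phi_{2p}(\mu)^{1/p}+W_p(\mu'*\g,\g)$.
\item The contraction sits on the moment, $\Phi_{2p}(\mu')\le(1-2^{-2p})\Phi_{2p}(\mu)$, and this is where centring enters. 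Jensen gives $\frac1{2n}\sum_j\|x_i-x_j\|^{2p}\ge\|x_i\|^{2p}$. Averaging over uniformly random perfect matchings then shows that some $M$ has $\sum_{(i,j)\in M}\|x_i-x_j\|^{2p}\ge\frac12\sum_i\|x_i\|^{2p}$.
\item Clarkson's inequality $\|\tfrac{u+v}{2}\|^{2p}+\|\tfrac{u-v}{2}\|^{2p}\le\tfrac12(\|u\|^{2p}+\|v\|^{2p})$ converts this into the factor $1-2^{-2p}$.
\end{itemize}
The paper runs this on centred point clouds of even size and reaches general $\mu$ by a centred empirical approximation. The same computation with two independent copies $X,Y\sim\mu$, i.e.\ $\mu'=\mathrm{Law}\bigl(\tfrac{X+Y}{2}\bigr)$, would also work directly.

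Separately, Step 3 is circular as written. The trivial bound does not make $S$ finite. Under $X\mapsto\varepsilon X$ the ratio $(\mathbb{E}\|X\|^{p})^{1/p}/(\mathbb{E}\|X\|^{2p})^{1/p}$ grows like $\varepsilon^{-1}$, and truncation does not touch this small-scale regime. So $S<\infty$ is essentially the quadratic scaling you are trying to prove. Instead, iterate the recursion $T$ times and bound the remainder by the trivial coupling, $W_p(\mu_T*\g,\g)\le\Phi_{2p}(\mu_T)^{1/(2p)}\to0$. This is how the paper sums the geometric series.
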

\begin{remark}
    Under the assumptions of Theorem~\ref{thm:conv_general} one can rescale the resulting inequality to achieve
    \begin{equation}\label{eq:s-scaling}
    W_p(\mu*\g_{s},\g_{s})
    \le
    \frac{1}{\sqrt{s}} \cdot\frac{K_p}{\left(1-(1-2^{-2p})^{1/p}\right)}
    \left(\int_{\mathbb{R}^d}\|x\|^{2p}\mu(dx)\right)^{1/p},
    \end{equation}
    where $\gsub{s}:=\mathcal N(0,s I_d)$, for $s > 0$. Consequently as $s \to \infty$, $ W_p(\mu*\g_{s},\g_{s}) \to 0$, which is not the case under the inequality $ W_p(\mu*\g_{s},\g_{s}) \leq \left(\int_{\mathbb{R}^d}\|x\|^{p}\mu(dx)\right)^{1/p}$ achieved under trivial coupling. The improved factor of $s^{-1/2}$ is the key ingredient to achieve asymptotic bias results for stochastic gradient MCMC of order $h$ with respect to the stepsize $h > 0$, as opposed to order $h^{1/2}$ using a trivial coupling of the stochastic gradient noise. 
    
    We also remark that the $s^{-\frac{1}{2}}$ scaling in \eqref{eq:s-scaling} was shown asymptotically in \cite{Chen-N-W-22} and used in \cite{beyler2025convergence} within the context of studying convergence of diffusion models. \cite{srinivasan2025poisson} also used a one-dimensional version of this type of inequality for bounded random variables to bypass the strong order barrier for a randomised midpoint method.
\end{remark}
\begin{remark}
In Section~\ref{app:stochastic-localization-wp} of the Appendix, we show similar bounds with sharper constants for $p\geq 2$ via a stochastic localization argument. These are not applicable for $1\leq p<2$.
\end{remark}
\subsubsection{A refined \texorpdfstring{$\{1,2\}$}{1,2}-Wasserstein inequality}
For $p\in\{1,2\}$ we also obtain a refinement that separates a tail term from a (truncated) covariance term.

\begin{restatable}{theorem}{convRefined}\label{thm:conv_refined_12}
Let $X\sim\mu$ satisfy \eqref{eq:centred} for $p=1$ and
define the tail quantities at threshold $1$ by
\[
\tau_1 := \mathbb{E}\big[\|X\|\mathbf 1_{\{\|X\|>1\}}\big],\qquad
\tau_2 := \Big(\mathbb{E}\big[\|X\|^2\mathbf 1_{\{\|X\|>1\}}\big]\Big)^{1/2},
\]
and let
\[
\widetilde X:=X\mathbf 1_{\{\|X\|\leq 1\}}-\mathbb{E}[X\mathbf 1_{\{\|X\|\leq 1\}}],
\qquad
\widetilde\Sigma:=\mathrm{Cov}(\widetilde X).
\]
Let $c_4=(e^4-5)/16$, then
\[
W_1(\mu*\g,\g)
\le
\Big(\tau_1+\min(1,\tau_1)\Big)
+
\sqrt{2\log\left(1+c_4\|\widetilde\Sigma\|_F^2\right)},
\]
and
\[
W_2(\mu*\g,\g)
\le
\Big(\tau_2+\min(1,\tau_2)\Big)
+
\sqrt{2\log\left(1+c_4\|\widetilde\Sigma\|_F^2\right)}.
\]
In particular,
\[
W_1(\mu*\g,\g)\leq 2\tau_1+\sqrt{2\log\left(1+c_4\|\widetilde\Sigma\|_F^2\right)},
\qquad
W_2(\mu*\g,\g)\leq 2\tau_2+\sqrt{2\log\left(1+c_4\|\widetilde\Sigma\|_F^2\right)}.
\]
\end{restatable}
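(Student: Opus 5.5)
The plan is to split $X$ into a bounded, centred part and a tail part, pay for the tail with a trivial coupling, and control the bounded part through Talagrand's inequality combined with a $\chi^2$ computation for Gaussian mixtures.

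\textbf{Step 1 (decomposition and tail cost).} Write $X=\widetilde X+Z$, where
\[
Z:=X\mathbf 1_{\{\|X\|>1\}}+\mathbb{E}[X\mathbf 1_{\{\|X\|\le 1\}}].
\]
Since $\mathbb{E}X=0$, we have $\mathbb{E}[X\mathbf 1_{\{\|X\|\le1\}}]=-\mathbb{E}[X\mathbf 1_{\{\|X\|>1\}}]$. The norm of this vector is at most $1$, since it is the mean of a quantity of norm at most $1$. It is also at most $\tau_1$, and by Cauchy--Schwarz (using $\mathbb{P}(\|X\|>1)\le1$) at most $\tau_2$. Hence
\[
\|Z\|\le \|X\|\mathbf 1_{\{\|X\|>1\}}+\min(1,\tau_p) \qquad (p\in\{1,2\}).
\]
Let $\nu=\mathrm{Law}(\widetilde X)$ and $G\sim\g$ independent of $X$. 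Coupling $X+G$ with $\widetilde X+G$ through the same $X$ and $G$, and applying Minkowski's inequality, gives
\[
W_p(\mu*\g,\nu*\g)\le (\mathbb{E}\|Z\|^p)^{1/p}\le \tau_p+\min(1,\tau_p).
\]
By the triangle inequality it then suffices to show
\[
W_p(\nu*\g,\g)\le \sqrt{2\log(1+c_4\|\widetilde\Sigma\|_F^2)}.
\]
Since $W_1\le W_2$, it is enough to prove this for $p=2$.

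\textbf{Step 2 (Talagrand and $\chi^2$).} By Talagrand's transport inequality for the standard Gaussian,
\[
W_2(\nu*\g,\g)^2\le 2\,\mathrm{KL}(\nu*\g\,\|\,\g).
\]
Next, $\mathrm{KL}\le\log(1+\chi^2)$ by Jensen's inequality applied to the logarithm. The density ratio of $\nu*\g$ with respect to $\g$ is $\mathbb{E}_Y\exp(\langle Y,x\rangle-\|Y\|^2/2)$ with $Y\sim\nu$. Squaring, integrating against $\g$, and using the Gaussian moment generating function, we get
\[
1+\chi^2(\nu*\g\,\|\,\g)=\mathbb{E}\exp(\langle Y,Y'\rangle),
\]
where $Y,Y'$ are independent copies with law $\nu$. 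All of these steps are standard; they need $\nu$ to have bounded support, which holds here.

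\textbf{Step 3 (the elementary bound, the main point).} Since $\|\widetilde X\|\le 2$, we have $|u|\le4$ for $u=\langle Y,Y'\rangle$. The function $u\mapsto(e^u-1-u)/u^2$ is increasing on $\mathbb{R}$, so for $|u|\le 4$,
\[
e^u\le 1+u+\frac{e^4-5}{16}u^2=1+u+c_4u^2,
\]
which is exactly where the constant $c_4$ comes from. Taking expectations, $\mathbb{E}\langle Y,Y'\rangle=0$ by centring and independence. Moreover,
\[
\mathbb{E}\langle Y,Y'\rangle^2=\mathbb{E}\,[Y^\top \widetilde\Sigma Y]=\tr(\widetilde\Sigma^2)=\|\widetilde\Sigma\|_F^2.
\]
Hence $1+\chi^2\le1+c_4\|\widetilde\Sigma\|_F^2$, and combining with Step 2 gives the claimed bound.

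The only delicate points are the bookkeeping of $\min(1,\tau_p)$ in Step 1 and checking the monotonicity claim in Step 3. For the latter, write
\[
\frac{e^u-1-u}{u^2}=\sum_{k\ge0}\frac{u^k}{(k+2)!}=\int_0^1(1-t)e^{tu}\,dt,
\]
which is increasing in $u$ because each integrand $(1-t)e^{tu}$ is. The final "in particular" statements follow from $\min(1,\tau_p)\le\tau_p$.
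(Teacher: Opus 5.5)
Your proposal is correct and follows essentially the same route as the paper: the same truncation coupling with the mean correction bounded by $\min(1,\tau_p)$, followed by Talagrand's inequality, $\mathrm{KL}\le\log(1+\chi^2)$, the identity $1+\chi^2=\mathbb{E}\,e^{\langle \widetilde X',\widetilde X''\rangle}$, and the quadratic bound $e^u\le 1+u+c_4u^2$ on $[-4,4]$. Your integral representation $\int_0^1(1-t)e^{tu}\,dt$ gives an explicit justification for the monotonicity of $(e^u-1-u)/u^2$ (hence that its maximum on $[-4,4]$ is at $u=4$), which the paper only asserts.
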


\begin{remark}\label{rem:spike}
In Section~\ref{sec:spike} we give a spike example showing that, in general, a bound purely in terms
of $\|\Sigma\|_F$ cannot hold without additional tail control.
\end{remark}

\subsubsection{$\{1,2\}$-Wasserstein and Kullback--Leibler inequalities under a Poincar\'e inequality}
\begin{definition}\label{def:poincare_constant}
For a measure $\mu$ we denote by $C_P(\mu)$ its Poincar\'e constant, i.e.\ the smallest $C\in[0,\infty]$
such that
\begin{equation}\label{eq:poincare}
\mathrm{Var}_\mu(f)\leq C \int \|\nabla f\|^2d\mu
\quad \text{for all smooth } f\in L^2(\mu).
\end{equation}
\end{definition}

\begin{restatable}{theorem}{convPoincare}\label{thm:conv_poincare}
For every centred $\mu$, which is absolutely continuous with respect to the Lebesgue measure, has finite covariance $\Sigma$ and Poincar\'e constant $C_{P}$,
\[
W_2(\mu*\g,\g)\ \le\ \ \sqrt{C_P\mathrm{tr}(\Sigma)},
\]
and
\[
W_1(\mu*\g,\g)\ \le\ \ \sqrt{C_P\mathrm{tr}(\Sigma)}.
\]
\end{restatable}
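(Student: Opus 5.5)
The plan is a dynamic (Benamou--Brenier) argument: interpolate between $\g$ and $\mu*\g$ along the curve $\nu_t:=\mathrm{Law}(tY+G)$, $t\in[0,1]$, with $Y\sim\mu$ and $G\sim\g$ independent. We then bound the kinetic energy of this curve using the Poincar\'e inequality for $\mu$, combined with Gaussian integration by parts. The $W_1$ statement follows at once from $W_1\le W_2$, so the work is all in $W_2$.

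\emph{Step 1 (velocity field).} Put $Z_t=tY+G$. For any test function $\phi$,
\[
\tfrac{d}{dt}\E\,\phi(Z_t)=\E\langle\nabla\phi(Z_t),Y\rangle=\E\langle\nabla\phi(Z_t),v_t(Z_t)\rangle,
\qquad v_t(z):=\E[Y\mid Z_t=z].
\]
Hence $(\nu_t,v_t)$ solves the continuity equation. The Benamou--Brenier formula, or equivalently the standard bound $W_2(\nu_0,\nu_1)\le\int_0^1\|v_t\|_{L^2(\nu_t)}\,dt$, then gives
\[
W_2(\mu*\g,\g)\le\int_0^1\big(\E\|\E[Y\mid Z_t]\|^2\big)^{1/2}dt .
\]
Since $\nu_t$ has a smooth positive density for every $t$, the conditional expectation is well defined and regular enough for this step.

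\emph{Step 2 (duality and Poincar\'e).} By duality,
\[
\big(\E\|\E[Y\mid Z_t]\|^2\big)^{1/2}=\sup\big\{\E\langle Y,h(Z_t)\rangle:\ h\ \text{smooth},\ \E\|h(Z_t)\|^2\le1\big\}.
\]
Fix such an $h$ and let $H(y):=\E_G h(ty+G)$. Conditioning on $Y$ and using $\E Y=0$,
\[
\E\langle Y,h(Z_t)\rangle=\sum_{i}\Cov_\mu\big(Y_i,H_i(Y)\big)
\le\Big(\sum_i\Var Y_i\Big)^{1/2}\Big(\sum_i\Var_\mu H_i\Big)^{1/2}
\le\sqrt{\tr\Sigma}\,\sqrt{C_P\,\E_\mu\|DH(Y)\|_F^2},
\]
where the last inequality is Cauchy--Schwarz followed by \eqref{eq:poincare} applied coordinatewise. $H$ is smooth because it is a Gaussian convolution.

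\emph{Step 3 (Gaussian integration by parts).} Stein's identity gives
\[
DH(y)=t\,\E_G\big[h(ty+G)\,G^{\top}\big].
\]
For each component $h_i$, the vector $\E_G[h_i(ty+G)G]$ is the vector of coefficients of the orthogonal projection of $h_i(ty+\cdot)$ onto the orthonormal family $\{G_1,\dots,G_d\}$ in $L^2(\g)$. Bessel's inequality therefore gives
\[
\|\E_G[h_i(ty+G)G]\|^2\le\E_G\,h_i(ty+G)^2 .
\]
Summing over $i$ and averaging over $Y$ yields $\E_\mu\|DH\|_F^2\le t^2\,\E\|h(Z_t)\|^2\le t^2$. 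Consequently
\[
\|v_t\|_{L^2(\nu_t)}\le t\sqrt{C_P\tr\Sigma},
\]
and integrating over $t\in[0,1]$ gives $W_2(\mu*\g,\g)\le\tfrac12\sqrt{C_P\tr\Sigma}$. This is in fact stronger than the claim.

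\emph{Main obstacle.} The key step is Step 3. A naive bound such as $\|\nabla H\|\le t\,\E\|h\|\,\|G\|$ introduces a spurious factor of $\sqrt d$. The projection (Bessel) argument is what removes the dimension dependence. The remaining technical points are routine:
\begin{itemize}
\item justifying the Benamou--Brenier bound for this smooth curve, which has finite second moments;
\item extending the duality from smooth $h$ to $L^2$ functions by density;
\item the case $C_P=\infty$, which is trivial.
\end{itemize}
Absolute continuity of $\mu$ is not essentially needed in this argument; it only simplifies the Poincar\'e formulation.
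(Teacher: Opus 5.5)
Your argument is correct, and it takes a genuinely different route from the paper.

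The paper argues statically through Stein kernels. It imports from Courtade--Fathi--Pananjady a Stein kernel $\tau_\mu$ with $\E\|\tau_\mu(X)\|_F^2\le C_P\tr(\Sigma)$. It then shows that $I_d+\E[\tau_\mu(X)\mid Y]$ is a Stein kernel for $\mu*\g$, bounds the Stein discrepancy by conditional Jensen, and closes with the Ledoux--Nourdin--Peccati inequality $W_2\le S$.

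You instead work dynamically along $tY+G$, which is the same curve and the same continuity-equation bound as Lemma~\ref{lem:ce-length} in the stochastic-localization appendix. You then bound the velocity $\E[Y\mid Z_t]$ by duality. Your Step~2 is exactly the coercivity estimate $|\E\langle Y,H(Y)\rangle|\le\sqrt{C_P\tr(\Sigma)}\,\|DH\|_{L^2(\mu)}$ that underlies the Lax--Milgram construction of the Stein kernel (compare the proof of Corollary~\ref{cor:smoothed-poincare}). So you use the Poincar\'e inequality without ever building the kernel. Your Step~3 is the dual form of the conditional-mean identity \eqref{eq:cond-mean-identity}, whose analogue along your curve reads $\E[Y\mid Z_t]=t\,\E[\tau_\mu(Y)G\mid Z_t]$.

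What your route buys:
\begin{itemize}
\item It is self-contained apart from the continuity-equation bound: no existence theorem for Stein kernels and no $W_2\le S$ inequality.
\item It does not need absolute continuity of $\mu$.
\item Because the velocity is $O(t)$, it gives $W_2(\mu*\g,\g)\le\tfrac12\sqrt{C_P\tr(\Sigma)}$. This constant is sharp for $\mu=\mathcal N(0,\sigma^2 I_d)$ as $\sigma\to0$, since $\sqrt d\,(\sqrt{1+\sigma^2}-1)\sim\tfrac12\sigma^2\sqrt d$. It improves on both the constant $1$ in the statement and the factor $\sqrt{\log 2}$ obtainable from Theorem~\ref{thm:conv_poincare_KL} and Talagrand's inequality.
\end{itemize}

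What the paper's route buys is an explicit square-integrable Stein kernel for $\mu*\g$. The paper reuses it for the Kullback--Leibler bound, via the HSI inequality and the Fisher-information estimate of Lemma~\ref{lem:fisher-conv}. Through a smoothed Lax--Milgram construction it also reuses it for Corollary~\ref{cor:smoothed-poincare}.
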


\begin{restatable}{theorem}{convPoincareKL}\label{thm:conv_poincare_KL}
Under the assumptions of Theorem~\ref{thm:conv_poincare},
\[
\mathrm{KL}(\mu*\g\|\g)\ \le\ \frac{\log 2}{2}\ C_P\mathrm{tr}(\Sigma),
\]
where $\mathrm{KL}(\rho\|\g):=\int\log\frac{d\rho}{d\g}d\rho$ denotes the relative entropy (Kullback--Leibler divergence).
\end{restatable}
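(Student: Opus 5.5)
We propose to reduce the KL bound to a signal-estimation problem, via the I--MMSE identity along the Gaussian channel $Y_s=\sqrt{s}X+G$, $s\in[0,1]$, with $X\sim\mu$ and $G\sim\g$ independent. Write $\rho_s=\mathrm{Law}(Y_s)$, so $\rho_1=\mu*\g$ and $\rho_0=\g$. Decompose $\mathrm{KL}(\rho_s\|\g)=\int\rho_s\log\rho_s+\frac{d}{2}\log(2\pi)+\frac12\E\|Y_s\|^2$. Since $X$ is centred, $\E\|Y_s\|^2=d+s\,\mathrm{tr}(\Sigma)$. Moreover $I(X;Y_s)=h(\rho_s)-h(\g)$. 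Together these give
\[
\mathrm{KL}(\rho_s\|\g)=\frac{s}{2}\mathrm{tr}(\Sigma)-I(X;Y_s).
\]
Applying the I--MMSE formula $\frac{d}{ds}I(X;Y_s)=\frac12\mathrm{mmse}(s)$ and using $\mathrm{tr}(\Sigma)-\mathrm{mmse}(s)=\E\|m_s(Y_s)\|^2$, where $m_s(y)=\E[X\mid Y_s=y]$ and $X$ is centred, we obtain the representation
\[
\mathrm{KL}(\mu*\g\|\g)=\frac12\int_0^1\E\|m_s(Y_s)\|^2\,ds .
\]
Since $\int_0^1\frac{ds}{1+s}=\log 2$, the target constant suggests aiming for the pointwise estimate $\E\|m_s(Y_s)\|^2\le C_P\mathrm{tr}(\Sigma)/(1+s)$, or any estimate whose integral over $[0,1]$ is at most $\log 2\cdot C_P\mathrm{tr}(\Sigma)$.

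To bring in the Poincar\'e inequality for $\mu$, set $\phi_s(x):=\E_G[m_s(\sqrt{s}x+G)]$. The tower property then gives
\[
A_s:=\E\|m_s(Y_s)\|^2=\E\langle X,m_s(Y_s)\rangle=\E\langle X,\phi_s(X)\rangle .
\]
Because $X$ is centred, this equals $\sum_i\Cov(X_i,\phi_{s,i}(X))$. Cauchy--Schwarz followed by Poincar\'e applied coordinatewise to $\phi_{s,i}$ yields
\[
A_s\le\sqrt{\mathrm{tr}(\Sigma)}\sqrt{C_P\,\E\|D\phi_s(X)\|_F^2}.
\]
Tweedie/Stein differentiation gives $D_y m_s(y)=\sqrt{s}\,\Cov(X\mid Y_s=y)$, and hence $D\phi_s(x)=s\,\E_G[\Cov(X\mid Y_s)\mid X=x]$. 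This expresses the gradient term through the posterior covariance, i.e.\ through the mmse and higher posterior moments.

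The main obstacle is closing this inequality into the sharp form $A_s\le C_P\mathrm{tr}(\Sigma)/(1+s)$. A rough version of the argument only gives $A_s\lesssim s\,C_P\,\cdot(\text{posterior covariance})$, and the Gaussian example shows such bounds are not uniformly comparable, so the argument must exploit monotonicity in $s$. My first attempt would be a differential inequality. We have $A_s'=-\mathrm{mmse}'(s)=\E\|\Cov(X\mid Y_s)\|_F^2\ge0$. I would then try to show $(1+s)A_s$ is nonincreasing, or more directly prove $A_s\le C_P\,\mathrm{mmse}(s)$ by running the Poincar\'e argument on the tilted posterior measures. A posterior is a Gaussian tilt of $\mu$, and such a tilt has Poincar\'e constant no larger than that of $\mu$ when log-concavity-type structure is available; in general I would instead use a variance-decomposition bound. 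Combined with $\mathrm{mmse}(s)=\mathrm{tr}(\Sigma)-A_s$, this gives $A_s\le C_P\mathrm{tr}(\Sigma)/(1+C_P)$-type bounds. If the $(1+s)$-rescaled version goes through, integrating over $[0,1]$ gives exactly $\frac{\log 2}{2}C_P\mathrm{tr}(\Sigma)$. Absolute continuity of $\mu$ and finiteness of $\Sigma$ are used only to justify the I--MMSE differentiation and the integrations by parts.
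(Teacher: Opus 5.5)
There is a genuine gap: the proposal never proves a bound on $A_s$. Your I--MMSE representation $\mathrm{KL}(\mu*\g\|\g)=\frac12\int_0^1A_s\,ds$ is correct, as are the Poincar\'e/Cauchy--Schwarz bound $A_s\le\sqrt{\mathrm{tr}(\Sigma)}\sqrt{C_P\,\E\|D\phi_s(X)\|_F^2}$, the formula $D\phi_s(x)=s\,\E[\Cov(X\mid Y_s)\mid X=x]$, and the identity $A_s'=\E\|\Cov(X\mid Y_s)\|_F^2$. The strategies you list for closing the argument cannot work.
\begin{itemize}
\item Since $A_0=0$ and $A_s'\ge0$, the function $(1+s)A_s$ is nondecreasing. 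It can be nonincreasing only if $A\equiv0$.
\item The pointwise target $A_s\le C_P\mathrm{tr}(\Sigma)/(1+s)$ is false. For $\mu=\mathcal N(0,\lambda I_d)$ one has $C_P=\lambda$ and $A_s=ds\lambda^2/(1+s\lambda)$, so at $s=1$ the target $d\lambda^2/2$ fails for every $\lambda<1$. The true profile grows in $s$ (roughly linearly for small noise) rather than decaying like $1/(1+s)$.
\item The tilted-posterior route needs the Poincar\'e constants of Gaussian tilts of $\mu$ to be controlled by $C_P(\mu)$. This fails in general without log-concavity.
\item Even if $A_s\le C_P\,\mathrm{mmse}(s)$ held, it would only give $\mathrm{KL}\le C_P\mathrm{tr}(\Sigma)/(2(1+C_P))$. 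That exceeds $\frac{\log 2}{2}C_P\mathrm{tr}(\Sigma)$ when $C_P$ is small.
\end{itemize}

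The missing step uses only your own ingredients. Conditional Jensen gives $\E\|D\phi_s(X)\|_F^2\le s^2\E\|\Cov(X\mid Y_s)\|_F^2=s^2A_s'$. With $K:=C_P\mathrm{tr}(\Sigma)$, your Cauchy--Schwarz bound then becomes the Riccati inequality $A_s^2\le Ks^2A_s'$, i.e.\ $\frac{d}{ds}(1/A_s)\le-1/(Ks^2)$; note $A_s>0$ for $s>0$ unless $\mu=\delta_0$. Integrate from $s$ to $T$, use $A_T\le\mathrm{tr}(\Sigma)$, and let $T\to\infty$:
\[
A_s\le\frac{sC_P\,\mathrm{tr}(\Sigma)}{1+sC_P}\le sK,
\qquad
\mathrm{KL}(\mu*\g\|\g)\le\frac{\mathrm{tr}(\Sigma)}{2}\Big(1-\frac{\log(1+C_P)}{C_P}\Big)\le\frac{K}{4}.
\]
This is stronger than the stated $\frac{\log 2}{2}K$, and it is exact for $\mu=\mathcal N(0,\lambda I_d)$.

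The paper closes the argument differently. The Poincar\'e inequality supplies a Stein kernel with $\E\|\tau_\mu(X)\|_F^2\le K$ (Fact~\ref{fact_1}). This kernel bounds the Stein discrepancy of $\mu*\g$ by $K$, and, through $\E[X\mid Y]=\E[\tau_\mu(X)Z\mid Y]$, also its relative Fisher information. The HSI inequality then gives $\frac{\log 2}{2}K$. The same Stein identity at signal-to-noise ratio $s$, namely $m_s(Y_s)=\sqrt s\,\E[\tau_\mu(X)G\mid Y_s]$, would give $A_s\le sK$ directly in your framework. Either way, the estimate you need is linear growth in $s$, not $1/(1+s)$ decay.
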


Theorem~\ref{thm:conv_poincare_KL} is proved in Appendix~\ref{sec:poincare-appendix} (Theorem~\ref{thm:poincare-KL}), in the sharper form $\mathrm{KL}(\mu*\g\|\g)\le\frac{\log 2}{2}\E\|\tau_\mu(X)\|_F^2$, valid for any square-integrable Stein kernel $\tau_\mu$ of $\mu$; the bound displayed above then follows by choosing the Stein kernel that the Poincar\'e inequality provides via \cite[Theorem~2.4]{courtade2019stein}. The rate and constant are essentially optimal: for Gaussian $\mu$ the bound is attained up to the universal factor $2\log 2\approx 1.39$, and no logarithmic correction in $C_P$ appears (Remark~\ref{rem:KL-sharp}). Since Talagrand's transport inequality for $\g$ gives $W_2^2(\mu*\g,\g)\leq 2\mathrm{KL}(\mu*\g\|\g)$, Theorem~\ref{thm:conv_poincare_KL} recovers the $W_2$ bound of Theorem~\ref{thm:conv_poincare} up to the factor $\sqrt{\log 2}\approx 0.83$, and is in this sense a strengthening of it.

\begin{restatable}{corollary}{smoothedPoincare}\label{cor:smoothed-poincare}
More generally, centred $\mu$ with finite covariance $\Sigma$ itself need not satisfy a Poincar\'e inequality. For $s>0$ let $\gsub{s}:=\mathcal N(0,sI_d)$ and $\mu_s:=\mu*\gsub{s}$. Define
\[
C_P^*:=\inf_{0<t\leq 1/2}C_P(\mu_t).
\]
If $C_P^*<\infty$, then
\[
W_2(\mu*\g,\g)\leq\sqrt{C_P^*\operatorname{tr}(\Sigma)},
\qquad
W_1(\mu*\g,\g)\leq\sqrt{C_P^*\operatorname{tr}(\Sigma)}.
\]
\end{restatable}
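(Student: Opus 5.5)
The idea is to split the unit Gaussian into two pieces. For $0<t\leq 1/2$ we have $\g=\gsub{t}*\gsub{1-t}$, so $\mu*\g=\mu_t*\gsub{1-t}$ and $\g=\gsub{t}*\gsub{1-t}$. The measure $\mu_t=\mu*\gsub{t}$ is centred, absolutely continuous, has covariance $\Sigma+tI_d$, and has Poincaré constant $C_P(\mu_t)$. We would therefore like to apply Theorem~\ref{thm:conv_poincare} with $\mu_t$ convolved against $\gsub{1-t}$. The difficulty is that the comparison target must be $\g$, not $\gsub{t}*\gsub{1-t}$ with $\mu_t$ in place of $\gsub t$. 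Applied naively, the theorem would compare $\mu_t*\gsub{1-t}$ to $\gsub{1-t}$, which has the wrong variance.

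The cleaner route is to use the Stein-kernel/heat-flow mechanism that underlies Theorem~\ref{thm:conv_poincare} (and the appendix proof of Theorem~\ref{thm:conv_poincare_KL}). That proof presumably interpolates along $\rho_u=\mu_u*\gsub{1-u}$, or equivalently compares $\mu*\g$ with $\g$ through a path, and bounds the $W_2$ speed of the path by $\E\|\tau\|$-type quantities of the current smoothed measure. My plan is to rerun that argument, but only over the portion of the path where the smoothed measure is $\mu_t$ with $t\leq 1/2$. I would use the interpolation $\nu_r:=\mu_{(1-r)}\ast\ldots$. Concretely, take the path $\rho_r=\mathrm{Law}(\sqrt{r}X+G)$ with $G\sim\g$ and $r\in[0,1]$, which runs from $\g$ to $\mu*\g$. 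Its velocity field is expressed through a Stein kernel of the law of $\sqrt r X$ convolved with Gaussian noise.

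The key scaling observation is the following. The law of $\sqrt{r}X+G$ equals $\mu_s$ rescaled by $\sqrt r$, where $s=1/r$. If instead we factor out part of the Gaussian noise, we can write $\sqrt r X+G=\sqrt r(X+\sqrt{t}G')+\sqrt{1-rt}\,G''$. The Stein kernel (or Poincaré) constant of $\sqrt r(X+\sqrt t G')$ is then $r\,C_P(\mu_t)$. Feeding this into the same estimate as in Theorem~\ref{thm:conv_poincare}, with the remaining Gaussian of variance $1-rt\geq 1/2$, gives a speed bound of order $\sqrt{C_P(\mu_t)\operatorname{tr}(\Sigma)}$. The bound on the remaining variance is exactly where $t\leq 1/2$ enters. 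Integrating over $r$ should then give $W_2(\mu*\g,\g)\leq\sqrt{C_P(\mu_t)\operatorname{tr}(\Sigma)}$ for every $t\le 1/2$, and taking the infimum over $t$ yields the claim. The $W_1$ bound follows from $W_1\leq W_2$.

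The main obstacle is checking that the constant does not degrade. The original theorem is calibrated to a unit Gaussian, whereas here the residual Gaussian has variance $1-rt\in[1/2,1]$. The trace term should also be $\operatorname{tr}(\Sigma)$ rather than $\operatorname{tr}(\Sigma+tI_d)$; this requires the added Gaussian part to contribute nothing to the Stein discrepancy, which holds because a Gaussian is its own zero-discrepancy component. I would first attempt a more direct reduction: apply Theorem~\ref{thm:conv_poincare_KL} in its Stein-kernel form to $\mu_t$, rescaled by $(1-t)^{-1/2}$ and compared with $\g$. Here one uses that $\tau_{\mu_t}-I$ is controlled by $\tau_\mu-I$ together with the Poincaré Stein kernel bound $\E\|\tau\|_F^2\le C_P\operatorname{tr}\Sigma$-type estimates. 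The factor $(1-t)^{-1}\le 2$ is then absorbed by the $\sqrt{\log 2}$ slack noted after Theorem~\ref{thm:conv_poincare_KL}, since $2\cdot\tfrac{\log 2}{2}\cdot 2\cdot\tfrac12$-type bookkeeping still gives constant at most $1$.
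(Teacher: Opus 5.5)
\textbf{There is a genuine gap.} Your framework is the same as the paper's: write $\g=\gsub{t}*\gsub{1-t}$, work with the smoothed measure $\mu_t$, and run Stein-kernel machinery against the residual Gaussian. But the step that carries the whole corollary is asserted, not proved.

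However you set up the path (or the Stein identity for $\mu*\g$), you need a matrix field $K$ that represents only the non-Gaussian component relative to $\mu_t$. With $Y_t=X+Z_t\sim\mu_t$, this means $\mathbb{E}\langle X,f(Y_t)\rangle=\mathbb{E}\operatorname{tr}(K(Y_t)\nabla f(Y_t))$ together with $\mathbb{E}\|K\|_F^2\le C_P(\mu_t)\operatorname{tr}(\Sigma)$. Equivalently, you need a Stein kernel of $\mu_t$ with $\mathbb{E}\|\tau_{\mu_t}-tI_d\|_F^2\le C_P(\mu_t)\operatorname{tr}(\Sigma)$.

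Theorem~\ref{thm:conv_poincare} and Fact~\ref{fact_1}, applied to $\mu_t$ or its rescaling, only give $\mathbb{E}\|\tau_{\mu_t}\|_F^2\le C_P(\mu_t)(\operatorname{tr}\Sigma+td)$. Subtracting $tI_d$ from this black-box bound, using $\mathbb{E}\operatorname{tr}\tau_{\mu_t}=\operatorname{tr}\Sigma+td$, leaves an excess $td\,(C_P(\mu_t)-t)-2t\operatorname{tr}\Sigma$. For low-rank $\Sigma$ and large $d$ this dominates $C_P(\mu_t)\operatorname{tr}\Sigma$. The slogan ``a Gaussian is its own zero-discrepancy component'' does not remove it. It is true that the Riesz-type kernel of $\mu_t$ splits as (non-Gaussian part) $+\,tI_d$, but bounding the non-Gaussian part is the entire content of the proof.

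\textbf{The missing idea is the paper's direct construction of $K_t$.} On mean-zero $W^{1,2}(\mu_t)$, the functional $\ell(f)=\mathbb{E}\langle X,f(Y_t)\rangle=\mathbb{E}\langle\mathbb{E}[X\mid Y_t],f(Y_t)\rangle$ satisfies $|\ell(f)|\le\sqrt{\operatorname{tr}\Sigma}\,\|f\|_{L^2(\mu_t)}\le\sqrt{C_P(\mu_t)\operatorname{tr}\Sigma}\,\|\nabla f\|_{L^2(\mu_t)}$. The $\operatorname{tr}\Sigma$, rather than $\operatorname{tr}\Sigma+td$, comes from $\mathbb{E}\|\mathbb{E}[X\mid Y_t]\|^2\le\mathbb{E}\|X\|^2$. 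Lax--Milgram then gives $K_t$ with the right norm. Gaussian integration by parts shows that $I_d+\mathbb{E}[K_t(Y_t)\mid Y]$ is a Stein kernel of $Y=Y_t+Z_{1-t}\sim\mu*\g$, and $W_2\le S$ finishes.

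\textbf{With $K_t$ in hand, your interpolation does close, and with a better constant.} Take $(X,\sqrt r\,X+G)\stackrel{d}{=}(X,R_r)$, where $R_r:=\sqrt r\,Y_t+\sqrt{1-rt}\,G''$ and $G''\sim\g$ is independent of $(X,Y_t)$. Apply the $K_t$-identity and then integrate by parts in $G''$. This gives the velocity $v_r(y)=\frac{1}{2\sqrt{1-rt}}\mathbb{E}[K_t(Y_t)G''\mid R_r=y]$. Lemma~\ref{lem:ce-length} then yields $W_2(\mu*\g,\g)\le\frac{1}{1+\sqrt{1-t}}\sqrt{C_P(\mu_t)\operatorname{tr}\Sigma}$ for every $t\in(0,1]$. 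So the restriction $t\le 1/2$ is not where the difficulty lies.

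\textbf{Your fallback ``direct reduction'' fails as stated, for three reasons.}
\begin{enumerate}
\item It presupposes a Stein kernel $\tau_\mu$ of $\mu$ itself, which need not exist here. For atomic $\mu$, $\mu_t$ is a Gaussian mixture with finite Poincar\'e constant, yet $\mu$ has no Stein kernel.
\item Rescaling by $(1-t)^{-1/2}$ compares $\mu*\g$ with $\mathcal N(0,(1-t)I_d)$ rather than $\g$. This is the wrong-variance problem you identified at the outset.
\item The trace is again $\operatorname{tr}\Sigma+td$.
\end{enumerate}
No $\sqrt{\log 2}$ slack can absorb these.
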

Corollary \ref{cor:smoothed-poincare} is proven in Appendix \ref{sec:poincare-appendix}.

\subsection{Convergence of stochastic gradient UBU}
We summarise Wasserstein convergence results from \cite{ububu}, which uses the approach of
\cite{leimkuhler2023contractiona}. Related coupling approaches for kinetic Langevin dynamics and its
discretisations include, among others, \cite{monmarche2021high,sanz2021wasserstein,chakreflection,schuh2024convergence}
(discretised) and \cite{eberle2019couplings,cheng2018underdamped,dalalyan2020sampling,schuh2022global} (continuous). Wasserstein convergence for the UBU discretisation was first studied in \cite{sanz2021wasserstein}. It is not possible to prove convergence with respect to the standard Euclidean norm due to the
fact that the generator is hypoelliptic. Instead, we consider a ``twisted'' Euclidean norm which is equivalent to the standard norm up to a constant (see \cite{monmarche2021high} for the same setup).

\begin{definition}\label{def:weightednorm}
For $z = (x,v) \in \R^{2d}$ define
\[
\| z \|^{2}_{a,b} = \| x \|^{2} + 2b \langle x,v \rangle + a \| v \|^{2},
\]
for $a,b > 0$ with $b^{2}<a$.
\end{definition}

\begin{remark}\label{rem:norm_equiv}
If $b^2<a/4$, then $\|\cdot\|_{a,b}$ is equivalent to the Euclidean norm on $\R^{2d}$, and
\begin{equation}\label{eq:normequiv}
\frac{1}{2}\min(a,1) \|z\|^2\leq \frac{1}{2}\|z\|^{2}_{a,0} \leq \|z\|^{2}_{a,b} \leq \frac{3}{2}\|z\|^{2}_{a,0}\leq \frac{3}{2}\max(a,1) \|z\|^2.
\end{equation}
\end{remark}

\begin{definition}\label{def:wass}
For $p \in [1,\infty)$, and $\mu,\nu \in \mathcal{P}_p(\R^{2d})$, the $p$-Wasserstein distance associated with $\|\cdot \|_{a,b}$ between $\mu$ and $\nu$ is
\begin{equation}\label{eq:wass_dist}
W_{p,a,b}(\mu,\nu) =
\Big(\inf_{\tau \in \Pi(\mu,\nu)} \int_{\R^{2d} \times \R^{2d}} \|z_1 - z_2\|^p_{a,b}d\tau(z_1,z_2)\Big)^{1/p}.
\end{equation}
\end{definition}

\begin{restatable}{proposition}{ubuWasserstein}\label{prop:Wasserstein}
Suppose that $V$ satisfies Assumption~\ref{ass:strong-log-conc-log-smooth} and let
\begin{equation}\label{eq:abeq}
a=\frac{1}{L},\quad b=\frac{1}{\gamma}.
\end{equation}
Let $P_{h}$ denote the transition kernel of one UBU step with stepsize $h$, then for all $\gamma \geq \sqrt{8L}$,
$0< h < \frac{1}{2\gamma}$, $p \geq 1$, and $\mu,\nu \in \mathcal{P}_{p}(\R^{2d})$,
\[
W_{p,a,b}\left(\nu P^n_{h} ,\mu P^n_{h} \right)
\le
\left(1 - \frac{mh}{4\gamma}\right)^{n/2} W_{p,a,b}\left(\nu,\mu\right),
\qquad n\in\N.
\]
Moreover, $P_{h}$ has a unique invariant measure $\overline{\pi}_{h}$ with $\overline{\pi}_{h} \in \mathcal{P}_{p}(\R^{2d})$.
\end{restatable}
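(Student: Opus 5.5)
The plan is a synchronous coupling argument: show that one UBU step is a contraction in the twisted norm, with $p$-independent constant, pathwise for every realisation of the noise. Couple two chains started from $z_0=(x_0,v_0)$ and $z_0'=(x_0',v_0')$ by feeding both the same Gaussians $\xi^{(1)},\xi^{(2)}$ in both $\mathcal{U}$ half-steps. The noise cancels in the difference. Since $\mathcal{U}$ is affine in $(x,v)$ with noise entering additively, the difference $\delta z=z-z'$ evolves deterministically:
\[
\delta z_{1}=\Phi\,\bigl(I-hQ\bigr)\,\Phi\,\delta z_0 .
\]
Here $\Phi$ is the linear part of $\mathcal{U}(\cdot,h/2)$, namely $(x,v)\mapsto(x+\tfrac{1-\eta}{\gamma}v,\ \eta v)$. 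The map $(I-hQ)$ comes from $\mathcal{B}$, with $Q$ acting as $(x,v)\mapsto(0,H x)$, where
\[
H=\int_0^1\hess V(x'+t(x-x'))\,dt
\]
is evaluated at the post-$\mathcal{U}$ positions. By Assumption~\ref{ass:strong-log-conc-log-smooth}, $H$ is symmetric with $mI\preccurlyeq H\preccurlyeq LI$.

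The key step is to show that for every symmetric $H$ with spectrum in $[m,L]$,
\[
M_H:=\Phi(I-hQ_H)\Phi \quad\text{satisfies}\quad \|M_H z\|_{a,b}^2\le \Bigl(1-\frac{mh}{4\gamma}\Bigr)\|z\|_{a,b}^2,
\]
with $a=1/L$ and $b=1/\gamma$. Because $\|\cdot\|_{a,b}$ is induced by the matrix $G=\begin{pmatrix}1&b\\ b&a\end{pmatrix}\otimes I_d$, this is the matrix inequality $M_H^\top G M_H\preccurlyeq(1-\tfrac{mh}{4\gamma})G$. All blocks are polynomials in $H$, so we can diagonalise $H$ and reduce to a family of $2\times2$ problems indexed by an eigenvalue $\lambda\in[m,L]$.

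For each $\lambda$, I would expand $\eta=e^{-\gamma h/2}$ and the entries of $M_\lambda$ in $h$, using $h<1/(2\gamma)$ to control the remainders. The condition is that $(1-c)G-M_\lambda^\top G M_\lambda$ is positive semidefinite, with $c=mh/(4\gamma)$. It suffices to check that its trace and determinant are nonnegative. At leading order in $h$, this matrix is $h$ times the Lyapunov-type matrix $-(A^\top G+GA)-cG/h$ for the continuous drift $A=\begin{pmatrix}0&1\\-\lambda&-\gamma\end{pmatrix}$, and positivity there uses $\gamma^2\ge 8L\ge8\lambda$. I expect the main obstacle to be controlling the higher-order terms in $h$ uniformly in $\lambda\in[m,L]$ with the stated constants. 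This is where $\gamma\ge\sqrt{8L}$ and $h<1/(2\gamma)$ are tuned. Since the statement is a summary of \cite{ububu}, I would follow their explicit computation and, if needed, cite it for this algebraic verification.

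Given the pathwise contraction $\|\delta z_1\|_{a,b}\le(1-\tfrac{mh}{4\gamma})^{1/2}\|\delta z_0\|_{a,b}$, take an optimal coupling of $(\nu,\mu)$ for $W_{p,a,b}$, run the synchronous coupling for $n$ steps, and raise to the $p$-th power and integrate. This gives the stated bound for every $p\ge1$. Existence and uniqueness of $\overline{\pi}_h$ then follow from Banach's fixed point theorem on the complete space $(\mathcal{P}_p(\R^{2d}),W_{p,a,b})$, which is complete by the norm equivalence of Remark~\ref{rem:norm_equiv}. This needs $P_h$ to map $\mathcal{P}_p$ into itself, which holds because one step is a Lipschitz map plus Gaussian noise with all moments finite.
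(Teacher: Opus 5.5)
Your proposal matches the paper's proof. Both use a synchronous coupling, reduce $(1-c(h))\mathcal{M}-\mathcal{P}^\top\mathcal{M}\mathcal{P}\succeq 0$ to $2\times 2$ conditions indexed by the eigenvalues $\lambda\in[m,L]$ of the averaged Hessian (your trace/determinant test is equivalent to the paper's Schur-complement test $A\succ0$, $AC-B^2\succ0$ for commuting blocks), take the explicit polynomial verification from \cite{ububu}, and obtain the invariant measure by a fixed-point argument. One caution: since $\gamma h$ may be as large as $1/2$, your leading-order Lyapunov expansion is only a heuristic, and the paper instead bounds the exact entries using inequalities such as $1-\eta\le\gamma h/2$ and $1-\eta^4>\gamma h$, which is what the cited computation supplies.
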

\begin{proof}
    See \cite[Proposition D.6]{ububu} and also Appendix~\ref{app:ubu_conv}.
\end{proof}
\begin{restatable}{proposition}{sgubuWasserstein}\label{prop:wass_SGUBU}
Consider UBU with stochastic gradients and transition kernel $P_{h}$, where $V$ satisfies Assumption~\ref{ass:strong-log-conc-log-smooth}.
Assume a stochastic gradient $(\mathcal{G},\Gamma)$ (as in Definition~\ref{def:stochastic_gradient}) satisfying Assumption~\ref{Assumption:Bounded_Variance} with constant $C_{G}$.
Consider $\mu,\nu \in \mathcal{P}_{2}(\R^{2d})$ and any two synchronously coupled stochastic gradient UBU chains $(x_{k},v_{k})_{k\in \mathbb{N}}$ and $(\Tilde{x}_{k},\Tilde{v}_{k})_{k\in \mathbb{N}}$ initialized at $\mu$ and $\nu$ respectively. Then under the same assumptions as in Proposition~\ref{prop:Wasserstein}, we have for all $n \in \mathbb{N}$,
\begin{align*}
    \E\left[\|(x_{n+1} - \Tilde{x}_{n+1},v_{n+1} - \Tilde{v}_{n+1})\|^{2}_{a,b} \mid x_n, \Tilde{x}_{n},v_{n},\Tilde{v}_{n}\right]
    \le
    \left(1-\frac{mh}{4\gamma} + \frac{8h^{2}C_{G}}{7L}\right)
    \|(x_{n} - \Tilde{x}_{n},v_{n} - \Tilde{v}_{n})\|^{2}_{a,b}.
\end{align*}
Consequently, for $p \in \{1,2\}$,
\[
W_{p,a,b}\left(\nu P^n_{h} ,\mu P^n_{h} \right)
\le
\left(1 - \frac{mh}{4\gamma} + \frac{8h^{2}C_{G}}{7L}\right)^{n/2}
W_{p,a,b}\left(\nu,\mu\right).
\]
Moreover, provided $h<\min\left\{\frac{1}{2\gamma},\frac{7mL}{32C_{G}\gamma}\right\}$, SG-UBU has a unique invariant measure $\overline{\pi}_{h}\in \mathcal{P}_{2}(\R^{2d})$.
\end{restatable}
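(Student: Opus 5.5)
We compare one step of two synchronously coupled SG-UBU chains: both use the same Gaussian increments $\xi^{(1)},\xi^{(2)}$ in each $\mathcal{U}$ half-step and the same $\omega\sim\Gamma$ in the $\mathcal{B}$ step. The $\mathcal{U}$ maps are affine in $(x,v)$ and the noise cancels in the difference. Hence the difference $z_{n+1}-\tilde z_{n+1}$ is a deterministic linear map applied to the difference after the $\mathcal{B}$ step, which in turn is a linear map of $z_n-\tilde z_n$. The plan is to write the stochastic $\mathcal{B}$ step as the exact $\mathcal{B}$ step plus a mean-zero perturbation, and to exploit that this perturbation is conditionally centred and hence orthogonal to the deterministic part.

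\textbf{Step 1 (decomposition).} Let $(x',v')$ and $(\tilde x',\tilde v')$ be the states after the first half-step $\mathcal{U}$. Write
\[
\mathcal{G}(x',\omega)-\mathcal{G}(\tilde x',\omega)=\nabla V(x')-\nabla V(\tilde x')+E,\qquad E:=\int_0^1 \big(D_x\mathcal{G}-\nabla^2V\big)(\tilde x'+t(x'-\tilde x'),\omega)\,dt\,(x'-\tilde x').
\]
Conditionally on $(x',\tilde x')$ we have $\mathbb{E}[E]=0$, and Assumption~\ref{Assumption:Bounded_Variance} together with Jensen's inequality gives $\mathbb{E}\|E\|^2\le C_G\|x'-\tilde x'\|^2$. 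The post-step difference is therefore $D+\Delta$, where $D$ is exactly the difference produced by the full-gradient UBU step. The term $\Delta=-h\,M(0,E)$ is the perturbation $(0,E)$, multiplied by $-h$ and pushed through the linear map $M$ of the second $\mathcal{U}$ half-step.

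\textbf{Step 2 (orthogonality).} The $\|\cdot\|_{a,b}$ norm comes from an inner product, and conditionally on the state $\Delta$ has mean zero while $D$ is measurable. Hence
\[
\mathbb{E}\|D+\Delta\|_{a,b}^2=\|D\|_{a,b}^2+\mathbb{E}\|\Delta\|_{a,b}^2.
\]
The deterministic contraction estimate underlying Proposition~\ref{prop:Wasserstein}, namely the $p=2$ synchronous-coupling bound from \cite{ububu}, which holds pointwise for the linearised interpolation, gives $\|D\|_{a,b}^2\le(1-\tfrac{mh}{4\gamma})\|z_n-\tilde z_n\|_{a,b}^2$. If only the $W$-statement is available, I would redo that pointwise computation; the mean value theorem writes the full-gradient step as a matrix $\mathcal{H}$ with $mI\preccurlyeq\mathcal{H}\preccurlyeq LI$.

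\textbf{Step 3 (bounding the noise term).} This is the main obstacle, because the constant $8/(7L)$ must come out. We have $\|\Delta\|_{a,b}^2=h^2\|M(0,E)\|_{a,b}^2$. The map $M$ sends $(0,e)$ to $\big(\tfrac{1-\eta}{\gamma}e,\ \eta e\big)$, so with $a=1/L$ and $b=1/\gamma$,
\[
\|\Delta\|_{a,b}^2\le h^2\|E\|^2\Big(\big(\tfrac{1-\eta}{\gamma}\big)^2+\tfrac{2}{\gamma}\cdot\tfrac{1-\eta}{\gamma}\,\eta+\tfrac{\eta^2}{L}\Big).
\]
Using $1-\eta\le\gamma h/2\le 1/4$, $\gamma^2\ge 8L$ and $\eta\le1$, the bracket is at most about $1/L$ times a constant near $1$. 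It remains to bound $\|x'-\tilde x'\|^2$ by $\|z_n-\tilde z_n\|_{a,b}^2$. Here $x'-\tilde x'=(x_n-\tilde x_n)+\tfrac{1-\eta}{\gamma}(v_n-\tilde v_n)$, and Remark~\ref{rem:norm_equiv} together with $\tfrac{1-\eta}{\gamma}\le h/2$ and $a=1/L$ controls this with a constant close to $1$. Multiplying these constants, with careful tracking and using $b^2\le a/8$, I expect a total factor $\le 8/7$, though matching $8/7$ exactly may require sharper bookkeeping. This yields the conditional one-step bound.

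\textbf{Step 4 (consequences).} For $p=2$, iterate the conditional bound with the tower property and take an optimal coupling of $\mu,\nu$ at time $0$. For $p=1$, apply Jensen to get $\mathbb{E}\|\cdot\|_{a,b}\le(\mathbb{E}\|\cdot\|^2_{a,b})^{1/2}$ conditionally, iterate, and start from a $W_1$-optimal coupling; the conditional $L^2$ bound at each step gives the rate $(\cdot)^{n/2}$ directly from the initial distance. For the invariant measure, the condition $h<\tfrac{7mL}{32C_G\gamma}$ gives $\tfrac{8h^2C_G}{7L}<\tfrac{mh}{8\gamma}$, so the rate is strictly below $1$. Then $P_h$ is a strict contraction on the complete space $(\mathcal{P}_2,W_{2,a,b})$, and Banach's fixed point theorem gives a unique invariant measure. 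That $P_h$ maps $\mathcal{P}_2$ to itself follows from the second-moment condition at $x_\star$ and the Jacobian-variance bound, which together give $\mathbb{E}\|\mathcal{G}(x,\omega)\|^2\lesssim 1+\|x\|^2$.
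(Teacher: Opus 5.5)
Your architecture is the paper's: split the SG step into the full-gradient synchronous difference plus $-h$ times the second-$\mathcal U$-propagated centred Jacobian error, remove the cross term by conditional centring, and invoke the pointwise full-gradient contraction. Then use Jensen/tower for $p\in\{1,2\}$, and linear growth plus Banach for the invariant measure.

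The gap is in Step 3, which is exactly where the constant $8h^2C_G/(7L)$ is decided, and your sketched route does not deliver it.

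\begin{itemize}
\item Remark~\ref{rem:norm_equiv} only gives $\|\delta z\|_{a,b}^2\ge\frac12\|\delta z\|_{a,0}^2$. So bounding $\|\delta x+\alpha\delta v\|^2\le(1+\alpha^2L)\|\delta z\|_{a,0}^2$, with $\alpha=(1-\eta)/\gamma$, costs a factor of about $2$, not a constant near $1$. Even the optimal generic equivalence $\|\delta z\|_{a,b}^2\ge(1-b/\sqrt a)\|\delta z\|_{a,0}^2$ only gives about $1.55$.
\item Your crude bound on the bracket gives roughly $(1+9/128)/L$ rather than $1/L$.
\end{itemize}

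You would therefore prove the one-step bound with about $2h^2C_G/L$ instead of $8h^2C_G/(7L)$. That is not the stated rate, and it also breaks the last claim. The threshold $h<7mL/(32C_G\gamma)$ is exactly equivalent to $8h^2C_G/(7L)<mh/(4\gamma)$, so with a constant near $2$ contraction is no longer guaranteed over the stated stepsize range. Incidentally, the threshold gives $<mh/(4\gamma)$, not $<mh/(8\gamma)$ as you wrote; that slip is harmless.

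The missing ingredient is two exact computations in the $\mathcal M$-geometry, where $\mathcal M$ is the Gram matrix of $\|\cdot\|_{a,b}$. They exploit the alignment between $\delta y=\delta x+\alpha\delta v$ (with $\alpha>0$) and the positive cross term $2b\langle x,v\rangle$.

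First, take $S=(\alpha I_d,\eta I_d)^\top$ and use $\alpha=b(1-\eta)$. Then
\[
S^\top\mathcal M S=\bigl(b^2(1-\eta^2)+a\eta^2\bigr)I_d\preceq aI_d=L^{-1}I_d,
\]
since $b^2\le a$.

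Second, take $T=(I_d,\alpha I_d)$ and apply Cauchy--Schwarz in the $\mathcal M$-inner product. This gives the sharp constant
\[
\sup_{\delta z\neq 0}\frac{\|T\delta z\|^2}{\|\delta z\|_{a,b}^2}=\|T\mathcal M^{-1/2}\|_{\op}^2=\frac{a-2b\alpha+\alpha^2}{a-b^2}\le\frac{a}{a-b^2}=\frac{1}{1-L/\gamma^2}\le\frac87 .
\]
Here $0\le\alpha\le b$, so $\alpha^2-2b\alpha\le0$.

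Multiplying the two gives $\mathbb{E}\|\Delta\|_{a,b}^2\le\frac87 ah^2C_G\|\delta z\|_{a,b}^2=\frac{8h^2C_G}{7L}\|\delta z\|_{a,b}^2$, exactly as stated. The remaining steps (orthogonality, the $p=1$ Jensen argument, linear growth and Banach) are fine as written.
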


\begin{proof}
See Appendix~\ref{app:ubu_conv}.
\end{proof}

\subsection{Asymptotic bias of stochastic gradient UBU}

We now state the new asymptotic bias estimates for stochastic gradient UBU, which are then combined with the Gaussian convolution inequalities under varying assumptions on the stochastic gradient noise.

\begin{restatable}{theorem}{sgubuBias}\label{theorem:bias_sgUBU}
Consider the stochastic gradient UBU scheme under Assumption~\ref{assum:sg_variance} and the same assumptions as Proposition~\ref{prop:wass_SGUBU}. For
\[
h<\min\left\{\frac{1}{2\gamma},\frac{7mL}{32C_{G}\gamma}\right\},
\qquad p \in \{1,2\},
\]
we have
\begin{equation}\label{eq:sg_ubu_bias}
W_{p,a,b}(\overline{\pi},\overline{\pi}_{h})
\le
\frac{\gamma L h}{mL - \frac{32}{7}hC_{G}\gamma}\left[\frac{24}{7}\sqrt{d}\left(\sqrt{L} + \gamma\right) + \frac{4C^{1/2}_{G}\sqrt{d}}{\sqrt{L}} + 4\sigma_{p} + \frac{9(\mathbb{E}_{x\sim \pi}W^{p}_{p}(\mu_{x} * \g, \g))^{1/p}}{h^{2}\sqrt{L}}\right],
\end{equation}
where $he^{-h\gamma/2}(\nabla V(x) - \mathcal{G}(x,\omega)) \sim \mu_{x}$ for $\omega\sim \Gamma$.
\end{restatable}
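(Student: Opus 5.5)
We start from the exact invariance $\overline{\pi}=\overline{\pi}P^{\mathrm{ex}}_h$, where $P^{\mathrm{ex}}_h$ is the kinetic Langevin flow over time $h$, and from $\overline{\pi}_h=\overline{\pi}_hP_h$, where $P_h$ is the SG-UBU kernel. The triangle inequality together with the one-step contraction of Proposition~\ref{prop:wass_SGUBU} gives
\[
W_{p,a,b}(\overline{\pi},\overline{\pi}_h)\le W_{p,a,b}(\overline{\pi}P_h,\overline{\pi}P_h^{\mathrm{ex}}) + \Big(1-\tfrac{mh}{4\gamma}+\tfrac{8h^2C_G}{7L}\Big)^{1/2}W_{p,a,b}(\overline{\pi},\overline{\pi}_h).
\]
Using $1-\sqrt{1-x}\ge x/2$ and rearranging, $W_{p,a,b}(\overline\pi,\overline\pi_h)$ is bounded by
\[
\frac{8\gamma L}{h(mL-\tfrac{32}{7}hC_G\gamma)}\,W_{p,a,b}(\overline{\pi}P_h,\overline{\pi}P_h^{\mathrm{ex}}).
\]
This accounts for the prefactor in \eqref{eq:sg_ubu_bias}. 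It therefore remains to show that the one-step local error, started from stationarity, is $O(h^2)$ times the bracket.

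To bound the local error, introduce the intermediate kernel $P_h^{\mathrm{UBU}}$, i.e.\ UBU with exact gradients, and split
\[
W_{p,a,b}(\overline\pi P_h,\overline\pi P^{\mathrm{ex}}_h)\le W_{p,a,b}(\overline\pi P^{\mathrm{UBU}}_h,\overline\pi P^{\mathrm{ex}}_h)+W_{p,a,b}(\overline\pi P_h,\overline\pi P^{\mathrm{UBU}}_h).
\]
The first term is the deterministic-gradient local error. We would take it from the UBU analysis of \cite{ububu} (strong order two under the stationary measure), which produces the $h^2\sqrt d(\sqrt L+\gamma)$ contribution. We should be careful here to use only Assumption~\ref{ass:strong-log-conc-log-smooth} and the moments of $\overline\pi$ ($\mathbb E\|v\|^2=d$, $\mathbb E\|\nabla V(x)\|^2\le Ld$).

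For the second term, couple the two chains synchronously through the first $\mathcal U$ half-step, which produces an intermediate point $(x',v')$ whose law is the $x$-marginal of $\overline\pi P_{\mathcal U,h/2}$, namely $\pi$ again, since $\mathcal U$ preserves $\overline\pi$. After the $\mathcal B$ step the velocities differ by $h(\nabla V(x')-\mathcal G(x',\omega))$. The final $\mathcal U$ half-step multiplies this difference by $\eta$ in $v$ and by $(1-\eta)/\gamma$ in $x$, and then adds the Gaussian noise. The $\mathcal{G}$-dependence also enters through $x$-differences in later steps, but within one step the $x$ component is perturbed only at order $h^2\times$noise, which is where the $\sigma_p$ and $C_G$ terms arise.

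The key point is \emph{not} to couple the noise synchronously in the final half-step. Instead, conditional on $x'$, we use the fresh Gaussian velocity noise of the last $\mathcal U$ step to absorb the mean-zero perturbation $\mu_{x'}$. The velocity noise $\sqrt{2\gamma}\mathcal Z^{(2)}$ has variance of order $\gamma h$ per coordinate. After rescaling to the standard Gaussian $\g$, Theorem~\ref{thm:conv_general} (or Theorems~\ref{thm:conv_refined_12} and \ref{thm:conv_poincare}) yields a coupling in which the velocity discrepancy equals $\mathbb E_x W_p^p(\mu_x*\g,\g)$ up to the scaling. The correlated position noise must be coupled consistently with this; we plan to write the joint Gaussian as $\xi^{(2)}$ plus an independent part, so that the position error is controlled by $h\times$(velocity error) and $h^2\sigma_p$.

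Finally, convert to $\|\cdot\|_{a,b}$ via Remark~\ref{rem:norm_equiv} with $a=1/L$. This is the source of the $1/\sqrt L$ factors and of the constants $24/7$, $4$, $9$.

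We expect the main obstacle to be this last step. The perturbation must be placed correctly relative to the partially correlated pair $(\mathcal Z^{(1)}-\mathcal Z^{(2)},\mathcal Z^{(2)})$ so that the resulting convolution has exactly the form $\mu_x*\g$ with the scaling $he^{-h\gamma/2}$ appearing in the definition of $\mu_x$. The remaining $\sigma_p$ and $C_G$ terms then have to be bookkept: they come from the position component and from the mismatch between gradients evaluated at slightly different points, the latter controlled by Assumption~\ref{Assumption:Bounded_Variance}. We would first try conditioning on $\xi^{(1)}$ and applying the convolution coupling only to the independent $\xi^{(2)}$ direction.
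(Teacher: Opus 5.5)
\textbf{Verdict: there is a genuine gap.} Your outer argument matches the paper's: the triangle inequality, the contraction of Proposition~\ref{prop:wass_SGUBU}, $1-\sqrt{1-r}\ge r/2$, and the full-gradient UBU local error from \cite{ububu}. The problem is the stochastic-gradient local term, which does not come out at order $h^2$.

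You couple the first $\mathcal U$ half-step synchronously, and absorb the kick $h\eta\mathcal R$ only into the \emph{fresh} velocity noise of the last half-step. That noise has covariance $(1-e^{-\gamma h})I_d\approx\gamma h\,I_d$. ``Rescaling to $\g$'' then costs a factor $(1-e^{-\gamma h})^{-1/2}$. By \eqref{eq:s-scaling}, $W_p(\mu*\gsub{s},\gsub{s})\lesssim s^{-1/2}\big(\int\|y\|^{2p}\mu(dy)\big)^{1/p}$. With $s\approx\gamma h$ and a kick of size $h$, the local error is of order $h^{2}/\sqrt{\gamma h}\asymp h^{3/2}$. After dividing by the contraction gap ($\asymp h$) you get only an $O(h^{1/2})$ bias, which is the rate the paper is trying to beat. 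You also do not obtain the term $\mathbb{E}_{x\sim\pi}W_p^p(\mu_x*\g,\g)$ with the unit Gaussian.

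Your fallback is worse. The part of $\sqrt{2\gamma}\mathcal Z^{(2)}$ independent of $\xi^{(1)}$ has variance $(1-e^{-\gamma h})\big(1-\tfrac{4\tanh(\gamma h/4)}{\gamma h}\big)=O(h^3)$, so smoothing by it is useless.

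Separately, the $\mathcal U$ half-step does not preserve $\overline\pi$. The transport $\dot x=v$ turns the $x$-marginal into $\pi$ convolved with the law of $\int_0^{h/2}V_s\,ds$. So your intermediate point is not $\pi$-distributed, and $\sigma_p$ and the convolution term would be averaged under the wrong law.

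The missing idea is to use stationarity of the \emph{initial} velocity. Condition only on the starting position $x\sim\pi$, with $v\sim\mathcal N(0,I_d)$. Then:
\begin{itemize}
\item The whole output velocity $\mathcal E(h)v+\sqrt{2\gamma}\int_0^h\mathcal E(h-s)\,dW_s$ is exactly $\mathcal N(0,I_d)$ (Proposition~\ref{prop:block-gaussian}), and the position noise is $\mathcal F(h)Z_d+\sigma(h,\gamma)\widehat Z_d$.
\item You couple $A_x+Z_d'$ optimally with this unit Gaussian $Z_d$, where $A_x=h\eta\mathcal R_x(\omega)$ is evaluated at the \emph{starting} point $x$.
\item This forces you to give up the synchronous first half-step. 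Writing $Y_x-x=c_1Z_d+c_2\widehat Z_d+R_d$ with $c_1=\eta\mathcal F(h/2)$ gives $y-\widetilde y=\eta\mathcal F(h/2)(Z_d-Z_d')$, which is paid for through $L$ at lower order.
\item The $C_G$ term comes from re-centring the noise from $\widetilde y$ to $x$ within the same step, not from later steps. This uses $\|\widetilde y-x\|_{L^2}\le h\sqrt d/2$, the independence of $\widetilde y$ and $\omega$, and Assumption~\ref{Assumption:Bounded_Variance}.
\item The constants come from computing the twisted norm of the resulting error vectors directly, not from the norm equivalence of Remark~\ref{rem:norm_equiv}.
\end{itemize}
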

In \eqref{eq:sg_ubu_bias}, the prefactor is $\mathcal{O}(h)$. The bracketed terms separate the full-gradient UBU discretisation error from the additional stochastic-gradient error, with the last term controlled via our Gaussian convolution inequalities. Hence $W_{p,a,b}(\overline{\pi},\overline{\pi}_h)=\mathcal{O}(h)$ under mild assumptions on the gradient noise. Appendix~\ref{sec:gaussian-sharpness} computes the invariant law exactly for a standard Gaussian target with additive Gaussian stochastic-gradient noise and shows that the $h\sqrt d$ rate is sharp up to constant factors in that setting. More precisely, we can combine Theorem~\ref{theorem:bias_sgUBU} with the following Gaussian convolution inequalities which can easily be substituted into \eqref{eq:sg_ubu_bias}.

\begin{restatable}{corollary}{pluginBounds}\label{cor:plug_in}
For $p \in \{1,2\}$, let $X \sim \pi$, $\omega \sim \Gamma$, and set $\mathcal{R}(X,\omega)=\nabla V(X)-\mathcal{G}(X,\omega)$.
\begin{enumerate}[label=(\roman*),leftmargin=2.3em]
\item If $\mathcal{R}(X,\omega)$ has finite $2p$-th moment (equivalently $\sigma_{2p}<\infty$), then in \eqref{eq:sg_ubu_bias},
\[
\frac{9(\mathbb{E}_{x\sim \pi}W^{p}_{p}(\mu_{x} * \g, \g))^{1/p}}{h^{2}\sqrt{L}}
\le
\frac{284\sigma_{2p}^2}{\sqrt{L}}.
\]

\item If $\mathcal{R}(x,\omega)$ is absolutely continuous with respect to the Lebesgue measure and satisfies a Poincar\'e inequality  for each $x \in \mathbb{R}^{d}$ with constant $C_{P}(x)$, then
\[
\frac{9(\mathbb{E}_{x\sim \pi}W^{p}_{p}(\mu_{x} * \g, \g))^{1/p}}{h^{2}\sqrt{L}}
\le
\frac{9\sqrt{\int C_{P}(x)\mathrm{tr}\left(\mathrm{Cov}(\mathcal{R}(x,\cdot))\right)d\pi(x)}}{\sqrt{L}}.
\]

\item If $p=1$ and only second moments are assumed, then, with
\[
Y:=he^{-h\gamma/2}\mathcal{R}(X,\omega),
\qquad
\tau_1(Y):=\mathbb{E}\left[\|Y\|\mathbf 1_{\{\|Y\|>1\}}\right],
\]
we have the loose bound
\[
\frac{9\mathbb{E}_{x\sim \pi}W_{1}(\mu_{x} * \g, \g)}{h^{2}\sqrt{L}}
\leq
\frac{18\tau_1(Y)}{h^{2}\sqrt{L}}
+
\frac{23e^{-h\gamma}\sqrt{d}}{\sqrt{L}}
\mathbb{E}_{X\sim\pi}\!\left[
\lambda_{\max}\left(\mathrm{Cov}(\mathcal{R}(X,\cdot))\right)
\right].
\]
The final expectation is finite under the stated second-moment assumption, since $\lambda_{\max}(\mathrm{Cov}(\mathcal R(X,\cdot)))\leq\mathbb{E}[\|\mathcal R(X,\omega)\|^2\mid X]$.

\item If $p=2$, assume in addition that
\[
\mathbb{E}_{X\sim\pi}\left[
\lambda_{\max}\left(\mathrm{Cov}(\mathcal{R}(X,\cdot))\right)^2
\right]<\infty.
\]
With
\[
Y:=he^{-h\gamma/2}\mathcal{R}(X,\omega),
\qquad
\tau_2(Y):=\left(\mathbb{E}\left[\|Y\|^2\mathbf 1_{\{\|Y\|>1\}}\right]\right)^{1/2},
\]
we have
\[
\frac{9(\mathbb{E}_{x\sim \pi}W^{2}_{2}(\mu_{x} * \g, \g))^{1/2}}{h^{2}\sqrt{L}} \leq \frac{18\tau_2(Y)}{h^{2}\sqrt{L}} + \frac{23e^{-h\gamma}\sqrt{d}}{\sqrt{L}}
\left(\mathbb{E}_{X\sim\pi}\left[\lambda_{\max}\left(\mathrm{Cov}(\mathcal{R}(X,\cdot))\right)^2\right]\right)^{1/2}.
\]
\end{enumerate}
\end{restatable}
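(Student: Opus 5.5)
The plan is to apply, for each fixed $x$, the appropriate Gaussian convolution inequality to the centred law $\mu_x=\mathrm{Law}(Y_x)$, where $Y_x:=he^{-h\gamma/2}\mathcal R(x,\omega)$. This law is centred because $\mathbb E_\omega\mathcal G(x,\omega)=\nabla V(x)$. Each bound produces an explicit factor $c^2$ with $c:=he^{-h\gamma/2}$, and $c^2\le h^2$ cancels the $h^{-2}$ in front. We then integrate in $x\sim\pi$ and check the numerical constants.

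\textbf{(i)} By Theorem~\ref{thm:conv_general}, $W_p(\mu_x*\g,\g)\le c_p\,c^2\big(\mathbb E_\omega\|\mathcal R(x,\omega)\|^{2p}\big)^{1/p}$, where $c_p:=K_p/(1-(1-2^{-2p})^{1/p})$. Raising to the power $p$, integrating over $\pi$ and taking the $1/p$-th root gives $(\mathbb E_x W_p^p)^{1/p}\le c_p h^2\sigma_{2p}^2$. It remains to bound the constants.
\begin{itemize}
\item For $p=1$: $C_1=\sqrt{2/\pi}$, so $K_1=1$. The denominator is $1/4$, hence $9c_1=36$.
\item For $p=2$: $C_2=1$, so $K_2=1$. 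The denominator is $1-\sqrt{15/16}\approx0.0318$, hence $9c_2\approx 283.4\le 284$.
\end{itemize}

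\textbf{(ii)} Scaling by $c$ multiplies both the Poincar\'e constant and the covariance by $c^2$. Theorem~\ref{thm:conv_poincare} therefore gives $W_p(\mu_x*\g,\g)\le c^2\sqrt{C_P(x)\operatorname{tr}\mathrm{Cov}(\mathcal R(x,\cdot))}$. For $p=2$, square and integrate. For $p=1$, integrate and then apply Jensen ($\mathbb E\sqrt{\cdot}\le\sqrt{\mathbb E\,\cdot}$). Using $c^2\le h^2$ yields the stated bound.

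\textbf{(iii)} and \textbf{(iv)} use Theorem~\ref{thm:conv_refined_12}, applied to $Y_x$, in its ``in particular'' form, together with $\log(1+u)\le u$. This gives
\[
W_p(\mu_x*\g,\g)\le 2\tau_p(Y_x)+\sqrt{2c_4}\,\|\widetilde\Sigma_x\|_F .
\]
The key step is to dominate the truncated covariance. We have $\widetilde\Sigma_x=\mathbb E[Y_xY_x^\top\mathbf 1_{\{\|Y_x\|\le1\}}]-mm^\top\preceq \mathbb E[Y_xY_x^\top]=\mathrm{Cov}(Y_x)$, where $m$ is the truncated mean and the last equality uses that $Y_x$ is centred. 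Since $\widetilde\Sigma_x\succeq0$, this gives
\[
\|\widetilde\Sigma_x\|_F\le\sqrt d\,\lambda_{\max}(\widetilde\Sigma_x)\le\sqrt d\,c^2\lambda_{\max}\big(\mathrm{Cov}(\mathcal R(x,\cdot))\big),
\]
with $c^2=h^2e^{-h\gamma}$.

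The two cases are then handled as follows.
\begin{itemize}
\item For $p=1$, integrate over $\pi$ and use $\mathbb E_x\tau_1(Y_x)=\tau_1(Y)$ (tower property).
\item For $p=2$, apply Minkowski in $L^2(\pi)$ to the sum and use $\mathbb E_x\tau_2(Y_x)^2=\tau_2(Y)^2$.
\end{itemize}
Finally, $9\cdot 2=18$ and $9\sqrt{2c_4}=9\sqrt{(e^4-5)/8}\approx 22.4\le 23$. The finiteness remark follows from $\lambda_{\max}(\mathrm{Cov})\le\operatorname{tr}\mathrm{Cov}\le\mathbb E[\|\mathcal R\|^2\mid X]$.

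The main obstacle is bookkeeping rather than new ideas. Specifically: the Loewner domination of $\widetilde\Sigma_x$ by $\mathrm{Cov}(Y_x)$, correctly tracking the scaling factor $c^2=h^2e^{-h\gamma}$ (which is where the $e^{-h\gamma}$ in (iii)–(iv) comes from), and verifying the tight constant $284$ for $p=2$ in (i).
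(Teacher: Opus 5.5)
Your proposal is correct and follows the paper's proof essentially step for step. You apply Theorems~\ref{thm:conv_general}, \ref{thm:conv_poincare} and \ref{thm:conv_refined_12} pointwise in $x$ to the centred rescaled law $\mu_x$, tracking the factor $h^2e^{-h\gamma}$. You dominate $\widetilde\Sigma_x\preceq\mathrm{Cov}(\mu_x)$ in Loewner order to get $\|\widetilde\Sigma_x\|_F\le\sqrt d\,h^2e^{-h\gamma}\lambda_{\max}(\mathrm{Cov}(\mathcal R(x,\cdot)))$, then integrate over $\pi$ using Jensen/Cauchy--Schwarz for $p=1$ and Minkowski for $p=2$. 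Your constant checks, $9K_2/(1-\sqrt{15/16})\approx 283.4\le 284$ and $9\sqrt{2c_4}\approx 22.4\le 23$, are the same as the paper's.
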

\begin{remark}
Under the Poincar\'e inequality assumption (ii), provided that $C_P(x)$ and $\|\mathrm{Cov}(\mathcal{R}(x,\cdot))\|$ are independent of the dimension $d$, and assuming that $\sigma_{1}= \E_{X \sim \pi,\ \omega\sim \Gamma}\|\mathcal{G}(X,\omega) - \nabla V(X)\|=\mathcal{O}(\sqrt{d})$, Theorem~\ref{theorem:bias_sgUBU} yields a Wasserstein-1 bound of $\mathcal{O}(h\sqrt{d})$. Similarly, if $h e^{-h\gamma/2}\|\mathcal{R}(x,\omega)\|\leq 1$ (bounded noise), part (iii) yields a Wasserstein-1 bound of $\mathcal{O}(h\sqrt{d})$

Without using Poincar\'e or bounded noise assumptions, under the weaker finite second moment assumption (i) on the noise distribution for $p=1$, in the Wasserstein-1 bound, we would obtain a term $h\sigma_{2}^2= h \E_{X \sim \pi,\ \omega\sim \Gamma}\|\mathcal{G}(X,\omega) - \nabla V(X)\|^2=h \E_{X \sim \pi}\left[\mathrm{Tr}(\mathrm{Cov}(\mathcal{R}(X,\cdot)))\right]$, and this is typically $\mathcal{O}(h\cdot d)$ for most noise distributions.

Hence the noise distribution's tail behaviour has an impact on our bias bounds. Appendix~\ref{sec:gaussian-sharpness} gives an exact Gaussian calculation showing that the first-order dependence on $h$ (and the $\sqrt d$ dependence for fixed Gaussian gradient-noise variance) is sharp up to constant factors. Section~\ref{sec:spike:SGUBU} gives separate heuristic and numerical evidence that substantially worse dimensional dependence can occur for heavy-tailed spike noise under only second-moment assumptions; we do not prove a stationary-distribution lower bound for the spike example.
\end{remark}

\begin{corollary}\label{cor:minibatch}
Suppose that the assumptions of Theorem \ref{theorem:bias_sgUBU} hold and that
\[
V(x)=V_0(x)+\sum_{i=1}^{N}V_i(x),
\qquad V_i\in C^2(\mathbb{R}^d),
\]
then write
\[
g_i(x)=\nabla V_i(x),\quad H_i(x)=\nabla^2V_i(x),\quad
\bar g(x)=\frac1N\sum_{i=1}^N g_i(x),\quad
\bar H(x)=\frac1N\sum_{i=1}^N H_i(x).
\]
For a minibatch of size $B$, consider the unbiased estimator
\[
\mathcal G_B(x)=\nabla V_0(x)+\frac{N}{B}\sum_{j=1}^{B}g_{I_j}(x),
\]
where either $I_1,\ldots,I_B$ are sampled independently and uniformly from $\{1,\ldots,N\}$, or the minibatch is a uniformly sampled subset of size $B$. Define
\[
\rho_B=
\begin{cases}
B^{-1},&\text{sampling with replacement},\\[1mm]
\dfrac{N-B}{B(N-1)},&\text{sampling without replacement},
\end{cases}
\]
and assume
\[
\mathsf H^2:=\sup_{x\in\mathbb{R}^d}\frac1N\sum_{i=1}^N
\|H_i(x)-\bar H(x)\|_F^2<\infty,
\qquad
\mathsf S^2:=\int\frac1N\sum_{i=1}^N
\|g_i(x)-\bar g(x)\|^2\,\pi(dx)<\infty.
\]
Then Assumption~\ref{Assumption:Bounded_Variance} holds with $C_G\leq N^2\rho_B\mathsf H^2$ , $\sigma_{1}\leq\sigma_2=N\sqrt{\rho_B}\,\mathsf{S}$, and if
$
h<\min\left\{\frac1{2\gamma},
\frac{7mL}{32\gamma N^2\rho_B\mathsf H^2}\right\},
$
then
\begin{align}
W_{1,a,b}(\overline\pi,\overline\pi_h)
&\leq
\frac{\gamma Lh}{mL-\frac{32}{7}h\gamma N^2\rho_B\mathsf H^2}
\Bigg[
\frac{24}{7}\sqrt d(\sqrt L+\gamma)
+\frac{4N\sqrt{\rho_B}\,\mathsf H\sqrt d}{\sqrt L}
+4N\sqrt{\rho_B}\,\mathsf S
+\frac{36N^2\rho_B\mathsf S^2}{\sqrt L}
\Bigg].
\label{eq:minibatch-corollary}
\end{align}
\end{corollary}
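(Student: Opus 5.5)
The plan is to apply Theorem~\ref{theorem:bias_sgUBU} with $p=1$ and then bound its three noise-dependent quantities, $C_G$, $\sigma_1$ and the convolution term, in terms of $\rho_B$, $\mathsf H$ and $\mathsf S$. Everything rests on one standard variance identity for minibatch means. For vectors or matrices $a_1,\dots,a_N$ in a Hilbert space with mean $\bar a$, the minibatch mean $\frac1B\sum_j a_{I_j}$ is unbiased. Its squared-norm variance is $\rho_B\cdot\frac1N\sum_i\|a_i-\bar a\|^2$, where $\rho_B=B^{-1}$ for i.i.d.\ indices and $\rho_B=\frac{N-B}{B(N-1)}$ for simple random sampling without replacement (the finite population correction). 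Multiplying the minibatch mean by $N$ gives the estimator error $N(\text{minibatch mean}-\bar a)$, with variance $N^2\rho_B\cdot\frac1N\sum_i\|a_i-\bar a\|^2$.

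\textbf{Step 1 ($C_G$).} Since $D_x\mathcal G_B(x)-\nabla^2V(x)=N\big(\frac1B\sum_j H_{I_j}(x)-\bar H(x)\big)$, I would bound the operator norm by the Frobenius norm. Applying the identity in the Hilbert space of matrices with Frobenius inner product gives, for each $x$,
\[
\E\|D_x\mathcal G_B-\nabla^2V\|_{\op}^2\le N^2\rho_B\,\frac1N\sum_i\|H_i(x)-\bar H(x)\|_F^2\le N^2\rho_B\mathsf H^2.
\]
Taking the supremum over $x$ gives $C_G\le N^2\rho_B\mathsf H^2$. The $C^1$ requirement holds because $V_i\in C^2$. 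Finiteness of the second moment at some $x_\star$ is trivial, since there are only finitely many indices.

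\textbf{Step 2 ($\sigma_2$).} Applying the same identity to the gradients $g_i(x)$ and integrating over $x\sim\pi$ gives $\sigma_2^2=N^2\rho_B\mathsf S^2$. Jensen's inequality then gives $\sigma_1\le\sigma_2$.

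\textbf{Step 3 (convolution term).} I would use Corollary~\ref{cor:plug_in}, but part (i) with $p=1$ costs a constant of $284$, which exceeds the claimed $36$. So I would instead apply Theorem~\ref{thm:conv_poincare} or \ref{thm:conv_refined_12} more directly through a Gaussian comparison. Write $\mu_x$ for the law of $Y=he^{-h\gamma/2}\mathcal R$. Let $\g$ and $\g_{1-\varepsilon}$ denote Gaussians. I would attempt the elementary bound
\[
W_1(\mu_x*\g,\g)\le W_2(\mu_x*\g,\g)\le C\,\E\|Y\|^2
\]
with a small constant $C$, roughly via the $p=1$ case of Theorem~\ref{thm:conv_general}. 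For $p=1$, $K_1=\max\{1,\tfrac12\sqrt{2/\pi}+\tfrac13\}=1$ and $1-(1-1/4)=1/4$, so the theorem gives
\[
W_1(\mu_x*\g,\g)\le 4\,\E\|Y\|^2.
\]
Then $\E_x W_1\le 4h^2e^{-h\gamma}\sigma_2^2\le 4h^2\sigma_2^2$. After the factor $9/(h^2\sqrt L)$ this becomes $36\sigma_2^2/\sqrt L=36N^2\rho_B\mathsf S^2/\sqrt L$, exactly the claimed term. Good: the constant $36=9\cdot 4$ confirms that this is the intended route, and that Corollary~\ref{cor:plug_in}(i) stated $284$ only to cover $p=2$.

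\textbf{Step 4.} Substitute $C_G\le N^2\rho_B\mathsf H^2$ into the prefactor (the denominator is decreasing in $C_G$, so the bound is monotone), and into the bracket term $4C_G^{1/2}\sqrt d/\sqrt L$. Also substitute $\sigma_1\le N\sqrt{\rho_B}\mathsf S$ and the Step 3 bound. The stepsize condition is then implied by the stated one, via the same upper bound on $C_G$.

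The main obstacle I expect is Step 1's variance identity for sampling without replacement in matrix (Frobenius) form. I would handle it by expanding $\E\|\sum_j(a_{I_j}-\bar a)\|^2$ using the pairwise covariance $\E\langle a_{I_j}-\bar a,a_{I_k}-\bar a\rangle=-\frac{1}{N-1}\cdot\frac1N\sum_i\|a_i-\bar a\|^2$ for $j\ne k$. Everything else is substitution.
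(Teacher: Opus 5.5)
Your proposal is correct and follows the paper's proof essentially step for step. The paper uses the same minibatch variance identity, applied in the Frobenius inner product with $\|\cdot\|_{\op}\le\|\cdot\|_F$ to get $C_G$ and integrated against $\pi$ to get $\sigma_2$, then Theorem~\ref{theorem:bias_sgUBU} with $p=1$ and the $p=1$ case of Theorem~\ref{thm:conv_general} ($K_1=1$, factor $4$), giving the constant $9\cdot 4=36$. One small correction: drop the exploratory chain $W_1\le W_2\le C\,\E\|Y\|^2$. A $W_2$ bound by the second moment alone fails in general; Theorem~\ref{thm:conv_general} at $p=2$ needs $(\E\|Y\|^4)^{1/2}$. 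It does no harm, since you only use the valid bound $W_1(\mu_x*\g,\g)\le 4\,\E\|Y\|^2$.
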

\begin{remark}
For sampling with replacement, the stochastic-gradient terms therefore scale explicitly as $N/\sqrt B$ and $N^2/B$ under the sum convention for $V$ used above. For a full batch without replacement, $\rho_B=0$ and the stochastic-gradient terms vanish.
\end{remark}

\begin{proof}
Unbiasedness is immediate, and we have
\[
\mathbb{E}\|\mathcal G_B(x)-\nabla V(x)\|^2
=N^2\rho_B\frac1N\sum_{i=1}^N\|g_i(x)-\bar g(x)\|^2.
\]
The same identity in the Frobenius inner product, together with $\|A\|_{\mathrm{op}}\leq\|A\|_F$, yields
\begin{align*}
\mathbb{E}\|D\mathcal G_B(x)-\nabla^2V(x)\|_{\mathrm{op}}^2
&\leq
\mathbb{E}\|D\mathcal G_B(x)-\nabla^2V(x)\|_F^2\\
&=N^2\rho_B\frac1N\sum_{i=1}^N\|H_i(x)-\bar H(x)\|_F^2.
\end{align*}
Since the finite sum consists of deterministic $C^2$ functions, the reference-point moment condition in Assumption~\ref{Assumption:Bounded_Variance} is automatic. Integrating the first identity against $\pi$ gives the formula for $\sigma_2$, and $\sigma_1\leq\sigma_2$ follows from Cauchy--Schwarz. Finally, apply Theorem~\ref{theorem:bias_sgUBU} with $p=1$. In Theorem~\ref{thm:conv_general}, $K_1=1$ and $1-(1-2^{-2})=1/4$, so the convolution term is bounded by $36\sigma_2^2/\sqrt L$. Substitution gives \eqref{eq:minibatch-corollary}.
\end{proof}

\section{Simulations}\label{sec:simulations}

All the code for the simulations in this paper is available at \texttt{\url{https://github.com/PAWhalley/SG-UBU}}.

\subsection{Numerical evaluation of integrator bias on toy example}
\label{sec:numevalbias}
To illustrate the asymptotic bias of the stochastic gradient methods, we consider a simple one-dimensional Gaussian target of the form $U(x)=U_1(x)+U_2(x)$, where $U_i(x)=\frac{(x-x_i)^2}{\sigma_i^2}$ are quadratic potentials. We take $x_1=-1$, $x_2=1$, $\sigma_1=0.5$, and $\sigma_2=2$. With batch size $1$ there are two batches, so the stochastic gradient is obtained by sampling one of the two terms uniformly at each iteration and rescaling to maintain unbiasedness.

Since the target distribution is Gaussian, we estimate the Wasserstein-$1$ bias by comparing samples from the stationary distribution of each Markov chain to i.i.d.\ samples from the exact target. Concretely, for a given method and stepsize $h$, let $(X_k)_{k\geq 1}$ denote the $x$-marginal of the Markov chain after burn-in and let $(Y_k)_{k\geq 1}$ be i.i.d.\ draws from the target $\pi$. Define the associated empirical measures
\[
\widehat\mu_N := \frac{1}{N}\sum_{k=1}^N \delta_{X_k},
\qquad
\widehat\nu_N := \frac{1}{N}\sum_{k=1}^N \delta_{Y_k}.
\]
In one dimension, the Wasserstein-$1$ distance between empirical measures with equal sample size admits the explicit expression
\[
W_1(\widehat\mu_N,\widehat\nu_N)=\frac{1}{N}\sum_{i=1}^N |X_{(i)}-Y_{(i)}|.
\]
 We use step sizes $h=2^{-k}$ for $13$ equally spaced values of $k \in [1,5]$, and for each $h$ draw $N=2 \times 10^8$ samples after burn-in. We report the resulting Wasserstein-$1$ asymptotic bias for three stochastic-gradient methods: stochastic gradient Langevin dynamics (SGLD), the Euler-Maruyama discretisation of kinetic Langevin dynamics (SG-HMC from \cite{SGHMC}), and the stochastic-gradient UBU integrator (SG-UBU).

\begin{figure}[t]
    \centering
    \begin{subfigure}[t]{0.32\textwidth}
        \centering
        \includegraphics[width=\textwidth]{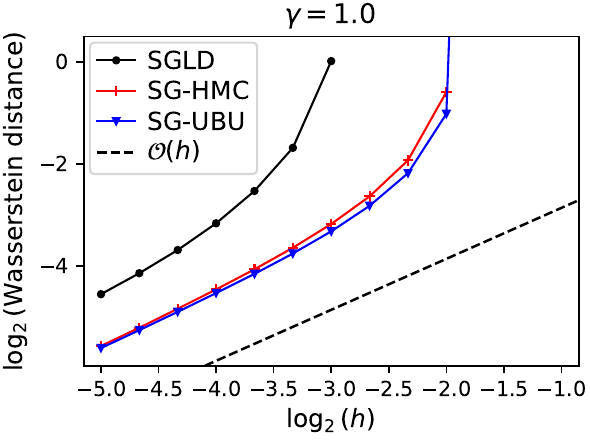}
        \caption{$\gamma=1.0$}
        \label{fig:bias-gamma-1p0}
    \end{subfigure}\hfill
    \begin{subfigure}[t]{0.32\textwidth}
        \centering
        \includegraphics[width=\textwidth]{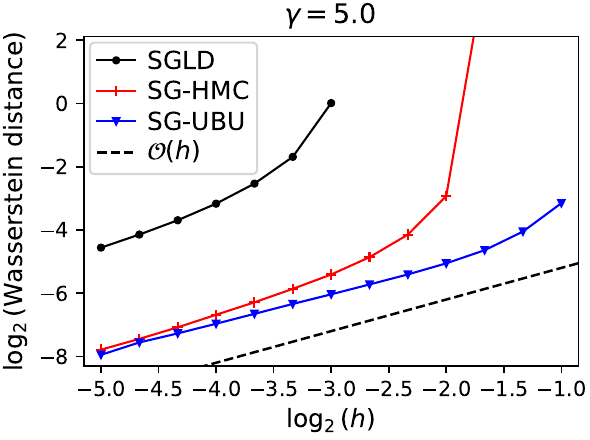}
        \caption{$\gamma=5$}
        \label{fig:bias-gamma-5}
    \end{subfigure}\hfill
    \begin{subfigure}[t]{0.32\textwidth}
        \centering
        \includegraphics[width=\textwidth]{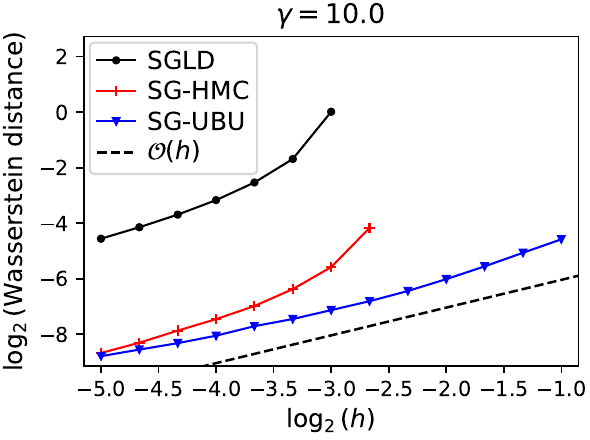}
        \caption{$\gamma=10$}
        \label{fig:bias-gamma-10}
    \end{subfigure}

    \caption{Empirical Wasserstein-$1$ asymptotic bias versus stepsize for SGLD, SG-EM (SG-HMC), and SG-UBU on the one-dimensional Gaussian target described in Section~\ref{sec:numevalbias}, for three values of the friction parameter $\gamma$.}
    \label{fig:bias-three-gammas}
\end{figure}

\subsection{Bayesian logistic regression on MNIST 3-vs-5}
We consider Bayesian logistic regression for binary classification of the digits $3$ and $5$ from the MNIST training set. The dataset contains $N=11552$ observations and $d = 784$. Writing the parameter as $q \in \mathbb{R}^d$, we use the Gaussian prior
\[
q \sim \mathcal{N}(0,\sigma^2 I_d), \qquad \sigma^2 = 10^{-3},
\]
so that the negative log-posterior, $V:\mathbb{R}^{d} \to \mathbb{R}$ is defined pointwise by
\[
V(q)
=
\frac{1}{2\sigma^2}\|q\|^2
+
\sum_{i=1}^N
\left[
\log(1+\exp(x_i^\top q)) - y_i x_i^\top q
\right].
\]
To construct the control-variate stochastic gradient, we first compute an approximation $q_{\min}$ of the minimiser of $V$ using BFGS. Then, letting
\[
\nabla \ell_i(q)=x_i\left(\sigma(x_i^\top q)-y_i\right),
\qquad
\sigma(t)=\frac{1}{1+e^{-t}},
\]
the control variate stochastic gradient estimator based on a minibatch $B_k$ of size $b$ (sampled uniformly with replacement; we use $b=100$) is
\[
\widehat{\nabla V}(q)
=
\frac{1}{\sigma^2}q
+
\nabla \ell(q_{\min})
+
\frac{N}{b}\sum_{i\in B_k}\left(\nabla \ell_i(q)-\nabla \ell_i(q_{\min})\right),
\]
where $\nabla \ell(q_{\min})=\sum_{i=1}^N \nabla \ell_i(q_{\min})$.

For this Bayesian logistic-regression model, the Gaussian prior makes the negative log-posterior globally strongly convex, while the logistic likelihood has a globally bounded Hessian. The minibatch control-variate estimator is unbiased, continuously differentiable, and has finite gradient-noise moments and bounded Jacobian variance. Thus the structural assumptions of Theorem~\ref{theorem:bias_sgUBU} hold.

For both SG-UBU and SG-EM, we simulate \eqref{eq:kinetic_langevin} targeting $\pi(q)\propto \exp{\left(-V(q)\right)}$. The friction parameter is chosen as $\gamma=\sqrt{L}$, where $L$ is the largest eigenvalue of the Hessian $\nabla^2 U(q_{\min})$. We consider the step sizes
$
h \in \left\{
\frac{2}{\sqrt{L}},
\frac{1}{\sqrt{L}},
\frac{1}{2\sqrt{L}},
\frac{1}{4\sqrt{L}},
\frac{1}{8\sqrt{L}},
\frac{1}{16\sqrt{L}}
\right\}.
$

For each method and each value of $h$, we run $100$ independent chains, discard burn-in, and estimate $\mathbb{E}_{\pi}[V(q)]$ by averaging $V(q_k)$ over the retained (thinned) iterates. As a reference value, we use a twice-as-long SG-UBU simulation with the same control-variate gradient estimator at the smallest step size
$
h_{\mathrm{ref}}=\frac{1}{c\sqrt{L}},
$
with $c=16$. The reported error is then
$
\left|
\widehat{\mathbb{E}}_{h}[V]-\widehat{\mathbb{E}}_{\mathrm{ref}}[V]
\right|,
$
which we plot against $h$ on log-log scales for both SG-UBU and SG-EM at the five step sizes larger than $h_{\mathrm{ref}}$; the point at $h=h_{\mathrm{ref}}$ is omitted, since its bias estimate against the same-stepsize reference is dominated by Monte Carlo noise. Error bars are standard errors across the independent chains, including the uncertainty of the reference run.
\begin{figure}[t]
    \centering
    \includegraphics[width=0.4\textwidth]{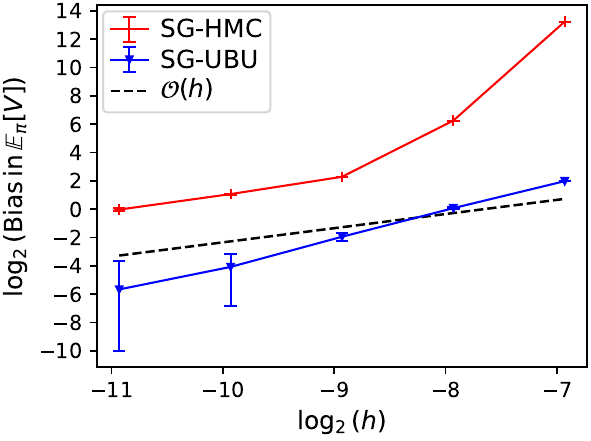}
    \caption{Empirical asymptotic bias  for estimating the expected negative-log density versus stepsize for SG-EM (SG-HMC), and SG-UBU and $\gamma=\sqrt{L}$. Error bars are standard errors across $100$ independent parallel chains, including the uncertainty of the reference run.}
    \label{fig:bias-BLR}
\end{figure}

\subsection{Random-effects logistic regression on the Indonesian xerophthalmia data}\label{sec:xerop}

Our final example is the random-effects logistic regression model for childhood respiratory infection from the Indonesian xerophthalmia study \cite{sommer1983increased}, as analysed in \cite{diggle2002analysis} and used as a standard benchmark for pseudo-marginal samplers in  \cite{schmon2021large} and \cite{alenlov2021pseudo}. The data contain $N_{\mathrm{obs}}=1200$ binary observations of respiratory disease on $\ell=275$ children with up to $J_{\max}=6$ visits each, together with $p=8$ child-level covariates (intercept plus age, xerophthalmia indicator, two seasonal harmonics, sex, height-for-age and stunting). Writing $y_{t,j}\in\{0,1\}$ for the $j$-th observation on child $t$ and $x_{t,j}\in\R^{p}$ for the corresponding covariate vector, the model is
\begin{equation}\label{eq:xerop-model}
y_{t,j}\mid \beta,b_t \sim \mathrm{Bernoulli}\left(\sigma(x_{t,j}^{\top}\beta+b_t)\right),\qquad
b_t\mid \tau \stackrel{\mathrm{iid}}{\sim}\mathcal{N}(0,\tau),\qquad t=1,\dots,\ell,
\end{equation}
where $\sigma(s)=(1+e^{-s})^{-1}$. We place a weak Gaussian prior $\beta\sim\mathcal{N}(0,10^{4}I_{p})$ and an inverse-gamma prior $\tau\sim\mathrm{InvGamma}(1,1.5)$ truncated to $\tau\geq 0.05$, and parametrise via $\xi=\log(\tau-0.05)$ so that the unknown is $\theta=(\beta,\xi)\in\R^{d}$ with $d=9$. The marginal log-posterior is
\begin{equation}\label{eq:xerop-marginal}
\log\pi(\theta)
=
\log p(\theta)
+\sum_{t=1}^{\ell}\log\int_{\R}\Bigl[\prod_{j=1}^{J_{t}}\sigma\left(x_{t,j}^{\top}\beta+b\right)^{y_{t,j}}\left(1-\sigma\left(x_{t,j}^{\top}\beta+b\right)\right)^{1-y_{t,j}}\Bigr]\mathcal{N}(b\mid 0,\tau)db,
\end{equation}
which is intractable in closed form. The pseudo-marginal samplers of  \cite{schmon2021large} and \cite{alenlov2021pseudo} target the extended density $\widetilde{\pi}(\theta,u)\propto p(\theta)\widehat{L}(\theta;u)\,m_{\theta}(u)$ on $\R^{d}\times\R^{\ell N}$, where $\widehat{L}$ is an unbiased $N$-particle importance-sampling estimator of the marginal likelihood appearing in \eqref{eq:xerop-marginal}, $u\in\R^{\ell N}$ are the auxiliary draws and $m_{\theta}$ is their generating density; unbiasedness ensures that the $\theta$-marginal of $\widetilde{\pi}$ is the posterior $\pi(\theta)$. With $V(\theta)=-\log\pi(\theta)$, an SG-UBU method based on an exact unbiased stochastic gradient of $V$ would avoid this augmentation and evolve directly on the $\theta$-space. The implementation below instead uses a finite Gauss--Hermite approximation to the normalising integrals, so it is more accurately described as an approximate-gradient SG-UBU method. We do not claim that the global strong-convexity, global smoothness, or stochastic-gradient assumptions of Theorem~\ref{theorem:bias_sgUBU} hold for this random-effects posterior. 

\paragraph{Stochastic gradient via importance sampling with Gauss--Hermite normalisation.}
For each child $t$, the conditional density $p(b\mid \theta,y)\propto\exp{\left(f_{t}(b;\theta)\right)}$ with
\[
f_{t}(b;\theta)=\left(\sum_{j}y_{t,j}\right)b-\sum_{j}\log\left(1+\exp(x_{t,j}^{\top}\beta+b)\right)-\frac{b^{2}}{2\tau}
\]
is one-dimensional and strictly log-concave: $f_{t}''(b)\le-1/\tau$ uniformly in $b$. Vectorised adaptive rejection sampling on a tangent-line upper envelope \cite{gilks1992adaptive} would yield exact draws from $p(b\mid \theta,y)$ and, after changing sign, a plain Monte Carlo estimator of $\nabla V(\theta)=-\nabla\log\pi(\theta)$. In practice, however, we obtain a substantially lower-variance estimator at comparable per-step cost by replacing rejection with importance sampling against a tangent-line proposal whose normaliser is computed by Gauss--Hermite quadrature. Concretely, write $\mu_{t}(\theta)$ for the conditional mode of $f_{t}(\cdot;\theta)$ (computed in five Newton iterations) and $\sigma_{t}(\theta)=(-f_{t}''(\mu_{t}))^{-1/2}$ for the curvature scale at the mode. The estimator combines:
\begin{enumerate}[(i)]
\item a two-tangent piecewise-exponential proposal $q_{t}(b;\theta)\propto \exp{\left(L_{t}(b;\theta)\right)}$ tangent to $f_{t}$ at $\mu_{t}\pm\sigma_{t}$, with closed-form normalising constant $Z_{q,t}(\theta)$;
\item $n_{\mathrm{re}}$ deterministic samples $b_{t}^{(1)},\dots,b_{t}^{(n_{\mathrm{re}})}$ from $q_{t}(\cdot;\theta)$ via stratified inverse-CDF with antithetic Latin-hypercube jitter, $u_{n_{\mathrm{re}}-1-i}=1-u_{i}$;
\item a mode-centred $K$-point Gauss--Hermite quadrature for the true normaliser $Z_{t}(\theta)=\int_{\R}e^{f_{t}(b;\theta)}db$,
\[
\widehat{Z}_{t,\mathrm{GH}}(\theta)=\sigma_{t}\sum_{k=1}^{K}\omega_{k}\exp\left(\xi_{k}^{2}+f_{t}(\mu_{t}+\sigma_{t}\xi_{k};\theta)\right),
\]
with $K=8$ Hermite nodes $(\xi_{k},\omega_{k})$; in our numerical checks, the quadrature error was below $10^{-12}$ for the near-Gaussian one-dimensional integrands encountered here.
\end{enumerate}
Combining (i)--(iii) yields the following importance-sampling estimator of the stochastic gradient $\nabla V(\theta)=-\nabla\log\pi(\theta)$:
\begin{align}
\label{eq:xerop-grad}
\mathcal{G}(\theta,\omega)
&=
-\nabla\log p(\theta)
-\sum_{t=1}^{\ell}\frac{1}{n_{\mathrm{re}}}\sum_{i=1}^{n_{\mathrm{re}}}w_{t}^{(i)}(\theta)\nabla_{\theta}\log p\left(y_{t,\cdot},b_{t}^{(i)}\mid\theta\right),\\
\nonumber w_{t}^{(i)}(\theta)&=\frac{Z_{q,t}(\theta)}{\widehat{Z}_{t,\mathrm{GH}}(\theta)}\exp\left(f_{t}(b_{t}^{(i)};\theta)-L_{t}(b_{t}^{(i)};\theta)\right).
\end{align}
The envelope parameters $(\mu_{t},\sigma_{t},L_{t},Z_{q,t})$, the Gauss--Hermite normaliser $\widehat{Z}_{t,\mathrm{GH}}$ and the samples $b_{t}^{(i)}$ are computed under \texttt{stop\_gradient} so that only the explicit $\theta$-dependence in $\log p(y_{t,\cdot},b\mid\theta)$ contributes to $\nabla_{\theta}\mathcal{G}$. With exact conditional sampling and exact normalising integrals, Fisher's identity would make this an unbiased estimator of $\nabla V(\theta)=-\nabla\log\pi(\theta)$. The finite Gauss--Hermite normaliser means that the implemented estimator is only approximately unbiased, even though the observed quadrature discrepancy is numerically insignificant. We have not established the uniform Jacobian-noise bound in Assumption~\ref{Assumption:Bounded_Variance}, nor global strong convexity of the marginal negative log-posterior, for this example. Accordingly, none of the theoretical bias bounds is invoked to certify this application. Stratification with antithetic pairs gives roughly two-fold further variance reduction in the $\beta$-coordinates beyond plain Monte Carlo, while the deterministic Gauss--Hermite normaliser eliminates the $\mathcal{O}(1/n_{\mathrm{re}})$ variance contribution that would arise from estimating $Z_{t}$ stochastically.

\paragraph{Friction, stepsize and reference values.}
Following the recommendations from Section~\ref{sec:simulations}, we choose the friction $\gamma=2$ and precondition SG-UBU with a covariance-adapted mass matrix $M=\widehat{\Sigma}^{-1}$, where $\widehat{\Sigma}$ is the posterior covariance estimated during an unadjusted warm-up (no Metropolis--Hastings correction is applied at any stage). The burn-in is split into three equal parts: the first two run in an initial metric fixed at the mode curvature $-\nabla^{2}\log\pi(\theta_{\mathrm{MAP}})$ (obtained via Louis' identity \cite{louis1982finding} at a Nesterov-accelerated MAP estimate $\theta_{\mathrm{MAP}}$), $\widehat{\Sigma}$ is estimated from the second part, and the third runs as a further burn-in in the adapted metric $M=\widehat{\Sigma}^{-1}$ before sampling begins. For a Gaussian target $\widehat{\Sigma}^{-1}$ coincides with the mode-curvature preconditioner; on this near-Gaussian posterior the two are statistically indistinguishable in accuracy, while the covariance metric attains roughly $25\%$ higher effective sample size per gradient evaluation. The integrator stepsize $h=0.45$ in the preconditioned coordinates was selected empirically via stability and efficiency checks; because the assumptions of Theorem~\ref{theorem:bias_sgUBU} have not been verified for this posterior and estimator. We use $n_{\mathrm{re}}=6$ stratified-antithetic envelope draws and $K=8$ Gauss--Hermite nodes per child per step in the stochastic-gradient estimator \eqref{eq:xerop-grad}. As a reference posterior we use a converged run of pseudo-marginal HMC with randomised number of leapfrog steps $L\sim\mathrm{Geometric}(1/\bar{L})$, $\bar{L}=12$, $\varepsilon=0.8$, $N=30$ particles. For comparison, we also run the random-walk pseudo-marginal Metropolis--Hastings algorithm (PMMH) of \cite{schmon2021large} with $N=30$ and the optimally-scaled covariance reported there.

\paragraph{Numerical results.}
Table~\ref{tab:xerop-summary} reports the posterior means and posterior standard deviations for the three samplers, all run on the same hardware (14 simulated host devices). Table~\ref{tab:xerop-efficiency} summarises sampling efficiency: for this $9$-dimensional posterior with $1200$ observations and $275$ random effects, SG-UBU delivers an effective sample size per second more than an order of magnitude larger than PMMH and nearly $40$ times that of PM-RHMC, while requiring fewer than $10$ stochastic-gradient evaluations per effective sample.

\begin{table}[h]
\centering\small
\setlength{\tabcolsep}{4pt}
\begin{tabular}{l rr rr rr rr}
\toprule
& \multicolumn{2}{c}{PMMH ($N=30$)}
& \multicolumn{2}{c}{PM-RHMC ($\bar{L}=12$)}
& \multicolumn{2}{c}{SG-UBU ($h=0.45$)}
& \multicolumn{2}{c}{reference \cite{schmon2021large}}\\
\cmidrule(lr){2-3}\cmidrule(lr){4-5}\cmidrule(lr){6-7}\cmidrule(lr){8-9}
parameter & mean & sd & mean & sd & mean & sd & mean & sd\\
\midrule
$\beta_{0}$              & $-2.788$ & $0.230$ & $-2.789$ & $0.231$ & $-2.789$ & $0.228$ & $-2.788$ & $0.230$ \\
$\beta_{\text{age}}$     & $-0.035$ & $0.008$ & $-0.035$ & $0.008$ & $-0.035$ & $0.008$ & $-0.035$ & $0.008$ \\
$\beta_{\text{xero}}$    & $+0.559$ & $0.506$ & $+0.556$ & $0.507$ & $+0.558$ & $0.498$ & $+0.560$ & $0.507$ \\
$\beta_{\cos}$           & $-0.615$ & $0.179$ & $-0.615$ & $0.178$ & $-0.615$ & $0.176$ & $-0.614$ & $0.178$ \\
$\beta_{\sin}$           & $-0.174$ & $0.179$ & $-0.172$ & $0.178$ & $-0.173$ & $0.176$ & $-0.173$ & $0.179$ \\
$\beta_{\text{sex}}$     & $-0.461$ & $0.275$ & $-0.462$ & $0.274$ & $-0.460$ & $0.271$ & $-0.461$ & $0.276$ \\
$\beta_{\text{height}}$  & $-0.052$ & $0.028$ & $-0.052$ & $0.028$ & $-0.052$ & $0.028$ & $-0.052$ & $0.028$ \\
$\beta_{\text{stunted}}$ & $+0.192$ & $0.464$ & $+0.196$ & $0.464$ & $+0.192$ & $0.458$ & $+0.192$ & $0.466$ \\
$\tau$                   & $+0.943$ & $0.366$ & $+0.944$ & $0.369$ & $+0.943$ & $0.371$ & $+0.944$ & $0.367$ \\
\bottomrule
\end{tabular}
\caption{Posterior summaries (mean and standard deviation) for the marginal posterior $\pi(\theta)$ in \eqref{eq:xerop-marginal} from the Indonesian xerophthalmia random-effects model \eqref{eq:xerop-model}. PMMH and PM-RHMC target the pseudo-marginal augmented density; SG-UBU uses the approximate stochastic-gradient estimator \eqref{eq:xerop-grad}, and we do not claim exact targeting or applicability of Theorem~\ref{theorem:bias_sgUBU} for this example. The reference column reports the long-run pseudo-marginal benchmark posterior values used to construct the optimal-proposal covariance used in \cite{schmon2021large} (their Section~5 supplementary code). Across the nine parameters, the maximum relative error of the posterior-mean estimate against the reference values from \cite{schmon2021large}, expressed as a percentage of the reference posterior standard deviation, is $0.4\%$ for PMMH, $1.0\%$ for PM-RHMC and $0.7\%$ for SG-UBU; the corresponding maximum errors of the estimated posterior standard deviations are $0.5\%$, $0.9\%$ and $1.9\%$, respectively. The $1.9\%$ discrepancy contains contributions from time discretisation, stochastic-gradient approximation error, reference error and residual Monte Carlo error.}
\label{tab:xerop-summary}
\end{table}

\begin{table}[h]
\centering\small
\setlength{\tabcolsep}{5pt}
\begin{tabular}{l c r r c c r r}
\toprule
sampler & chains & burnin/chain & samples/chain & accept & $\widehat{R}_{\max}$ & min ESS & min ESS/s\\
\midrule
PMMH        & $14$ & $10000$ & $1000000$ & $0.148$ & $1.000$ & $187451$   & $297$ \\
PM-RHMC & $14$ & $1000$  & $10000$     & $0.791$ & $1.000$ & $37730$    & $95$ \\
SG-UBU      & $14$ & $7500$  & $1000000$ & ---     & $1.000$ & $1483082$ & $3711$ \\
\bottomrule
\end{tabular}
\caption{Sampling efficiency for the xerophthalmia example. The {burnin/chain} and {samples/chain} columns are integrator steps for SG-UBU and Metropolis sweeps for PMMH and PM-RHMC; the total raw sample count is the product of the two preceding columns, i.e.\ $14\times10^6$ for PMMH and SG-UBU and $1.4\times10^5$ for PM-RHMC. PMMH uses $N=30$ importance-sampling particles; PM-RHMC uses $\bar{L}=12$ (mean number of leapfrog steps, geometric distribution), $\varepsilon=0.8$ and $N=30$ particles; SG-UBU uses the covariance-adapted preconditioner $M=\widehat{\Sigma}^{-1}$ at stepsize $h=0.45$, friction $\gamma=2$, $n_{\mathrm{re}}=6$ envelope draws and $K=8$ Gauss--Hermite nodes per child per step, with the $7500$-step burn-in split equally into an initial burn-in, a covariance-estimation window and a second burn-in in the adapted metric. All runs use the same dataset, model and prior, and were timed on the same machine with M4 Pro 14 core CPU after JIT compilation. The min ESS and min ESS/s are over the nine posterior coordinates. The PMMH benchmark uses the optimally-scaled random-walk proposal of \cite{schmon2021large} with step scale $2.4/\sqrt{d}$, yielding an acceptance rate of $0.148$ that lies inside the canonical PMMH optimum of $\sim 0.10$--$0.15$ identified in \cite{schmon2021large}.}
\label{tab:xerop-efficiency}
\end{table}

The two pseudo-marginal samplers are penalised here by the latent vector $u\in\R^{\ell N}=\R^{8250}$ entering their extended target: PMMH proposes only in $\theta$ and refreshes $u$ from its generating law, but every step requires a fresh $N$-particle likelihood estimate, and the variance of $\log\widehat{L}$ degrades the acceptance rate \cite{schmon2021large}, while PM-RHMC must integrate Hamiltonian dynamics with an $\ell N$-dimensional auxiliary momentum and pay an autograd backward sweep through all $N$ particles per leapfrog step. SG-UBU avoids the augmentation altogether: the latent variables enter only through the approximate stochastic-gradient estimator \eqref{eq:xerop-grad}, whose per-child cost is independent of any pseudo-marginal particle count $N$ and consists of one vectorised tangent-line proposal, $K=8$ Gauss--Hermite nodes for the normaliser and $n_{\mathrm{re}}=6$ stratified envelope draws.

\section{Conclusion}\label{sec:conclusion}

We established $\mathcal{O}(h)$ Wasserstein bounds on the asymptotic bias of stochastic-gradient UBU under mild assumptions on the stochastic-gradient noise. The main technical ingredient is a family of Gaussian convolution inequalities, which allow the stochastic-gradient contribution to be controlled directly at stationarity. Combined with existing contraction estimates, these results yield explicit non-asymptotic guarantees for stochastic-gradient kinetic Langevin Monte Carlo. The Bayesian logistic-regression example satisfies the structural assumptions, whereas the random-effects example is presented only as an empirical study outside the presently verified theory. While we have worked in the strongly log-concave setting in order to obtain quantitative bounds, it is also possible to extend the analysis to non-convex regimes using results in \cite{schuh2024convergence}. For fixed problem parameters, suppose that the stochastic-gradient assumptions used in Theorem~\ref{theorem:bias_sgUBU} give a stationary bias bounded by $C h$ and that the initial Wasserstein distance is bounded uniformly for sufficiently small $h$. Proposition~\ref{prop:wass_SGUBU} then yields, for constants $c,C,C'>0$,
\[
W_{1,a,b}(\nu P_h^n,\overline\pi)
\leq C' e^{-cnh}+Ch.
\]
Choosing $h=\mathcal O(\epsilon)$ and then $n=\mathcal O\!\left(h^{-1}\log(1/\epsilon)\right)$ makes both terms at most of order $\epsilon$. Since SG-UBU uses one stochastic-gradient evaluation per step, its resulting complexity is therefore
\[
n=\mathcal O\!\left(\epsilon^{-1}\log\frac1\epsilon\right).
\]
Further, by optimising our bounds SG-UBU in the setting where $C_{G} \lesssim L^{2}$ one can reach a sufficiently small accuracy $\epsilon > 0$ in $\sqrt{m}W_1$ in
\[
n=\widetilde{\mathcal{O}}\left(\frac{\kappa^{3/2} d^{1/2}}{\varepsilon}+\frac{\kappa^{1/2}\sigma^2_{2}}{m \varepsilon}\right),
\]
iterations up to logarithmic factors. The $\epsilon^{-1}$ polynomial dependence is consistent with the recent complexity lower bounds of \cite{chen2026high}, as discussed in \cite{che-book-2024+}. Appendix~\ref{sec:gaussian-sharpness} shows in an exactly solvable Gaussian example that the first-order stationary-bias dependence used in this choice of $h$ is sharp up to constants.

The framework is also broad enough to cover stochastic gradients arising from subsampling procedures beyond standard minibatching in Bayesian inference, including the random batch method from computational statistical physics \cite{jin2020random}. Recent works (see \cite{paulin2025sampling,shaw2025random}) suggest that epoch-wise without-replacement subsampling strategies may improve the order of the asymptotic bias. However, a gap remains between the strong and weak orders of accuracy in this setting. A natural direction for future work is therefore to investigate whether one can prove $\mathcal{O}(h^2)$ Wasserstein bias bounds for stochastic-gradient UBU under without-replacement subsampling. Such second-order behaviour has also been observed empirically in \cite{paulin2025sampling}.

\section*{Acknowledgements}
DP was supported by a Nanyang Technological University Start-up Grant, project number: 024968-00001. {We would like to thank Yuansi Chen for the stochastic localization approach for proving the Gaussian convolution inequalities. We have used ChatGPT and Claude for coding assistance, proof checking and for improving constants in bounds.}

\printbibliography

@article{benamou2000computational,
  title={A computational fluid mechanics solution to the Monge-Kantorovich mass transfer problem},
  author={Benamou, Jean-David and Brenier, Yann},
  journal={Numerische Mathematik},
  volume={84},
  number={3},
  pages={375--393},
  year={2000},
  publisher={Springer}
}

@article{leimkuhler2013rational,
  title={Rational construction of stochastic numerical methods for molecular sampling},
  author={Leimkuhler, Benedict and Matthews, Charles},
  journal={Applied Mathematics Research eXpress},
  volume={2013},
  number={1},
  pages={34--56},
  year={2013},
  publisher={Oxford University Press}
}

@article{monmarche2021high,
  title={High-dimensional {MCMC} with a standard splitting scheme for the underdamped {L}angevin diffusion.},
  author={Monmarch{\'e}, Pierre},
  journal={Electronic Journal of Statistics},
  volume={15},
  number={2},
  pages={4117--4166},
  year={2021},
  publisher={Institute of Mathematical Statistics and Bernoulli Society}
}

@inproceedings{cheng2018underdamped,
  title={Underdamped {L}angevin {MCMC}: A non-asymptotic analysis},
  author={Cheng, Xiang and Chatterji, Niladri S and Bartlett, Peter L and Jordan, Michael I},
  booktitle={Conference on learning theory},
  pages={300--323},
  year={2018},
  organization={PMLR}
}

@article{dalalyan2020sampling,
  title={On sampling from a log-concave density using kinetic {L}angevin diffusions},
  author={Dalalyan, Arnak S and Riou-Durand, Lionel},
  journal={Bernoulli},
  volume={26},
  number={3},
  pages={1956--1988},
  year={2020},
  publisher={Bernoulli Society for Mathematical Statistics and Probability}
}

@article{sanz2021wasserstein,
  title={Wasserstein distance estimates for the distributions of numerical approximations to ergodic stochastic differential equations.},
  author={Sanz-Serna, Jesus Maria and Zygalakis, Konstantinos C},
  journal={J. Mach. Learn. Res.},
  volume={22},
  pages={242--1},
  year={2021}
}

@article{alamo2016technique,
  title={A technique for studying strong and weak local errors of splitting stochastic integrators},
  author={Alamo, Alfonso and Sanz-Serna, Jes{\'u}s Mar{\'\i}a},
  journal={SIAM Journal on Numerical Analysis},
  volume={54},
  number={6},
  pages={3239--3257},
  year={2016},
  publisher={SIAM}
}

@article{eberle2019couplings,
  title={Couplings and quantitative contraction rates for {L}angevin dynamics},
  author={Eberle, Andreas and Guillin, Arnaud and Zimmer, Raphael},
  journal={The Annals of Probability},
  volume={47},
  number={4},
  pages={1982--2010},
  year={2019},
  publisher={Institute of Mathematical Statistics}
}

@article{dalalyan2017theoretical,
  title={Theoretical guarantees for approximate sampling from smooth and log-concave densities},
  author={Dalalyan, Arnak S},
  journal={Journal of the Royal Statistical Society: Series B (Statistical Methodology)},
  volume={79},
  number={3},
  pages={651--676},
  year={2017},
  publisher={Wiley Online Library}
}

@article{durmus2017nonasymptotic,
  title={Nonasymptotic convergence analysis for the unadjusted {L}angevin algorithm},
  author={Durmus, Alain and Moulines, Eric},
  journal={The Annals of Applied Probability},
  volume={27},
  number={3},
  pages={1551--1587},
  year={2017},
  publisher={Institute of Mathematical Statistics}
}

@book{villani2009optimal,
  title={Optimal transport: old and new},
  author={Villani, C{\'e}dric},
  volume={338},
  year={2009},
  publisher={Springer}
}

@book{pavliotis2014stochastic,
  title={Stochastic processes and applications: diffusion processes, the {F}okker-{P}lanck and {L}angevin equations},
  author={Pavliotis, Grigorios A},
  volume={60},
  year={2014},
  publisher={Springer}
}

@article{schuh2022global,
  title={Global contractivity for {L}angevin dynamics with distribution-dependent forces and uniform in time propagation of chaos},
  author={Schuh, Katharina},
  journal={arXiv preprint arXiv:2206.03082},
  year={2022}
}

@article{vollmer2016exploration,
    AUTHOR = {Vollmer, Sebastian J. and Zygalakis, Konstantinos C. and Teh,
              Yee Whye},
     TITLE = {Exploration of the (non-)asymptotic bias and variance of
              stochastic gradient {L}angevin dynamics},
   JOURNAL = {J. Mach. Learn. Res.},
  FJOURNAL = {Journal of Machine Learning Research (JMLR)},
    VOLUME = {17},
      YEAR = {2016},
     PAGES = {Paper No. 159, 45},
      ISSN = {1532-4435,1533-7928},
   MRCLASS = {60J22 (60J70 62F15 65C05 68T10)},
  MRNUMBER = {3555050},
}

@article{skeel2002impulse,
  title={An impulse integrator for {L}angevin dynamics},
  author={Skeel, Robert D and Izaguirre, Jes{\"u}s A},
  journal={Molecular Physics},
  volume={100},
  number={24},
  pages={3885--3891},
  year={2002},
  publisher={Taylor \& Francis}
}

@article{durmus2019high,
  title={High-dimensional {B}ayesian inference via the unadjusted {L}angevin algorithm},
  author={Durmus, Alain and Moulines, Eric},
  journal={Bernoulli},
  volume={25},
  number={4A},
  pages={2854--2882},
  year={2019},
  publisher={Bernoulli Society for Mathematical Statistics and Probability}
}

@inproceedings{welling2011bayesian,
  title={Bayesian learning via stochastic gradient {L}angevin dynamics},
  author={Welling, Max and Teh, Yee W},
  booktitle={Proceedings of the 28th international conference on machine learning (ICML-11)},
  pages={681--688},
  year={2011}
}

@article{brunger1984stochastic,
  title={Stochastic boundary conditions for molecular dynamics simulations of {ST2} water},
  author={Br{\"u}nger, Axel and Brooks III, Charles L and Karplus, Martin},
  journal={Chemical physics letters},
  volume={105},
  number={5},
  pages={495--500},
  year={1984},
  publisher={Elsevier}
}

@article{robbins1951stochastic,
  title={A stochastic approximation method},
  author={Robbins, Herbert and Monro, Sutton},
  journal={The annals of mathematical statistics},
  pages={400--407},
  year={1951},
  publisher={JSTOR}
}

@article{leimkuhler2023contractionb,
    AUTHOR = {Leimkuhler, Benedict and Paulin, Daniel and Whalley, Peter A.},
     TITLE = {Contraction rate estimates of stochastic gradient kinetic
              {L}angevin integrators},
   JOURNAL = {ESAIM Math. Model. Numer. Anal.},
  FJOURNAL = {ESAIM. Mathematical Modelling and Numerical Analysis},
    VOLUME = {58},
      YEAR = {2024},
    NUMBER = {6},
     PAGES = {2255--2286},
      ISSN = {2822-7840,2804-7214},
   MRCLASS = {65C05 (65C30 65C40)},
  MRNUMBER = {4834894},
       DOI = {10.1051/m2an/2024038},
       URL = {https://doi.org/10.1051/m2an/2024038},
}

@phdthesis{BUBthesis,
  author  = "Alfonso Álamo Zapatero",
  title   = "Word Series for the Numerical Integration of Stochastic Differential Equations",
  school  = "Universidad de Valladolid",
  year    = "2021"
}

@book{villani2003topics,
  title={Topics in optimal transportation},
  author={Villani, C{\'e}dric},
  volume={58},
  year={2003},
  publisher={American Mathematical Society},
  address={Providence, RI},
  series={Graduate Studies in Mathematics},
  isbn={0-8218-3312-X}
}

@article{dur-maj-mia-2019,
  title={Analysis of Langevin Monte Carlo via convex optimization},
  author={Durmus, Alain and Majewski, Szymon and Miasojedow, B{\l}a{\.z}ej},
  journal={Journal of Machine Learning Research},
  volume={20},
  number={73},
  pages={1--46},
  year={2019}
}

@article {DuEb24,
    AUTHOR = {Durmus, Alain and Eberle, Andreas},
     TITLE = {Asymptotic bias of inexact {M}arkov chain {M}onte {C}arlo
              methods in high dimension},
   JOURNAL = {Ann. Appl. Probab.},
  FJOURNAL = {The Annals of Applied Probability},
    VOLUME = {34},
      YEAR = {2024},
    NUMBER = {4},
     PAGES = {3435--3468},
      ISSN = {1050-5164,2168-8737},
   MRCLASS = {60J05 (65C05)},
  MRNUMBER = {4783022},
MRREVIEWER = {Feng-Rung\ Hu},
       DOI = {10.1214/23-aap2034},
       URL = {https://doi.org/10.1214/23-aap2034},
}

@article{shaw2025random,
  title={Random reshuffling for stochastic gradient {L}angevin dynamics},
  author={Shaw, Luke and Whalley, Peter A},
  journal={arXiv preprint arXiv:2501.16055},
  year={2025}
}

@article{lu2025mean,
  title={Mean square error analysis of stochastic gradient and variance-reduced sampling algorithms},
  author={Lu, Jianfeng and Ye, Xuda and Zhou, Zhennan},
  journal={arXiv preprint arXiv:2511.04413},
  year={2025}
}

@unpublished{che-book-2024+,
author = {Chewi, Sinho},
title = {Log-concave sampling},
note = {Draft available at \url{https://chewisinho.github.io/}},
year = {2025}
}

@article{lu2022explicit,
    AUTHOR = {Cao, Yu and Lu, Jianfeng and Wang, Lihan},
     TITLE = {On explicit {$L^2$}-convergence rate estimate for underdamped
              {L}angevin dynamics},
   JOURNAL = {Arch. Ration. Mech. Anal.},
  FJOURNAL = {Archive for Rational Mechanics and Analysis},
    VOLUME = {247},
      YEAR = {2023},
    NUMBER = {5},
     PAGES = {Paper No. 90, 34},
}

@article{leimkuhler2023contractiona,
author = {Leimkuhler, Benedict J. and Paulin, Daniel and Whalley, Peter A.},
title = {Contraction and {C}onvergence {R}ates for {D}iscretized {K}inetic {L}angevin {D}ynamics},
journal = {SIAM Journal on Numerical Analysis},
volume = {62},
number = {3},
pages = {1226-1258},
year = {2024},
}

@article{SGHMC,
  title={{S}tochastic gradient {H}amiltonian {M}onte {C}arlo},
  author={Chen, Tianqi and Fox, Emily and Guestrin, Carlos},
  journal={Proceedings of the 31st International Conference on Machine Learning},
  volume={32},
  number={2},
  pages={1683--1691},
  year={2014}
}

@article {gouraud2022hmc,
    AUTHOR = {Gouraud, Nicola\"i{} and Le Bris, Pierre and Majka, Adrien and
              Monmarch\'e, Pierre},
     TITLE = {H{MC} and underdamped {L}angevin united in the unadjusted
              convex smooth case},
   JOURNAL = {SIAM/ASA J. Uncertain. Quantif.},
  FJOURNAL = {SIAM/ASA Journal on Uncertainty Quantification},
    VOLUME = {13},
      YEAR = {2025},
    NUMBER = {1},
     PAGES = {278--303},
      ISSN = {2166-2525},
   MRCLASS = {65C40 (60J22 60J60 65C05)},
  MRNUMBER = {4876552},
       DOI = {10.1137/23M1608963},
       URL = {https://doi.org/10.1137/23M1608963},
}

@article {BoOw10,
    AUTHOR = {Bou-Rabee, Nawaf and Owhadi, Houman},
     TITLE = {Long-run accuracy of variational integrators in the stochastic
              context},
   JOURNAL = {SIAM J. Numer. Anal.},
  FJOURNAL = {SIAM Journal on Numerical Analysis},
    VOLUME = {48},
      YEAR = {2010},
    NUMBER = {1},
     PAGES = {278--297},
      ISSN = {0036-1429,1095-7170},
   MRCLASS = {65C30 (60H10 60J05 65C05 65P10)},
  MRNUMBER = {2608370},
MRREVIEWER = {Andreas\ R\"o\ss ler},
       DOI = {10.1137/090758842},
       URL = {https://doi.org/10.1137/090758842},
}

@article{ububu,
  title={Unbiased kinetic {L}angevin {M}onte {C}arlo with inexact gradients},
  author={Chada, Neil K. and Leimkuhler, Benedict and Paulin, Daniel and Whalley, Peter A.},
  journal={arXiv preprint arXiv:2311.05025},
  year={2023}
}

@article{lemast2016,
    AUTHOR = {Leimkuhler, Benedict and Matthews, Charles and Stoltz,
              Gabriel},
     TITLE = {The computation of averages from equilibrium and
              nonequilibrium {L}angevin molecular dynamics},
   JOURNAL = {IMA J. Numer. Anal.},
  FJOURNAL = {IMA Journal of Numerical Analysis},
    VOLUME = {36},
      YEAR = {2016},
    NUMBER = {1},
     PAGES = {13--79},
      ISSN = {0272-4979,1464-3642},
   MRCLASS = {65C30 (60H10 60H35 80A10 82C31)},
  MRNUMBER = {3463433},
MRREVIEWER = {Mikhail\ V.\ Tretyakov},
       DOI = {10.1093/imanum/dru056},
       URL = {https://doi.org/10.1093/imanum/dru056},
}

@article{OBABO,
 title = {Accurate sampling using {L}angevin dynamics},
 author = {Bussi, Giovanni and Parrinello, Michele},
 journal = {Phys. Rev. E},
 volume = {75},
 issue = {5},
 pages = {056707},
 numpages = {7},
 year = {2007},
 month = {May},
 publisher = {American Physical Society},
 doi = {10.1103/PhysRevE.75.056707},
 url = {https://link.aps.org/doi/10.1103/PhysRevE.75.056707}
}

@article{chakreflection,
  title={{R}eflection coupling for unadjusted
generalized {H}amiltonian {M}onte {C}arlo in
the nonconvex stochastic gradient case},
  author={Chak, Martin and Monmarch{\'e}, Pierre },
  journal={arXiv preprint arXiv:2310.18774},
  year={2023}
}

@article{schuh2024convergence,
  title={Convergence of kinetic Langevin samplers for non-convex potentials},
  author={Schuh, Katharina and Whalley, Peter A},
  journal={arXiv preprint arXiv:2405.09992},
  year={2024}
}

@article{ledoux2015stein,
  author  = {Ledoux, Michel and Nourdin, Ivan and Peccati, Giovanni},
  title   = {Stein's method, logarithmic {S}obolev and transport inequalities},
  journal = {Geometric and Functional Analysis},
  year    = {2015},
  volume  = {25},
  number  = {1},
  pages   = {256--306}
}

@article{courtade2019stein,
  author  = {Courtade, Thomas A. and Fathi, Max and Pananjady, Ashwin},
  title   = {Existence of Stein kernels under a spectral gap, and discrepancy bounds},
  journal = {Annales de l'Institut Henri Poincar\'e, Probabilit\'es et Statistiques},
  volume  = {55},
  number  = {2},
  pages   = {777--790},
  year    = {2019},
  doi     = {10.1214/18-AIHP908},
  eprint  = {1703.07707},
  archivePrefix = {arXiv},
  primaryClass  = {math.PR}
}

@incollection{skeel1999integration,
  title={Integration schemes for molecular dynamics and related applications},
  author={Skeel, Robert D},
  booktitle={The Graduate Student’s Guide to Numerical Analysis’ 98: Lecture Notes from the VIII EPSRC Summer School in Numerical Analysis},
  pages={119--176},
  year={1999},
  publisher={Springer}
}

@article {DalalyanSG,
    AUTHOR = {Dalalyan, Arnak S. and Karagulyan, Avetik},
     TITLE = {User-friendly guarantees for the {L}angevin {M}onte {C}arlo
              with inaccurate gradient},
   JOURNAL = {Stochastic Process. Appl.},
  FJOURNAL = {Stochastic Processes and their Applications},
    VOLUME = {129},
      YEAR = {2019},
    NUMBER = {12},
     PAGES = {5278--5311},
      ISSN = {0304-4149,1879-209X},
   MRCLASS = {60B10 (60J22 60J70 65C05)},
  MRNUMBER = {4025705},
       DOI = {10.1016/j.spa.2019.02.016},
       URL = {https://doi.org/10.1016/j.spa.2019.02.016},
}

@article{jin2020random,
  title={Random batch methods (RBM) for interacting particle systems},
  author={Jin, Shi and Li, Lei and Liu, Jian-Guo},
  journal={Journal of Computational Physics},
  volume={400},
  pages={108877},
  year={2020},
  publisher={Elsevier}
}

@inproceedings{paulin2025sampling,
  title={{S}ampling from {B}ayesian Neural Network Posteriors with Symmetric Minibatch Splitting {L}angevin Dynamics},
  author={Paulin, Daniel and Whalley, Peter A and Chada, Neil K and Leimkuhler, Benedict J},
  booktitle={International Conference on Artificial Intelligence and Statistics},
  pages={5014--5022},
  year={2025},
  organization={PMLR}
}

@book {ambrosio2008gradient,
    AUTHOR = {Ambrosio, Luigi and Gigli, Nicola and Savar\'e, Giuseppe},
     TITLE = {Gradient flows in metric spaces and in the space of
              probability measures},
    SERIES = {Lectures in Mathematics ETH Z\"urich},
   EDITION = {Second},
 PUBLISHER = {Birkh\"auser Verlag, Basel},
      YEAR = {2008},
     PAGES = {x+334},
      ISBN = {978-3-7643-8721-1},
   MRCLASS = {49-02 (28A33 35K55 35K90 49Q20 60B05)},
  MRNUMBER = {2401600},
MRREVIEWER = {Pietro\ Celada},
}

@book{Roberts2004,
    AUTHOR = {Robert, Christian P. and Casella, George},
     TITLE = {Monte {C}arlo statistical methods},
    SERIES = {Springer Texts in Statistics},
   EDITION = {Second},
 PUBLISHER = {Springer-Verlag, New York},
      YEAR = {2004},
     PAGES = {xxx+645},
}

@book{liu2001monte,
  title={Monte {C}arlo strategies in scientific computing},
  author={Liu, Jun S and Liu, Jun S},
  volume={10},
  year={2001},
  publisher={Springer}
}

@article{roberts1998optimal,
  title={Optimal scaling of discrete approximations to {L}angevin diffusions},
  author={Roberts, Gareth O and Rosenthal, Jeffrey S},
  journal={Journal of the Royal Statistical Society: Series B (Statistical Methodology)},
  volume={60},
  number={1},
  pages={255--268},
  year={1998},
  publisher={Wiley Online Library}
}

@article{schmon2021large,
  title={Large-sample asymptotics of the pseudo-marginal method},
  author={Schmon, Sebastian M and Deligiannidis, George and Doucet, Arnaud and Pitt, Michael K},
  journal={Biometrika},
  volume={108},
  number={1},
  pages={37--51},
  year={2021},
  publisher={Oxford University Press}
}

@article{alenlov2021pseudo,
  title={Pseudo-marginal {H}amiltonian {M}onte {C}arlo},
  author={Alenl{\"o}v, Johan and Doucet, Arnaud and Lindsten, Fredrik},
  journal={Journal of Machine Learning Research},
  volume={22},
  number={141},
  pages={1--45},
  year={2021}
}

@article{gilks1992adaptive,
  title={Adaptive rejection sampling for {G}ibbs sampling},
  author={Gilks, Walter R and Wild, Pascal},
  journal={Journal of the Royal Statistical Society: Series C (Applied Statistics)},
  volume={41},
  number={2},
  pages={337--348},
  year={1992},
  publisher={Wiley Online Library}
}

@book{diggle2002analysis,
  title={Analysis of longitudinal data},
  author={Diggle, Peter},
  year={2002},
  publisher={Oxford university press}
}

@article{sommer1983increased,
  title={Increased mortality in children with mild vitamin A deficiency},
  author={Sommer, Alfred and Hussaini, Gusti and Tarwotjo, Ignatius and Susanto, Djoko},
  journal={The Lancet},
  volume={322},
  number={8350},
  pages={585--588},
  year={1983},
  publisher={Elsevier}
}

@book {matrixanalysis,
    AUTHOR = {Horn, Roger A. and Johnson, Charles R.},
     TITLE = {Matrix analysis},
   EDITION = {Second},
 PUBLISHER = {Cambridge University Press, Cambridge},
      YEAR = {2013},
     PAGES = {xviii+643},
      ISBN = {978-0-521-54823-6},
   MRCLASS = {15-01},
  MRNUMBER = {2978290},
MRREVIEWER = {Mohammad\ Sal\ Moslehian},
}

@article {Chen-N-W-22,
    AUTHOR = {Chen, Hong-Bin and Niles-Weed, Jonathan},
     TITLE = {Asymptotics of smoothed {W}asserstein distances},
   JOURNAL = {Potential Anal.},
  FJOURNAL = {Potential Analysis. An International Journal Devoted to the
              Interactions between Potential Theory, Probability Theory,
              Geometry and Functional Analysis},
    VOLUME = {56},
      YEAR = {2022},
    NUMBER = {4},
     PAGES = {571--595},
      ISSN = {0926-2601,1572-929X},
   MRCLASS = {58J65 (28A33 60J60)},
  MRNUMBER = {4396827},
MRREVIEWER = {Dejun\ Luo},
       DOI = {10.1007/s11118-020-09895-9},
       URL = {https://doi.org/10.1007/s11118-020-09895-9},
}

@article{beyler2025convergence,
  title={Convergence of {D}eterministic and {S}tochastic {D}iffusion-{M}odel {S}amplers: {A} {S}imple {A}nalysis in {W}asserstein {D}istance},
  author={Beyler, Eliot and Bach, Francis},
  journal={arXiv preprint arXiv:2508.03210},
  year={2025}
}

@article{eldan2013thin,
  author  = {Eldan, Ronen},
  title   = {Thin Shell Implies Spectral Gap Up to Polylog via a Stochastic Localization Scheme},
  journal = {Geometric and Functional Analysis},
  volume  = {23},
  number  = {2},
  pages   = {532--569},
  year    = {2013}
}

@article{lee2017eldans,
  author  = {Lee, Yin Tat and Vempala, Santosh S.},
  title   = {Eldan's Stochastic Localization and the {KLS} Hyperplane Conjecture: An Improved Lower Bound for Expansion},
  journal = {Proceedings of the IEEE 58th Annual Symposium on Foundations of Computer Science},
  pages   = {998--1007},
  year    = {2017}
}

@article{srinivasan2025poisson,
  title={Poisson Midpoint Method for Log Concave Sampling: Beyond the Strong Error Lower Bounds},
  author={Srinivasan, Rishikesh and Nagaraj, Dheeraj},
  journal={arXiv preprint arXiv:2506.07614},
  year={2025}
}

@article{chen2026high,
  title={High-accuracy log-concave sampling with stochastic queries},
  author={Chen, Fan and Chewi, Sinho and Daskalakis, Constantinos and Rakhlin, Alexander},
  journal={arXiv preprint arXiv:2602.14342},
  year={2026}
}

@article{lu2026sharp,
  title={A sharp hypocoercive entropy decay estimate for underdamped {L}angevin dynamics},
  author={Lu, Jianfeng},
  journal={arXiv preprint arXiv:2605.01933},
  year={2026}
}

@article{louis1982finding,
  title={Finding the observed information matrix when using the {EM} algorithm},
  author={Louis, Thomas A},
  journal={Journal of the Royal Statistical Society Series B: Statistical Methodology},
  volume={44},
  number={2},
  pages={226--233},
  year={1982},
  publisher={Oxford University Press}
}

@article {ChenEldanSL,
    AUTHOR = {Chen, Yuansi and Eldan, Ronen},
     TITLE = {Localization schemes: a framework for proving mixing bounds
              for {M}arkov chains},
   JOURNAL = {Duke Math. J.},
  FJOURNAL = {Duke Mathematical Journal},
    VOLUME = {174},
      YEAR = {2025},
    NUMBER = {8},
     PAGES = {1431--1510},
      ISSN = {0012-7094,1547-7398},
   MRCLASS = {60J22 (60H30 60K35 82M60)},
  MRNUMBER = {4916109},
MRREVIEWER = {Udrea\ P\u aun},
       DOI = {10.1215/00127094-2024-0063},
       URL = {https://doi.org/10.1215/00127094-2024-0063},
}

\begin{appendices}

\section{Non-asymptotic guarantees for stochastic gradient UBU}\label{sec:nonasymp}
We use the following variation-of-constants representation of kinetic Langevin dynamics, obtained by applying It\^o's formula to $e^{\gamma t}V_t$ (cf.\ \cite{sanz2021wasserstein}). For initial condition $(X_0,V_0)\in\R^{2d}$, the solution of \eqref{eq:kinetic_langevin} satisfies, for $t\geq 0$,
\begin{align}
V_{t} &= \mathcal{E}(t)V_{0} - \int^{t}_{0}\mathcal{E}(t-s)\nabla V(X_{s})ds + \sqrt{2\gamma}\int^{t}_{0}\mathcal{E}(t-s)dW_s, \label{eq:cont_v}\\
X_{t} &= X_0 + \mathcal{F}(t)V_{0} - \int^{t}_{0}\mathcal{F}(t-s)\nabla V(X_{s})ds + \sqrt{2\gamma}\int^{t}_{0}\mathcal{F}(t-s)dW_{s}, \label{eq:cont_x}
\end{align}
where
\begin{equation}\label{eq:EFdef}
\mathcal{E}(t)=e^{-\gamma t},
\qquad
\mathcal{F}(t)=\frac{1-e^{-\gamma t}}{\gamma}.
\end{equation}
Then, a convenient representation of the iteration update for the UBU scheme given initialisation $(x_{0},v_{0}) \in \mathbb{R}^{2d}$ is as follows
\begin{align}
v_{k+1} &= \mathcal{E}(h)v_k - h\mathcal{E}(h/2)\nabla V(y_{k}) + \sqrt{2\gamma}\int^{(k+1)h}_{kh}\mathcal{E}((k+1)h-s)dW_s,\label{eq:disc_v}\\
y_{k} &= x_{k} + \mathcal{F}(h/2)v_{k} + \sqrt{2\gamma}\int^{(k+1/2)h}_{kh}\mathcal{F}((k+1/2)h-s)dW_s, \label{eq:disc_y}\\
x_{k+1} &= x_{k} + \mathcal{F}(h)v_{k} - h\mathcal{F}(h/2)\nabla V(y_{k}) + \sqrt{2\gamma}\int^{(k+1)h}_{kh}\mathcal{F}((k+1)h-s)dW_s,\label{eq:disc_x}
\end{align}
and the stochastic-gradient variant (SG-UBU) is obtained by replacing $\nabla V(\cdot)$ by $\mathcal{G}(\cdot,\omega_{k+1})$ with i.i.d.\ $\omega_{k+1}\sim\Gamma$:
\begin{align}
\Tilde{v}_{k+1} &= \mathcal{E}(h)\Tilde{v}_k - h\mathcal{E}(h/2)\mathcal{G}(\Tilde{y}_{k},\omega_{k+1}) + \sqrt{2\gamma}\int^{(k+1)h}_{kh}\mathcal{E}((k+1)h-s)dW_s,\label{eq:disc_v_stoch}\\
\Tilde{y}_{k} &= \Tilde{x}_{k} + \mathcal{F}(h/2)\Tilde{v}_{k} + \sqrt{2\gamma}\int^{(k+1/2)h}_{kh}\mathcal{F}((k+1/2)h-s)dW_s, \label{eq:disc_y_stoch}\\
\Tilde{x}_{k+1} &= \Tilde{x}_{k} + \mathcal{F}(h)\Tilde{v}_{k} - h\mathcal{F}(h/2)\mathcal{G}(\Tilde{y}_{k},\omega_{k+1}) + \sqrt{2\gamma}\int^{(k+1)h}_{kh}\mathcal{F}((k+1)h-s)dW_s.\label{eq:disc_x_stoch}
\end{align}
To control the one-step error from the target measure, we use that under $\overline\pi$ the velocity marginal is Gaussian, and the It\^o integrals contribute additional Gaussian increments. The next proposition combines these contributions as a single block-Gaussian random vector.
\begin{proposition}\label{prop:block-gaussian}
Let $W=(W_t)_{t\geq 0}$ be a standard $d$-dimensional Brownian motion and let $v\sim\mathcal N(0,I_d)$ be independent of $W$. Fix $\gamma>0$ and $h>0$, and define $\mathcal E,\mathcal F$ as in \eqref{eq:EFdef}. Define the $\mathbb{R}^{2d}$-valued random vector
\[
X
:=
\binom{X_1}{X_2}
:=
\binom{
\mathcal F(h)v+\sqrt{2\gamma}\int_0^h
\mathcal F(h-s)\,dW_s
}{
\mathcal E(h)v+\sqrt{2\gamma}\int_0^h
\mathcal E(h-s)\,dW_s
}.
\]
Then $X$ is a centred Gaussian and
\[
X \ \stackrel{d}{=} \ \Sigma(h,\gamma)Z_{2d},
\]
where $Z_{2d}\sim\mathcal N(0,I_{2d})$ and
\[
\Sigma(h,\gamma)
=
\begin{pmatrix}
\mathcal F(h)I_d & \sigma(h,\gamma)I_d\\[2mm]
I_d & 0
\end{pmatrix},
\]
with
\begin{equation}\label{eq:sigma-def}
\sigma(h,\gamma)^2 := 2\gamma \int_0^h \mathcal F(u)^2du
= \frac{2h}{\gamma}-\frac{3}{\gamma^2}+\frac{4e^{-\gamma h}}{\gamma^2}-\frac{e^{-2\gamma h}}{\gamma^2}.
\end{equation}
Equivalently,
\[
\operatorname{Cov}(X)
=
\begin{pmatrix}
\left(
\frac{2h}{\gamma}
-\frac{2(1-e^{-\gamma h})}{\gamma^2}
\right)I_d
&
\mathcal F(h)I_d\\[2mm]
\mathcal F(h)I_d&I_d
\end{pmatrix}.
\]
\end{proposition}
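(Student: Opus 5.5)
The plan is to identify $X$ as a centred Gaussian vector, compute its covariance with the Itô isometry, and then match it with the covariance of $\Sigma(h,\gamma)Z_{2d}$. A centred Gaussian law is determined by its covariance, so this gives equality in distribution.

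\textbf{Gaussianity.} Both components $X_1,X_2$ are linear combinations of $v\sim\mathcal N(0,I_d)$ and of Wiener integrals $\int_0^h f(s)\,dW_s$ with deterministic, square-integrable integrands $f(s)=\mathcal F(h-s)$ and $f(s)=\mathcal E(h-s)$. The pair $(v,W)$ is jointly Gaussian, since $v$ is independent of $W$. Hence $X$ is jointly Gaussian. It is centred because $v$ is centred and Wiener integrals have mean zero.

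\textbf{Covariance of $X$.} Every block is a scalar multiple of $I_d$, because the coordinates of $v$ and $W$ are independent and the integrands are scalar. So it suffices to compute the scalar coefficients. The cross terms between $v$ and $W$ vanish by independence, and after the substitution $u=h-s$ the Itô isometry gives:
\begin{itemize}
\item $\operatorname{Var}$-coefficient of $X_2$: $\mathcal E(h)^2+2\gamma\int_0^h e^{-2\gamma u}\,du=e^{-2\gamma h}+(1-e^{-2\gamma h})=1$.
\item $\operatorname{Cov}$-coefficient of $(X_1,X_2)$: $\mathcal F(h)\mathcal E(h)+2\gamma\int_0^h\mathcal F(u)\mathcal E(u)\,du$. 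Writing $e:=e^{-\gamma h}$, this equals
\[
\frac{e-e^2}{\gamma}+\frac{2(1-e)}{\gamma}-\frac{1-e^2}{\gamma}=\frac{1-e}{\gamma}=\mathcal F(h).
\]
\item $\operatorname{Var}$-coefficient of $X_1$: $\mathcal F(h)^2+2\gamma\int_0^h\mathcal F(u)^2\,du=\mathcal F(h)^2+\sigma(h,\gamma)^2$.
\end{itemize}

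\textbf{Closed form for $\sigma^2$.} Expanding $(1-e^{-\gamma u})^2=1-2e^{-\gamma u}+e^{-2\gamma u}$ and integrating term by term gives the formula in \eqref{eq:sigma-def}. Adding $\mathcal F(h)^2=(1-2e+e^2)/\gamma^2$, the $e^2$ terms cancel and we obtain
\[
\mathcal F(h)^2+\sigma(h,\gamma)^2=\frac{2h}{\gamma}-\frac{2(1-e^{-\gamma h})}{\gamma^2}.
\]
This is the stated top-left block. Note that $\sigma^2\ge 0$, being $2\gamma$ times an integral of a square, so $\sigma(h,\gamma)$ is well defined as a nonnegative real number.

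\textbf{Matching with $\Sigma(h,\gamma)Z_{2d}$.} Write $Z_{2d}=(Z',Z'')$ with $Z',Z''$ independent $\mathcal N(0,I_d)$. Then
\[
\Sigma(h,\gamma)Z_{2d}=\bigl(\mathcal F(h)Z'+\sigma(h,\gamma)Z'',\;Z'\bigr).
\]
Its covariance is
\[
\Sigma\Sigma^\top=\begin{pmatrix}(\mathcal F(h)^2+\sigma(h,\gamma)^2)I_d&\mathcal F(h)I_d\\ \mathcal F(h)I_d&I_d\end{pmatrix},
\]
which coincides with $\operatorname{Cov}(X)$ by the previous two steps. Both vectors are centred Gaussians with the same covariance, so $X\stackrel{d}{=}\Sigma(h,\gamma)Z_{2d}$.

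The only real obstacle is the exponential bookkeeping in the cross-covariance and in $\sigma^2$, where the cancellations must come out exactly. These can be checked by collecting coefficients of $1$, $e^{-\gamma h}$ and $e^{-2\gamma h}$ separately.
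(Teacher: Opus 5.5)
Your proposal is correct and follows essentially the same route as the paper. Both establish joint Gaussianity, compute the three covariance blocks via independence of $v$ and $W$ and the It\^o isometry (with the substitution $u=h-s$), and then verify $\Sigma(h,\gamma)\Sigma(h,\gamma)^\top=\operatorname{Cov}(X)$. You also check explicitly the closed form for $\sigma(h,\gamma)^2$ and the cancellation that gives the top-left block, which the paper leaves implicit; that arithmetic is correct.
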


\begin{proof}
The random vector $v$ is Gaussian. Moreover, for each deterministic function
$k\in L^2([0,h])$, the It\^o integral $\int_0^h k(s)dW_s$ is a centred Gaussian
$\mathbb{R}^d$-vector, and for deterministic $k,\ell\in L^2([0,h])$ we have the covariance identity
\begin{equation}\label{eq:ito-cov}
\mathrm{Cov}\left(\int_0^h k(s)dW_s,\ \int_0^h \ell(s)dW_s\right)
=\left(\int_0^h k(s)\ell(s)ds\right) I_d,
\end{equation}
since the coordinates of $W$ are independent standard one-dimensional Brownian motions.
Because $v$ is independent of $W$, it is independent of both It\^o integrals.
Therefore, $X$ is a linear transformation of jointly Gaussian objects and hence is
a centred Gaussian vector in $\mathbb{R}^{2d}$.
Now consider
\[
X_2 = \mathcal E(h)v + U,
\qquad
U:=\sqrt{2\gamma}\int_0^h \mathcal E(h-s)dW_s,
\]
then independence of $v$ and $U$ yields $\mathrm{Cov}(X_2)=\mathcal E(h)^2 I_d+\mathrm{Cov}(U)$.
By \eqref{eq:ito-cov} with $k(s)=\ell(s) = \sqrt{2\gamma}\mathcal E(h-s)$,
\[
\mathrm{Cov}(U)
=2\gamma\int_0^h \mathcal E(h-s)^2ds I_d
=2\gamma\int_0^h e^{-2\gamma(h-s)}ds I_d
=\left(1-e^{-2\gamma h}\right)I_d.
\]
Hence
\[
\mathrm{Cov}(X_2)
=\left(e^{-2\gamma h}+1-e^{-2\gamma h}\right)I_d
=I_d.
\]
Now considering
\[
X_1=\mathcal F(h)v + V,
\qquad
V:=\sqrt{2\gamma}\int_0^h \mathcal F(h-s)dW_s,
\]
and using independence of $v$ from $(U,V)$ and bilinearity of covariance we have
\[
\mathrm{Cov}(X_2,X_1)
=\mathcal E(h)\mathcal F(h)I_d + \mathrm{Cov}(U,V).
\]
By \eqref{eq:ito-cov} with $k(s)=\sqrt{2\gamma}\mathcal E(h-s)$ and
$\ell(s)=\sqrt{2\gamma}\mathcal F(h-s)$,
\[
\mathrm{Cov}(U,V)
=2\gamma\int_0^h \mathcal E(h-s)\mathcal F(h-s)ds I_d.
\]
Make the change of variables $u=h-s$:
\[
2\gamma\int_0^h \mathcal E(h-s)\mathcal F(h-s)ds
=2\gamma\int_0^h e^{-\gamma u}\frac{1-e^{-\gamma u}}{\gamma}du
=2\int_0^h\left(e^{-\gamma u}-e^{-2\gamma u}\right)du,
\]
so
\[
\mathrm{Cov}(U,V)
=\left(\frac{2(1-e^{-\gamma h})}{\gamma}-\frac{1-e^{-2\gamma h}}{\gamma}\right) I_d.
\]
Adding the remaining term
$\mathcal E(h)\mathcal F(h)=(e^{-\gamma h})(1-e^{-\gamma h})/\gamma
=(e^{-\gamma h}-e^{-2\gamma h})/\gamma$ yields
\[
\mathrm{Cov}(X_2,X_1)
=\frac{1-e^{-\gamma h}}{\gamma}I_d
=\mathcal F(h)I_d.
\]
Finally, by independence of $v$ and $V$,
\[
\mathrm{Cov}(X_1)=\mathcal F(h)^2 I_d + \mathrm{Cov}(V),
\qquad
\mathrm{Cov}(V)=2\gamma\int_0^h \mathcal F(h-s)^2ds I_d.
\]
Again with $u=h-s$,
\[
2\gamma\int_0^h \mathcal F(h-s)^2ds
=2\gamma\int_0^h \left(\frac{1-e^{-\gamma u}}{\gamma}\right)^2 du
=\frac{2}{\gamma}\int_0^h\left(1-2e^{-\gamma u}+e^{-2\gamma u}\right)du,
\]
hence
\[
\mathrm{Cov}(X_1)
=
\left(\frac{2h}{\gamma}-\frac{2(1-e^{-\gamma h})}{\gamma^2}\right) I_d.
\]
Then collecting the results, the covariance of $X$ has the $2\times 2$ block form
\[
\mathrm{Cov}(X)=
\begin{pmatrix}
\left(\frac{2h}{\gamma}-\frac{2(1-e^{-\gamma h})}{\gamma^2}\right)I_d
& \mathcal F(h)I_d\\[1mm]
\mathcal F(h)I_d & I_d
\end{pmatrix}.
\]
Considering $\sigma(h,\gamma)^2$ by \eqref{eq:sigma-def}, since $\sigma(h,\gamma)^2
=2\gamma\int_0^h \mathcal F(u)^2du\geq 0$, the matrix
\[
\Sigma(h,\gamma)=
\begin{pmatrix}
\mathcal F(h)I_d & \sigma(h,\gamma)I_d\\[2mm]
I_d & 0
\end{pmatrix}
\]
satisfies $\Sigma(h,\gamma)\Sigma(h,\gamma)^\top=\mathrm{Cov}(X)$ because
\[
\Sigma\Sigma^\top
=
\begin{pmatrix}
\left(\mathcal F(h)^2+\sigma(h,\gamma)^2\right) I_d
& \mathcal F(h)I_d\\[1mm]
\mathcal F(h)I_d & I_d
\end{pmatrix}
=
\mathrm{Cov}(X).
\]
Let $Z_{2d}\sim\mathcal N(0,I_{2d})$. Then $\Sigma(h,\gamma)Z_{2d}$ is centred Gaussian with
covariance $\Sigma(h,\gamma)\Sigma(h,\gamma)^\top=\mathrm{Cov}(X)$, hence
$X\stackrel{d}{=}\Sigma(h,\gamma)Z_{2d}$.
\end{proof}

Using this Proposition we now apply the Gaussian convolution inequalities in the local error analysis as follows.

\subsection{Proof of Theorem~\ref{theorem:bias_sgUBU}}

\sgubuBias*

\begin{proof}[Proof of Theorem~\ref{theorem:bias_sgUBU}]
{The proof has three steps. First, we decompose the invariant-measure error into a deterministic UBU local-bias term, a one-step stochastic-gradient perturbation term, and a contraction term. Second, we use the contraction estimate for SG-UBU to absorb the final term. Third, we bound the one-step stochastic-gradient perturbation by coupling the Gaussian noise optimally with the convolution induced by the centred stochastic-gradient error.}

Let $P$ denote the transition kernel of \eqref{eq:disc_v}-\eqref{eq:disc_x} and $P_{\mathrm{SG}}$ the
transition kernel of \eqref{eq:disc_v_stoch}-\eqref{eq:disc_x_stoch}, then
\begin{equation}\label{eq:decomp_Wp}
    W_{p,a,b}(\overline{\pi},\overline{\pi}_{h})
    \leq
    W_{p,a,b}(\overline{\pi},\overline{\pi} P)
    +
    W_{p,a,b}(\overline{\pi} P,\overline{\pi} P_{\mathrm{SG}})
    +
    W_{p,a,b}(\overline{\pi} P_{\mathrm{SG}},\overline{\pi}_{h}P_{\mathrm{SG}}).
\end{equation}
By \cite[Proposition I.3]{ububu}, for $h < \min\{\frac1{2\gamma},\frac1{5\sqrt L}\}$,
\[
W_{p,a,b}(\overline{\pi},\overline{\pi} P)
\leq \frac37 \sqrt d(\sqrt L+\gamma)h^2.
\]
By Proposition~\ref{prop:wass_SGUBU},
\[
W_{p,a,b}(\overline{\pi} P_{\mathrm{SG}},\overline{\pi}_{h}P_{\mathrm{SG}})
\le
\left(1-\frac{mh}{4\gamma}+\frac{8h^2C_G}{7L}\right)^{1/2}
W_{p,a,b}(\overline{\pi},\overline{\pi}_h).
\]
Hence
\begin{align*}
W_{p,a,b}(\overline{\pi},\overline{\pi}_h)
\le
\frac{
\frac37 \sqrt d(\sqrt L+\gamma)h^2 + W_{p,a,b}(\overline{\pi}P,\overline{\pi}P_{\mathrm{SG}})
}{
1-\left(1-\frac{mh}{4\gamma}+\frac{8h^2C_G}{7L}\right)^{1/2}
}.
\end{align*}
Set
\[
r_h:=\frac{mh}{4\gamma}-\frac{8h^2C_G}{7L}
=\frac{h\left(mL-\frac{32}{7}hC_G\gamma\right)}{4\gamma L}.
\]
Under the assumption
$
h<\min\left\{\frac{1}{2\gamma},\frac{7mL}{32C_G\gamma}\right\},
$
we have $r_h>0$. Moreover, since $m\leq L$, $h<1/(2\gamma)$, and $\gamma^2\geq 8L$,
 $0<r_h<1$, and so
\[
1-\sqrt{1-r_h}\geq \frac{r_h}{2}
=
\frac{h\left(mL-\frac{32}{7}hC_G\gamma\right)}{8\gamma L}.
\]
Consequently,
\begin{equation}\label{eq:denominator_bound}
\frac{1}{1-\left(1-\frac{mh}{4\gamma}+\frac{8h^2C_G}{7L}\right)^{1/2}}
\le
\frac{8\gamma L}{h\left(mL-\frac{32}{7}hC_G\gamma\right)}.
\end{equation}
Thus
\begin{equation}\label{eq:prelocal}
W_{p,a,b}(\overline{\pi},\overline{\pi}_h)
\le
\frac{8\gamma L}{h\left(mL-\frac{32}{7}hC_G\gamma\right)}
\left[
\frac37 \sqrt d(\sqrt L+\gamma)h^2
+
W_{p,a,b}(\overline{\pi}P,\overline{\pi}P_{\mathrm{SG}})
\right].
\end{equation}

It remains to bound $W_{p,a,b}(\overline{\pi}P,\overline{\pi}P_{\mathrm{SG}})$.
For each $x\in\R^d$, let $\eta_x$ and $\eta_x^{SG}$ denote the conditional one-step laws of UBU
and SG-UBU started from position $x$, with initial velocity distributed according to
$\g:=\mathcal N(0,I_d)$. Then
\[
\overline{\pi}P=\int \eta_x\pi(dx),
\qquad
\overline{\pi}P_{\mathrm{SG}}=\int \eta_x^{SG}\pi(dx),
\]
and
\cite[Theorem~4.8]{villani2009optimal} yields
\[
W_{p,a,b}^p(\overline{\pi}P,\overline{\pi}P_{\mathrm{SG}})
\le
\int_{\R^d} W_{p,a,b}^p(\eta_x,\eta_x^{SG})\pi(dx).
\]

For $x\in\R^d$ we now wish to bound $W_{p,a,b}(\eta_x,\eta_x^{SG})$. Let $v\sim\mathcal N(0,I_d)$, and $(W_t)_t$ be a standard $d$-dimensional Brownian
motion independent of $v$, and define
\begin{align*}
Y_x &:=
x+\mathcal F(h/2)v+\sqrt{2\gamma}\int_0^{h/2}\mathcal F(h/2-s)dW_s,\\
N_x
&:=
\binom{N_x^{(1)}}{N_x^{(2)}}
:=
\binom{
\mathcal F(h)v+\sqrt{2\gamma}\int_0^h
\mathcal F(h-s)\,dW_s
}{
\mathcal E(h)v+\sqrt{2\gamma}\int_0^h
\mathcal E(h-s)\,dW_s
}.
\end{align*}
Then by \eqref{eq:disc_v}-\eqref{eq:disc_x}, $\eta_x$ is the law of
\[
\left(
x+N_x^{(1)}-h\mathcal F(h/2)\nabla V(Y_x),
N_x^{(2)}-h\mathcal E(h/2)\nabla V(Y_x)
\right).
\]
By Proposition~\ref{prop:block-gaussian}, $N_x\stackrel d=\Sigma Z_{2d}$ for
$Z_{2d}\sim\mathcal N(0,I_{2d})$, where
\[
Z_{2d}=(Z_d,\widehat Z_d),
\qquad
\Sigma Z_{2d}
=
\binom{
\mathcal F(h)Z_d+\sigma(h,\gamma)\widehat Z_d
}{Z_d}.
\]
Note that $N_x\stackrel d=\Sigma Z_{2d}$ and we may construct a
random variable $y$ (using the conditional law of $Y_{x}$ given $N_x$) such that
\[
(y,\Sigma Z_{2d})\stackrel d=(Y_x,N_x),
\]
where we use the abbreviation $\Sigma:=\Sigma(h,\gamma)$.
Hence $\eta_x$ is the law of
\[
\left(
x+\mathcal F(h)Z_d+\sigma(h,\gamma)\widehat Z_d
-h\mathcal F(h/2)\nabla V(y),
Z_d-h\mathcal E(h/2)\nabla V(y)
\right),
\]
Similarly, if $\omega\sim\Gamma$ is independent and $Z'_{2d}=(Z'_d,\widehat Z'_d)\sim\mathcal N(0,I_{2d})$, then we may construct $\widetilde y$ such that
\[
(\widetilde y,\Sigma Z'_{2d},\omega)\stackrel d=(Y_x,N_x,\omega),
\]
and therefore by \eqref{eq:disc_v_stoch}-\eqref{eq:disc_x_stoch}, $\eta_x^{SG}$ is the law of
\[
\left(
x+\mathcal F(h)Z'_d+\sigma(h,\gamma)\widehat Z'_d
-h\mathcal F(h/2)\mathcal G(\widetilde y,\omega),
Z'_d-h\mathcal E(h/2)\mathcal G(\widetilde y,\omega)
\right).
\]

Now define
\[
A_x:=h\mathcal E(h/2)\big(\nabla V(x)-\mathcal G(x,\omega)\big),
\qquad
\mu_x:=\mathrm{Law}(A_x),
\]
and
\[
\sigma_p(x):=\Big(\E_\omega\|\nabla V(x)-\mathcal G(x,\omega)\|^p\Big)^{1/p}.
\]
Set $t:=h/2$ and $\eta:=\mathcal E(t)=e^{-\gamma h/2}$, then we first make the conditional coupling of the intermediate positions explicit. Let $U_x:=Y_x-x$, then since $(U_x,Z_d,\widehat Z_d)$ is jointly centred Gaussian and its covariance is block-isotropic, one can find scalars $c_1,c_2$ and a centred Gaussian random vector $R_d$, independent of $(Z_d,\widehat Z_d)$, such that
\[
U_x=c_1Z_d+c_2\widehat Z_d+R_d.
\]
Moreover, using that the stationary OU velocity has covariance $\operatorname{Cov}(V_s,V_h)=e^{-\gamma(h-s)}I_d$ for $0\leq s\leq t$,
\[
c_1 I_d
=
\operatorname{Cov}(U_x,Z_d)
=
\operatorname{Cov}(Y_x-x,N_x^{(2)})
=
\left(\int_0^t e^{-\gamma(h-s)}\,ds\right)I_d
=
\eta\mathcal F(t)I_d.
\]
The value of $c_2$ will not be needed.

For this fixed $x$, let $\lambda_x\in\Pi(\mu_x*\g,\g)$ be an optimal coupling. Realising its first marginal as $A_x(\omega)+Z'_d$, where $(\omega,Z'_d)\sim\Gamma\otimes\g$, choose $Z_d$ such that
\[
(A_x(\omega)+Z'_d,Z_d)\sim\lambda_x.
\]
Independently of $(\omega,Z'_d,Z_d)$, sample $\widehat Z_d\sim\g$ and a Gaussian $R_d$ having the law above, and define
\begin{align*}
 y&:=x+c_1Z_d+c_2\widehat Z_d+R_d,\\
 \widetilde y&:=x+c_1Z'_d+c_2\widehat Z_d+R_d.
\end{align*}
Then
\[
 (y,\Sigma(Z_d,\widehat Z_d))\stackrel d=(Y_x,N_x),
 \qquad
 (\widetilde y,\Sigma(Z'_d,\widehat Z_d),\omega)\stackrel d=(Y_x,N_x,\omega).
\]
Note that $Z'_d$, $\widehat Z_d$, and $R_d$ are independent of $\omega$, so $\widetilde y$ remains independent of $\omega$. This coupling yields
\begin{equation}\label{eq:refined_y_coupling}
 y-\widetilde y
 =\eta\mathcal F(h/2)(Z_d-Z'_d).
\end{equation}

Now define
\[
 D_x:=A_x+Z'_d-Z_d,
 \qquad
 \mathcal R_x(\omega):=\nabla V(x)-\mathcal G(x,\omega), 
\]
and 
\[
 \Delta_x:=\nabla V(y)-\mathcal G(\widetilde y,\omega)
 -\left(\nabla V(x)-\mathcal G(x,\omega)\right).
\]
Using $A_x=h\eta\mathcal R_x(\omega)$ and $\mathcal F(h)=(1+\eta)\mathcal F(h/2)$, the coupled one-step difference can be written as
\begin{equation}\label{eq:local_error_refined}
\binom{\widetilde x_1-x_1}{\widetilde v_1-v_1}
=
\binom{\mathcal F(h)D_x}{D_x}
+
\binom{h\mathcal F(h/2)\Delta_x}{h\eta\Delta_x}
+
\binom{
h\mathcal F(h/2)(1-\eta-\eta^2)\mathcal R_x(\omega)
}{0}.
\end{equation}
This is a coupling of $\eta_x$ and $\eta_x^{SG}$ by the two marginal identities above.

We now estimate the three vectors in \eqref{eq:local_error_refined} directly in the twisted norm, $\|(x,v)\|_{a,b}$. First, by Minkowski's inequality we have
\begin{align*}
W_{p,a,b}(\eta_x,\eta_x^{SG}) &\le q_0\,\|D_x\|_{L^p} +hq_1\,\|\Delta_x\|_{L^p}\\
&\quad + h\mathcal F(h/2)|1-\eta-\eta^2|\,\sigma_p(x),
\end{align*}
where
\[
q_0^2=L^{-1}+2\gamma^{-1}\mathcal F(h)+\mathcal F(h)^2,
\qquad
q_1^2=\eta^2L^{-1}+2\eta\gamma^{-1}\mathcal F(h/2)+\mathcal F(h/2)^2.
\]
By optimality of $\lambda_x$,
\[
\|D_x\|_{L^p}=W_p(\mu_x*\g,\g).
\]
Let $r:=L/\gamma^2$, and $q:=1-e^{-\gamma h}$, then since $\gamma h\le1/2$, $r\le1/8$, and $q\le2/5$,
\[
Lq_0^2=1+r(2q+q^2)\leq 1+\frac{3}{25}<\left(\frac{17}{16}\right)^2, \qquad q_0\le\frac{17}{16\sqrt L}.
\]
Also, writing $f:=1-\eta$ and using $\mathcal F(h/2)=f/\gamma$,
\[
Lq_1^2=\eta^2+r(1-\eta^2)\leq 1, \qquad q_1\leq\frac1{\sqrt L}.
\]
Finally, the assumptions imply $Lh^2\leq 1/32$, and we shall also use $\mathcal F(h/2)\leq h/2$.

Next, by Lipschitzness of $\nabla V$,
\begin{align*}
\|\Delta_x\|_{L^p} &\leq L\|y-\widetilde y\|_{L^p} +\left\|\left\|\left(D\mathcal G-\nabla^2V\right)[\widetilde y,x]\right\|_{\op}\|\widetilde y-x\|\right\|_{L^p}.
\end{align*}
Here
\[
\left(D\mathcal G-\nabla^2V\right)[\widetilde y,x]
:=\int_0^1
\left(D\mathcal G-\nabla^2V\right)(\widetilde y+s(x-\widetilde y))\,ds.
\]
From \eqref{eq:refined_y_coupling} and $Z_d-Z'_d=A_x-D_x$,
\[
\|y-\widetilde y\|_{L^p} \le \eta\mathcal F(h/2) \left(W_p(\mu_x*\g,\g)+h\eta\sigma_p(x)\right).
\]
Also, $\widetilde y-x$ has the same law as $Y_x-x=\int_0^{h/2}V_s\,ds$, where $(V_s)$ is a stationary OU velocity with $V_s\sim\mathcal N(0,I_d)$. Thus, for $p\in\{1,2\}$, Minkowski's inequality gives
\[
\|\widetilde y-x\|_{L^p}\leq\|\widetilde y-x\|_{L^2} \leq\int_0^{h/2}\|V_s\|_{L^2}\,ds =\frac{h}{2}\sqrt d.
\]
Because the construction preserves $\widetilde y\perp\!\!\!\perp\omega$, Assumption~\ref{Assumption:Bounded_Variance} and Jensen's inequality imply
\[
\mathbb{E}\left[
\left\|\left(D\mathcal G-\nabla^2V\right)[\widetilde y,x]\right\|_{\op}^2
\,\middle|\,\widetilde y
\right]
\leq C_G.
\]
Consequently, conditioning on $\widetilde y$ and using Cauchy--Schwarz (directly for $p=2$, and after one further Cauchy--Schwarz inequality for $p=1$),
\[
\left\|\left\|\left(D\mathcal G-\nabla^2V\right)[\widetilde y,x]\right\|_{\op}\|\widetilde y-x\|\right\|_{L^p}\leq\frac{h}{2} C_G^{1/2}\sqrt d.
\]
Therefore
\begin{equation}\label{eq:delta_refined_bound}
\|\Delta_x\|_{L^p}\leq L\eta\mathcal F(h/2)\left(W_p(\mu_x*\g,\g)+h\eta\sigma_p(x)\right)+\frac{h}{2}C_G^{1/2}\sqrt d.
\end{equation}

Substituting \eqref{eq:delta_refined_bound} into the preceding estimate gives three contributions. For the convolution term,
\[
q_0+h q_1L\eta\mathcal F(h/2) \leq \frac{17}{16\sqrt L}+h\sqrt L\,\eta\mathcal F(h/2)\leq\frac{69}{64\sqrt L},
\]
where the last step uses $Lh^2\le1/32$ and $\mathcal F(h/2)\leq h/2$. The Jacobian-noise contribution is bounded by
\[
hq_1\frac{h}{2}C_G^{1/2}\sqrt d \leq \frac{h^2C_G^{1/2}\sqrt d}{2\sqrt L}.
\]
For the two $\sigma_p(x)$ contributions, again let $f=1-\eta$ and $r=L/\gamma^2$. Since $\eta\ge e^{-1/4}$,
\[
|1-\eta-\eta^2|=\eta+\eta^2-1=1-3f+f^2.
\]
Moreover, $f/(\gamma h) \leq 1/2$ and $\sqrt r\le1/(2\sqrt2)$, and hence
\begin{align*}
\sqrt r\,\eta^2f+\frac{f}{\gamma h}(1-3f+f^2) &\leq \frac{f}{2\sqrt2}+\frac12(1-3f+f^2)\leq\frac{1}{2},
\end{align*}
the last inequality following from $0\leq f\leq 1$. Therefore
\begin{equation}\label{eq:local_twisted_sharp}
W_{p,a,b}(\eta_x,\eta_x^{SG})\leq \frac{h^2C_G^{1/2}\sqrt d}{2\sqrt L}+\frac{h^2}{2}\sigma_p(x)+\frac{69}{64}\frac{W_p(\mu_x*\g,\g)}{\sqrt L}.
\end{equation}
Taking the $L^p(\pi)$ norm in $x$ yields
\[
W_{p,a,b}(\overline\pi P,\overline\pi P_{\mathrm{SG}})\leq\frac{h^2C_G^{1/2}\sqrt d}{2\sqrt L}+\frac{h^2}{2}\sigma_p+\frac{69}{64}\frac{\left(\mathbb{E}_{x\sim\pi}W_p^p(\mu_x*\g,\g)\right)^{1/p}}{\sqrt L}.
\]
Substituting this into \eqref{eq:prelocal} gives
\begin{align*}
W_{p,a,b}(\overline\pi,\overline\pi_h)&\leq\frac{\gamma Lh}{mL-\frac{32}{7}hC_G\gamma}\Bigg[\frac{24}{7}\sqrt d(\sqrt L+\gamma)+\frac{4C_G^{1/2}\sqrt d}{\sqrt L}+4\sigma_p\\
&\hspace{8em}+\frac{69}{8}\frac{\left(\mathbb{E}_{x\sim\pi}W_p^p(\mu_x*\g,\g)\right)^{1/p}}{h^2\sqrt L}\Bigg]\\
&\leq\frac{\gamma Lh}{mL-\frac{32}{7}hC_G\gamma}\left[\frac{24}{7}\sqrt d(\sqrt L+\gamma)+\frac{4C_G^{1/2}\sqrt d}{\sqrt L}+4\sigma_p+9\frac{\left(\mathbb{E}_{x\sim\pi}W_p^p(\mu_x*\g,\g)\right)^{1/p}}{h^2\sqrt L}\right],
\end{align*}
which is exactly \eqref{eq:sg_ubu_bias}.
\end{proof}

\subsection{Proof of Corollary~\ref{cor:plug_in}}

\pluginBounds*

\begin{proof}[Proof of Corollary~\ref{cor:plug_in}]
For each $x\in\R^d$, recall that
\[
\mu_x=\mathrm{Law}\left(he^{-h\gamma/2}(\nabla V(x)-\mathcal G(x,\omega))\right).
\]
\noindent\textbf{(i)}
Apply Theorem~\ref{thm:conv_general} pointwise in $x$. Since
\[
\int \|y\|^{2p}\mu_x(dy)
=
h^{2p}e^{-ph\gamma}\E_\omega\|\nabla V(x)-\mathcal G(x,\omega)\|^{2p},
\]
we get
\[
W_p(\mu_x*\g,\g)
\le
\frac{K_p}{1-(1-2^{-2p})^{1/p}} h^2e^{-h\gamma}
\left(\E_\omega\|\nabla V(x)-\mathcal G(x,\omega)\|^{2p}\right)^{1/p}.
\]
Taking the $L^p(\pi)$ norm in $x$ yields
\[
\left(\E_{x\sim\pi}W_p^p(\mu_x*\g,\g)\right)^{1/p}
\le
\frac{K_p}{1-(1-2^{-2p})^{1/p}} h^2e^{-h\gamma}\sigma_{2p}^2.
\]
Therefore
\[
\frac{9(\E_{x\sim\pi}W_p^p(\mu_x*\g,\g))^{1/p}}{h^2\sqrt L}\leq 9\frac{K_p}{\sqrt L \left(1-(1-2^{-2p})^{1/p}\right)}\sigma_{2p}^2.
\]
For $p\in\{1,2\}$, $\frac{9K_p}{1-(1-2^{-2p})^{1/p}}\leq 284$, giving the claim.

\noindent\textbf{(ii)} Apply Theorem~\ref{thm:conv_poincare} pointwise in $x$ to the law of
\[
he^{-h\gamma/2}(\nabla V(x)-\mathcal G(x,\omega)).
\]
Its Poincar\'e constant is $h^2e^{-h\gamma}C_P(x)$ and its covariance is
\[
h^2e^{-h\gamma}\Cov(\mathcal R(x,\cdot)),
 \textnormal{where } 
\mathcal R(x,\omega):=\nabla V(x)-\mathcal G(x,\omega).
\]
Hence
\[
W_p(\mu_x*\g,\g)
\le
h^2e^{-h\gamma}\sqrt{C_P(x)\tr(\Cov(\mathcal R(x,\cdot)))},
\qquad p\in\{1,2\}.
\]
Taking the $L^p(\pi)$ norm in $x$, and using Cauchy-Schwarz when $p=1$, gives
\[
\left(\E_{x\sim\pi}W_p^p(\mu_x*\g,\g)\right)^{1/p}
\le
h^2e^{-h\gamma}
\sqrt{\int C_P(x)\tr(\Cov(\mathcal R(x,\cdot)))d\pi(x)},
\]
and the stated bound follows.

\noindent\textbf{(iii)--(iv)} Apply Theorem~\ref{thm:conv_refined_12} pointwise in $x$ results in
\[
W_p(\mu_x*\g,\g)
\le
2\tau_p(x)+\sqrt{2\log(1+c_4\|\widetilde\Sigma_x\|_F^2)},
\qquad p\in\{1,2\},
\]
where $\tau_p(x)$ is the corresponding tail term for $\mu_x$, and $\widetilde\Sigma_x$ is the truncated covariance from Theorem~\ref{thm:conv_refined_12}.

Now let
\[
A_x:=he^{-h\gamma/2}(\nabla V(x)-\mathcal G(x,\omega)),
\]
and since $A_x$ is centred, for every unit vector $u\in\R^d$,
\[
u^\top \widetilde\Sigma_x u
=
\Var\big(\langle u,A_x\mathbf 1_{\{\|A_x\|\leq 1\}}\rangle\big)
\le
\E\langle u,A_x\rangle^2
=
u^\top \Cov(A_x)u.
\]
we have
\[
\widetilde\Sigma_x\preceq \Cov(A_x).
\]
Then since $\widetilde\Sigma_x$ is positive semidefinite,
\[
\|\widetilde\Sigma_x\|_F
\le
\sqrt d\lambda_{\max}(\Cov(A_x))
=
h^2e^{-h\gamma}\sqrt d
\lambda_{\max}(\Cov(\mathcal R(x,\cdot))),
\]
and the stated bound follows from
\[
\sqrt{2\log(1+c_4\|\widetilde\Sigma_x\|_F^2)}
\le
\sqrt{2c_4}h^2e^{-h\gamma}\sqrt d
\lambda_{\max}(\Cov(\mathcal R(x,\cdot))).
\]
For $p=1$, integrate the pointwise estimate directly to obtain
\begin{align*}
\E_{x\sim\pi}W_1(\mu_x*\g,\g) &\leq 2\tau_1(Y) +\sqrt{2c_4}h^2e^{-h\gamma}\sqrt d\, \E_{X\sim\pi}\lambda_{\max}(\Cov(\mathcal R(X,\cdot))).
\end{align*}
For $p=2$, taking the $L^2(\pi)$ norm and applying Minkowski's inequality gives
\begin{align*}
\left(\E_{x\sim\pi}W_2^2(\mu_x*\g,\g)\right)^{1/2} &\leq 2\tau_2(Y) +\sqrt{2c_4}h^2e^{-h\gamma}\sqrt d \left( \E_{X\sim\pi}\lambda_{\max}(\Cov(\mathcal R(X,\cdot)))^2 \right)^{1/2},
\end{align*}
where $Y:=he^{-h\gamma/2}\mathcal R(X,\omega)$. Multiplying by $9/(h^2\sqrt L)$ and using $9\sqrt{2c_4}\leq 23$ proves parts (iii) and (iv).
\end{proof}

\section{Convergence of the UBU integrator with stochastic gradients}\label{app:ubu_conv}

\ubuWasserstein*

\begin{proof}[Proof of Proposition~\ref{prop:Wasserstein}]
We restate the proof of Proposition~\ref{prop:Wasserstein} from \cite{ububu} as follows, using the approach of \cite[Corollary 20]{monmarche2021high}. It is sufficient to prove contraction of a synchronous coupling of the full-gradient UBU Markov chains in an appropriate norm, we will use the  $\|\cdot\|_{a,b}$ norm of Definition~\ref{def:weightednorm} with $a=\frac{1}{L}$, $b=\frac{1}{\gamma}$. Based on the assumptions, we have $b^{2}<a/4$.  Hence, \eqref{eq:normequiv} holds.
 
We aim to show that contraction occurs in this norm for two Markov chains simulated by the UBU discretization $z_{n} = (x_{n},v_{n}) \in \R^{2d}$ and $\Tilde{z}_{n} = (\Tilde{x}_{n},\Tilde{v}_{n}) \in \R^{2d}$ that are synchronously coupled, that is, 
\begin{equation}\label{eq:cont_1}
  ||\tilde{z}_{k+1} - z_{k+1}||^{2}_{a,b} \leq \big(1 - c\left(h\right)\big)||\Tilde{z}_{k} - z_{k}||^{2}_{a,b},  
\end{equation}
 with $$c(h) = \frac{mh}{4\gamma}.$$ Define $z^{\Delta}_{j} = \Tilde{z}_{j} - z_{j}$ for $j \in \mathbb{N}$, then for $z_k^\Delta\neq0$, the strict form of \eqref{eq:cont_1} is equivalent to showing that
\begin{equation}\label{eq:contraction_matrix_form}
 \left(z^{\Delta}_{k}\right)^{T}\left(\left(1 - c\left(h\right)\right)\mathcal{M}- \mathcal{P}^{T}\mathcal{M}\mathcal{P}\right )z^{\Delta}_{k} > 0,    \quad \textnormal{where} \quad \mathcal{M} = \begin{pmatrix}
    I_d & b I_d \\
    b I_d & a I_d
\end{pmatrix},
\end{equation}
and $z^{\Delta}_{k+1} = \mathcal{P} z^{\Delta}_{k}$ ($\mathcal{P}$ depends on $z_{k}$ and $\Tilde{z}_{k}$, but we omit this in the notation).

As is shown in \cite{leimkuhler2023contractiona}, it is sufficient for contraction to show that the matrix $\mathcal{H} :=  \left(1 - c(h)\right)\mathcal{M} - \mathcal{P}^{T}\mathcal{M} \mathcal{P} \succ 0$ is positive definite. The matrix $\mathcal{H}$ is symmetric and hence of the block form
\begin{equation}\label{eq:contraction_matrix}
\mathcal{H} = \begin{pmatrix}
    A & B \\
    B^T & C
\end{pmatrix},    
\end{equation}
where $A$, $B$, $C$ are $d\times d$ matrices, then  
\begin{equation}\mathcal{H}\succ 0 \quad \Leftrightarrow \quad A \succ 0 \quad \text{ and } \quad C - B^TA^{-1}B \succ 0,
\end{equation}
as shown in Theorem 7.7.7 of \cite{matrixanalysis}. Further it is straightforward to show that if $A$, $B$ and $C$ are symmetric and commute then 
\begin{equation}\label{eq:posdefcondition2}\mathcal{H}\succ 0 \quad \Leftrightarrow \quad A \succ 0 \quad \text{ and } \quad AC - B^{2} \succ 0.
\end{equation}
Considering two synchronously coupled trajectories of the UBU scheme (defined by \eqref{eq:Bdef} and \eqref{eq:Udef}), such that they have common noise and consider the difference process $x^{\Delta}:= \left(\Tilde{x}_{j} - x_{j}\right)$, $v^{\Delta} = \left(\Tilde{v}_{j} - v_{j}\right)$ and $z^{\Delta} = \left(x^{\Delta}, v^{\Delta}\right)$, where $z^{\Delta}_{j} = \left(x^{\Delta}_{j}, v^{\Delta}_{j}\right)$ for $j = k,k +1$ for $k \in \mathbb{N}$. Let $\eta = \exp{\{-\gamma h/2\}}$. Let $y_k$ and $\Tilde y_k$ be the positions after the first $\mathcal U$ half-step, and set
\[
Q = \int^{1}_{0}\nabla^{2}V\left(\Tilde{y}_{k} + t(y_{k} - \Tilde{y}_{k})\right)dt.
\]
Then $\nabla V(\Tilde y_k)-\nabla V(y_k)=Q(\Tilde y_k-y_k)$, and Assumption~\ref{ass:strong-log-conc-log-smooth} gives
\[
m I_d\preceq Q\preceq L I_d.
\]
Using the definition of the UBU scheme, we can show that 
$z^{\Delta}_{k+1} = \mathcal{P} z^{\Delta}_{k}$ and $\mathcal{H} :=  \left(1 - c(h)\right)\mathcal{M} - \mathcal{P}^{T}\mathcal{M} \mathcal{P}=\begin{pmatrix}
    A & B \\
    B^T & C
\end{pmatrix}$  has elements of the form
\begin{align*}
    A &= -c(h)I_d +Q \left(2 b h \eta+\frac{2 h \left(1-\eta\right)}{\gamma}\right)  + Q^2 \left(-a h^2 \eta^{2}-\frac{h^2 \left(1-\eta\right)^2}{\gamma^2}-\frac{2 b h^2 \eta \left(1-\eta\right)}{\gamma}\right)\\
    B &= \left(\left(1-\eta^{2}\right)\left(b - \frac{1}{\gamma}\right)-bc(h)\right)I_d+ Q^2 \left(-\frac{a h^2 \eta^{2} \left(1-\eta\right)}{\gamma}-\frac{2 b h^2 \eta \left(1-\eta\right)^2}{\gamma^2}-\frac{h^2
   \left(1-\eta\right)^3}{\gamma^3}\right)\\
    &+Q \left(a h \eta^{3}+\frac{h \left(\eta+1\right) \left(1-\eta\right)^2}{\gamma^2}+\frac{h \left(1-\eta\right)^2}{\gamma^2}+\frac{bh \eta^{2} \left(1-\eta\right)}{\gamma}+\frac{bh \eta \left(1-\eta\right)}{\gamma}+\frac{bh \eta \left(1-\eta^{2}\right)}{\gamma}\right)\\
    C &=  \left(a (1 - \eta^{4}) -\frac{2 b \eta^{2} \left(1-\eta^{2}\right)}{\gamma}-\frac{\left(1-\eta^{2}\right)^2}{\gamma^2}
 -ac\left(h\right)\right)I_d  \\&+ Q^2 \left(-\frac{a h^2 \eta^{2} \left(1-\eta\right)^2}{\gamma^2}-\frac{2 b h^2 \eta \left(1-\eta\right)^3}{\gamma^3}-\frac{h^2 \left(1-\eta\right)^4}{\gamma^4}\right)\\
 &+Q \left(\frac{2 a h \eta^{3} \left(1-\eta\right)}{\gamma}+\frac{2 b h \eta^{2} \left(1-\eta\right)^2}{\gamma^2}+\frac{2 b h \eta \left(\eta+1\right) \left(1-\eta\right)^2}{\gamma^2}+\frac{2 h \left(\eta+1\right) \left(1-\eta\right)^3}{\gamma^3}\right).
  \end{align*}

We will now check that $\mathcal{H}\succ 0$ using \eqref{eq:posdefcondition2}. By firstly considering $A$ we wish to show that all its eigenvalues are positive which can be precisely stated as
 \begin{align*}
 P_{A}(\lambda) &\geq  -c\left(h\right) + \frac{2h\lambda}{\gamma} + \left(-\frac{1}{L} - \frac{2h}{\gamma}\right)h^2 \lambda^2\\
 &\geq \frac{7h\lambda}{4\gamma} + \left(-\frac{1}{L} - \frac{1}{\gamma^2}\right)h^2 \lambda^2 > 0,
 \end{align*}
 where $\lambda$ is an eigenvalue of $Q$ ($m \leq \lambda \leq L$),  $P_{A}(\lambda)$ denotes the eigenvalue of $A$ according to the same eigenvector ($Q, A, B, C$ are all symmetric and have the same eigenvectors here). We  used our assumptions that $\gamma^{2} \geq L$, $1 - \eta \leq h\gamma/2$, and $h<\frac{1}{2\gamma}$. Hence, we have $A \succ 0$.

Now it remains to prove that $AC - B^{2} \succ 0$, now we have that $AC - B^{2}$ is a polynomial of $Q$, which we denote $P_{AC - B^{2}}(Q)$ and hence has eigenvalues dictated by the eigenvalues $\lambda$ of $Q$. Because the terms are complicated, we expand the expression in powers of $a$, which makes the positive definiteness transparent. That is to expand the expression in terms of $a$. Therefore one can show that
\[
P_{AC-B^{2}}(\lambda) = c_{0} + c_{1}a + c_{2}a^{2} =c_0+a(c_1+c_2 a),
\]
where  
\begin{align*}
    &c_{1} + c_{2} a = \frac{h^2 c(h)\lambda^2 \eta^4}{\gamma^2}-\frac{2 h^2 c(h) \lambda^2
   \eta^2}{\gamma^2}-\frac{h^2 \lambda^2 \eta^4}{\gamma^2}+\frac{2 h^2 \lambda^2 \eta^2}{\gamma^2}+\frac{2 h c(h) \lambda \eta^4}{\gamma}-\frac{2 h \lambda \eta^4}{\gamma}\\
   &+c(h)
   \eta^4+\frac{h^2 c(h) \lambda^2}{\gamma^2}-\frac{h^2
   \lambda^2}{\gamma^2}-\frac{2 h c(h) \lambda}{\gamma}+\frac{2 h
   \lambda}{\gamma}+ c(h)^2 - c(h)\\
   &+  a\left(-\eta^2 h^2 \lambda^2 + \eta^2 h^{2}c(h)\lambda^2\right) \\
    &\geq \frac{h\lambda}{\gamma}\left(1-c(h)\right)\left(\frac{7}{4}(1-\eta^{4}) - \frac{4h\lambda}{\gamma} - h\gamma \right).
\end{align*}
Furthermore, we have that 
\begin{align*}
    &c_{0} = \frac{h^2 (1-c(h))\lambda^2 \eta^4}{\gamma^4}-\frac{2 h^2 (1-c(h)) \lambda^2 \eta^2}{\gamma^4}+\frac{2 h (1-c(h)) \lambda \eta^4}{\gamma^3}+\frac{c(h)(1-\eta^4)}{\gamma^2}\\
    &-\frac{c(h)^2}{\gamma^2}+\frac{h^2 \lambda^2(1-c(h))}{\gamma^4}-\frac{2 h \lambda(1-c(h))}{\gamma^3}\\
    &> \frac{h\lambda}{\gamma^{3}}\left(1-c(h)\right)\left(\frac{h\lambda}{\gamma}\left(1-\eta^{2}\right)^{2} - 2\left(1-\eta^{4}\right)\right).
\end{align*}
The omitted remainder in the last inequality is $c(h)(1-\eta^4-c(h))/\gamma^2>0$, due to the fact that $c(h)\leq h\gamma/32$ and $1-\eta^4=1-e^{-2h\gamma}>h\gamma$. Since $a=1/L$ and $c(h)<1$, combining the preceding two estimates and dropping the nonnegative term involving $(1-\eta^2)^2$ gives
\begin{align*}
P_{AC-B^{2}}(\lambda) &> \frac{h\lambda(1-c(h))}{L\gamma} \left[ \left(\frac74-\frac{2L}{\gamma^{2}}\right)(1-\eta^{4}) -\frac{4h\lambda}{\gamma}-h\gamma \right]\\
&\geq \frac{h\lambda(1-c(h))}{L\gamma} \left[\left(\frac74-\frac{2L}{\gamma^{2}}\right)(1-\eta^{4}) -\left(1+\frac{4L}{\gamma^{2}}\right)h\gamma\right]\\
&\geq \frac{3h\lambda(1-c(h))}{2L\gamma} \left(1-\eta^{4}-h\gamma\right)>0.
\end{align*}
Here the second inequality uses $\lambda\leq L$, the third uses $L/\gamma^2\leq1/8$, and the final strict inequality uses $1-\eta^4=1-e^{-2h\gamma}>h\gamma$ for $0<h\gamma<1/2$. Hence $AC - B^{2} \succ 0$ and our contraction results hold, and Wasserstein convergence follows. All computations can be checked using \texttt{Mathematica}. The existence of a unique invariant distribution $\overline{\pi}_{h}\in \mathcal{P}_{p}(\R^{2d})$ follows by the same argument as in \cite[Corollary 20]{monmarche2021high}.
\end{proof}
\sgubuWasserstein*

\begin{proof}[Proof of Proposition~\ref{prop:wass_SGUBU}]
For stochastic gradients, synchronously couple both the Brownian increments and the stochastic-gradient variables. Fix one step and write
\[
\delta x:=\Tilde{x}_{k}-x_{k},\qquad \delta v:=\Tilde{v}_{k}-v_{k},\qquad \delta z:=(\delta x,\delta v).
\]
Let $\eta=e^{-\gamma h/2}$ and $\alpha=(1-\eta)/\gamma$, then conditional on the current states and on the common Gaussian randomness in the first $\mathcal U$ half-step, the difference between the two positions at which the gradient is evaluated is
\[
\delta y:=\Tilde{y}-y=\delta x+\alpha\delta v.
\]
Define the centred stochastic-gradient error
\[
\varepsilon_{\omega} := \left(\mathcal G(\Tilde{y},\omega)-\mathcal G(y,\omega)\right) - \left(\nabla V(\Tilde{y})-\nabla V(y)\right).
\]
Pointwise unbiasedness of the stochastic gradient gives $\mathbb{E}_{\omega}[\varepsilon_{\omega}]=0$. Moreover, we have
\[
\varepsilon_{\omega} = \int_{0}^{1}\left(D_x\mathcal G(y+t\delta y,\omega)-\nabla^{2}V(y+t\delta y)\right)\delta y\,dt,
\]
and by Jensen's inequality, Fubini's theorem and Assumption~\ref{Assumption:Bounded_Variance},
\begin{align*}
\E_{\omega}\|\varepsilon_{\omega}\|^{2} &\leq\int_{0}^{1}\E_{\omega}\left\|\left(D_x\mathcal G(y+t\delta y,\omega)-\nabla^{2}V(y+t\delta y)\right)\delta y \right\|^{2}dt\\
&\leq C_G\|\delta y\|^{2}.
\end{align*}

Now let $\delta z_{+}^{\mathrm{FG}}$ denote the output difference for the corresponding full-gradient UBU step, with the same input states and common Gaussian variables, and let $\delta z_{+}^{\mathrm{SG}}$ denote the stochastic-gradient output difference. The gradient error first produces a velocity error $-h\varepsilon_{\omega}$ and is then propagated through the second $\mathcal U$ half-step to give
\[
\delta z_{+}^{\mathrm{SG}} = \delta z_{+}^{\mathrm{FG}}-hS\varepsilon_{\omega}, \qquad S:= \begin{pmatrix}
\alpha I_d\\
\eta I_d
\end{pmatrix}.
\]
Introduce the matrix associated with the twisted norm,
\[
\mathcal M:=
\begin{pmatrix}
I_d & bI_d\\
bI_d & aI_d
\end{pmatrix}, \qquad \|z\|_{a,b}^{2}=z^{T}\mathcal Mz.
\]
Since $\delta z_{+}^{\mathrm{FG}}$ is conditionally independent of $\omega$ and $\E_{\omega}\varepsilon_{\omega}=0$, the cross term vanishes and
\[
\mathbb{E}_{\omega}\|\delta z_{+}^{\mathrm{SG}}\|_{a,b}^{2}
=
\|\delta z_{+}^{\mathrm{FG}}\|_{a,b}^{2}
+h^{2}\E_{\omega}\left[\varepsilon_{\omega}^{T}S^{T}\mathcal MS\varepsilon_{\omega}\right].
\]
A direct calculation, using $\alpha=b(1-\eta)$, gives
\[
S^{T}\mathcal MS = \left(\alpha^{2}+2b\alpha\eta+a\eta^{2}\right)I_d = \left(b^{2}(1-\eta^{2})+a\eta^{2}\right)I_d \preceq aI_d,
\]
where the final inequality follows from $b^{2}\leq a$, which is implied by $\gamma^{2}\geq 8L$. Consequently,
\[
\E_{\omega}\|\delta z_{+}^{\mathrm{SG}}\|_{a,b}^{2} \leq \|\delta z_{+}^{\mathrm{FG}}\|_{a,b}^{2}+ah^{2}C_G\|\delta x+\alpha\delta v\|^{2}.
\]

It remains to control the intermediate-position difference, $\delta y =\delta x+\alpha\delta v$, by the twisted norm. Let $T=(I_d,\alpha I_d)$. Since
\[
\mathcal M^{-1} = \frac{1}{a-b^{2}} \begin{pmatrix}
aI_d&-bI_d\\
-bI_d&I_d
\end{pmatrix},
\]
Cauchy--Schwarz in the $\mathcal M$-inner product gives
\begin{align*}
\|\delta x+\alpha\delta v\|^{2} &\leq \|T\mathcal M^{-1/2}\|_{\op}^{2}\|\delta z\|_{a,b}^{2}= \frac{a-2b\alpha+\alpha^{2}}{a-b^{2}}\|\delta z\|_{a,b}^{2}\\
&\leq \frac{a}{a-b^{2}}\|\delta z\|_{a,b}^{2} = \frac{1}{1-L/\gamma^{2}}\|\delta z\|_{a,b}^{2} \leq \frac{8}{7}\|\delta z\|_{a,b}^{2}.
\end{align*}
Here we used $0\leq\alpha\leq b$, so that $a-2b\alpha+\alpha^{2}\leq a$, and $L/\gamma^{2}\leq 1/8$. Combining the preceding estimates with the full-gradient contraction from Proposition~\ref{prop:Wasserstein} gives
\begin{align*}
\E_{\omega}\|\delta z_{+}^{\mathrm{SG}}\|_{a,b}^{2}
&\leq
\left(1-\frac{mh}{4\gamma}\right)\|\delta z\|_{a,b}^{2}
+\frac{8}{7}ah^{2}C_G\|\delta z\|_{a,b}^{2}\\
&=
\left(1-\frac{mh}{4\gamma}+\frac{8h^{2}C_G}{7L}\right)
\|\delta z\|_{a,b}^{2},
\end{align*}
and we have the first claim and Wasserstein contraction. 

We now justify the existence of a unique invariant-measure assertion. It follows from the additional moment condition in Assumption~\ref{Assumption:Bounded_Variance} and the Jacobian bound that we have the $L^{2}$ linear-growth estimate
\[
\left(\E_{\omega}\|\mathcal G(x,\omega)\|^{2}\right)^{1/2} \leq \left(\E_{\omega}\|\mathcal G(x_{\star},\omega)\|^{2}\right)^{1/2}+\left(L+\sqrt{C_G}\right)\|x-x_{\star}\|.
\]
Hence the SG-UBU transition kernel maps $\mathcal P_2(\mathbb{R}^{2d})$ into itself and further if
\[
h<\min\left\{\frac{1}{2\gamma},\frac{7mL}{32C_G\gamma}\right\},
\]
then $1-mh/(4\gamma)+8h^{2}C_G/(7L)<1$. Thus $\mu\mapsto\mu P_h$ is a strict contraction in $(\mathcal P_2(\mathbb{R}^{2d}),W_{2,a,b})$, and the Banach fixed-point theorem yields a unique invariant measure $\overline\pi_h\in\mathcal P_2(\mathbb{R}^{2d})$.
\end{proof}

\section{A Gaussian lower bound example}
\label{sec:gaussian-sharpness}

Consider the standard Gaussian target with additive Gaussian stochastic-gradient noise,
\[
V(x)=\frac12\|x\|^2,
\qquad
\mathcal G(x,\xi)=x+\tau\xi,
\qquad
\xi\sim\mathcal N(0,I_d),
\]
where a independent copy of $\xi$, is used at every step.  The parameter $\tau^2$ is the noise variance in each coordinate.  Since the SG-UBU update acts coordinatewise, the invariant covariance is determined by a two-dimensional calculation.

\begin{lemma}\label{lem:gaussian-invariant-law}
Suppose that
\[
h\tanh(\gamma h/2)<2\gamma.
\]
Then SG-UBU has a unique invariant distribution of the form
\[
\overline\pi_{h,\tau}
=\mathcal N\left(0,\mathsf\Sigma_{h,\tau}\otimes I_d\right).
\]
For every fixed $\gamma>0$ and $\tau\geq0$, as $h\downarrow0$,
\begin{equation}\label{eq:gaussian-covariance-expansion}
\mathsf\Sigma_{h,\tau}
=I_2+\frac{\tau^2}{2\gamma}hI_2
+h^2
\begin{pmatrix}
-1/6&\gamma/24\\[1mm]
\gamma/24&1/12
\end{pmatrix}
+\mathcal{O}(h^3).
\end{equation}
\end{lemma}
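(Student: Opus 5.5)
Since $V$ is quadratic, SG-UBU with this stochastic gradient is a linear Gaussian autoregression. The plan is to solve the resulting discrete Lyapunov equation explicitly and then expand its solution in $h$.

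\textbf{Reduction to a $2\times 2$ recursion.} Because $\nabla V(x)=x$ and $\mathcal G(x,\xi)=x+\tau\xi$, each of the three stages $\mathcal U$, $\mathcal B$, $\mathcal U$ is affine in $(x,v)$ with additive centred Gaussian noise. Using \eqref{eq:disc_v_stoch}--\eqref{eq:disc_x_stoch}, one step acts coordinatewise as
\[
z_{k+1}=M_h z_k+\zeta_{k+1}+h\tau\eta\, s\,\xi_{k+1},\qquad z_k\in\R^2,
\]
where $\eta=e^{-\gamma h/2}$ and $s=(\mathcal F(h/2),\eta)^\top$ (so that $h\tau\eta s$ is the image of $h\tau\xi$ under the second half-step). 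The matrix is $M_h=U_{h/2}BU_{h/2}$, with $U_{h/2}=\begin{pmatrix}1&\mathcal F(h/2)\\0&\eta\end{pmatrix}$ and $B=\begin{pmatrix}1&0\\-h&1\end{pmatrix}$. The term $\zeta_{k+1}$ is the Brownian contribution; its covariance $Q_h$ follows from the It\^o isometry \eqref{eq:ito-cov}, exactly as in Proposition~\ref{prop:block-gaussian}. Since $\xi_{k+1}$ is independent of $\zeta_{k+1}$ and of $z_k$, any stationary law is centred Gaussian with covariance $\mathsf\Sigma\otimes I_d$. Here $\mathsf\Sigma$ solves
\[
\mathsf\Sigma=M_h\mathsf\Sigma M_h^\top+Q_h+h^2\tau^2\eta^2 ss^\top .
\]

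\textbf{Existence and uniqueness.} I will compute $\det M_h=\eta^2\cdot\eta^2=e^{-2\gamma h}<1$ and $\operatorname{tr}M_h$ explicitly. By the Jury (Schur--Cohn) criterion for $2\times 2$ matrices, $\rho(M_h)<1$ if and only if $|\operatorname{tr}M_h|<1+\det M_h$. I expect this to simplify exactly to $h\tanh(\gamma h/2)<2\gamma$. The check is to factor $1+\eta^4\pm\operatorname{tr}M_h$ using $\mathcal F(h/2)=(1-\eta)/\gamma$:
\begin{itemize}
\item the ``$+$'' inequality should hold automatically;
\item the ``$-$'' inequality should reduce to $(1+\eta^2)\cdot 2\gamma> h(1-\eta^2)$, i.e.\ $h\tanh(\gamma h/2)<2\gamma$.
\end{itemize}
Once $\rho(M_h)<1$, the series $\mathsf\Sigma=\sum_{n\ge0}M_h^n(Q_h+h^2\tau^2\eta^2ss^\top)(M_h^\top)^n$ converges and gives the unique solution. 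The affine chain then contracts in $W_2$ (after a change of norm adapted to $M_h$), which yields uniqueness of the invariant distribution.

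\textbf{Expansion.} By linearity, $\mathsf\Sigma_{h,\tau}=\mathsf\Sigma_{h,0}+\tau^2\widetilde{\mathsf\Sigma}_h$, where $\widetilde{\mathsf\Sigma}_h$ solves the Lyapunov equation with source $h^2\eta^2ss^\top$.
\begin{itemize}
\item For $\mathsf\Sigma_{h,0}$ (full-gradient UBU on a Gaussian target), I would solve the $3\times3$ linear system for the entries $(\Sigma_{xx},\Sigma_{xv},\Sigma_{vv})$ in closed form. Expanding $M_h$ and $Q_h$ to order $h^3$ should give $I_2+h^2\begin{pmatrix}-1/6&\gamma/24\\ \gamma/24&1/12\end{pmatrix}+\mathcal O(h^3)$.
\item For the $\tau$ part, write $M_h=I+hG+\mathcal O(h^2)$ with $G=\begin{pmatrix}0&1\\-1&-\gamma\end{pmatrix}$, and note $h^2\eta^2ss^\top=h^2e_2e_2^\top+\mathcal O(h^3)$. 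To leading order the Lyapunov equation becomes $h(G\widetilde{\mathsf\Sigma}+\widetilde{\mathsf\Sigma}G^\top)=-h^2e_2e_2^\top$. This continuous Lyapunov equation has solution $\frac{h}{2\gamma}I_2$, because $G+G^\top=-2\gamma e_2e_2^\top$. The next corrections are $\mathcal O(h^2)$ inside the $\tau^2$ term; these must be absorbed into $\mathcal O(h^3)$ or checked. I would treat this as part of the same exact computation, since the $2\times2$ system is solvable symbolically.
\end{itemize}

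\textbf{Main obstacle.} The main difficulty is the bookkeeping in the exact symbolic solution. The denominator $1-\det$-type factor is $\mathcal O(h)$, so each entry of $M_h$ and $Q_h$ must be carried one order further (to $h^3$–$h^4$) than the result. I would do this with the closed-form $3\times3$ solve, or with computer algebra as in Appendix~\ref{app:ubu_conv}, and verify that the $h^2\tau^2$ contributions cancel or stay at the claimed order.
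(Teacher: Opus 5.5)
Your route is the paper's: reduce to one coordinate, test $\rho(M_h)<1$ through trace and determinant, solve the discrete Lyapunov equation by the convergent series, and expand. However, two set-up errors would make the computation fail as written.

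First, $\det U_{h/2}=\eta$, so $\det M_h=\eta^2=e^{-\gamma h}$, not $\eta^4$. With your value, $1+\eta^4-\operatorname{tr}M_h=(1-\eta^2)(h/\gamma-\eta^2)<0$ for small $h$, so the Jury test would wrongly report instability. With the correct determinant and $\operatorname{tr}M_h=1+\eta^2-\frac h\gamma(1-\eta^2)$, the two inequalities are as follows. The automatic one is $1+\eta^2-\operatorname{tr}M_h=\frac h\gamma(1-\eta^2)>0$. The nontrivial one is $1+\eta^2+\operatorname{tr}M_h=2(1+\eta^2)-\frac h\gamma(1-\eta^2)>0$, which is equivalent to $h\tanh(\gamma h/2)<2\gamma$. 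Your ``$+$''/``$-$'' labels are therefore swapped.

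Second, and more seriously, the stochastic-gradient kick carries no extra factor $\eta$. By \eqref{eq:disc_v_stoch}--\eqref{eq:disc_x_stoch} it equals $-h\tau(\mathcal F(h/2),\eta)^\top\xi_{k+1}=-h\tau s\,\xi_{k+1}$. The source term is therefore $h^2\tau^2ss^\top$, which is the paper's $h^2\tau^2\mathsf c_h\mathsf c_h^\top$.

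This matters exactly at the order you left as ``absorbed or checked''. The lemma asserts that the $h^2$ coefficient is independent of $\tau$, and that is a genuine cancellation your leading-order continuous-Lyapunov argument cannot see. Write $M_h=I_2+hG+h^2G_2+\mathcal O(h^3)$ with $G_2=\begin{pmatrix}-1/2&-\gamma/2\\ \gamma/2&(\gamma^2-1)/2\end{pmatrix}$, and $\widetilde{\mathsf\Sigma}_h=h\Sigma_1+h^2\Sigma_2+\mathcal O(h^3)$ with $\Sigma_1=\frac{1}{2\gamma}I_2$. The order-$h^3$ part of the Lyapunov equation reads
\[
-(G\Sigma_2+\Sigma_2G^\top)=S_3+G_2\Sigma_1+\Sigma_1G_2^\top+G\Sigma_1G^\top ,
\]
where $S_3$ is the $h^3$ coefficient of the source.
\begin{itemize}
\item The last three terms sum to $\begin{pmatrix}0&-1/2\\-1/2&\gamma\end{pmatrix}$.
\item The correct source gives $S_3=\begin{pmatrix}0&1/2\\1/2&-\gamma\end{pmatrix}$.
\item Hence the right-hand side vanishes, and $\Sigma_2=0$ because $G$ is Hurwitz.
\end{itemize}
With your source, the solution is $\eta^2$ times the correct one, namely $\frac{h}{2\gamma}I_2-\frac{h^2}{2}I_2+\mathcal O(h^3)$. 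That produces a spurious $-\frac{\tau^2}{2}h^2I_2$ term, contradicting \eqref{eq:gaussian-covariance-expansion}.

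So you need to correct the noise vector and actually carry out this second-order step. The full-gradient part, by contrast, does give the stated matrix $\begin{pmatrix}-1/6&\gamma/24\\ \gamma/24&1/12\end{pmatrix}$ once $M_h$ and $Q_h$ are expanded to sufficient order. The paper likewise only asserts that computation.
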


\begin{proof}
It is enough to work in one coordinate.  Set
\[
\eta=e^{-\gamma h/2},
\qquad
\alpha=\frac{1-\eta}{\gamma},
\qquad
\mathsf U_h=
\begin{pmatrix}1&\alpha\\0&\eta\end{pmatrix},
\qquad
\mathsf B_h=
\begin{pmatrix}1&0\\-h&1\end{pmatrix},
\qquad
\mathsf c_h=
\begin{pmatrix}\alpha\\\eta\end{pmatrix}.
\]
The noise in one exact $\mathcal U$ half-step has covariance
\[
\mathsf R_h=
\begin{pmatrix}
\dfrac{h}{\gamma}-\dfrac{3-4\eta+\eta^2}{\gamma^2}
&\dfrac{(1-\eta)^2}{\gamma}\\[3mm]
\dfrac{(1-\eta)^2}{\gamma}
&1-\eta^2
\end{pmatrix}.
\]
Thus, with
\[
\mathsf A_h=\mathsf U_h\mathsf B_h\mathsf U_h,
\qquad
\mathsf Q_h
=\mathsf U_h\mathsf B_h\mathsf R_h
 \mathsf B_h^{\top}\mathsf U_h^{\top}+\mathsf R_h,
\]
one step has the representation
\begin{equation}\label{eq:gaussian-linear-chain}
Z_{k+1}
=\mathsf A_hZ_k+\zeta_{k+1}-h\tau\mathsf c_h\xi_{k+1},
\qquad
Z_k=(X_k,V_k)^\top,
\end{equation}
where $\zeta_{k+1}\sim\mathcal N(0,\mathsf Q_h)$ and $\xi_{k+1}\sim\mathcal N(0,1)$ are independent of each other and of the past.

Now
\[
\det(\mathsf A_h)=\eta^2,
\qquad
\operatorname{tr}(\mathsf A_h)
=1+\eta^2-\frac{h}{\gamma}(1-\eta^2).
\]
Since $0<\eta<1$, the stability criterion gives
\[
\rho(\mathsf A_h)<1
\quad\Longleftrightarrow\quad
h\tanh(\gamma h/2)<2\gamma.
\]
In this regime, the covariance of the unique invariant law is given exactly by the convergent series
\begin{equation}\label{eq:gaussian-stationary-cov}
\mathsf\Sigma_{h,\tau}
=\sum_{j=0}^{\infty}
\mathsf A_h^j
\left(\mathsf Q_h+h^2\tau^2\mathsf c_h\mathsf c_h^\top\right)
(\mathsf A_h^\top)^j,
\end{equation}
or equivalently,
\[
\mathsf\Sigma_{h,\tau}
=\mathsf A_h\mathsf\Sigma_{h,\tau}\mathsf A_h^\top
 +\mathsf Q_h+h^2\tau^2\mathsf c_h\mathsf c_h^\top.
\]
Solving the three scalar equations in this identity and expanding at $h=0$ gives \eqref{eq:gaussian-covariance-expansion}.
\end{proof}

The expansion isolates the source of the first-order error: full-gradient UBU perturbs the covariance only at order $h^2$, whereas stochastic-gradient noise produces an order-$h$ isotropic covariance inflation.

\begin{proposition}
\label{prop:gaussian-exact-stationary}
Fix $\gamma>0$ and $\tau>0$, and let $\pi^x_{h,\tau}$ denote the position marginal of $\overline\pi_{h,\tau}$.  Then, as $h\downarrow0$,
\begin{align}
W_2\left(\mathcal N(0,I_d),\pi^x_{h,\tau}\right)
&=\frac{\tau^2}{4\gamma}h\sqrt d
+\mathcal{O}(h^2\sqrt d),
\label{eq:gaussian-position-sharp}\\
W_2\left(\mathcal N(0,I_{2d}),\overline\pi_{h,\tau}\right)
&=\frac{\tau^2}{2\sqrt2\,\gamma}h\sqrt d
+\mathcal{O}(h^2\sqrt d).
\label{eq:gaussian-extended-sharp}
\end{align}
Consequently, for fixed $\gamma,\tau^2 >0 $, the $h\sqrt d$ dependence in Theorem~\ref{theorem:bias_sgUBU} is sharp up to constant factors.
\end{proposition}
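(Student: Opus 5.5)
We use the closed-form Wasserstein-2 distance between centred Gaussians. Lemma~\ref{lem:gaussian-invariant-law} gives $\overline\pi_{h,\tau}=\mathcal N(0,\mathsf\Sigma_{h,\tau}\otimes I_d)$. For commuting covariances $\Sigma_1,\Sigma_2$ one has $W_2^2(\mathcal N(0,\Sigma_1),\mathcal N(0,\Sigma_2))=\|\Sigma_1^{1/2}-\Sigma_2^{1/2}\|_F^2$. For general covariances the Bures formula is $\operatorname{tr}\Sigma_1+\operatorname{tr}\Sigma_2-2\operatorname{tr}(\Sigma_1^{1/2}\Sigma_2\Sigma_1^{1/2})^{1/2}$. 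In both statements of Proposition~\ref{prop:gaussian-exact-stationary} one of the measures is standard Gaussian, so $\Sigma_1=I$ and the Bures formula collapses to $W_2^2=\operatorname{tr}(I)+\operatorname{tr}\Sigma-2\operatorname{tr}\Sigma^{1/2}=\|\Sigma^{1/2}-I\|_F^2$. The computation therefore reduces to expanding $\Sigma^{1/2}-I$.

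\emph{Position marginal.} The position marginal is $\mathcal N(0,s_h I_d)$ with $s_h:=(\mathsf\Sigma_{h,\tau})_{11}=1+\frac{\tau^2}{2\gamma}h+\mathcal O(h^2)$. Hence $W_2=\sqrt d\,|\sqrt{s_h}-1|$. A Taylor expansion gives $\sqrt{s_h}-1=\frac{\tau^2}{4\gamma}h+\mathcal O(h^2)$, which is \eqref{eq:gaussian-position-sharp}. The remainder is uniform in $d$ because $s_h$ does not depend on $d$.

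\emph{Extended measure.} Write $\mathsf\Sigma_{h,\tau}=I_2+hE_h$ with $E_h=\frac{\tau^2}{2\gamma}I_2+\mathcal O(h)$, the $\mathcal O(h)$ being a fixed $2\times2$ matrix by \eqref{eq:gaussian-covariance-expansion}. The matrix square root is analytic near $I_2$, so $(I_2+hE_h)^{1/2}=I_2+\frac h2E_h+\mathcal O(h^2)=I_2+\frac{\tau^2}{4\gamma}hI_2+\mathcal O(h^2)$. Tensoring with $I_d$ gives
\[
W_2^2=d\,\|\mathsf\Sigma_{h,\tau}^{1/2}-I_2\|_F^2=d\left(2\Big(\frac{\tau^2}{4\gamma}h\Big)^2+\mathcal O(h^3)\right).
\]
Taking square roots gives $W_2=\frac{\sqrt2\,\tau^2}{4\gamma}h\sqrt d+\mathcal O(h^2\sqrt d)=\frac{\tau^2}{2\sqrt2\,\gamma}h\sqrt d+\mathcal O(h^2\sqrt d)$, which is \eqref{eq:gaussian-extended-sharp}. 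Passing from $W_2^2$ to $W_2$ needs the leading coefficient $\frac{\tau^2}{2\sqrt2\gamma}$ to be strictly positive, which holds since $\tau>0$; this is why the square-root step loses nothing beyond $\mathcal O(h^2)$.

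\emph{Sharpness.} First, the twisted and Euclidean $W_2$ distances are equivalent up to constants by \eqref{eq:normequiv}. Second, $L=m=1$, and the noise satisfies $C_G=0$ and $\sigma_p\lesssim\tau\sqrt d$. Third, the noise is Gaussian with Poincaré constant $\tau^2$ and $\operatorname{tr}\mathrm{Cov}=\tau^2 d$, so Corollary~\ref{cor:plug_in}(ii) applies. With these inputs, Theorem~\ref{theorem:bias_sgUBU} gives an upper bound of order $h\sqrt d$ for fixed $\gamma,\tau$. (The theorem requires $\gamma\ge\sqrt8$; taking $\gamma$ in that range does not change the argument.) Comparing this upper bound with the exact lower-order asymptotics above shows that the $h\sqrt d$ rate is sharp up to constants.

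The main point needing care is that all remainders are uniform in $d$. This holds because the problem factorises coordinatewise, so every remainder comes from the single fixed $2\times2$ expansion in Lemma~\ref{lem:gaussian-invariant-law} and picks up only the explicit factor $\sqrt d$.
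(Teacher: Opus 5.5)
Your proof is correct and follows the paper's argument. Like the paper, you reduce to $W_2=\sqrt d\,|\sqrt{s_{h,\tau}}-1|$ for the position marginal and $W_2^2=d\operatorname{tr}(I_2+\mathsf\Sigma_{h,\tau}-2\mathsf\Sigma_{h,\tau}^{1/2})$ for the extended law, then expand using Lemma~\ref{lem:gaussian-invariant-law}. Your sharpness paragraph is slightly more careful than the paper's: you invoke Corollary~\ref{cor:plug_in}(ii) to keep the convolution term at order $\sqrt d$ (part (i) would only give order $d$), and you note the $\gamma\ge\sqrt8$ requirement of Theorem~\ref{theorem:bias_sgUBU}.
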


\begin{proof}
Let $s_{h,\tau}=(\mathsf\Sigma_{h,\tau})_{11}$, then by \eqref{eq:gaussian-covariance-expansion},
\[
s_{h,\tau} =1+\frac{\tau^2}{2\gamma}h-\frac{1}{6}h^2 +\mathcal{O}(h^3).
\]
Since $\pi^x_{h,\tau}=\mathcal N(0,s_{h,\tau}I_d)$,
\[
W_2\left(\mathcal N(0,I_d),\pi^x_{h,\tau}\right)
=\sqrt d\,\bigl|\sqrt{s_{h,\tau}}-1\bigr|,
\]
and expanding the square root gives \eqref{eq:gaussian-position-sharp}.

For the extended law,
\[
W_2^2\left(\mathcal N(0,I_{2d}),\overline\pi_{h,\tau}\right)
=d\,\operatorname{tr}
\left(I_2+\mathsf\Sigma_{h,\tau}
      -2\mathsf\Sigma_{h,\tau}^{1/2}\right).
\]
The matrix square-root expansion based on
\eqref{eq:gaussian-covariance-expansion} gives
\eqref{eq:gaussian-extended-sharp}.
\end{proof}

Here we have chosen $m=L=1$, $C_G=0$, and $\sigma_2=\tau\sqrt d$.  Hence Theorem~\ref{theorem:bias_sgUBU}, together with the norm equivalence in Remark~\ref{rem:norm_equiv}, gives an $\mathcal{O}(h\sqrt d)$ upper bound, matching \eqref{eq:gaussian-position-sharp}-\eqref{eq:gaussian-extended-sharp} in its dependence on $h$ and $d$.

\section{A functional-inequality bound for \texorpdfstring{$W_1(\mu*\g,\g)$ and $W_2(\mu*\g,\g)$}{W1(mu*g,g) and W2(mu*g,g)}}\label{sec:functional}
Let $\g:=\mathcal N(0, I_d)$ and $X\sim\mu$ satisfy
\begin{equation}\label{eq:mom_cond}
    \mathbb{E}\left[X\right] = \int_{\mathbb{R}^{d}} x \mu(dx) = 0,\qquad  \mathbb{E}\left\|X\right\|^{2} = \int_{\mathbb{R}^{d}} \|x\|^{2} \mu(dx)<\infty,
\end{equation}
and write
$$
\Sigma:=\mathrm{Cov}(\mu)=\int_{\mathbb{R}^{d}} x x^{\top} \mu(dx), \qquad
\mathrm{tr}(\Sigma)=\int_{\mathbb{R}^{d}} \|x\|^{2} \mu(dx).
$$
Let $Z\sim\g$ be independent of $X$, and set
\[
\nu := \mu*\g=\mathrm{Law}(X+Z).
\]
Define the tail quantities at threshold $1$:
\[
\tau_1 := \mathbb{E}\big[\|X\|\mathbf 1_{\{\|X\|>1\}}\big],\qquad
\tau_2 := \Big(\mathbb{E}\big[\|X\|^2\mathbf 1_{\{\|X\|>1\}}\big]\Big)^{1/2}.
\]
Define
\[
X_{\leq 1}:=X\mathbf 1_{\{\|X\|\leq 1\}},\qquad
X_{>1}:=X\mathbf 1_{\{\|X\|>1\}},\qquad X=X_{\leq 1}+X_{>1},
\]
and let
\[
m := \mathbb{E}[X_{\leq 1}].
\]
Since $\|X_{\leq 1}\|\leq 1$ a.s., we have
\begin{equation}\label{eq:m-le-1}
\|m\|\leq \mathbb{E}\|X_{\leq 1}\|\leq 1.
\end{equation}
Since $\mathbb{E}[X]=0$, we also have $m=-\mathbb{E}[X_{>1}]$, hence by Jensen,
\begin{equation}\label{eq:m-le-tau}
\|m\|\leq \mathbb{E}\|X_{>1}\|=\tau_1,
\qquad
\|m\|\leq \big(\mathbb{E}\|X_{>1}\|^2\big)^{1/2}=\tau_2.
\end{equation}
In particular,
\begin{equation}\label{eq:min-bounds}
\|m\|\leq \min(1,\tau_1),\qquad \|m\|\leq \min(1,\tau_2).
\end{equation}
Define the centred truncated variable
\[
\widetilde X := X_{\leq 1}-m,
\qquad\text{so that}\qquad \mathbb{E}[\widetilde X]=0,
\]
and let
\[
\widetilde\mu := \mathrm{Law}(\widetilde X),\qquad
\widetilde\nu := \g*\widetilde\mu=\mathrm{Law}(Z+\widetilde X),
\qquad
\widetilde\Sigma := \mathrm{Cov}(\widetilde X)=\mathbb{E}[\widetilde X\widetilde X^\top].
\]

\begin{lemma}\label{lem:trunc-error}
We have
\[
W_1(\nu,\widetilde\nu)\leq \tau_1+\|m\|\leq \tau_1+\min(1,\tau_1)\leq 2\tau_1,
\]
and
\[
W_2(\nu,\widetilde\nu)\leq \tau_2+\|m\|\leq \tau_2+\min(1,\tau_2)\leq 2\tau_2.
\]
\end{lemma}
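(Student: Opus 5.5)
The plan is to use the synchronous coupling that shares the Gaussian component and only moves the perturbation. Realise $\nu=\mathrm{Law}(Z+X)$ and $\widetilde\nu=\mathrm{Law}(Z+\widetilde X)$ on a single probability space, with the same $Z\sim\g$ independent of $X$, and with $\widetilde X=X_{\leq 1}-m$ defined as a function of the same $X$. This is a valid coupling of $\nu$ and $\widetilde\nu$, so for $p\in\{1,2\}$
\[
W_p(\nu,\widetilde\nu)\leq \big(\mathbb{E}\|(Z+X)-(Z+\widetilde X)\|^p\big)^{1/p}=\big(\mathbb{E}\|X-\widetilde X\|^p\big)^{1/p}.
\]

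Next we compute the difference explicitly. Since $X=X_{\leq 1}+X_{>1}$, we have $X-\widetilde X=X_{>1}+m$, where $m$ is deterministic. Minkowski's inequality in $L^p$ then gives
\[
\big(\mathbb{E}\|X-\widetilde X\|^p\big)^{1/p}\leq \big(\mathbb{E}\|X_{>1}\|^p\big)^{1/p}+\|m\|.
\]
For $p=1$ the first term is exactly $\tau_1$, and for $p=2$ it is exactly $\tau_2$, by definition of the tail quantities. This proves the first inequality in each of the two chains.

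The remaining inequalities come from the bounds on $\|m\|$ already recorded before the lemma. By \eqref{eq:min-bounds}, $\|m\|\leq\min(1,\tau_p)$, which gives the middle bound. Then $\min(1,\tau_p)\leq\tau_p$ gives the final bound $2\tau_p$.

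There is no real obstacle here. The only point requiring care is that the coupling is legitimate: $\widetilde X$ is a measurable function of $X$ and is independent of $Z$, so $Z+\widetilde X$ indeed has law $\g*\widetilde\mu=\widetilde\nu$. One should also note that $\tau_2<\infty$ by the second-moment condition \eqref{eq:mom_cond}, so both sides are finite.
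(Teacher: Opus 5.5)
Your proposal is correct and is essentially the paper's own proof: the synchronous coupling through the shared pair $(Z,X)$, the identity $X-\widetilde X=X_{>1}+m$, Minkowski's inequality in $L^p$, and then \eqref{eq:min-bounds} for the remaining inequalities.
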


\begin{proof}
Couple $(Z+X)$ and $(Z+\widetilde X)$ using the same $(Z,X)$, then
\[
(Z+X)-(Z+\widetilde X)=X-\widetilde X=X_{>1}+m,
\]
and hence
\[
W_1(\nu,\widetilde\nu)\leq \mathbb{E}\|X_{>1}+m\|
\leq \mathbb{E}\|X_{>1}\|+\|m\|=\tau_1+\|m\|.
\]
Then use \eqref{eq:min-bounds} to obtain the displayed bounds for $W_1$.

Similarly, in $L^2$ we have
\[
W_2(\nu,\widetilde\nu)\leq \big(\mathbb{E}\|X_{>1}+m\|^2\big)^{1/2}
\leq \big(\mathbb{E}\|X_{>1}\|^2\big)^{1/2}+\|m\|=\tau_2+\|m\|,
\]
and applying \eqref{eq:min-bounds} again, we have the required result.
\end{proof}

Recall the definition of the $\chi^{2}$-divergence for $\rho\ll\g$, $\chi^2(\rho\|\g)=\int \big(\frac{d\rho}{d\g}-1\big)^2d\g$.
Let $\widetilde X',\widetilde X''$ be i.i.d.\ copies of $\widetilde X$.

\begin{lemma}\label{lem:chi2}
\[
\chi^2(\widetilde\nu\|\g)=\mathbb{E}\big[e^{\langle \widetilde X',\widetilde X''\rangle}\big]-1.
\]
\end{lemma}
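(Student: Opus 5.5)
We compute the density of $\widetilde\nu$ with respect to $\g$ directly and then expand the square. Since $\widetilde X$ is bounded ($\|\widetilde X\|\le \|X_{\le1}\|+\|m\|\le 2$ by \eqref{eq:m-le-1}), all exponential moments below are finite and Fubini applies freely. Writing $\varphi$ for the standard Gaussian density on $\mathbb{R}^d$, the law of $Z+\widetilde X$ has density $z\mapsto \mathbb{E}[\varphi(z-\widetilde X)]$. Using
\[
\frac{\varphi(z-x)}{\varphi(z)}=\exp\!\big(\langle z,x\rangle-\tfrac12\|x\|^2\big),
\]
we obtain the likelihood ratio
\[
\frac{d\widetilde\nu}{d\g}(z)=\mathbb{E}\Big[\exp\big(\langle z,\widetilde X\rangle-\tfrac12\|\widetilde X\|^2\big)\Big].
\]

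Next, we use the standard identity $\chi^2(\rho\|\g)=\int (d\rho/d\g)^2\,d\g-1$, which holds because $\int (d\rho/d\g)\,d\g=1$. Writing the square of the expectation as an expectation over two independent copies $\widetilde X',\widetilde X''$ gives
\[
\int\Big(\frac{d\widetilde\nu}{d\g}\Big)^2 d\g
=\mathbb{E}\int \exp\!\big(\langle z,\widetilde X'+\widetilde X''\rangle-\tfrac12\|\widetilde X'\|^2-\tfrac12\|\widetilde X''\|^2\big)\,\g(dz).
\]
The interchange of the $z$-integral and the expectation is justified by Tonelli, since the integrand is nonnegative.

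The inner Gaussian integral is the moment generating function of $\g$:
\[
\int e^{\langle z,u\rangle}\g(dz)=e^{\|u\|^2/2}, \qquad u=\widetilde X'+\widetilde X''.
\]
Expanding $\tfrac12\|\widetilde X'+\widetilde X''\|^2-\tfrac12\|\widetilde X'\|^2-\tfrac12\|\widetilde X''\|^2=\langle\widetilde X',\widetilde X''\rangle$, the exponent collapses to the inner product. Subtracting $1$ then yields the claim.

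No step is a real obstacle. The only points needing care are the integrability justification, which boundedness of $\widetilde X$ makes trivial, and confirming that $\widetilde\nu\ll\g$, which holds because the density is a Gaussian mixture.
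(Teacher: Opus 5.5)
Your proposal is correct and follows the paper's proof essentially step for step. You write $d\widetilde\nu/d\g$ as $\mathbb{E}[\exp(\langle y,\widetilde X'\rangle-\tfrac12\|\widetilde X'\|^2)]$, expand the square with two independent copies, exchange the integrals, and apply the Gaussian moment generating function so that the exponent collapses to $\langle\widetilde X',\widetilde X''\rangle$. Your Tonelli justification (nonnegative integrand) is all that is needed for the interchange; the boundedness of $\widetilde X$ only serves to make the right-hand side finite.
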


\begin{proof}
A standard computation gives
\[
\frac{d\widetilde\nu}{d\g}(y)
=
\mathbb{E}\Big[\exp\big(\langle y,\widetilde X'\rangle-\tfrac12\|\widetilde X'\|^2\big)\Big],
\]
and therefore
\[
\chi^2(\widetilde\nu\|\g)+1
=
\int \Big(\frac{d\widetilde\nu}{d\g}(y)\Big)^2d\g(y).
\]
Expanding the square, applying Fubini, and using
$\int \exp(\langle y,u\rangle)d\g(y)=\exp(\|u\|^2/2)$,
you obtain
\[
\chi^2(\widetilde\nu\|\g)+1
=\mathbb{E}\Big[\exp\big(\tfrac12\|\widetilde X'+\widetilde X''\|^2-\tfrac12\|\widetilde X'\|^2-\tfrac12\|\widetilde X''\|^2\big)\Big]
=\mathbb{E}[e^{\langle \widetilde X',\widetilde X''\rangle}].
\]
\end{proof}

\begin{lemma}\label{lem:R4}
Almost surely, $\|\widetilde X\|\leq 2$, hence $|\langle \widetilde X',\widetilde X''\rangle|\leq 4$.
\end{lemma}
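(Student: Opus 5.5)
The bound follows from the triangle inequality once we know $\|X_{\leq 1}\|\leq 1$ and $\|m\|\leq 1$. By definition, $X_{\leq 1}=X\mathbf 1_{\{\|X\|\leq 1\}}$, so $\|X_{\leq 1}\|\leq 1$ almost surely. Estimate \eqref{eq:m-le-1} already gives $\|m\|\leq 1$. Since $\widetilde X=X_{\leq 1}-m$, the triangle inequality yields, almost surely,
\[
\|\widetilde X\|\leq \|X_{\leq 1}\|+\|m\|\leq 1+1=2.
\]

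For the inner product, let $\widetilde X',\widetilde X''$ be i.i.d.\ copies of $\widetilde X$. The almost-sure bound $\|\cdot\|\leq 2$ holds for each copy, and hence for both simultaneously, since the intersection of two probability-one events still has probability one. Cauchy--Schwarz then gives
\[
|\langle \widetilde X',\widetilde X''\rangle|\leq \|\widetilde X'\|\,\|\widetilde X''\|\leq 4
\]
almost surely.

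There is no real obstacle here. The only point needing care is that $\|m\|\leq 1$ comes from Jensen applied to the bounded variable $X_{\leq 1}$, which is \eqref{eq:m-le-1}. It does not come from the tail bounds \eqref{eq:m-le-tau}, and it needs no assumption beyond $\mathbb{E}\|X\|<\infty$. The constant $4$ is what will later enter $c_4=(e^4-5)/16$, through a bound on $e^{t}-1-t$ for $|t|\leq 4$ when Lemma~\ref{lem:chi2} is expanded.
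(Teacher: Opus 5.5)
Your proposal is correct and matches the paper's proof: you use $\|X_{\leq 1}\|\leq 1$ and $\|m\|\leq 1$ from \eqref{eq:m-le-1}, the triangle inequality for $\widetilde X=X_{\leq 1}-m$, and then Cauchy--Schwarz for the inner product of the two copies. Your side remark that $\|m\|\leq 1$ needs essentially no assumption is also right, since $X_{\leq 1}$ is bounded.
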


\begin{proof}
Since $\|X_{\leq 1}\|\leq 1$ a.s.\ and $\|m\|\leq 1$ by \eqref{eq:m-le-1},
\[
\|\widetilde X\|=\|X_{\leq 1}-m\|\leq \|X_{\leq 1}\|+\|m\|\leq 2.
\]
Thus $|\langle \widetilde X',\widetilde X''\rangle|\leq \|\widetilde X'\|\|\widetilde X''\|\leq 4$.
\end{proof}

\begin{lemma}\label{lem:exp-moment}
Let
$
c_4:=\frac{e^4-5}{16},
$
then
\[
\chi^2(\widetilde\nu\|\g)\leq c_4\|\widetilde\Sigma\|_F^2.
\]
\end{lemma}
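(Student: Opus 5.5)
We start from Lemma~\ref{lem:chi2}, which gives $\chi^2(\widetilde\nu\|\g)=\mathbb{E}[e^{S}]-1$ with $S:=\langle \widetilde X',\widetilde X''\rangle$, and we expand the exponential to second order with a remainder. Concretely, write $e^{s}=1+s+s^2\phi(s)$, where $\phi(s):=(e^s-1-s)/s^2$ for $s\neq0$ and $\phi(0)=1/2$. Then
\[
\mathbb{E}[e^{S}]-1=\mathbb{E}[S]+\mathbb{E}[S^2\phi(S)].
\]

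The first-order term vanishes. Since $\widetilde X'$ and $\widetilde X''$ are independent and each has mean zero, $\mathbb{E}[S]=\langle \mathbb{E}\widetilde X',\mathbb{E}\widetilde X''\rangle=0$.

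For the second-order term, we first use that $\phi$ is nondecreasing on $\mathbb{R}$. Its power series $\sum_{k\ge0}s^k/(k+2)!$ handles $s\ge0$ directly; for $s<0$ one checks that $\phi$ is positive and increasing, for instance via the representation $\phi(s)=\int_0^1(1-u)e^{us}\,du$, which is increasing in $s$ for every $u$. By Lemma~\ref{lem:R4}, $|S|\le4$ almost surely, so $\phi(S)\le\phi(4)=(e^4-5)/16=c_4$. Hence
\[
\mathbb{E}[S^2\phi(S)]\le c_4\,\mathbb{E}[S^2].
\]

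Finally we compute $\mathbb{E}[S^2]$ by independence:
\[
\mathbb{E}\langle \widetilde X',\widetilde X''\rangle^2=\mathbb{E}\big[\widetilde X'^{\top}\widetilde X''\widetilde X''^{\top}\widetilde X'\big]=\mathbb{E}\big[\widetilde X'^{\top}\widetilde\Sigma\widetilde X'\big]=\operatorname{tr}(\widetilde\Sigma^2)=\|\widetilde\Sigma\|_F^2 .
\]
Here we conditioned on $\widetilde X'$ and used $\mathbb{E}[\widetilde X''\widetilde X''^\top]=\widetilde\Sigma$, which holds because $\widetilde X$ is centred. Combining the three steps gives $\chi^2(\widetilde\nu\|\g)\le c_4\|\widetilde\Sigma\|_F^2$.

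The only step that needs care is the monotonicity of $\phi$ on the negative axis, which is what lets the a.s. bound $|S|\le4$ be replaced by the single value $\phi(4)$. The integral representation above settles this cleanly; everything else is routine.
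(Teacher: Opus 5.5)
Your proof is correct and follows the paper's argument: the paper also bounds $e^t\le 1+t+c_4t^2$ on $[-4,4]$ using the supremum of $(e^t-1-t)/t^2$ at $t=4$, notes that $\mathbb{E}[T]=0$ by independence and centring, and computes $\mathbb{E}[T^2]=\|\widetilde\Sigma\|_F^2$. Your one addition is to justify the monotonicity via $\phi(s)=\int_0^1(1-u)e^{us}\,du$, a step the paper asserts without proof.
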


\begin{proof}
By Lemma~\ref{lem:R4}, $T:=\langle \widetilde X',\widetilde X''\rangle\in[-4,4]$ a.s.
Define $g(t)=(e^t-1-t)/t^2$ for $t\neq 0$ and $g(0)=1/2$.
On $[-4,4]$, one has $\sup g = g(4)=(e^4-5)/16=c_4$, hence
\[
e^t\leq 1+t+c_4 t^2 \qquad \text{for all }t\in[-4,4].
\]
Applying this to $t=T$ and taking expectations yields
\[
\mathbb{E}[e^T]\leq 1+\mathbb{E}[T]+c_4\mathbb{E}[T^2].
\]
Since $\mathbb{E}[\widetilde X]=0$ and $\widetilde X',\widetilde X''$ are independent, $\mathbb{E}[T]=0$. Moreover,
\[
\mathbb{E}[T^2]
=\sum_{i,j=1}^d \big(\mathbb{E}[\widetilde X_i\widetilde X_j]\big)^2
=\|\widetilde\Sigma\|_F^2,
\]
and hence $\mathbb{E}[e^T]\leq 1+c_4\|\widetilde\Sigma\|_F^2$. Then Lemma~\ref{lem:chi2} gives
$\chi^2(\widetilde\nu\|\g)=\mathbb{E}[e^T]-1\leq c_4\|\widetilde\Sigma\|_F^2$.
\end{proof}

\begin{lemma}\label{lem:KL}
\[
\mathrm{KL}(\widetilde\nu\|\g)
\leq \log\left(1+\chi^2(\widetilde\nu\|\g)\right)
\leq \log\left(1+c_4\|\widetilde\Sigma\|_F^2\right),
\qquad c_4=\frac{e^4-5}{16}.
\]
\end{lemma}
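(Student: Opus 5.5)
The first inequality, $\mathrm{KL}(\widetilde\nu\|\g)\leq\log(1+\chi^2(\widetilde\nu\|\g))$, is the standard comparison between relative entropy and the $\chi^2$-divergence, and I would prove it by Jensen's inequality applied to the concave logarithm. The second inequality then follows at once from Lemma~\ref{lem:exp-moment} and the monotonicity of $\log$.

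For the first inequality, note that $\widetilde\nu\ll\g$ with density
\[
f(y)=\mathbb{E}\big[\exp(\langle y,\widetilde X'\rangle-\tfrac12\|\widetilde X'\|^2)\big],
\]
which is strictly positive and finite. This is the computation in the proof of Lemma~\ref{lem:chi2}; finiteness uses $\|\widetilde X\|\leq 2$ from Lemma~\ref{lem:R4}. Writing the entropy as an expectation under $\widetilde\nu$ itself, I would apply Jensen's inequality:
\[
\mathrm{KL}(\widetilde\nu\|\g)=\int\log f\,d\widetilde\nu\leq\log\int f\,d\widetilde\nu=\log\int f^2\,d\g=\log\big(1+\chi^2(\widetilde\nu\|\g)\big).
\]
The last equality is the identity $\int f^2\,d\g=1+\chi^2$, which follows from expanding $(f-1)^2$ and using $\int f\,d\g=1$.

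The only point requiring care is integrability, so that every quantity is finite and Jensen's inequality applies legitimately. By Lemma~\ref{lem:chi2} and Lemma~\ref{lem:R4}, $\chi^2(\widetilde\nu\|\g)=\mathbb{E}[e^{\langle\widetilde X',\widetilde X''\rangle}]-1\leq e^4-1<\infty$, so $f\in L^1(\widetilde\nu)$. Moreover, $\log f$ has $\widetilde\nu$-integrable negative part, because
\[
\log f(y)\geq\mathbb{E}\big[\langle y,\widetilde X'\rangle-\tfrac12\|\widetilde X'\|^2\big]\geq -2\|y\|-2,
\]
which is again Jensen's inequality, now applied inside the expectation. This lower bound is integrable since $\widetilde\nu$ has finite first moment. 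Hence $\int\log f\,d\widetilde\nu$ is well defined, and the displayed chain of inequalities holds.

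Combining the displayed inequality with Lemma~\ref{lem:exp-moment}, which gives $\chi^2(\widetilde\nu\|\g)\leq c_4\|\widetilde\Sigma\|_F^2$, yields the stated bound. I expect no real obstacle beyond this integrability bookkeeping.
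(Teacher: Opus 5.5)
Your proposal is correct and follows the paper's proof. The paper simply invokes the standard inequality $\mathrm{KL}\le\log(1+\chi^2)$ together with Lemma~\ref{lem:exp-moment}; you additionally derive that standard inequality via Jensen's inequality and carry out the integrability checks, and all of this is sound (since $\widetilde X$ is centred, your lower bound on $\log f$ even sharpens to $-2$).
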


\begin{proof}
Use $\mathrm{KL}\leq \log(1+\chi^2)$ and Lemma~\ref{lem:exp-moment}.
\end{proof}
We recall the standard transport inequality between $W_2$ and $\mathrm{KL}$.
\begin{lemma}\label{lem:T2}
For any $\rho\ll\g$,
\[
W_2(\g,\rho)\leq \sqrt{2\mathrm{KL}(\rho\|\g)}.
\]
In particular, since $W_1\leq W_2$,
\[
W_1(\g,\rho)\leq \sqrt{2\mathrm{KL}(\rho\|\g)}.
\]
\end{lemma}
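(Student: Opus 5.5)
The statement to prove is Lemma~\ref{lem:T2}, i.e.\ Talagrand's transport inequality $W_2(\g,\rho)\le\sqrt{2\mathrm{KL}(\rho\|\g)}$ for the standard Gaussian. My plan is to obtain it from the fact that $\g$ satisfies a log-Sobolev inequality with constant $1$, by running the Otto--Villani argument along the Ornstein--Uhlenbeck semigroup. I would first reduce to the case $\mathrm{KL}(\rho\|\g)<\infty$, since otherwise there is nothing to prove. Then I would approximate $\rho$ by a smooth density with finite second moment, namely $\rho P_\epsilon$ with $P_t$ the OU semigroup. Both sides behave well under this approximation: $\mathrm{KL}(\rho P_\epsilon\|\g)\le\mathrm{KL}(\rho\|\g)$ by the data-processing inequality, and $W_2(\rho P_\epsilon,\rho)\to0$ by a synchronous coupling, so it suffices to prove the inequality for $\rho P_\epsilon$.

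For smooth $\rho$, set $\rho_t:=\rho P_t$, $H(t):=\mathrm{KL}(\rho_t\|\g)$ and $I(t):=\int\|\nabla\log(d\rho_t/d\g)\|^2d\rho_t$. The argument has three steps.
\begin{enumerate}[(i)]
\item By de Bruijn's identity, $H'(t)=-I(t)$.
\item The metric derivative of $t\mapsto\rho_t$ in $W_2$ is at most $\sqrt{I(t)}$. This holds because $\rho_t$ solves the continuity equation with velocity field $-\nabla\log(d\rho_t/d\g)$; see \cite{ambrosio2008gradient} or the Benamou--Brenier formula \cite{benamou2000computational}.
\item By the Gaussian log-Sobolev inequality, $I(t)\ge2H(t)$.
\end{enumerate}
Combining these gives
\[
\frac{d}{dt}\sqrt{2H(t)}=-\frac{I(t)}{\sqrt{2H(t)}}\le-\sqrt{I(t)}\le-\frac{d}{dt}^{+}W_2(\rho,\rho_t).
\]
Integrating from $0$ to $\infty$ and using $\rho_t\to\g$ in $W_2$ then yields $W_2(\rho,\g)\le\sqrt{2H(0)}$.

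The main obstacle is rigour in step (ii), specifically justifying the continuity-equation bound on the metric derivative. If that becomes awkward, I would switch to the transport-map proof instead. Let $T=\nabla\phi$ be the Brenier map pushing $\g$ to $\rho$. The Monge--Amp\`ere equation together with $\log\det D^2\phi\le\Delta\phi-d$ gives
\[
\mathrm{KL}(\rho\|\g)\ \ge\ \int\Bigl(\tfrac12\|T(x)-x\|^2+\text{a nonnegative term}\Bigr)d\g,
\]
which is exactly the claim, provided integration by parts is justified (via McCann's Aleksandrov-Hessian argument). Either route is standard, so in the paper I would simply cite Talagrand, or \cite[Theorem~22.14]{villani2009optimal}. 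The final statement follows since $W_1\le W_2$ by Jensen's inequality applied to any coupling.
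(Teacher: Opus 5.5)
Your proposal is correct, and it matches the paper. The paper gives no proof of this lemma: it recalls it as the standard Talagrand transport inequality for $\g$. So your final decision to cite Talagrand, or Villani's book, is exactly what the paper does.

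Your two sketches are both valid standard proofs and would make the result self-contained:
\begin{itemize}
\item the Otto--Villani route, combining de Bruijn's identity, the continuity-equation bound on the $W_2$ metric derivative, and the Gaussian log-Sobolev inequality;
\item the Brenier-map route, using $\log\det D^2\phi\le\Delta\phi-d$ and Gaussian integration by parts.
\end{itemize}

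One detail in the semigroup route should be stated explicitly. $\mathrm{KL}(\rho\|\g)<\infty$ already forces $\int\|x\|^2\rho(dx)<\infty$. This follows, for example, from the Donsker--Varadhan formula with $f(x)=\|x\|^2/4$, since $\int e^{\|x\|^2/4}d\g=2^{d/2}$; this gives $\int\|x\|^2\,d\rho\le 4\,\mathrm{KL}(\rho\|\g)+2d\log 2$. Finite second moment is what justifies two limits you use: $W_2(\rho P_\epsilon,\rho)\to0$ under the synchronous coupling, and $\rho_t\to\g$ in $W_2$ as $t\to\infty$.
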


Combining Lemmas~\ref{lem:KL} and~\ref{lem:T2} yields
\begin{equation}\label{eq:W12-gamma-tilde}
W_1(\g,\widetilde\nu)\leq \sqrt{2\log\left(1+c_4\|\widetilde\Sigma\|_F^2\right)},
\qquad
W_2(\g,\widetilde\nu)\leq \sqrt{2\log\left(1+c_4\|\widetilde\Sigma\|_F^2\right)}.
\end{equation}

\convRefined*

\begin{proof}
For $p\in\{1,2\}$, the triangle inequality gives
\[
W_p(\g,\nu)\leq W_p(\g,\widetilde\nu)+W_p(\widetilde\nu,\nu).
\]
Bound $W_p(\widetilde\nu,\nu)$ by Lemma~\ref{lem:trunc-error},  bound $W_p(\g,\widetilde\nu)$ by
\eqref{eq:W12-gamma-tilde} and combine the results.
\end{proof}

\begin{corollary}\label{cor:moment}
Let $c_4=(e^4-5)/16$, then 
\[
W_1(\mu*\g,\g)\leq \left(2+\sqrt{2c_4}\right)\mathbb{E}\|X\|^2
\leq 5\mathbb{E}\|X\|^2,
\]
and if we additionally assume $\mathbb{E}\|X\|^4 < \infty$ we have
\[
W_2(\mu*\g,\g)\leq \left(2+\sqrt{2c_4}\right)(\mathbb{E}\|X\|^4)^{1/2}
\leq 5(\mathbb{E}\|X\|^4)^{1/2}.
\]
\end{corollary}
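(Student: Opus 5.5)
Corollary~\ref{cor:moment} follows by feeding moment bounds into Theorem~\ref{thm:conv_refined_12}, i.e.\ into
\[
W_p(\mu*\g,\g)\leq 2\tau_p+\sqrt{2\log(1+c_4\|\widetilde\Sigma\|_F^2)},\qquad p\in\{1,2\}.
\]
It then suffices to bound each of the two terms by a constant multiple of $\mathbb{E}\|X\|^2$ (for $p=1$) or $(\mathbb{E}\|X\|^4)^{1/2}$ (for $p=2$).

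\textbf{Tail terms.} On the event $\{\|X\|>1\}$ we have $\|X\|\leq\|X\|^2$, so
\[
\tau_1=\mathbb{E}\big[\|X\|\mathbf 1_{\{\|X\|>1\}}\big]\leq\mathbb{E}\|X\|^2 .
\]
Similarly $\|X\|^2\leq\|X\|^4$ there, so
\[
\tau_2\leq\big(\mathbb{E}\|X\|^4\big)^{1/2}.
\]

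\textbf{Covariance term.} First use $\log(1+u)\leq u$ to get
\[
\sqrt{2\log(1+c_4\|\widetilde\Sigma\|_F^2)}\leq\sqrt{2c_4}\,\|\widetilde\Sigma\|_F .
\]
Since $\widetilde\Sigma\succeq0$, we have $\|\widetilde\Sigma\|_F\leq\operatorname{tr}(\widetilde\Sigma)=\mathbb{E}\|\widetilde X\|^2$. Because $\widetilde X=X_{\leq1}-\mathbb{E}X_{\leq1}$, this equals
\[
\mathbb{E}\|X_{\leq1}\|^2-\|m\|^2\leq\mathbb{E}\|X\|^2\leq\big(\mathbb{E}\|X\|^4\big)^{1/2},
\]
where the last step is Jensen.

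\textbf{Constants.} Summing the two contributions gives the factor $(2+\sqrt{2c_4})$ in both cases. Finally $c_4=(e^4-5)/16\approx3.10$, so $\sqrt{2c_4}\approx2.49$ and $2+\sqrt{2c_4}<5$.

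The only step needing care is the Frobenius-to-trace estimate. It relies on positive semidefiniteness and on the variance of the truncated variable being at most its second moment, which is at most $\mathbb{E}\|X\|^2$. The rest is bookkeeping.
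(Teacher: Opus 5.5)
Your proposal is correct and follows essentially the same route as the paper. Both arguments bound the tail terms pointwise on $\{\|X\|>1\}$ to get $\tau_1\le\mathbb{E}\|X\|^2$ and $\tau_2\le(\mathbb{E}\|X\|^4)^{1/2}$, control $\|\widetilde\Sigma\|_F\le\operatorname{tr}(\widetilde\Sigma)\le\mathbb{E}\|X\|^2$ via positive semidefiniteness, and linearise the logarithm in Theorem~\ref{thm:conv_refined_12} using $\log(1+u)\le u$; your check that $2+\sqrt{2c_4}<5$ then gives the stated constant.
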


\begin{proof}
First, on $\{\|X\|>1\}$ we have $\|X\|\leq \|X\|^2$ and $\|X\|^2\leq \|X\|^4$, hence
\[
\tau_1=\mathbb{E}[\|X\|\mathbf 1_{\{\|X\|>1\}}]\leq \mathbb{E}\|X\|^2,
\qquad
\tau_2^2=\mathbb{E}[\|X\|^2\mathbf 1_{\{\|X\|>1\}}]\leq \mathbb{E}\|X\|^4.
\]
Next, since $\mathbb{E}[\widetilde X]=0$,
\[
\operatorname{tr}(\widetilde\Sigma)=\mathbb{E}\|\widetilde X\|^2
=\mathbb{E}\|X_{\leq 1}\|^2-\|m\|^2
\leq \mathbb{E}\|X_{\leq 1}\|^2
\leq \mathbb{E}\|X\|^2.
\]
Because $\widetilde\Sigma\succeq 0$, $\|\widetilde\Sigma\|_F\leq \operatorname{tr}(\widetilde\Sigma)$, hence
$\|\widetilde\Sigma\|_F\leq \mathbb{E}\|X\|^2\leq (\mathbb{E}\|X\|^4)^{1/2}$.
Finally, use $\log(1+u)\leq u$ in Theorem~\ref{thm:conv_refined_12} to get
\[
\sqrt{2\log(1+c_4\|\widetilde\Sigma\|_F^2)}\leq \sqrt{2c_4}\|\widetilde\Sigma\|_F,
\]
and substitute the moment bounds above. 
\end{proof}

\subsection{A spike example showing that a covariance-Frobenius bound cannot hold in general}\label{sec:spike}

The following example shows that one cannot, in general control $W_1(\mu * \g, \g)$ uniformly by an absolute constant
times $\|\Sigma\|_F$ without an additional tail term.
\begin{proposition}\label{prop:spike-lower-bound}
Fix $d\geq 2$ and $s>0$, and define
\[
\mu_{s,d}=\frac1{2d}\sum_{i=1}^d\left(\delta_{s e_i}+\delta_{-s e_i}\right).
\]
Let $Z\sim\g=\mathcal N(0,I_d)$ and $X\sim\mu_{s,d}$ be independent, and set
$
\nu:=\mu_{s,d} * \g=\mathrm{Law}(Z+X).
$
Then the covariance of $\mu_{s,d}$ is
\[
\Sigma=\mathrm{Cov}(\mu_{s,d})=\frac{s^2}{d}I_d,
\qquad
\|\Sigma\|_F=\frac{s^2}{\sqrt d},
\]
and
\begin{equation}\label{eq:spike-lower}
W_1(\g,\nu)\ \ge\ \frac{s}{2}-\sqrt{2\log d}.
\end{equation}
In particular, if $s\geq 4\sqrt{2\log d}$, then
\begin{equation}\label{eq:spike-lower-clean}
W_1(\mu_{s,d} * \g, \g)\ \ge\ \frac{s}{4}.
\end{equation}
\end{proposition}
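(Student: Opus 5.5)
The covariance computation is immediate. Under $\mu_{s,d}$ the vector $X$ takes the value $\pm s e_i$ with probability $1/(2d)$ each, so $\mathbb{E}[X]=0$ and $\mathbb{E}[X_iX_j]=0$ for $i\neq j$. On the diagonal, $\mathbb{E}[X_i^2]=2\cdot\frac{1}{2d}s^2=s^2/d$. Hence $\Sigma=\frac{s^2}{d}I_d$ and $\|\Sigma\|_F=\sqrt{d}\,s^2/d=s^2/\sqrt d$.

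For the lower bound \eqref{eq:spike-lower} I will use Kantorovich--Rubinstein duality with the $1$-Lipschitz test function $f(y)=\|y\|_\infty=\max_i|y_i|$. This gives
\[
W_1(\g,\nu)\geq \mathbb{E}\|Z+X\|_\infty-\mathbb{E}\|Z\|_\infty .
\]
\emph{Lower bound on the first term.} Condition on $X=\pm s e_i$ and use $\|y\|_\infty\geq|y_i|$. Then $\mathbb{E}\|Z+X\|_\infty\geq \mathbb{E}|s\pm Z_i|$. By Jensen, $\mathbb{E}|s+Z_i|\geq|\mathbb{E}(s+Z_i)|=s$, and the same holds for the minus sign. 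So this term is at least $s$, which is even better than $s/2$.

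\emph{Upper bound on the second term.} I need the standard maximal inequality $\mathbb{E}\max_i|Z_i|\leq\sqrt{2\log(2d)}$. It follows from the usual argument: apply Jensen to $e^{\lambda\max_i|Z_i|}\leq\sum_i(e^{\lambda Z_i}+e^{-\lambda Z_i})$, then optimise $\lambda$. This gives $W_1\geq s-\sqrt{2\log(2d)}$.

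To reach the stated form $s/2-\sqrt{2\log d}$, I note that $\sqrt{2\log(2d)}\leq\sqrt{2\log d}+\sqrt{2\log 2}$. The leftover $\sqrt{2\log 2}$ must be absorbed by $s/2$, which needs $s/2\geq\sqrt{2\log2}$. If instead $s/2<\sqrt{2\log 2}\leq\sqrt{2\log d}$ (using $d\geq2$), then the right-hand side of \eqref{eq:spike-lower} is negative and the bound is trivial. In either case \eqref{eq:spike-lower} holds.

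The ``in particular'' statement \eqref{eq:spike-lower-clean} is then arithmetic. If $s\geq4\sqrt{2\log d}$, then $\sqrt{2\log d}\leq s/4$, so $s/2-\sqrt{2\log d}\geq s/4$.

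The only mildly delicate step is getting the constant in the Gaussian maximal inequality to fit the form $\sqrt{2\log d}$ rather than $\sqrt{2\log 2d}$. The slack from $s$ versus $s/2$, handled by the case split above, resolves it cleanly.
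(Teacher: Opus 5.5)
Your proof is correct and follows essentially the paper's route: Kantorovich--Rubinstein duality with a coordinate-maximum test function, conditioning on the location of the spike, and the exponential-moment (Jensen) bound on the expected Gaussian maximum. The one difference is your symmetric test function $\max_i|y_i|$ in place of the paper's $\max_i y_i$. This gives the stronger intermediate bound $W_1(\g,\nu)\geq s-\sqrt{2\log(2d)}$, since both signs of the spike contribute the full $s$. The paper gets $s$ only on the event $\varepsilon=+1$ and uses $d\geq 2$ to bound the $\varepsilon=-1$ contribution below by $0$. The price is $\log(2d)$ instead of $\log d$, which your case split correctly reconciles with the stated form.
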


\begin{proof}
The covariance computation is immediate from symmetry:
\[
\mathbb{E}[X]=0,\qquad
\mathbb{E}[XX^\top]
=\frac1{2d}\sum_{i=1}^d\left((s e_i)(s e_i)^\top+(-s e_i)(-s e_i)^\top\right)
=\frac{s^2}{d}I_d.
\]
Hence $\Sigma=(s^2/d)I_d$, and therefore
\[
\|\Sigma\|_F=\left\|\frac{s^2}{d}I_d\right\|_F=\frac{s^2}{d}\sqrt d=\frac{s^2}{\sqrt d}.
\]
To lower bound $W_1(\g,\nu)$, we use Kantorovich-Rubinstein duality with the test function
\[
f(x):=\max_{1\leq i\leq d}x_i,\qquad x\in\mathbb{R}^d.
\]
First note that $f$ is $1$-Lipschitz (with respect to the Euclidean norm), since for all $x,y\in\mathbb{R}^d$,
\[
|f(x)-f(y)|
=
\big|\max_i x_i-\max_i y_i\big|
\leq \max_i|x_i-y_i|
\leq \|x-y\|.
\]
Therefore,
\begin{equation}\label{eq:KR-spike}
W_1(\g,\nu)\geq \mathbb{E}[f(Z+X)]-\mathbb{E}[f(Z)].
\end{equation}
We now estimate the two expectations. Firstly, write
\[
X=s\varepsilon e_I,
\]
where $I$ is uniform on $\{1,\dots,d\}$, $\varepsilon$ is uniform on $\{\pm1\}$, and $I,\varepsilon,Z$ are independent.
On the event $\{\varepsilon=+1\}$ (which has probability $1/2$),
\[
f(Z+X)=\max_{1\leq j\leq d}\left(Z_j+s\mathbf 1_{\{j=I\}}\right)\geq Z_I+s.
\]
Hence
\[
\mathbb{E}[f(Z+X)\mid \varepsilon=+1]\geq \mathbb{E}[Z_I+s]=s,
\]
since $\mathbb{E}[Z_I]=0$. Therefore,
\begin{equation}\label{eq:Ef-lower-spike}
\mathbb{E}[f(Z+X)]\geq \frac{s}{2},
\end{equation}
as $\mathbb{E}[f(Z+X)\mid \varepsilon=-1]\geq 0$.
Secondly, let $Z=(Z_1,\dots,Z_d)\sim\mathcal N(0,I_d)$. For any $\lambda>0$, Jensen's inequality gives
\[
\mathbb{E}\Big[\max_{1\leq i\leq d} Z_i\Big]
\leq \frac1\lambda \log \mathbb{E}\Big[e^{\lambda \max_i Z_i}\Big]
\leq \frac1\lambda \log \sum_{i=1}^d \mathbb{E}[e^{\lambda Z_i}].
\]
Since $\mathbb{E}[e^{\lambda Z_i}]=e^{\lambda^2/2}$ for each $i$,
\[
\mathbb{E}\Big[\max_{1\leq i\leq d} Z_i\Big]
\leq \frac1\lambda \log\left(de^{\lambda^2/2}\right)
= \frac{\log d}{\lambda}+\frac{\lambda}{2}.
\]
Optimizing over $\lambda>0$ (take $\lambda=\sqrt{2\log d}$) yields
\begin{equation}\label{eq:Ef-upper-spike}
\mathbb{E}[f(Z)] = \mathbb{E}\Big[\max_{1\leq i\leq d} Z_i\Big]\leq \sqrt{2\log d}.
\end{equation}
Combining \eqref{eq:KR-spike}, \eqref{eq:Ef-lower-spike}, and \eqref{eq:Ef-upper-spike}, we obtain
\[
W_1(\g,\nu)\geq \frac{s}{2}-\sqrt{2\log d},
\]
which is \eqref{eq:spike-lower}. If $s\geq 4\sqrt{2\log d}$, then
\[
\frac{s}{2}-\sqrt{2\log d}\geq \frac{s}{2}-\frac{s}{4}=\frac{s}{4},
\]
and \eqref{eq:spike-lower-clean} follows.
\end{proof}

\begin{corollary}\label{cor:spike-no-frob-only}
There is no absolute constant $C<\infty$ such that
\[
W_1(\mu * \g, \g)\leq C\|\Sigma\|_F
\]
for every centred probability measure $\mu$ with finite second moment.
\end{corollary}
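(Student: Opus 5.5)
The plan is a proof by contradiction that uses the spike family of Proposition~\ref{prop:spike-lower-bound}. Suppose some absolute constant $C<\infty$ gave $W_1(\mu*\g,\g)\leq C\|\Sigma\|_F$ for every centred $\mu$ with finite second moment. Each $\mu_{s,d}$ is admissible: it is finitely supported and symmetric, so it is centred with finite second moment. Applying the assumed bound to $\mu_{s,d}$, together with $\|\Sigma\|_F=s^2/\sqrt d$ from the proposition, gives
\[
W_1(\mu_{s,d}*\g,\g)\leq C\frac{s^2}{\sqrt d}.
\]

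Next I choose $s$ so that the lower bound \eqref{eq:spike-lower-clean} applies while $\|\Sigma\|_F$ is small. Take $s=s_d:=4\sqrt{2\log d}$ for $d\geq 2$, which satisfies the hypothesis $s\geq 4\sqrt{2\log d}$. The proposition then gives $W_1\geq s_d/4=\sqrt{2\log d}$, while the assumed inequality gives $W_1\leq C\cdot 32\log d/\sqrt d$.

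Combining the two bounds yields $\sqrt{2\log d}\leq 32C\log d/\sqrt d$, which is equivalent to
\[
\sqrt d\leq 16\sqrt2\,C\sqrt{\log d}.
\]
The left side grows like $\sqrt d$ and the right side only like $\sqrt{\log d}$, so this fails for all sufficiently large $d$. That is a contradiction, and no such $C$ exists.

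There is no real obstacle here. The only points to check are that the choice of $s_d$ meets the threshold in Proposition~\ref{prop:spike-lower-bound}, and that $\log d>0$ for $d\geq 2$, so dividing by $\sqrt{\log d}$ is legitimate.
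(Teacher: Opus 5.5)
Your proof is correct and follows essentially the paper's route: both arguments apply Proposition~\ref{prop:spike-lower-bound} with a dimension-dependent spike size $s$ and let $d\to\infty$. The only difference is the scaling. The paper takes $s=d^{1/4}$, which pins $\|\Sigma\|_F=1$ while $W_1\geq d^{1/4}/4\to\infty$; your threshold choice $s=4\sqrt{2\log d}$ instead sends $\|\Sigma\|_F\to0$ and gives the larger ratio $W_1/\|\Sigma\|_F\gtrsim\sqrt{d/\log d}$.
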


\begin{proof}
Apply Proposition~\ref{prop:spike-lower-bound} with $s=d^{1/4}$. Then
\[
\|\Sigma\|_F=\frac{s^2}{\sqrt d}=\frac{d^{1/2}}{\sqrt d}=1.
\]
Moreover, since $d^{1/4}/\sqrt{\log d}\to\infty$, we have $d^{1/4}\geq 4\sqrt{2\log d}$ for all sufficiently large $d$,
so \eqref{eq:spike-lower-clean} gives
\[
W_1(\mu_{s,d} * \g, \g)\geq \frac{d^{1/4}}{4}\xrightarrow[d\to\infty]{}\infty,
\]
and we have the required result.
\end{proof}

\begin{remark}
This example shows that in general the covariance/Frobenius term by itself cannot control $W_1(\mu_{s,d} * \g, \g)$. In particular, the tail contribution in Theorem~\ref{thm:conv_refined_12} is necessary.
\end{remark}

\subsection{Heuristic and numerical evidence for SG-UBU with spike stochastic-gradient noise}\label{sec:spike:SGUBU}
In this section, we give a heuristic mechanism and numerical evidence for adverse dimensional scaling of SG-UBU's stationary bias under spike noise.  The rigorous conclusion of Appendix~\ref{sec:spike} concerns the one-step Gaussian-convolution inequality, whereas the present subsection suggests numerically that the same tail phenomenon can persist at stationarity. Consider the following noise distribution: $\mu_{s,d,p}=(1-p)\delta_{0} + p\mu_{s,d}$, where $0\leq p\leq 1$ denotes a probability, and $\delta_{0}$ denotes the Dirac delta distribution. 

The purpose of the experiment is to test whether, under only finite second moments, the Wasserstein-$1$ bias may exhibit scaling substantially worse than the $\mathcal O(h\sqrt d)$ behaviour available under Poincar\'e or bounded-noise assumptions.

Let $f_{k}(x) = \max_{|S|=k} \|x_S\|_2$, where $x_S \in \mathbb{R}^d$ denotes the vector obtained from $x$ by zeroing out all coordinates outside $S$. Since $f_{k}(x)$ is a pointwise maximum of the coordinate-projection norms $x \mapsto \|x_S\|_2$, each of which is $1$-Lipschitz with respect to the Euclidean norm, $f_k$ is itself $1$-Lipschitz.

Suppose that the true target is a $d$-dimensional standard Gaussian. For the following heuristic, when using SG-UBU with stepsize $h$, we consider the following choice of parameters: $p=h \sqrt{d}/\log(d)$, $s=(8/h)\sqrt{\log(d)}$. Consider the case when $h=d^{-1/2}$. SG-UBU on a standard Gaussian target with i.i.d. noise distribution $\mu_{s,d,p}$ would essentially consist of shocks that are added to the velocity distribution with probability $p$ each step, and the size of these would be $h\cdot s=8\sqrt{\log(d)}$, requiring $\mathcal{O}(1/h)$ steps to approximately recover from them. During a unit diffusion time period, the expected number of shocks is $p/h=\sqrt{d}/\log(d)$. Hence if we consider the top $k=\lceil \sqrt{d}/\log(d)\rceil$ largest magnitude elements of the position, the $L_2$ norm of this subvector is expected to be of size $\mathcal{O}\left(\sqrt{\log(d)} \left(\sqrt{d}/\log(d)\right)^{1/2}\right)=\mathcal{O}\left(d^{1/4}\right)$. For the original standard Gaussian target, one can show with a simple argument that the expected value of $f_k$ with $k=\lceil \sqrt{d}/\log(d)\rceil$ is $\mathcal{O}(d^{1/4})$. Hence, the difference between these is expected to be also of $\mathcal{O}(d^{1/4})$.

\paragraph{Numerical estimation via synchronously coupled chains.}
We estimate the stationary bias of $f_{k}$ by coupling SG-UBU to a full-gradient UBU chain: both chains are driven by the same Gaussian increments $\xi^{(1)},\xi^{(2)}$ in every $\mathcal{U}$ half-step, and are started from the same draw $(x_{0},v_{0})$ with $x_{0},v_{0}\sim \mathcal{N}(0,I_{d})$ (stationary for the reference chain up to its discretisation error). The bias is then estimated by the long-run average of $f_{k}(x^{\mathrm{SG}})-f_{k}(x^{\mathrm{UBU}})$ along the coupled trajectories. Since the target is Gaussian, the maps $\mathcal{B}$ and $\mathcal{U}$ are linear in $(x,v)$, so the difference between the two chains evolves by a deterministic linear contraction plus the gradient-noise kicks $-h\mathcal{R}$, and the shared Brownian increments cancel exactly in the difference; this makes the coupled estimator substantially lower-variance than comparing independent samples from the two chains. Note that the coupled estimator targets $\E_{\overline{\pi}_{h}^{\mathrm{SG}}}[f_{k}]-\E_{\overline{\pi}_{h}^{\mathrm{UBU}}}[f_{k}]$ rather than $\E_{\overline{\pi}_{h}^{\mathrm{SG}}}[f_{k}]-\E_{\pi}[f_{k}]$; the two quantities differ by the asymptotic bias of full-gradient UBU itself, so the reported values carry a small additional $\mathcal{O}(h^{2})$ error from the reference chain's own bias. We verified numerically (against i.i.d.\ Monte Carlo estimates of $\E_{\pi}[f_{k}]$) that this contribution is of order $10^{-3}$ in these experiments, negligible relative to the reported biases.

The implementation (\texttt{spike\_bias.ipynb} in \texttt{\url{https://github.com/PAWhalley/SG-UBU}}) uses friction $\gamma=2$ and $128$ independent replicas of the coupled chains, with standard errors computed across the replicas. For the Gaussian-noise comparison, a third chain with i.i.d.\ gradient noise $\mathcal{N}(0,(ps^{2}/d) I_{d})$, matching the covariance of $\mu_{s,d,p}$, is coupled to the same increments.

Table~\ref{table:spike:res} shows the resulting bias estimates for $h=d^{-1/2}$. They are consistent with the heuristic $\mathcal O(d^{1/4})$ growth: the fitted log-log slope of the spike-noise column is $0.251$, in close agreement with the predicted exponent $1/4$, whereas the Gaussian-noise bias remains bounded (indeed decreasing) in $d$.

\begin{table}[h]
\centering
\begin{tabular}{rcc}
\toprule
Dimension $d$ & Bias of $f_k$ for spike gradient noise & Bias of $f_k$ for Gaussian gradient noise\\
\midrule
64     & 5.010  & 2.512 \\
256    & 6.995  & 2.002 \\
1{,}024  & 10.031 & 1.558 \\
4{,}096  & 14.082 & 1.122 \\
16{,}384 & 20.145 & 0.817 \\
\bottomrule
\end{tabular}
\caption{Bias of 1-Lipschitz test function $f_{k}$ for $k=\lceil \sqrt{d}/\log(d)\rceil$ for SG-UBU (spike/Gaussian stochastic gradient noise distribution, friction parameter $\gamma=2$). Step size $h=d^{-1/2}$, and we keep $h \sqrt{d}\lambda_{\max}(\mathrm{Cov}(\mathcal{R}(x,\omega)))$ constant for all dimensions (i.e. noise covariance is scaled proportionally to $1/(h \sqrt{d})$). Biases are estimated with the coupled-chain procedure described above; standard errors across replicas are at most $0.02$ for the spike column and $0.01$ for the Gaussian column. For Gaussian noise, Theorem~\ref{theorem:bias_sgUBU} and Corollary~\ref{cor:plug_in} imply a uniform bound in $d$.}
\label{table:spike:res}
\end{table}

More generally, suppose that a Wasserstein-1 bias bound of the form $\mathcal{O}\left(hd^{\alpha} \lambda_{\max}(\mathrm{Cov}(\mathcal{R}(x,\omega)))\right)$ would hold for some $\alpha\in [0.5, 1)$. If one would use step size $h=d^{-\alpha}$ for some $\alpha\in [0.5, 1)$, together with parameters $p=h d^{1-\alpha}/\log(d)$, $s=(8/h)\sqrt{\log(d)}$, and noise distribution $\mu_{s,d,p}$, then on average there will be $\mathcal{O}(d^{1-\alpha}/\log(d))$ spikes per unit time. Under a Wasserstein-1 bound $\mathcal{O}(hd^{\alpha} \lambda_{\max}(\mathrm{Cov}(\mathcal{R}(x,\omega))))$, the bias should be controlled by an absolute constant independent of $d$ in this case. Now consider $k=\lceil d^{1-\alpha}/\log(d)\rceil$, then for the function $f_k$, due to having $\mathcal{O}(d^{1-\alpha}/\log(d))$ spikes of size $8 \sqrt{\log(d)}$ per unit time, using a similar argument as for $\alpha=1/2$ before, the heuristic predicts that the bias will become of $\mathcal{O}\left(d^{(1-\alpha)/2}\right)$.
Table~\ref{tab:w1-alpha075} illustrates this for $\alpha=0.75$, using the same coupled-chain procedure and settings as above; the fitted log-log slope is $0.131$, close to the predicted exponent $(1-\alpha)/2=0.125$.
\begin{table}[h]
\centering
\begin{tabular}{rc}
\toprule
Dimension $d$ & Bias of $f_k$ for spike distribution \\
\midrule
64      & 2.449 \\
256     & 2.768 \\
1{,}024   & 3.196 \\
4{,}096   & 3.709 \\
16{,}384  & 5.179 \\
65{,}536  & 6.067 \\
\bottomrule
\end{tabular}
\caption{Estimated bias of the $1$-Lipschitz function $f_k$ with $k=\lceil d^{1-\alpha}/\log(d)\rceil$, $\alpha=0.75$ and stepsize $h=d^{-0.75}$, estimated with the coupled-chain procedure and settings of Table~\ref{table:spike:res}; standard errors across replicas are at most $0.03$. The observed growth is consistent with the heuristic rate $\mathcal{O}(d^{(1-\alpha)/2})$ and provides numerical evidence against a dimension-uniform bound of the displayed form for this noise family.}
\label{tab:w1-alpha075}
\end{table}

In summary, the spike experiments support the qualitative message that tail behaviour can affect stationary bias, but all stationary lower-bound claims in this subsection remain heuristic. Establishing a rigorous invariant-measure lower bound for this example is left open.

\section{Bounding $W_2(\mu * \g,\g)$ when $\mu$ satisfies a Poincar{\'e} inequality}\label{sec:poincare-appendix}

As in Appendix~\ref{sec:functional}, {let} $X\sim \mu$ be centred with finite second moment, write
\[
\Sigma := \mathrm{Cov}(\mu)=\mathrm{Cov}(X),
\]
let $Z\sim\g$ be independent of $X$, set
\[
Y:=X+Z,\qquad \nu:=\mathrm{Law}(Y)=\mu * \g.
\]
{Assume throughout this section that $\mu$ is absolutely continuous with respect to Lebesgue measure and satisfies the Poincar\'e inequality \eqref{eq:poincare} with constant $C_P\in(0,\infty)$.}

{If $W\sim\rho$ is centred, a Stein kernel for $\rho$ is a measurable matrix-valued map $\tau_\rho:\mathbb{R}^d\to\mathbb{R}^{d\times d}$ such that}
\[
{
\E\langle W,\varphi(W)\rangle
=
\E\mathrm{tr}\big(\tau_\rho(W)\nabla\varphi(W)\big)
}
\]
{for all $\varphi\in C_c^\infty(\mathbb{R}^d;\mathbb{R}^d)$, with the same matrix-gradient convention used throughout. For a centred law $\rho$ admitting a square-integrable Stein kernel, define its Stein discrepancy relative to $\g$ by}
\[
{
S^2(\rho\mid\g)
:=
\inf_{\tau_\rho}
\E_\rho\|\tau_\rho-I_d\|_F^2,
}
\]
{where the infimum is over all square-integrable Stein kernels of $\rho$.}

\convPoincare*

\begin{corollary}\label{cor:CPd}
{Under the assumptions of this section,}
\[
W_2(\mu * \g, \g)\ \le\ C_P\sqrt d.
\]
\end{corollary}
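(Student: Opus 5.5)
Corollary~\ref{cor:CPd} follows from Theorem~\ref{thm:conv_poincare}, which gives $W_2(\mu*\g,\g)\le\sqrt{C_P\operatorname{tr}(\Sigma)}$. It therefore suffices to show $\operatorname{tr}(\Sigma)\le C_P d$. Then $\sqrt{C_P\operatorname{tr}(\Sigma)}\le\sqrt{C_P\cdot C_P d}=C_P\sqrt d$.

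To bound the trace, apply the Poincar\'e inequality \eqref{eq:poincare} to the linear test functions $f_u(x)=\langle u,x\rangle$ for a unit vector $u$. Since $\nabla f_u\equiv u$, we get
\[
u^\top\Sigma u=\operatorname{Var}_\mu(f_u)\le C_P\int\|u\|^2\,d\mu=C_P .
\]
So $\Sigma\preceq C_P I_d$. Taking $u=e_1,\dots,e_d$ and summing gives $\operatorname{tr}(\Sigma)\le C_P d$.

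The only technical point is that \eqref{eq:poincare} is stated for smooth $f\in L^2(\mu)$. Linear functions are smooth, and they lie in $L^2(\mu)$ because $\mu$ has finite second moment by the standing assumptions of this section. So no approximation argument is needed. If one prefers to be careful, one can also apply the inequality to truncations $\chi_R(x)\langle u,x\rangle$ with smooth cutoffs $\chi_R$ and let $R\to\infty$ by dominated convergence, but this is routine. I expect no real obstacle; the substance is entirely in Theorem~\ref{thm:conv_poincare}.
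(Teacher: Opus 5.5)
Your proposal is correct and follows essentially the same route as the paper. The paper's proof likewise applies the Poincar\'e inequality to the linear test functions $f_u(x)=\langle u,x\rangle$ to get $\Sigma\preceq C_P I_d$, hence $\operatorname{tr}(\Sigma)\le dC_P$. It then substitutes this bound into Theorem~\ref{thm:conv_poincare}, exactly as you do.
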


We will use the following results from the literature.

\begin{fact}[Existence of a Stein kernel under a Poincar\'e inequality]\label{fact_1}
By \cite[Theorem 2.4]{courtade2019stein}, if $\mu$ is centred, absolutely
continuous with respect to Lebesgue measure, has finite second moment, and
satisfies \eqref{eq:poincare}, then $\mu$ admits a Stein kernel
$\tau_\mu:\mathbb{R}^{d} \to \mathbb{R}^{d\times d}$, which satisfies
\begin{equation}\label{eq:CFP-bound-direct}
{
\E\|\tau_\mu(X)\|_F^2
=
\int \|\tau_\mu(x)\|_F^2\,\mu(dx)
}
\le
C_P\int \|x\|^{2}\mu(dx)
= C_P\mathrm{tr}(\Sigma).
\end{equation}
\end{fact}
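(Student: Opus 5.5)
The plan is to construct the Stein kernel row by row with the Riesz representation theorem on a Hilbert space of gradients, using the Poincar\'e inequality \eqref{eq:poincare} both for existence and for the Frobenius bound \eqref{eq:CFP-bound-direct}. This is essentially the argument of \cite{courtade2019stein}, reorganised so that no elliptic PDE has to be solved pointwise.

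\textbf{Step 1 (Hilbert space).} Let $\mathcal H$ be the closure in $L^2(\mu;\mathbb{R}^d)$ of the set $\{\nabla f: f\in C_c^\infty(\mathbb{R}^d)\}$, with the inner product $\langle G,H\rangle_{\mathcal H}=\int\langle G,H\rangle\,d\mu$. For each coordinate $i\in\{1,\dots,d\}$, consider the functional $\ell_i(\nabla f):=\int x_i f(x)\,\mu(dx)$.

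\textbf{Step 2 (well-definedness and boundedness of $\ell_i$).} Since $\mu$ is centred, $\int x_i\,d\mu=0$, so subtracting the mean of $f$ does not change $\ell_i$. Cauchy--Schwarz followed by \eqref{eq:poincare} then gives
\[
|\ell_i(\nabla f)|=\Big|\int x_i\big(f-\textstyle\int f\,d\mu\big)\,d\mu\Big|
\le \Sigma_{ii}^{1/2}\,\mathrm{Var}_\mu(f)^{1/2}
\le \big(C_P\Sigma_{ii}\big)^{1/2}\,\|\nabla f\|_{L^2(\mu)}.
\]
This estimate does two jobs. First, it shows that $\ell_i(\nabla f)$ depends only on $\nabla f$: if $\nabla f=0$ $\mu$-a.e., the right-hand side vanishes. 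Second, $\ell_i$ extends to a bounded linear functional on $\mathcal H$ with operator norm at most $(C_P\Sigma_{ii})^{1/2}$.

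\textbf{Step 3 (Riesz representation).} By the Riesz representation theorem there is $G_i\in\mathcal H$ such that
\[
\int\langle G_i,\nabla f\rangle\,d\mu=\int x_i f\,d\mu\quad\text{for all } f\in C_c^\infty,
\qquad
\|G_i\|_{L^2(\mu)}^2=\|\ell_i\|^2\le C_P\Sigma_{ii}.
\]
Define $\tau_\mu(x)$ as the matrix whose $i$-th row is $G_i(x)^\top$. For $\varphi=(\varphi_1,\dots,\varphi_d)\in C_c^\infty(\mathbb{R}^d;\mathbb{R}^d)$, apply the defining identity of $G_i$ with $f=\varphi_i$ and sum over $i$:
\[
\E\langle X,\varphi(X)\rangle=\sum_{i=1}^d\E\langle G_i(X),\nabla\varphi_i(X)\rangle=\E\,\mathrm{tr}\big(\tau_\mu(X)\nabla\varphi(X)\big).
\]
This is exactly the Stein identity, with the transpose convention for $\nabla\varphi$ chosen to match the paper's matrix-gradient convention.

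\textbf{Step 4 (the bound).} Summing the row bounds from Step 3,
\[
\E\|\tau_\mu(X)\|_F^2=\sum_{i=1}^d\|G_i\|_{L^2(\mu)}^2\le C_P\sum_{i=1}^d\Sigma_{ii}=C_P\,\mathrm{tr}(\Sigma),
\]
which is \eqref{eq:CFP-bound-direct}.

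The main obstacle is the functional-analytic bookkeeping in Steps 2 and 3: showing that $\ell_i$ is a well-defined functional on the completed gradient space, and that the Riesz representative is a genuine measurable, square-integrable matrix field. I expect working directly with the closure of gradients, rather than solving $-\mathcal L g_i=x_i$ for a weighted Laplacian $\mathcal L$, to avoid needing regularity of the density. Absolute continuity of $\mu$ should then only enter to make the Poincar\'e inequality and the Stein identity meaningful on smooth test functions.
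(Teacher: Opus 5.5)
Your proof is correct and follows essentially the argument behind the paper's citation: the paper takes this fact from \cite[Theorem~2.4]{courtade2019stein}, whose Lax--Milgram construction (replayed in the paper's proof of Corollary~\ref{cor:smoothed-poincare}) bounds the functional $f\mapsto\E\langle X,f(X)\rangle$ on the mean-zero Sobolev space by Cauchy--Schwarz and the Poincar\'e inequality, exactly as in your Step~2. Because the form $\int\langle\nabla f,\nabla g\rangle\,d\mu$ is symmetric, Lax--Milgram there reduces to your Riesz representation on the closure of gradients, and your row-by-row bound recovers $\E\|\tau_\mu(X)\|_F^2\le C_P\mathrm{tr}(\Sigma)$ (absolute continuity of $\mu$ is in fact never used).
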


\begin{fact}[Stein discrepancy dominates $W_2$ for Gaussian targets]\label{fact_2}
For any centred law $\rho$ {with finite second moment and finite Stein discrepancy},
\begin{equation}\label{eq:W2-le-S-direct}
W_2(\rho,\g)\ \le\ S(\rho\mid \g).
\end{equation}
This is a standard inequality; see, e.g., \cite{ledoux2015stein}.
\end{fact}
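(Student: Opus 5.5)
We plan to prove Fact~\ref{fact_2} by Ornstein--Uhlenbeck interpolation between $\rho$ and $\g$, following \cite{ledoux2015stein}, and to obtain the constant $1$ exactly. Let $W\sim\rho$ have a square-integrable Stein kernel $\tau=\tau_\rho$, let $Z\sim\g$ be independent of $W$, and for $t\geq 0$ set
\[
F_t:=e^{-t}W+\sqrt{1-e^{-2t}}\,Z,\qquad \rho_t:=\mathrm{Law}(F_t).
\]
Then $\rho_0=\rho$ and $\rho_t\to\g$ in $W_2$ as $t\to\infty$, since $W_2(\rho_t,\g)\le e^{-t}(\E\|W-Z\|^2)^{1/2}\to0$. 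For $t>0$, $\rho_t$ has a smooth positive density $p_t$. We write $h_t:=p_t/\varphi$, where $\varphi$ is the standard Gaussian density.

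\emph{Step 1: transport along the flow.} The family $(\rho_t)$ solves the Fokker--Planck equation of the OU semigroup. This equation can be written as a continuity equation $\partial_t\rho_t+\nabla\cdot(\rho_t v_t)=0$ with velocity field $v_t=-\nabla\log h_t$. By the Benamou--Brenier / Ambrosio--Gigli--Savar\'e bound on the metric derivative \cite{ambrosio2008gradient}, we then have
\[
W_2(\rho,\g)\le\int_0^\infty\|\nabla\log h_t\|_{L^2(\rho_t)}\,dt .
\]
A technical point is the behaviour near $t=0$. To handle it, we apply the bound on $[\varepsilon,T]$ and let $\varepsilon\downarrow0$ and $T\to\infty$, using continuity of $t\mapsto\rho_t$ in $W_2$.

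\emph{Step 2: Stein-kernel representation of the score.} This is the key step and the one we expect to be the main obstacle. We aim to show that for $t>0$,
\[
\nabla\log h_t(F_t)=\frac{e^{-2t}}{\sqrt{1-e^{-2t}}}\,\E\big[(\tau(W)-I_d)Z\,\big|\,F_t\big].
\]
To prove it, we test against an arbitrary $\psi\in C_c^\infty(\R^d;\R^d)$. First, Gaussian integration by parts in $Z$ converts $\E\langle\sqrt{1-e^{-2t}}\,Z,\psi(F_t)\rangle$ into a term involving $\mathrm{tr}\,\nabla\psi(F_t)$. Second, the Stein identity for $W$, applied to $\varphi(w)=\psi(e^{-t}w+\sqrt{1-e^{-2t}}z)$ with $z$ frozen, converts $\E\langle e^{-t}W,\psi(F_t)\rangle$ into $e^{-2t}\,\E\,\mathrm{tr}\big(\tau(W)\nabla\psi(F_t)\big)$. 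Combining these with the identity $\E\langle\nabla\log h_t(F_t),\psi(F_t)\rangle=\E[\mathrm{div}\,\psi(F_t)-\langle F_t,\psi(F_t)\rangle]$ gives
\[
\E\langle\nabla\log h_t(F_t),\psi(F_t)\rangle=e^{-2t}\,\E\,\mathrm{tr}\big((I_d-\tau(W))\nabla\psi(F_t)\big).
\]
A second Gaussian integration by parts in $Z$ rewrites $\nabla\psi(F_t)$ in terms of $Z\,\psi(F_t)^\top/\sqrt{1-e^{-2t}}$, which yields the displayed conditional-expectation formula; the sign convention will be fixed in the writing. Care is needed here: the Stein identity holds only for compactly supported test functions, and the conditioning is applied in a Fubini argument, which is justified by square integrability of $\tau$.

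\emph{Step 3: integrate.} By conditional Jensen and independence of $Z$ and $W$,
\[
\|\nabla\log h_t\|_{L^2(\rho_t)}\le\frac{e^{-2t}}{\sqrt{1-e^{-2t}}}\big(\E\|(\tau(W)-I_d)Z\|^2\big)^{1/2}=\frac{e^{-2t}}{\sqrt{1-e^{-2t}}}\big(\E\|\tau(W)-I_d\|_F^2\big)^{1/2}.
\]
Substituting $u=e^{-2t}$ gives
\[
\int_0^\infty\frac{e^{-2t}}{\sqrt{1-e^{-2t}}}\,dt=\int_0^1\frac{du}{2\sqrt{1-u}}=1 .
\]
Combining with Step 1, $W_2(\rho,\g)\le(\E\|\tau-I_d\|_F^2)^{1/2}$. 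Taking the infimum over all square-integrable Stein kernels of $\rho$ gives $W_2(\rho,\g)\le S(\rho\mid\g)$, as claimed.
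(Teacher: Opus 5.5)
Your proposal is correct and is exactly the Ornstein--Uhlenbeck interpolation argument of \cite{ledoux2015stein}, which the paper cites instead of giving its own proof: the score representation $\nabla\log h_t(F_t)=\frac{e^{-2t}}{\sqrt{1-e^{-2t}}}\E[(\tau(W)-I_d)Z\mid F_t]$, then conditional Jensen, then $\int_0^\infty e^{-2t}(1-e^{-2t})^{-1/2}\,dt=1$; the metric-derivative bound in your Step~1 is the paper's own Lemma~\ref{lem:ce-length} after a linear time change to $[\varepsilon,T]$. The only slip is a sign: the identity should read $\E\langle\nabla\log h_t(F_t),\psi(F_t)\rangle=\E[\langle F_t,\psi(F_t)\rangle-\mathrm{div}\,\psi(F_t)]$, which produces $\tau-I_d$ rather than $I_d-\tau$ and is harmless because only the Frobenius norm enters the bound.
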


Let $\tau_\mu$ be the Stein kernel provided by Fact~\ref{fact_1}. Define a matrix-valued function $\tau_\nu:\mathbb{R}^d\to\mathbb{R}^{d\times d}$ by
\begin{equation}\label{eq:tau-nu-def}
\tau_\nu(y):=I_d+\mathbb{E}[\tau_\mu(X)\mid Y=y].
\end{equation}

\begin{lemma}\label{lem:conv-stein}
The function $\tau_\nu$ defined by \eqref{eq:tau-nu-def} is a Stein kernel for $\nu=\mathrm{Law}(Y)=\mu * \g$.
\end{lemma}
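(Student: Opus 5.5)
The plan is to verify the Stein identity for $\nu$ directly from the Stein identities of $\mu$ and of $\g$, using independence of $X$ and $Z$ and the tower property. Fix a test field $\varphi\in C_c^\infty(\mathbb{R}^d;\mathbb{R}^d)$. Since $Y=X+Z$, we split
\[
\E\langle Y,\varphi(Y)\rangle=\E\langle X,\varphi(X+Z)\rangle+\E\langle Z,\varphi(X+Z)\rangle,
\]
and handle the two terms by conditioning on the independent summand.

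For the Gaussian term, condition on $X=x$. The map $z\mapsto\varphi(x+z)$ is smooth and compactly supported, so Gaussian integration by parts (the Stein identity for $\g$ with kernel $I_d$) gives $\E\langle Z,\varphi(x+Z)\rangle=\E\,\mathrm{tr}(\nabla\varphi(x+Z))$. Integrating over $x$ by Fubini yields $\E\,\mathrm{tr}(I_d\nabla\varphi(Y))$.

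For the $X$ term, condition on $Z=z$. The map $x\mapsto\varphi(x+z)$ again lies in $C_c^\infty$, so the defining property of $\tau_\mu$ gives $\E\langle X,\varphi(X+z)\rangle=\E\,\mathrm{tr}(\tau_\mu(X)\nabla\varphi(X+z))$. Integrating over $z$ by Fubini gives $\E\,\mathrm{tr}(\tau_\mu(X)\nabla\varphi(Y))$.

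Since $\nabla\varphi(Y)$ is $\sigma(Y)$-measurable and bounded, the tower property replaces $\tau_\mu(X)$ by $\E[\tau_\mu(X)\mid Y]$. Summing the two pieces gives
\[
\E\langle Y,\varphi(Y)\rangle=\E\,\mathrm{tr}\big(\tau_\nu(Y)\nabla\varphi(Y)\big),
\]
which is exactly the Stein identity for $\nu$ with $\tau_\nu$ from \eqref{eq:tau-nu-def}. We also check that $\nu$ is centred, since $\E Y=\E X+\E Z=0$.

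The only technical point, and the main obstacle, is integrability, which justifies Fubini and the use of conditional expectation. Because $\|\tau_\mu(X)\|_F\in L^2\subset L^1$ by \eqref{eq:CFP-bound-direct}, the conditional expectation $\E[\tau_\mu(X)\mid Y]$ is well defined, and by conditional Jensen it is square integrable, so $\tau_\nu$ is a square-integrable Stein kernel. Boundedness of $\varphi$ and $\nabla\varphi$, together with finite first moments of $X$ and $Z$, makes every integrand absolutely integrable. We fix a measurable version of $y\mapsto\E[\tau_\mu(X)\mid Y=y]$ via a regular conditional distribution.
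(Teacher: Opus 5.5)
Your proposal is correct and follows essentially the same route as the paper: you split $\E\langle Y,\varphi(Y)\rangle$ into the $Z$- and $X$-parts, apply Gaussian integration by parts conditionally on $X$ and the Stein identity of $\tau_\mu$ to the shifted field $x\mapsto\varphi(x+Z)$ conditionally on $Z$, and then use the tower property to replace $I_d+\tau_\mu(X)$ by its conditional expectation given $Y$. Your integrability justification (boundedness of $\varphi$ and $\nabla\varphi$, $\E\|X\|<\infty$, and $\E\|\tau_\mu(X)\|_F<\infty$) is the same one the paper invokes.
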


\begin{proof}
Let $\varphi\in C_c^\infty(\mathbb{R}^d;\mathbb{R}^d)$. {All expectations below are finite, since $\varphi$ and $\nabla\varphi$ are bounded, $\E\|X\|<\infty$, and $\E\|\tau_\mu(X)\|_F<\infty$.} Write $Y=X+Z$ and split
\[
\mathbb{E}\langle Y,\varphi(Y)\rangle
=\mathbb{E}\langle Z,\varphi(Y)\rangle+\mathbb{E}\langle X,\varphi(Y)\rangle.
\]
Conditioning on $X$, Gaussian integration by parts for $Z\sim\g$ gives
\[
\mathbb{E}[\langle Z,\varphi(X+Z)\rangle\mid X]
=
\mathbb{E}[\mathrm{tr}(\nabla\varphi(X+Z))\mid X].
\]
Taking expectations yields
\[
\mathbb{E}\langle Z,\varphi(Y)\rangle
=
\mathbb{E}\mathrm{tr}(\nabla\varphi(Y)).
\]

Now condition on $Z$ and define $\psi_Z(x):=\varphi(Z+x)$, so that $\nabla\psi_Z(x)=\nabla\varphi(Z+x)$. {For each fixed $Z$, the function $\psi_Z$ belongs to $C_c^\infty(\mathbb{R}^d;\mathbb{R}^d)$.} Since $\tau_\mu$ is a Stein kernel for $X\sim\mu$,
\[
\mathbb{E}[\langle X,\psi_Z(X)\rangle\mid Z]
=
\mathbb{E}[\mathrm{tr}(\tau_\mu(X)\nabla\psi_Z(X))\mid Z]
=
\mathbb{E}[\mathrm{tr}(\tau_\mu(X)\nabla\varphi(Z+X))\mid Z].
\]
Taking expectations gives
\[
\mathbb{E}\langle X,\varphi(Y)\rangle
=
\mathbb{E}\mathrm{tr}(\tau_\mu(X)\nabla\varphi(Y)),
\]
and adding the previous relations gives
\[
\mathbb{E}\langle Y,\varphi(Y)\rangle
=
\mathbb{E}\mathrm{tr}\big((I_d+\tau_\mu(X))\nabla\varphi(Y)\big).
\]
Since $\nabla\varphi(Y)$ is $\sigma(Y)$-measurable,
\begin{align*}
 \mathbb{E}\mathrm{tr}\big((I_d+\tau_\mu(X))\nabla\varphi(Y)\big)
&=
\mathbb{E}\mathrm{tr}\big(\mathbb{E}[I_d+\tau_\mu(X)\mid Y]\ \nabla\varphi(Y)\big)\\
&=
\mathbb{E}\mathrm{tr}\big(\tau_\nu(Y)\nabla\varphi(Y)\big),
\end{align*}
where the last equality uses the definition \eqref{eq:tau-nu-def}. This is precisely the Stein identity for $\nu$.
\end{proof}

\begin{proof}[Proof of Theorem~\ref{thm:conv_poincare}]
By the definition of $S(\nu\mid\g)$ and Lemma~\ref{lem:conv-stein},
\begin{equation}\label{eq:S-upper-1}
S(\nu\mid\g)
\le
\Big(\mathbb{E}\|\tau_\nu(Y)-I_d\|_F^2\Big)^{1/2}
=
\Big(\mathbb{E}\big\|\mathbb{E}[\tau_\mu(X)\mid Y]\big\|_F^2\Big)^{1/2}.
\end{equation}
By conditional Jensen,
\[
\big\|\mathbb{E}[\tau_\mu(X)\mid Y]\big\|_F^2
\le
\mathbb{E}[\|\tau_\mu(X)\|_F^2\mid Y].
\]
Taking expectations gives
\begin{equation}\label{eq:S-upper-2}
\mathbb{E}\big\|\mathbb{E}[\tau_\mu(X)\mid Y]\big\|_F^2
\le
\mathbb{E}\|\tau_\mu(X)\|_F^2.
\end{equation}
Combining \eqref{eq:S-upper-1} and \eqref{eq:S-upper-2}, we obtain
\begin{equation}\label{eq:S-upper-3}
S(\nu\mid\g)\leq \Big(\mathbb{E}\|\tau_\mu(X)\|_F^2\Big)^{1/2}.
\end{equation}
Applying \eqref{eq:CFP-bound-direct} gives
\begin{equation}\label{eq:S-upper-4}
S(\nu\mid\g)\leq \sqrt{C_P\mathrm{tr}(\Sigma)}.
\end{equation}
Finally, by \eqref{eq:W2-le-S-direct} and \eqref{eq:S-upper-4},
\[
W_2(\nu,\g)\leq S(\nu\mid\g)\leq \sqrt{C_P\mathrm{tr}(\Sigma)}.
\]
Since $\nu=\mu * \g$, this proves Theorem~\ref{thm:conv_poincare}.
\end{proof}

\begin{proof}[Proof of Corollary~\ref{cor:CPd}]
Fix $u\in\mathbb{R}^d$ and consider $f_u(x):=\langle u,x\rangle$.
Then $\nabla f_u(x)=u$ and $\|\nabla f_u\|^2=\|u\|^2$.
Since $\mathbb{E}[X]=0$,
\[
\mathrm{Var}_\mu(f_u)=\mathbb{E}\langle u,X\rangle^2=u^\top \Sigma u.
\]
Applying the Poincar\'e inequality \eqref{eq:poincare} gives
\[
u^\top \Sigma u\leq C_P\int \|u\|^2d\mu=C_P\|u\|^2
\qquad\forall u\in\mathbb{R}^d.
\]
Hence $\Sigma\preceq C_P I_d$, and so $\mathrm{tr}(\Sigma)\leq d C_P$.
Plugging this into Theorem~\ref{thm:conv_poincare} gives
\[
W_2(\mu * \g, \g)\leq \sqrt{C_P\mathrm{tr}(\Sigma)}
\leq \sqrt{C_P\cdot d C_P}=C_P\sqrt d,
\]
as required.
\end{proof}

Finally, we state the proof of the corollary under smoothed Poincar\'e inequality.
\smoothedPoincare*

\begin{proof}[Proof of Corollary \ref{cor:smoothed-poincare}]
To see this, fix $t\in(0,1/2]$ with $C_P(\mu_t)<\infty$, let $Y_t=X+Z_t\sim\mu_t$, and let $m_t(Y_t):=\mathbb{E}[X\mid Y_t]$. On the mean-zero vector-valued Sobolev space $W^{1,2}_0(\mu_t)$, consider the linear functional 
\[
\ell(f):=\mathbb{E}\langle X,f(Y_t)\rangle
=\mathbb{E}\langle m_t(Y_t),f(Y_t)\rangle.
\]
By Cauchy--Schwarz and the Poincar\'e inequality,
\[
|\ell(f)|\leq
\sqrt{\operatorname{tr}(\Sigma)}\,\|f\|_{L^2(\mu_t)}
\leq\sqrt{C_P(\mu_t)\operatorname{tr}(\Sigma)}\,
\|\nabla f\|_{L^2(\mu_t)}.
\]
The same Lax--Milgram construction used in \cite[Theorem~2.4]{courtade2019stein} therefore gives a matrix field $K_t\in L^2(\mu_t)$ such that
\[
\mathbb{E}\langle X,f(Y_t)\rangle
=\mathbb{E}\operatorname{tr}\left(K_t(Y_t)\nabla f(Y_t)\right),
\qquad
\mathbb{E}\|K_t(Y_t)\|_F^2
\leq C_P(\mu_t)\operatorname{tr}(\Sigma).
\]
Let $Z_{1-t}\sim\gsub{1-t}$ be independent and set $Y=Y_t+Z_{1-t}\sim\mu*\g$. Gaussian integration by parts for $Z_t$ and $Z_{1-t}$ shows that
\[
\tau_Y(y):=I_d+\mathbb{E}[K_t(Y_t)\mid Y=y]
\]
is a Stein kernel for $Y$. Conditional Jensen and the standard inequality $W_2\leq S$ between Wasserstein distance and Gaussian Stein discrepancy \cite{ledoux2015stein} yield
\[
W_2^2(\mu*\g,\g)
\leq \mathbb{E}\|\mathbb{E}[K_t(Y_t)\mid Y]\|_F^2
\leq C_P(\mu_t)\operatorname{tr}(\Sigma).
\]
Finally use $W_1\leq W_2$ and let $t$ approach the infimum.
\end{proof}

\subsection{A Kullback--Leibler analogue}\label{sec:KL-analogue}

The Stein-kernel argument above adapts to relative entropy at the same $C_P\mathrm{tr}(\Sigma)$ rate. For a probability measure $\rho\ll\g$ write $\mathrm{KL}(\rho\|\g):=\int\log\frac{d\rho}{d\g}d\rho$ for the Kullback--Leibler divergence (also known as the relative entropy) and
\[
I(\rho\mid\g):=\int\Big\|\nabla\log\frac{d\rho}{d\g}\Big\|^{2}d\rho
\]
for the relative Fisher information. Recalling the Stein discrepancy $S(\rho\mid\g)$ defined in Section~\ref{sec:poincare-appendix}, we will use one further result from the literature.

\begin{fact}\label{fact_3}
By \cite[Theorem~2.2]{ledoux2015stein}, for any centred probability measure $\rho$ on $\R^{d}$ with finite second moment which admits a Stein kernel and satisfies $I(\rho\mid\g)<\infty$,
\begin{equation}\label{eq:HSI}
\mathrm{KL}(\rho\|\g)\ \le\ \frac{S^{2}(\rho\mid\g)}{2}
\log\Big(1+\frac{I(\rho\mid\g)}{S^{2}(\rho\mid\g)}\Big).
\end{equation}
\end{fact}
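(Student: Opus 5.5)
This is the HSI inequality of \cite[Theorem~2.2]{ledoux2015stein}. The plan is to reproduce its proof by interpolating between $\rho$ and $\g$ along the Ornstein--Uhlenbeck flow. We use two complementary bounds on the Fisher information along the flow: one valid at short times, one at long times.

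\textbf{Setup.} Let $W\sim\rho$ and $Z\sim\g$ be independent, and set $\rho_t:=\mathrm{Law}(e^{-t}W+\sqrt{1-e^{-2t}}\,Z)$. Then $\rho_0=\rho$, $\rho_t\to\g$ as $t\to\infty$, and $d\rho_t/d\g=P_t h$, where $h=d\rho/d\g$ and $(P_t)$ is the Ornstein--Uhlenbeck semigroup. De Bruijn's identity gives $\frac{d}{dt}\mathrm{KL}(\rho_t\|\g)=-I(\rho_t\mid\g)$. Since $\mathrm{KL}(\rho_t\|\g)\to 0$ (by finite second moment and lower semicontinuity), we get
\[
\mathrm{KL}(\rho\|\g)=\int_0^\infty I(\rho_t\mid\g)\,dt .
\]

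\textbf{Short-time bound.} The commutation $\nabla P_t f=e^{-t}P_t\nabla f$ together with Cauchy--Schwarz, $|P_t\nabla h|^2\le P_t h\,P_t(|\nabla h|^2/h)$, gives the Bakry--\'Emery decay
\[
I(\rho_t\mid\g)\le e^{-2t}I(\rho\mid\g).
\]

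\textbf{Long-time bound.} This is the key step and the main obstacle. The goal is
\[
I(\rho_t\mid\g)\le \frac{e^{-4t}}{1-e^{-2t}}\,\E_\rho\|\tau_\rho-I_d\|_F^2 .
\]
First, write the relative score of $\rho_t$ via Mehler's formula:
\[
\nabla\log P_t h(x)=\frac{e^{-t}}{\sqrt{1-e^{-2t}}}\,\E\big[Z\mid F_t=x\big]-x,\qquad F_t:=e^{-t}W+\sqrt{1-e^{-2t}}Z.
\]
Next, test $\E\langle\nabla\log P_th(F_t),\varphi(F_t)\rangle$ against an arbitrary smooth $\varphi$. Expand it with Gaussian integration by parts in $Z$ and with the Stein identity for $W$, applied conditionally to $x\mapsto\varphi(e^{-t}x+\sqrt{1-e^{-2t}}Z)$. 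The identity contributions cancel, leaving
\[
\nabla\log P_t h(F_t)=\frac{e^{-2t}}{\sqrt{1-e^{-2t}}}\,\E\big[(\tau_\rho(W)-I_d)Z\mid F_t\big].
\]
Finally, conditional Jensen and $\E\|(\tau-I_d)Z\|^2=\E\|\tau-I_d\|_F^2$ give the claimed bound. Taking the infimum over Stein kernels replaces $\E_\rho\|\tau_\rho-I_d\|_F^2$ by $S^2:=S^2(\rho\mid\g)$. The delicate points are justifying the integrations by parts and the measurability of the conditional expectations; this is where the $L^2$ Stein kernel and $I<\infty$ are used.

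\textbf{Optimization.} Write $I:=I(\rho\mid\g)$ and substitute $u=e^{-2t}$. Use the short-time bound on $[0,t_0]$ and the long-time bound on $[t_0,\infty)$, and set $u_0=e^{-2t_0}$. This gives
\[
\mathrm{KL}(\rho\|\g)\le \frac{I(1-u_0)}{2}+\frac{S^2}{2}\big(-\log(1-u_0)-u_0\big).
\]
Choosing $1/(1-u_0)=1+I/S^2$ makes the linear terms cancel, since $\frac{I}{2}\cdot\frac{S^2}{I+S^2}-\frac{S^2}{2}\cdot\frac{I}{I+S^2}=0$. What remains is exactly \eqref{eq:HSI}.
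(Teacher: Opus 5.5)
Your route is the standard proof of \cite[Theorem~2.2]{ledoux2015stein}, which the paper cites without reproducing. It runs through de Bruijn's identity along the Ornstein--Uhlenbeck flow, the two bounds $I(\rho_t\mid\g)\le e^{-2t}I(\rho\mid\g)$ and $I(\rho_t\mid\g)\le\frac{e^{-4t}}{1-e^{-2t}}S^2(\rho\mid\g)$, and the split at $u_0=I/(I+S^2)$. The conditional-Jensen step and the final optimisation are correct.

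However, your displayed Mehler formula for the relative score is wrong, and with it the cancellation you claim does not occur. Write $s:=\sqrt{1-e^{-2t}}$ and let $\phi_s$ be the $\mathcal N(0,s^2I_d)$ density. The density of $F_t$ is $\E[\phi_s(x-e^{-t}W)]$, so
\[
\nabla\log P_th(x)=x-\frac1s\,\E[Z\mid F_t=x]=\E\Big[e^{-t}W-\frac{e^{-2t}}{s}Z\,\Big|\,F_t=x\Big],
\]
not $\frac{e^{-t}}{s}\E[Z\mid F_t=x]-x$. As a sanity check, for $\rho=\g$ the score must vanish: the correct formula gives $x-\frac1s\cdot sx=0$, while yours gives $-(1-e^{-t})x$.

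If you test your expression against $\varphi$, the Stein identity for $W$ and Gaussian integration by parts in $Z$ give
\[
-e^{-2t}\,\E\,\mathrm{tr}\big(\tau_\rho(W)\nabla\varphi(F_t)\big)+\big(e^{-t}-1+e^{-2t}\big)\,\E\,\mathrm{tr}\,\nabla\varphi(F_t).
\]
For $t>0$ this is not a multiple of $\E\,\mathrm{tr}((\tau_\rho-I_d)\nabla\varphi(F_t))$. The leftover $(e^{-t}-1)\E\,\mathrm{tr}\,\nabla\varphi(F_t)$ is not controlled by $S^2$, so the long-time bound does not follow from what you wrote.

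With the correct score the step goes through. The $W$ part gives $e^{-2t}\E\,\mathrm{tr}(\tau_\rho(W)\nabla\varphi(F_t))$ and the $Z$ part gives $-e^{-2t}\E\,\mathrm{tr}\,\nabla\varphi(F_t)$. You then need one further Gaussian integration by parts in $Z$, run backwards, which your sketch leaves implicit. Since $\nabla_z\varphi(e^{-t}W+sz)=s\nabla\varphi(F_t)$, it turns $e^{-2t}\E\,\mathrm{tr}((\tau_\rho(W)-I_d)\nabla\varphi(F_t))$ into $\frac{e^{-2t}}{s}\E\langle(\tau_\rho(W)-I_d)^{\top}Z,\varphi(F_t)\rangle$. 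That is your conditional-expectation identity; the transpose depends on the Jacobian convention and does not affect the Frobenius norm.

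A smaller point: lower semicontinuity only gives $\liminf_{t\to\infty}\mathrm{KL}(\rho_t\|\g)\ge0$, which is not what you need (it is what you need as $t\downarrow0$). To get $\mathrm{KL}(\rho_t\|\g)\to0$, use the Gaussian log-Sobolev inequality together with your short-time bound, $\mathrm{KL}(\rho_t\|\g)\le\frac12I(\rho_t\mid\g)\le\frac12e^{-2t}I(\rho\mid\g)$. Alternatively, use joint convexity of $\mathrm{KL}$ over a coupling of $\rho$ and $\g$, which gives a bound $\le e^{-2t}\E\|W-W'\|^2/(2s^2)\to0$ by the finite second moment.
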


The next lemma bounds the relative Fisher information of the convolution by the same Stein-kernel functional that controls the Stein discrepancy in \eqref{eq:S-upper-3}. {Although $\mu$ need not have a density, the convolution $\nu=\mu*\g$ does, and its density is smooth and strictly positive because of the Gaussian smoothing.}

\begin{lemma}\label{lem:fisher-conv}
Let $\mu$ be centred with finite second moment, and let $\tau_\mu$ be any Stein kernel for $\mu$ such that $\E\|\tau_\mu(X)\|_F^2<\infty$ for $X\sim\mu$ (e.g.\ the one provided by Fact~\ref{fact_1}). Let $Z\sim\g$ be independent of $X$, set $Y=X+Z$, and let $\nu=\mathcal L(Y)=\mu*\g$. Then
\begin{equation}\label{eq:fisher-conv}
I(\nu\mid\g)=\E\big\|\E[X\mid Y]\big\|^{2}\le\E\|\tau_\mu(X)\|_{F}^{2}.
\end{equation}
\end{lemma}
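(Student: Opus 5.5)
The plan is to write the score of $\nu=\mu*\g$ explicitly and then apply the Stein identity for $\mu$ together with a conditional Jensen step. Since $\nu$ has density $y\mapsto \E[\phi(y-X)]$, with $\phi$ the standard Gaussian density, its density relative to $\g$ is
\[
\frac{d\nu}{d\g}(y)=\E\big[\exp(\langle y,X\rangle-\tfrac12\|X\|^2)\big],
\]
as in the proof of Lemma~\ref{lem:chi2}. Differentiating under the expectation gives
\[
\nabla\log\frac{d\nu}{d\g}(y)=\frac{\E[X e^{\langle y,X\rangle-\|X\|^2/2}]}{\E[e^{\langle y,X\rangle-\|X\|^2/2}]}=\E[X\mid Y=y],
\]
because the conditional law of $X$ given $Y=y$ has density proportional to $e^{\langle y,x\rangle-\|x\|^2/2}$ with respect to $\mu$ (Bayes' formula). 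Differentiation under the integral is justified locally uniformly in $y$ by $\E\|X\|<\infty$ and the bound $e^{\langle y,x\rangle-\|x\|^2/2}\le e^{\|y\|^2/2}$. This gives the identity $I(\nu\mid\g)=\E\|\E[X\mid Y]\|^2$, which in particular is at most $\E\|X\|^2<\infty$.

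For the inequality, set $m(y):=\E[X\mid Y=y]$. I would write $\E\|m(Y)\|^2=\E\langle X,m(Y)\rangle=\E\langle X,m(X+Z)\rangle$, condition on $Z$, and apply the Stein identity for $\mu$ with test function $\psi_Z(x)=m(x+Z)$. This yields
\[
\E\|m(Y)\|^2=\E\,\mathrm{tr}\big(\tau_\mu(X)\nabla m(Y)\big).
\]
Next I would compute $\nabla m(y)=\mathrm{Cov}(X\mid Y=y)$ by differentiating the exponential family in $y$. This matrix is symmetric and positive semidefinite. The resulting bound is not immediate, so I would instead use Gaussian integration by parts in a second form.

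That second form starts from $m(Y)=Y-\E[Z\mid Y]$, equivalently from the score identity. A cleaner route is to use the Stein kernel of $\nu$ from Lemma~\ref{lem:conv-stein}, $\tau_\nu(y)=I_d+\E[\tau_\mu(X)\mid Y=y]$. For the Gaussian score $-y$ one has
\[
\E\langle -Y+\nabla\log\tfrac{d\nu}{d\g}(Y)\cdot 0,\ \varphi\rangle ,
\]
so the relevant relation is this. For any smooth $\varphi$, $\E\langle \rho_\nu(Y),\varphi(Y)\rangle=-\E\,\mathrm{tr}\nabla\varphi(Y)$, where $\rho_\nu=\nabla\log\nu$ is the score of $\nu$. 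Combining this with the Stein identity, and writing $\rho_\nu(y)+y=m(y)$, gives
\[
\E\langle m(Y),\varphi(Y)\rangle=\E\,\mathrm{tr}\big((\tau_\nu(Y)-I_d)\nabla\varphi(Y)\big)=\E\,\mathrm{tr}\big(\E[\tau_\mu(X)\mid Y]\nabla\varphi(Y)\big).
\]
Taking $\varphi=m$ leads back to the trace term. The standard bound (Ledoux--Nourdin--Peccati) instead controls $I(\nu\mid\g)$ through the heat-flow representation, so I would follow that route. I would realise $Y$ as $\sqrt{1}\,(X+Z)$ and use the representation
\[
m(Y)=\E\big[X\mid Y\big]=\E\big[(\tau_\mu(X))Z'\ \big|\ Y\big]
\]
in the sense that for all test $\varphi$, $\E\langle X,\varphi(Y)\rangle=\E\,\mathrm{tr}(\tau_\mu(X)\nabla\varphi(Y))=\E\langle \tau_\mu(X)^{\top}Z,\varphi(Y)\rangle$. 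The last equality is Gaussian integration by parts in $Z$ conditional on $X$, since $\E[\langle Z,\tau_\mu(X)\varphi(X+Z)\rangle\mid X]=\E[\mathrm{tr}(\tau_\mu(X)\nabla\varphi(Y))\mid X]$ after a transpose. Hence $m(Y)=\E[\tau_\mu(X)^\top Z\mid Y]$ almost surely, first tested against $C_c^\infty$ functions and then extended by density in $L^2(\nu)$.

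Conditional Jensen then gives
\[
\E\|m(Y)\|^2\le\E\|\tau_\mu(X)^\top Z\|^2=\E\|\tau_\mu(X)\|_F^2,
\]
using the independence of $Z$ and $X$ and $\E[ZZ^\top]=I_d$. This proves \eqref{eq:fisher-conv}. I expect the main technical point to be the density/approximation step: establishing $m(Y)=\E[\tau_\mu(X)^\top Z\mid Y]$ from an identity tested only on $C_c^\infty$ fields. Both sides lie in $L^2$, and $C_c^\infty$ is dense in $L^2(\nu)$, which settles that step. The differentiation under the integral in the first step is routine.
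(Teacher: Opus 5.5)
Your final argument is correct and is essentially the paper's proof: identify the relative score with $\E[X\mid Y=y]$, establish the representation $\E[X\mid Y]=\E[\tau_\mu(X)^\top Z\mid Y]$ (the paper's \eqref{eq:cond-mean-identity}, up to a transpose that does not affect the Frobenius norm), and conclude by conditional Jensen together with $\E[ZZ^\top]=I_d$. The only difference is how you obtain that identity: you derive it weakly (Stein identity in $X$ conditional on $Z$, Gaussian integration by parts in $Z$ conditional on $X$, then density of $C_c^\infty$ in $L^2(\nu)$), whereas the paper gets it pointwise in $y$ by applying the Stein identity to the truncated Gaussian-kernel fields $x\mapsto\chi_R(x)\phi(y-x)e_i$; you should also delete the abandoned detours, in particular the meaningless display involving ``$\cdot\,0$''.
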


\begin{proof}
Write $\phi(z):=(2\pi)^{-d/2}e^{-\|z\|^{2}/2}$ for the standard Gaussian density and
$p_\nu(y)=\int\phi(y-x)\mu(dx)>0$ for the density of $\nu$; differentiation under
the integral sign is justified by dominated convergence. Since
$\nabla_y\phi(y-x)=-(y-x)\phi(y-x)$,
\[
\nabla\log p_\nu(y)=-\E[Y-X\mid Y=y],
\qquad\text{hence}\qquad
\nabla\log\frac{d\nu}{d\g}(y)=\nabla\log p_\nu(y)+y=\E[X\mid Y=y],
\]
which proves the equality in \eqref{eq:fisher-conv}.

For the inequality, fix $y\in\R^{d}$ and apply the Stein identity of $\tau_\mu$ to
the vector fields $x\mapsto\phi(y-x)e_{i}$, $i=1,\dots,d$. These are
Schwartz-class rather than compactly supported, so we extend the identity by
truncation: take $\chi_R\in C_c^\infty(\R^{d})$ with $\chi_R\equiv1$ on
$\{\|x\|\leq R\}$, $\chi_R\equiv0$ outside $\{\|x\|\leq 2R\}$ and
$\|\nabla\chi_R\|_\infty\leq 2/R$, and apply the identity to
$x\mapsto\chi_R(x)\phi(y-x)e_{i}$. As $R\to\infty$, the left-hand side and the
$\chi_R\nabla_x\phi(y-\cdot)$ contribution to the right-hand side converge to
their untruncated counterparts by dominated convergence, with integrable dominating
functions $(2\pi)^{-d/2}|X_{i}|$ {using $\E\|X\|<\infty$} and
$c_{d}\|\tau_\mu(X)\|_{F}$ with $c_{d}:=\sup_{u\in\R^d}\|u\|\phi(u)<\infty$
(recall $\E\|\tau_\mu\|_{F}\le(\E\|\tau_\mu\|_{F}^{2})^{1/2}<\infty$); the
remaining cutoff-gradient term is bounded, using Cauchy--Schwarz and
$\|\phi\|_\infty=(2\pi)^{-d/2}$, by
$(2/R)(2\pi)^{-d/2}\E\|\tau_\mu(X)\|_{F}\to0$. Since
$\nabla_x\phi(y-x)=(y-x)\phi(y-x)$, the resulting identity reads
\[
\E\big[X\phi(y-X)\big]=\E\big[\tau_\mu(X)(y-X)\phi(y-X)\big]
\]
Dividing by $p_\nu(y)$ gives the conditional-mean identity
\begin{equation}\label{eq:cond-mean-identity}
\E[X\mid Y]=\E\big[\tau_\mu(X)Z\mid Y\big],
\qquad Z:=Y-X\sim\g .
\end{equation}
By conditional Jensen, then independence of $X$ and $Z$ together with
$\E[ZZ^{\top}]=I_{d}$,
\[
\E\big\|\E[\tau_\mu(X)Z\mid Y]\big\|^{2}
\le\E\big\|\tau_\mu(X)Z\big\|^{2}
=\E\big[\mathrm{tr}\big(\tau_\mu(X)^{\top}\tau_\mu(X)\big)\big]
=\E\|\tau_\mu(X)\|_{F}^{2}.\qedhere
\]
\end{proof}

\noindent Note the parallel with the proof of Theorem~\ref{thm:conv_poincare}: there, conditional Jensen was applied to the Stein kernel itself in \eqref{eq:S-upper-2}; here it is applied once more, to the noise $X$, through the representation \eqref{eq:cond-mean-identity}.

\convPoincareKL*

\begin{theorem}\label{thm:poincare-KL}
{Let $\mu$ be centred with finite second moment, and let $\tau_\mu$ be any Stein kernel for $\mu$ with $\E\|\tau_\mu(X)\|_{F}^{2}<\infty$.} Then
\[
\mathrm{KL}(\mu*\g\|\g)\ \le\ \frac{\log 2}{2}\E\|\tau_\mu(X)\|_{F}^{2}.
\]
In particular, under \eqref{eq:poincare}, choosing the Stein kernel provided by Fact~\ref{fact_1} yields
\[
\mathrm{KL}(\mu*\g\|\g)\ \le\ \frac{\log 2}{2}C_{P}\mathrm{tr}(\Sigma).
\]
\end{theorem}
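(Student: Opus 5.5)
We combine Fact~\ref{fact_3} with the two Stein-kernel estimates already in hand. Write $\nu=\mu*\g$ and $T:=\E\|\tau_\mu(X)\|_F^2<\infty$. The plan is to feed two bounds into the HSI inequality \eqref{eq:HSI} for $\rho=\nu$. The proof of Theorem~\ref{thm:conv_poincare} gives a Stein kernel for $\nu$, namely $\tau_\nu$ from Lemma~\ref{lem:conv-stein}. It also gives the discrepancy bound $S^2:=S^2(\nu\mid\g)\le T$, via \eqref{eq:S-upper-3}. Lemma~\ref{lem:fisher-conv} gives the Fisher-information bound $I:=I(\nu\mid\g)\le T$.

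Two things need checking. First, Lemma~\ref{lem:conv-stein} and \eqref{eq:S-upper-3} were written with the Poincaré kernel, but the argument uses only the Stein identity and square-integrability of $\tau_\mu$, so it holds for any such kernel. Second, we need $I<\infty$ and $\nu$ centred with finite second moment, which hold here.

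The key elementary step is monotonicity in $S^2$. Set $\Phi(s):=\frac{s}{2}\log(1+I/s)$ for $s>0$. Writing $u=I/s$, we have $\frac{d}{ds}\bigl[s\log(1+I/s)\bigr]=\log(1+u)-\frac{u}{1+u}\ge0$, so $\Phi$ is nondecreasing in $s$. Fact~\ref{fact_3} then gives
\[
\mathrm{KL}(\nu\|\g)\le\Phi(S^2)\le\Phi(T)=\frac{T}{2}\log\Bigl(1+\frac{I}{T}\Bigr)\le\frac{T}{2}\log 2,
\]
using $I\le T$ in the last step. This is the first claim. The second claim follows by inserting \eqref{eq:CFP-bound-direct}, $T\le C_P\mathrm{tr}(\Sigma)$, for the kernel of Fact~\ref{fact_1}.

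Degenerate cases are handled separately. If $S^2=0$, then $\nu=\g$ in the sense needed: \eqref{eq:W2-le-S-direct} gives $W_2(\nu,\g)=0$, so $\mathrm{KL}=0$. If $T=0$, then $I=0$, so $\E[X\mid Y]=0$, $\nu=\g$, and again $\mathrm{KL}=0$. If $S^2$ is an infimum that is not attained, we apply the monotonicity with any admissible kernel value $s\le T$; since $\Phi(s)\le\Phi(T)$ for every such $s$, the infimum causes no trouble.

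The main obstacle is only verifying the hypotheses of Fact~\ref{fact_3} for $\nu$ and confirming that $\tau_\nu$ is square-integrable. Square-integrability follows from conditional Jensen, exactly as in \eqref{eq:S-upper-2}. The rest is the monotonicity of $\Phi$.
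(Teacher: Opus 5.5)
Your proposal is correct and follows the paper's proof essentially step for step. You bound $S^2(\nu\mid\g)$ by $\E\|\tau_\mu(X)\|_F^2$ via Lemma~\ref{lem:conv-stein} and \eqref{eq:S-upper-3}, which hold for any square-integrable kernel. You bound $I(\nu\mid\g)$ by the same quantity via Lemma~\ref{lem:fisher-conv}, and then use monotonicity of the right-hand side of \eqref{eq:HSI} in both arguments.

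The only differences are in the degenerate cases. You dispatch $S^2=0$ through $W_2\le S$, which is slightly cleaner than reading $\mathrm{KL}=0$ off the HSI at $S^2=0$. You dispatch $T=0$ through vanishing Fisher information rather than through $\mu=\delta_0$. Both are fine, and the $T=0$ case is in any event subsumed by $S^2=0$, since $S^2\le T$.
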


The second bound is precisely Theorem~\ref{thm:conv_poincare_KL} of the main text. We emphasise that the first bound holds for any square-integrable Stein kernel, whereas the estimate $\E\|\tau_\mu(X)\|_F^2\leq C_P\mathrm{tr}(\Sigma)$ of \eqref{eq:CFP-bound-direct} is guaranteed only for the particular Stein kernel constructed in \cite[Theorem~2.4]{courtade2019stein}.

\begin{proof}
Set $B:=\E\|\tau_\mu(X)\|_{F}^{2}$. If $B=0$ then the Stein identity forces $\E[X g(X)]=0$ for all test functions $g$, so $\mu=\delta_{0}$, $\nu=\g$ and the bound is trivial; assume $B>0$. {If $S(\nu\mid\g)=0$, then the HSI inequality gives $\mathrm{KL}(\nu\|\g)=0$, and there is nothing to prove.} Assume therefore that $S^{2}(\nu\mid\g)>0$. The construction of Lemma~\ref{lem:conv-stein} and the estimate \eqref{eq:S-upper-3} apply verbatim to $\tau_\mu$ and give $S^{2}(\nu\mid\g)\leq B$, while Lemma~\ref{lem:fisher-conv} gives $I(\nu\mid\g)\leq B<\infty$. Thus \eqref{eq:HSI} applies to $\nu$ {which is centred, has finite second moment, and admits a Stein kernel by Lemma~\ref{lem:conv-stein}}. The right-hand side of \eqref{eq:HSI} is nondecreasing in $I(\nu\mid\g)$; it is also nondecreasing in $S^{2}(\nu\mid\g)$, since for fixed $a>0$ the map $s\mapsto\frac{s}{2}\log(1+a/s)$ has derivative $\frac12\big[\log(1+a/s)-\frac{a/s}{1+a/s}\big]>0$, using $\log(1+u)>u/(1+u)$ for $u>0$. Replacing both arguments by their common upper bound $B$ yields
\[
\mathrm{KL}(\nu\|\g)\ \le\ \frac{B}{2}\log\Big(1+\frac{B}{B}\Big)
\ =\ \frac{\log 2}{2}B .
\]
This proves the first bound. For the second, the Stein kernel supplied by Fact~\ref{fact_1} satisfies $B\leq C_{P}\mathrm{tr}(\Sigma)$ by \eqref{eq:CFP-bound-direct}.
\end{proof}

\begin{corollary}\label{cor:CPd-KL}
Under \eqref{eq:poincare},
$\ \mathrm{KL}(\mu*\g\|\g)\ \le\ \frac{\log 2}{2}dC_{P}^{2}$.
\end{corollary}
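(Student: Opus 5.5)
The plan is to chain Theorem~\ref{thm:poincare-KL} with the Poincar\'e covariance bound already derived in the proof of Corollary~\ref{cor:CPd}. Under \eqref{eq:poincare}, Theorem~\ref{thm:poincare-KL} gives
\[
\mathrm{KL}(\mu*\g\|\g)\le\frac{\log 2}{2}\,C_P\,\mathrm{tr}(\Sigma).
\]
This uses the Stein kernel of Fact~\ref{fact_1}, whose existence requires $\mu$ to be centred, absolutely continuous and of finite second moment. These are the standing assumptions of this section.

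It then remains to bound $\mathrm{tr}(\Sigma)$ by $dC_P$. For this I will test the Poincar\'e inequality on the linear functions $f_u(x)=\langle u,x\rangle$, exactly as in the proof of Corollary~\ref{cor:CPd}. Since $\mu$ is centred, $\mathrm{Var}_\mu(f_u)=u^\top\Sigma u$, while $\|\nabla f_u\|^2=\|u\|^2$. Hence $\Sigma\preceq C_P I_d$.

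One technicality is that $f_u$ is not bounded. I would handle this by noting that \eqref{eq:poincare} extends to $f_u\in L^2(\mu)$ by a standard truncation or approximation argument, which is legitimate because $\mu$ has finite second moment. Alternatively, I can simply reuse the step as already carried out in Corollary~\ref{cor:CPd}.

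Taking the trace gives $\mathrm{tr}(\Sigma)\le dC_P$. Substituting this into the KL bound yields
\[
\mathrm{KL}(\mu*\g\|\g)\le\frac{\log 2}{2}\,d\,C_P^2,
\]
which is the claim. There is no substantive obstacle; the only point to check is that the hypotheses of Theorem~\ref{thm:poincare-KL} match the standing assumptions, and they do.
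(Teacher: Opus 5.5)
Your proof is correct and is the paper's argument: test \eqref{eq:poincare} on linear functions to get $\Sigma\preceq C_P I_d$, hence $\mathrm{tr}(\Sigma)\le dC_P$, then substitute into Theorem~\ref{thm:poincare-KL}. The truncation step is not needed, because \eqref{eq:poincare} already applies to all smooth $f\in L^2(\mu)$, and $f_u$ is such a function since $\mu$ has finite second moment.
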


\begin{proof}
As in the proof of Corollary~\ref{cor:CPd}, applying \eqref{eq:poincare} to linear test functions gives $\Sigma\preceq C_{P}I_{d}$, hence $\mathrm{tr}(\Sigma)\leq dC_{P}$, and the claim follows from Theorem~\ref{thm:poincare-KL}.
\end{proof}

\begin{remark}
\label{rem:KL-sharp}
(i) Both the functional and the constant are essentially optimal. For Gaussian noise $\mu=\mathcal N(0,\Sigma)$ one may take the constant Stein kernel $\tau_\mu\equiv\Sigma$, so the first bound of Theorem~\ref{thm:poincare-KL} gives $\mathrm{KL}\le\frac{\log 2}{2}\|\Sigma\|_{F}^{2}$, while the exact value is, with $\lambda_{1},\dots,\lambda_{d}$ the eigenvalues of $\Sigma$,
\[
\mathrm{KL}\big(\mathcal N(0,I_{d}+\Sigma)\big\|\mathcal N(0,I_{d})\big)
=\frac12\sum_{i=1}^{d}\big(\lambda_{i}-\log(1+\lambda_{i})\big)
=\frac14\|\Sigma\|_{F}^{2}\big(1+O(\|\Sigma\|_{\mathrm{op}})\big),
\]
so the bound overshoots by the universal factor $2\log 2\approx 1.39$ in the small-noise limit. In particular, no factor $\log(1+1/C_{P})$, which \eqref{eq:HSI} would produce if paired with the crude estimate $I(\nu\mid\g)\le\E\|\E[X\mid Y]\|^{2}\le\mathrm{tr}(\Sigma)$, can appear; Lemma~\ref{lem:fisher-conv} is what removes it.

\noindent (ii) {By Talagrand's transport inequality for the standard Gaussian,}
\[
{
W_{2}^{2}(\nu,\g)\leq 2\mathrm{KL}(\nu\|\g),
}
\]
{and hence Theorem~\ref{thm:poincare-KL} gives}
\[
{
W_{2}(\nu,\g)
\le
\sqrt{\log 2}\,
\big(\E\|\tau_\mu(X)\|_F^2\big)^{1/2}.
}
\]
{Thus it recovers the Wasserstein bound at the same scale.} Conversely, for small Gaussian noise the two costs agree to leading order,
\[
W_{2}^{2}(\nu,\g)=\frac14\|\Sigma\|_{F}^{2}(1+o(1))
=\mathrm{KL}(\nu\|\g)(1+o(1)).
\]

\noindent (iii) In the setting of Theorem~\ref{theorem:bias_sgUBU}, where the rescaled stochastic-gradient noise $\mu_{x}=\mathrm{Law}\big(h e^{-h\gamma/2}\mathcal R(x,\cdot)\big)$ satisfies $C_{P}(\mu_{x})=h^{2}e^{-h\gamma}C_{P}(x)$ and $\mathrm{tr}\Sigma(\mu_{x})=h^{2}e^{-h\gamma} \mathrm{tr}\mathrm{Cov}(\mathcal R(x,\cdot))$, Theorem~\ref{thm:poincare-KL} shows that the per-step Kullback--Leibler cost of the stochastic gradient is $\mathcal{O}(h^{4})$, the square of the corresponding Wasserstein cost, {matching the order suggested by Talagrand's inequality}. Propagating this to an entropy bound on the invariant-measure bias would additionally require entropy-contraction (hypocoercive log-Sobolev) estimates for the UBU chain, since relative entropy admits no triangle inequality with which to telescope the per-step errors; we leave this direction to future work.
\end{remark}

\section{A coupling approach for bounding $W_p(\mu * \g,\g)$}

The proof proceeds in three steps. First, we bound the cost of replacing a symmetric two-component Gaussian mixture by the Gaussian centred at its midpoint. Second, for a centred finite mixture, we choose a perfect matching whose paired points have large total separation; replacing each pair by its midpoint contracts the empirical $2p$-moment. Iterating this midpoint replacement drives the mixture to the standard Gaussian, and summing the one-step costs gives the desired finite-mixture bound. Finally, a centred empirical approximation extends the result to general centred measures with finite $2p$-moment.

\subsection{{A one-dimensional two-component Gaussian mixture bound}}\label{sec:wass_1d}

For probability measures $P,Q \in \mathcal{P}_{p}(\mathbb{R})$ with CDFs $F_P(x):=P((-\infty,x])$ and $F_Q(x):=Q((-\infty,x])$, the Wasserstein-$p$ distance admits the explicit representation
\begin{equation}\label{eq:wass_1d}
W_p(P,Q)
=
\left( \int_{0}^{1} \bigl|F_P^{-1}(u) - F_Q^{-1}(u)\bigr|^p  du \right)^{1/p},
\end{equation}
where $F_P^{-1}(u):=\inf\{x\in\mathbb{R}:F_P(x)\geq u\}$ (and similarly for $Q$). This follows from the fact that in one dimension the optimal transport map for convex costs ($c(x,y)=|x-y|^p$, $p \geq 1$) is the monotone rearrangement. The coupling defined by $X=F_P^{-1}(U)$ and $Y=F_Q^{-1}(U)$ for $U \sim \mathrm{Unif}[0,1]$ is comonotone and minimises the expected transport cost; see Theorem~2.18 of \cite{villani2003topics}.

Now we consider the symmetric Gaussian mixtures
\[
P := \mathcal{N}(0,1),
\qquad
P_\delta := \tfrac12\mathcal{N}(\delta,1) + \tfrac12\mathcal{N}(-\delta,1),
\]
and let $F_P$ and $F_{P_\delta}$ denote their CDFs. We seek an upper bound on $W_p(P,P_\delta)$, given in the following lemma.

\begin{lemma}\label{lem:1Dtwocomp}
Let $P_0 = \mathcal{N}(0,1)$ and $P_\delta = \tfrac{1}{2}\mathcal{N}(\delta,1) + \tfrac{1}{2}\mathcal{N}(-\delta,1)$ for $\delta \geq 0$.
Then, for any $p \geq 1$,
\[
W_p(P_0, P_\delta) \leq K_p \delta^2,
\]
where $K_p = \max\left\{1, \frac{C_p}{2} + \frac{1}{3}\right\}$ and
$C_p = \left(\mathbb{E}_{\xi \sim \mathcal{N}(0,1)}|\xi|^p\right)^{1/p}$.
\end{lemma}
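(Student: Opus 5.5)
We will split into the regimes $\delta\ge 1$ and $\delta<1$, matching the two entries of the maximum defining $K_p$.

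\textbf{Large $\delta$.} For $\delta\geq 1$ we use the trivial coupling. Let $\xi\sim\mathcal N(0,1)$ and let $\varepsilon$ be an independent Rademacher sign. Then $\xi$ has law $P_0$ and $\xi+\varepsilon\delta$ has law $P_\delta$. Hence
\[
W_p(P_0,P_\delta)\le \big(\mathbb{E}|\varepsilon\delta|^p\big)^{1/p}=\delta\le\delta^2 .
\]
This covers the entry $1$ in $K_p$.

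\textbf{Small $\delta$: setting up the quantile coupling.} For $\delta<1$ we use the one-dimensional representation \eqref{eq:wass_1d}, i.e.\ the monotone coupling. Set $T:=F_{P_\delta}^{-1}\circ\Phi$, where $\Phi=F_{P_0}$. Then $W_p(P_0,P_\delta)=\|T(\xi)-\xi\|_{L^p}$ with $\xi\sim\mathcal N(0,1)$.

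Write $G(x,\delta):=F_{P_\delta}(x)=\tfrac12[\Phi(x-\delta)+\Phi(x+\delta)]$. This function is even in $\delta$, satisfies $G(x,0)=\Phi(x)$, and has
\[
\partial_\delta^2 G(x,\delta)=\tfrac12\big[\phi'(x-\delta)+\phi'(x+\delta)\big],\qquad \phi'(u)=-u\phi(u).
\]
Define $x_\delta(u)$ implicitly by $G(x_\delta(u),\delta)=\Phi(x_0(u))$. By symmetry the first $\delta$-derivative of $x_\delta$ vanishes at $\delta=0$. So $T(\xi)-\xi=\int_0^\delta(\delta-s)\,\ddot x_s\,ds$, where the dot denotes the $\delta$-derivative at fixed $u$.

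\textbf{Small $\delta$: the pointwise bound.} Implicit differentiation gives
\[
\dot x_s=-\frac{\partial_\delta G}{\partial_x G},
\]
and $\ddot x_s$ is a combination of ratios of mixture quantities over the mixture density $\tfrac12[\phi(x-s)+\phi(x+s)]$. The goal is a pointwise estimate of the form
\[
|\ddot x_s|\le |x_s| + c,
\]
with $x_s$ comparable to $\xi$. The motivating heuristic is the exact second-order term at $s=0$, which is $\ddot x_0=\xi$, i.e.\ $T(\xi)\approx\xi+\tfrac{\delta^2}{2}\xi$. This accounts for the term $\tfrac{C_p}{2}\delta^2$ after taking $L^p$ norms.

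The remaining $\tfrac13\delta^2$ should absorb the correction for $s>0$. This requires bounding the ratios $\phi'(x\pm s)/\bar\phi$ by $|x|+s$, which holds because each component density is bounded by twice the mixture density. It also requires bounding the squared first-derivative term, using $|\dot x_s|\le s$: the quantile of a mixture of two shifts moves at speed at most $s$. Integrating $\int_0^\delta(\delta-s)\,s\,ds=\delta^3/6$ and so on, and using $\delta<1$ to replace $\delta^3$ by $\delta^2$, should give a constant of the form $\tfrac12 C_p+\tfrac13$.

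\textbf{Main obstacle.} The hardest step is controlling $\ddot x_s$ uniformly in the tails and tracking constants sharply enough to land exactly on $\tfrac13$. If the implicit-differentiation route becomes messy, the first fallback is a direct comparison: show that $|T(x)-x|\le\tfrac{\delta^2}{2}|x|+\tfrac{\delta^2}{3}$ by proving that $F_{P_\delta}$ evaluated at $x\pm(\tfrac{\delta^2}{2}|x|+\tfrac{\delta^2}{3})$ brackets $\Phi(x)$. This reduces to a Taylor inequality for $\Phi(x-\delta)+\Phi(x+\delta)-2\Phi(x)=\int(\delta-|t|)\phi'(x+t)\,dt$. Minkowski's inequality then yields $\|T(\xi)-\xi\|_{L^p}\le(\tfrac{C_p}{2}+\tfrac13)\delta^2$.
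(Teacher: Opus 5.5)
Your large-$\delta$ case is fine: coupling $\xi$ with $\xi+\varepsilon\delta$ gives $W_p\le\delta\le\delta^2$ directly. The small-$\delta$ case, however, is the substance of the lemma, and it is not established; the bounds you propose would not produce the constant $\frac{C_p}{2}+\frac13$. The missing idea is that the quantile velocity is explicit. Since $\phi(x-s)/\phi(x+s)=e^{2sx}$,
\[
\dot x_s=-\frac{\partial_s G}{\partial_x G}(x_s,s)=\frac{\phi(x_s-s)-\phi(x_s+s)}{\phi(x_s-s)+\phi(x_s+s)}=\tanh(s x_s).
\]
This shows your claim $|\dot x_s|\le s$ is false, because in the tails $\tanh(sx_s)\to\pm1$; the correct bound is $|\dot x_s|\le\min(1,s|x_s|)$.

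Bounding the three ratios in the implicit formula for $\ddot x_s$ separately by $|x_s|+s$ only gives $|\ddot x_s|\le(|x_s|+s)(1+|\dot x_s|)^2\le4(|x_s|+s)$, i.e.\ a constant $2C_p+\frac43$. Even if you grant $|\dot x_s|\le s$, the cross term alone contributes $\frac{C_p}{3}\delta^3$. That term cannot be absorbed into $\frac13\delta^2$ uniformly in $p$. The sharp estimate $|\ddot x_s|\le|x_s|+s$ you are aiming for holds only because of the exact cancellation $\ddot x_s=\operatorname{sech}^2(sx_s)\,(x_s+s\tanh(sx_s))$, which itself comes from the $\tanh$ identity.

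With that identity the second-order expansion is unnecessary, and the paper's route is one line. Write $q_\delta-q_0=\int_0^\delta\tanh(s\,q_s)\,ds$ and apply Minkowski's integral inequality. Then use $|\tanh(sx)|\le\min(s|x|,1)$ and $\|q_s\|_{L^p(0,1)}=\bigl(\int|x|^pP_s(dx)\bigr)^{1/p}\le C_p+s$. This gives $\int_0^\delta s(C_p+s)\,ds=\frac{C_p}{2}\delta^2+\frac{\delta^3}{3}$. Your second-order version, once you have $|\ddot x_s|\le|x_s|+s$, gives the same number, $\int_0^\delta(\delta-s)(C_p+2s)\,ds$.

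Your fallback pointwise bound $|T(x)-x|\le\frac{\delta^2}{2}|x|+\frac{\delta^2}{3}$ is actually true; it follows from $\dot x_s=\tanh(sx_s)$ together with $|x_s|\le|x|+s$. But you give no proof of the bracketing. Deriving it directly from the Taylor remainder of $\Phi$ is delicate in the tails, where the shift $c=\frac{\delta^2}{2}|x|+\frac{\delta^2}{3}$ competes with $\delta$ inside the exponentially varying factors $\phi(x+c+t)$. As written, the proposal stops exactly at the step that needs proving.
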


\begin{proof}
We consider the path of measures connecting $P_0$ and $P_\delta$. For $t \in [0, \delta]$, define
\[
P_t = \tfrac{1}{2}\mathcal{N}(t,1) + \tfrac{1}{2}\mathcal{N}(-t,1).
\]
Write $F_t$ and $f_t$ for the CDF and density of $P_t$. {Since $f_t$ is smooth and strictly positive, the quantile function $q_t(u):=F_t^{-1}(u)$ is differentiable in $t$ for each $u\in(0,1)$. Implicit differentiation of $F_t(q_t(u))=u$ gives}
\[
{
\partial_t q_t(u)
=
-\frac{\partial_t F_t(q_t(u))}{f_t(q_t(u))}.
}
\]
{A direct calculation gives}
\[
{
-\frac{\partial_tF_t(x)}{f_t(x)}
=
\tanh(tx).
}
\]
{Therefore, by Minkowski's integral inequality and the quantile representation \eqref{eq:wass_1d},}
\[
{
W_p(P_0,P_\delta)
=
\|q_\delta-q_0\|_{L^p(0,1)}
\le
\int_0^\delta \|\partial_tq_t\|_{L^p(0,1)}\,dt
=
\int_0^\delta
\left(\int |\tanh(tx)|^p\,P_t(dx)\right)^{1/p}dt .
}
\]
Using the inequality $|\tanh(u)| \leq \min(|u|, 1)$, we can bound, for $\xi \sim \mathcal{N}(0,1)$,
\begin{align*}
\left(\int |\tanh(tx)|^p\,P_t(dx)\right)^{1/p}
&\leq \left( \mathbb{E}\big[ \min(t|\xi + t|, 1)^p \big] \right)^{1/p} \\
&\leq \min\left(t(C_p + t), 1\right),
\end{align*}
where the last inequality uses the triangle inequality and $C_p = (\mathbb{E}|\xi|^p)^{1/p}$.
Thus
\[
W_p(P_0, P_\delta) \leq \int_0^\delta \min\left(t(C_p + t), 1\right)dt.
\]
{If $\delta\le1$, then}
\[
{
\int_0^\delta \min\left(t(C_p+t),1\right)dt
\le
\int_0^\delta t(C_p+t)dt
=
\frac{C_p}{2}\delta^2+\frac13\delta^3
\le
\left(\frac{C_p}{2}+\frac13\right)\delta^2.
}
\]
{If $\delta\ge1$, then}
\[
{
\int_0^\delta \min\left(t(C_p+t),1\right)dt
\le
\delta
\le
\delta^2.
}
\]
Combining the two cases gives
\[
W_p(P_0, P_\delta) \leq K_p\delta^2,
\]
which is the claim.
\end{proof}

\subsection{Bounding \texorpdfstring{$p$}{p}-Wasserstein distance for centred Gaussian mixtures in $d$-dimensions}

Fix $n\in\mathbb{N}$, and points $x_1,\dots,x_{2n}\in\mathbb{R}^d$ satisfying
\begin{equation}\label{eq:mean_zero}
\sum_{i=1}^{2n}x_i=0.
\end{equation}
Define the Gaussian mixture
\[
\mu_x := \frac{1}{2n}\sum_{i=1}^{2n}\mathcal{N}(x_i, I_d).
\]
Given a perfect matching $M$ of $\{1,\dots,2n\}$ into $n$ disjoint pairs $(i,j)$, define midpoints
\[
m_{ij}:=\frac{x_i+x_j}{2}\qquad ((i,j)\in M),
\]
and form a second set of points $y_1,\dots,y_{2n}$ by setting $y_i=y_j=m_{ij}$ for each $(i,j)\in M$
(so each midpoint is repeated twice). Define
\[
\mu_y := \frac{1}{2n}\sum_{i=1}^{2n}\mathcal{N}(y_i, I_d)
=
\frac{1}{n}\sum_{(i,j)\in M}\mathcal{N}\left(\frac{x_i+x_j}{2}, I_d\right).
\]
For $q\geq 1$, write the empirical $q$-moment
\[
\Phi_q(x):=\frac{1}{2n}\sum_{i=1}^{2n}\|x_i\|^q,
\]
and the energy of the perfect matching $M$ by
\[
S_{2p}(M):=\sum_{(i,j)\in M}\|x_i-x_j\|^{2p}.
\]
\paragraph{{A large-separation perfect matching.}}
\begin{theorem}\label{thm:large_energy}
Assume \eqref{eq:mean_zero} and $p\geq 1$. Then there exists a perfect matching $M$ such that
\begin{equation}\label{eq:large_energy}
S_{2p}(M)\ge\frac{n}{2n-1}\sum_{i=1}^{2n}\|x_i\|^{2p}
\ge\frac{1}{2}\sum_{i=1}^{2n}\|x_i\|^{2p}.
\end{equation}
\end{theorem}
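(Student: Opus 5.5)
The plan is a probabilistic averaging argument: draw a perfect matching $M$ of $\{1,\dots,2n\}$ uniformly at random among all perfect matchings, compute $\mathbb{E}[S_{2p}(M)]$ exactly, and show that it is already at least $\frac{n}{2n-1}\sum_i\|x_i\|^{2p}$. Some matching then attains at least the average, which gives the first inequality in \eqref{eq:large_energy}. The second inequality is just $\frac{n}{2n-1}\geq\frac12$.

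First I will use the symmetry of the uniform matching. For each unordered pair $\{i,j\}$ with $i\neq j$, the probability that $\{i,j\}\in M$ is $\frac{1}{2n-1}$, since the partner of $i$ is uniform among the other $2n-1$ indices. By linearity of expectation,
\[
\mathbb{E}\,S_{2p}(M)=\frac{1}{2n-1}\sum_{i<j}\|x_i-x_j\|^{2p}
=\frac{1}{2(2n-1)}\sum_{i,j=1}^{2n}\|x_i-x_j\|^{2p},
\]
where the diagonal terms of the full double sum vanish.

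Second, I will bound the double sum from below using the centring \eqref{eq:mean_zero} and convexity. Fix $i$. The map $y\mapsto\|x_i-y\|^{2p}$ is convex because $2p\geq1$, so Jensen's inequality over the empirical measure $\frac1{2n}\sum_j\delta_{x_j}$ gives
\[
\frac{1}{2n}\sum_{j=1}^{2n}\|x_i-x_j\|^{2p}\;\geq\;\Big\|x_i-\frac1{2n}\sum_{j}x_j\Big\|^{2p}=\|x_i\|^{2p}.
\]
Summing over $i$ yields $\sum_{i,j}\|x_i-x_j\|^{2p}\geq 2n\sum_i\|x_i\|^{2p}$. Hence
\[
\mathbb{E}\,S_{2p}(M)\geq\frac{2n}{2(2n-1)}\sum_i\|x_i\|^{2p}=\frac{n}{2n-1}\sum_i\|x_i\|^{2p}.
\]

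No step here is a real obstacle. The only point needing care is the pair-inclusion probability $\frac1{2n-1}$, which follows from the uniform-partner observation. One should also make sure to apply Jensen pointwise in $i$ with the mean-zero condition, rather than trying to compare $\|x_i-x_j\|^{2p}$ with $\|x_i\|^{2p}+\|x_j\|^{2p}$ directly, since that comparison fails for $p>1$ without centring.
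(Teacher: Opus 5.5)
Your proposal is correct and follows essentially the same argument as the paper. Both draw a uniformly random perfect matching, use that each pair is included with probability $\frac{1}{2n-1}$, and lower-bound $\sum_{i\neq j}\|x_i-x_j\|^{2p}$ by $2n\sum_i\|x_i\|^{2p}$ via Jensen's inequality for fixed $i$ together with the centring condition.
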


\begin{proof}
Fix $i$ and let $V$ be uniform on $\{x_1,\dots,x_{2n}\}$. Since $v\mapsto \|x_i-v\|^{2p}$ is convex,
Jensen's inequality gives
\[
\frac{1}{2n}\sum_{j=1}^{2n}\|x_i-x_j\|^{2p}=\mathbb{E}\|x_i-V\|^{2p}
\geq \|x_i-\mathbb{E}V\|^{2p}=\|x_i\|^{2p},
\]
using $\mathbb{E}V=0$ from \eqref{eq:mean_zero}. Summing over $i$ yields
\[
\sum_{i\neq j}\|x_i-x_j\|^{2p}\ge2n\sum_{i=1}^{2n}\|x_i\|^{2p}.
\]
Now let $M$ be a uniformly random perfect matching. For each $i\neq j$, the unordered pair $\{i,j\}$
appears in $M$ with probability $1/(2n-1)$. Hence
\[
\mathbb{E}S_{2p}(M)=\frac{1}{2(2n-1)}\sum_{i\neq j}\|x_i-x_j\|^{2p}
\geq \frac{n}{2n-1}\sum_{i=1}^{2n}\|x_i\|^{2p},
\]
and therefore some matching attains at least this value, proving \eqref{eq:large_energy}.
\end{proof}

\paragraph{Midpoint contraction of the large energy perfect matching.}

We use the standard uniform convexity inequality to prove contraction of the $2p$-th moment:
for $q\geq 2$ and all $u,v\in\mathbb{R}^d$,
\begin{equation}\label{eq:clarkson}
\left\|\frac{u+v}{2}\right\|^{q}+\left\|\frac{u-v}{2}\right\|^{q}
\le\frac12\left(\|u\|^{q}+\|v\|^{q}\right).
\end{equation}

\begin{lemma}\label{lem:moment_contraction}
Fix $p\geq 1$ and set $q:=2p\geq 2$. Let $M$ be a perfect matching and form $y_1,\dots,y_{2n}$ by midpoint
replacement along $M$. If
\begin{equation}\label{eq:energy_half}
S_{2p}(M)\ge\frac12\sum_{i=1}^{2n}\|x_i\|^{2p},
\end{equation}
then
\begin{equation}\label{eq:moment_contract}
\Phi_{2p}(y)\le\left(1-2^{-2p}\right)\Phi_{2p}(x).
\end{equation}
\end{lemma}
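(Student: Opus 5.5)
The plan is to rewrite the new empirical $2p$-moment as a sum over the pairs of $M$, apply the uniform convexity inequality \eqref{eq:clarkson} to each pair, and then use the energy hypothesis \eqref{eq:energy_half} to extract a definite fraction of $\Phi_{2p}(x)$.

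\emph{Step 1 (rewrite).} Each midpoint appears twice among the $y_i$, so
\[
\Phi_{2p}(y)=\frac{1}{2n}\sum_{(i,j)\in M}2\left\|\frac{x_i+x_j}{2}\right\|^{2p}.
\]

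\emph{Step 2 (per-pair bound).} For each pair $(i,j)\in M$, apply \eqref{eq:clarkson} with $q=2p\ge 2$, $u=x_i$, $v=x_j$. This gives
\[
2\left\|\frac{x_i+x_j}{2}\right\|^{2p}\le \|x_i\|^{2p}+\|x_j\|^{2p}-2\left\|\frac{x_i-x_j}{2}\right\|^{2p}
=\|x_i\|^{2p}+\|x_j\|^{2p}-2^{1-2p}\|x_i-x_j\|^{2p}.
\]

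\emph{Step 3 (sum over $M$).} Since $M$ is a perfect matching, each index appears exactly once, so summing Step 2 over $M$ yields
\[
\sum_{(i,j)\in M}2\left\|\frac{x_i+x_j}{2}\right\|^{2p}\le \sum_{i=1}^{2n}\|x_i\|^{2p}-2^{1-2p}S_{2p}(M).
\]

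\emph{Step 4 (use the energy hypothesis).} By \eqref{eq:energy_half}, $S_{2p}(M)\ge\frac12\sum_{i}\|x_i\|^{2p}$. Substituting into Step 3 bounds the right-hand side by
\[
\left(1-2^{1-2p}\cdot\tfrac12\right)\sum_{i=1}^{2n}\|x_i\|^{2p}=(1-2^{-2p})\sum_{i=1}^{2n}\|x_i\|^{2p}.
\]
Dividing by $2n$ and using Step 1 gives exactly \eqref{eq:moment_contract}.

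The only genuine input is the inequality \eqref{eq:clarkson} itself, which the paper states as standard. If it needed justification, I would derive it from Clarkson's inequality for $L^q$ with $q\ge2$ specialised to Euclidean space. In the Hilbert case this follows from the parallelogram law
\[
\left\|\tfrac{u+v}{2}\right\|^2+\left\|\tfrac{u-v}{2}\right\|^2=\tfrac12\bigl(\|u\|^2+\|v\|^2\bigr),
\]
followed by two applications of convexity. First, $a^{s}+b^{s}\le(a+b)^{s}$ for $a,b\ge0$ and $s=q/2\ge1$, applied with $a=\|\tfrac{u+v}{2}\|^2$ and $b=\|\tfrac{u-v}{2}\|^2$. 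Second, the power-mean inequality $\bigl(\tfrac{\alpha+\beta}{2}\bigr)^{s}\le\tfrac{\alpha^{s}+\beta^{s}}{2}$, applied with $\alpha=\|u\|^2$ and $\beta=\|v\|^2$. Apart from this, the remaining steps are bookkeeping, and I expect no obstacle.
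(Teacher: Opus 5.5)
Your proof is correct and follows the paper's argument essentially line for line: apply \eqref{eq:clarkson} to each pair of $M$, sum using that each index appears exactly once, and invoke \eqref{eq:energy_half} to obtain the factor $1-2^{-2p}$. Your extra derivation of \eqref{eq:clarkson} is also correct. It uses the parallelogram law, superadditivity of $t\mapsto t^{q/2}$ and the power-mean inequality, and fills in a step the paper only cites as standard.
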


\begin{proof}
Apply \eqref{eq:clarkson} with $q=2p$, $u=x_i$, $v=x_j$, and note that $m_{ij}=(x_i+x_j)/2$. We get
\[
\|m_{ij}\|^{2p}+\frac{1}{2^{2p}}\|x_i-x_j\|^{2p}\leq \frac12\left(\|x_i\|^{2p}+\|x_j\|^{2p}\right).
\]
Summing over $(i,j)\in M$ yields
\[
\sum_{(i,j)\in M}\|m_{ij}\|^{2p}+\frac{1}{2^{2p}}S_{2p}(M)\leq \frac12\sum_{i=1}^{2n}\|x_i\|^{2p}.
\]
Using \eqref{eq:energy_half} gives
\[
\sum_{(i,j)\in M}\|m_{ij}\|^{2p}
\le
\left(\frac12-\frac{1}{2^{2p+1}}\right)\sum_{i=1}^{2n}\|x_i\|^{2p},
\]
then because each midpoint is repeated twice among $y_1,\dots,y_{2n}$ we have
\[
\Phi_{2p}(y)=\frac{1}{2n}\sum_{i=1}^{2n}\|y_i\|^{2p}
=\frac{1}{n}\sum_{(i,j)\in M}\|m_{ij}\|^{2p}
\le
\left(1-2^{-2p}\right)\Phi_{2p}(x),
\]
as required.
\end{proof}

\paragraph{One-step Wasserstein coupling for midpoint replacement.}

\begin{corollary}\label{cor:two_gaussian}
Let $a,b\in\mathbb{R}^d$, and $p\geq 1$. Define
\[
\nu:=\tfrac12\mathcal{N}(a, I_d)+\tfrac12\mathcal{N}(b, I_d),
\qquad
\tilde\nu:=\mathcal{N}\left(\tfrac{a+b}{2}, I_d\right),
\]
then
\begin{equation}\label{eq:two_gaussian_Wp}
W_p(\nu,\tilde\nu)\leq \frac{1}{4}K_{p}\|a-b\|^{2}.
\end{equation}
\end{corollary}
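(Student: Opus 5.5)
The plan is to reduce to the one-dimensional Lemma~\ref{lem:1Dtwocomp} by translation invariance and an orthogonal change of coordinates. Both $W_p$ and the Gaussian convolution structure are invariant under translations and orthogonal maps of $\mathbb{R}^d$. First I translate by $-\tfrac{a+b}{2}$, which reduces the claim to the case $a=\delta e$, $b=-\delta e$ with $\delta=\tfrac12\|a-b\|$ and $e$ a unit vector. If $a=b$ the two measures coincide and there is nothing to prove, so I assume $\delta>0$. Next I rotate so that $e=e_1$. After these reductions,
\[
\nu=\Big(\tfrac12\mathcal N(\delta,1)+\tfrac12\mathcal N(-\delta,1)\Big)\otimes\mathcal N(0,I_{d-1}),\qquad
\tilde\nu=\mathcal N(0,1)\otimes\mathcal N(0,I_{d-1}).
\]

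I then couple coordinate-wise. Let $(U,U')$ be an optimal coupling of $P_\delta$ and $P_0$ in one dimension, given by the quantile coupling \eqref{eq:wass_1d}. Let $G\sim\mathcal N(0,I_{d-1})$ be independent of $(U,U')$, and use the same $G$ in both marginals. Then $(U,G)\sim\nu$ and $(U',G)\sim\tilde\nu$, and the displacement is $\|(U,G)-(U',G)\|=|U-U'|$. Hence
\[
W_p(\nu,\tilde\nu)\le \big(\mathbb E|U-U'|^p\big)^{1/p}=W_p(P_\delta,P_0)\le K_p\delta^2,
\]
where the last step is Lemma~\ref{lem:1Dtwocomp}.

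Finally, substituting $\delta^2=\tfrac14\|a-b\|^2$ gives $W_p(\nu,\tilde\nu)\le \tfrac14K_p\|a-b\|^2$, which is \eqref{eq:two_gaussian_Wp}. There is no real obstacle here. The only points to check are that the Gaussian factor $\mathcal N(0,I_d)$ splits into independent components along $e_1$ and its orthogonal complement, and that an orthogonal map preserves both the Euclidean cost and the standard Gaussian. Both facts are routine.
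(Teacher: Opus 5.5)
Your proposal is correct and follows the paper's proof essentially verbatim. Like the paper, you translate by the midpoint $\tfrac{a+b}{2}$, split $\mathbb{R}^d$ into the direction $(a-b)/\|a-b\|$ and its orthogonal complement, share the orthogonal Gaussian component in the coupling, and apply Lemma~\ref{lem:1Dtwocomp} with $\delta=\tfrac12\|a-b\|$ (rotating this direction to $e_1$ is equivalent to the paper's extension of it to an orthonormal basis).
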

\begin{proof}
If $a=b$ there is nothing to prove. Otherwise, let $m:=(a+b)/2$ and $u:=(a-b)/2$, so that $\nu=\tfrac12\mathcal{N}(m+u,I_d)+\tfrac12\mathcal{N}(m-u,I_d)$ and
$\tilde\nu=\mathcal{N}(m,I_d)$. By translation invariance of $W_p$ we may subtract $m$ and assume $m=0$, i.e.
\[
\nu=\tfrac12\mathcal{N}(u,I_d)+\tfrac12\mathcal{N}(-u,I_d),\qquad \tilde\nu=\mathcal{N}(0,I_d).
\]
If $u=0$ the claim is trivial, so assume $u\neq 0$. Let $e:=u/\|u\|$ and extend $e$ to an orthonormal basis of $\mathbb{R}^d$.
Write any $z\in\mathbb{R}^d$ as $z=se+w$ with $s\in\mathbb{R}$ and $w\in e^\perp$. Under $\nu$ we can represent
\[
Z = Se + W,\qquad S\sim \tfrac12\mathcal{N}(\|u\|,1)+\tfrac12\mathcal{N}(-\|u\|,1),\qquad
W\sim \mathcal{N}(0, I_{d-1}),
\]
with $S$ independent of $W$. Under $\tilde\nu$ we can represent
\[
\tilde Z = \tilde S e + W,\qquad \tilde S\sim \mathcal{N}(0,1),
\]
using the same $W$. Thus any coupling of $(S,\tilde S)$ induces a coupling of $(Z,\tilde Z)$ with
\[
\|Z-\tilde Z\|=\|(S-\tilde S)e\|=|S-\tilde S|,
\]
and therefore
\[
W_p(\nu,\tilde\nu)\leq W_p\left(\tfrac12\mathcal{N}(\|u\|,1)+\tfrac12\mathcal{N}(-\|u\|,1),\ \mathcal{N}(0,1)\right).
\]
Now we apply Lemma~\ref{lem:1Dtwocomp} with $\delta=\|u\|$, which yields
\[
W_p(\nu,\tilde\nu)\leq K_p\left(\|u\|\right)^2
=K_p\left(\frac{\|a-b\|}{2}\right)^2,
\]
as required.
\end{proof}

\begin{proposition}\label{prop:one_step_mixture}
Let $M$ be a perfect matching on $\{1,\dots,2n\}$ and let $y$ be obtained from $x$ by midpoint replacement along $M$, i.e.\ for each $(i,j)\in M$ we set
\[
y_i=y_j=\frac{x_i+x_j}{2}.
\]
Define the Gaussian mixtures
\[
\mu_x := \frac1{2n}\sum_{i=1}^{2n}\mathcal{N}(x_i, I_d),
\qquad
\mu_y := \frac1{2n}\sum_{i=1}^{2n}\mathcal{N}(y_i, I_d)
      = \frac1n\sum_{(i,j)\in M}\mathcal{N}\left(\frac{x_i+x_j}{2}, I_d\right).
\]
Then for every $p\geq 1$,
\begin{equation}\label{eq:one_step_quadratic}
W_p(\mu_x,\mu_y)
\leq
\frac{K_p}{4}
\left(\frac1n\sum_{(i,j)\in M}\|x_i-x_j\|^{2p}\right)^{1/p}
\leq
K_p\Phi_{2p}(x)^{1/p},
\end{equation}
where $\Phi_{2p}(x):=\frac1{2n}\sum_{i=1}^{2n}\|x_i\|^{2p}$.
\end{proposition}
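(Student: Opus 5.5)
The plan is to build an explicit coupling pair by pair and then combine the pairs. Write $\mu_x=\frac1n\sum_{(i,j)\in M}\nu_{ij}$ with $\nu_{ij}:=\tfrac12\mathcal N(x_i,I_d)+\tfrac12\mathcal N(x_j,I_d)$, and $\mu_y=\frac1n\sum_{(i,j)\in M}\tilde\nu_{ij}$ with $\tilde\nu_{ij}:=\mathcal N(\frac{x_i+x_j}{2},I_d)$. Both measures are then uniform mixtures over the same index set $M$, with matching weights $1/n$. The joint convexity of $W_p^p$ under mixtures with common weights (equivalently \cite[Theorem~4.8]{villani2009optimal}, or simply gluing optimal couplings component by component) gives
\[
W_p^p(\mu_x,\mu_y)\le\frac1n\sum_{(i,j)\in M}W_p^p(\nu_{ij},\tilde\nu_{ij}).
\]
Concretely, we pick a uniformly random pair $(i,j)\in M$ and then sample from an optimal coupling of $\nu_{ij}$ and $\tilde\nu_{ij}$. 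The first marginal of the resulting coupling is $\mu_x$ and the second is $\mu_y$.

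Each term is controlled by Corollary~\ref{cor:two_gaussian} with $a=x_i$, $b=x_j$, which gives $W_p(\nu_{ij},\tilde\nu_{ij})\le\frac{K_p}{4}\|x_i-x_j\|^2$. Raising this to the $p$-th power, averaging over $M$ and taking the $1/p$ root yields
\[
W_p(\mu_x,\mu_y)\le\frac{K_p}{4}\Big(\frac1n\sum_{(i,j)\in M}\|x_i-x_j\|^{2p}\Big)^{1/p},
\]
which is the first inequality in \eqref{eq:one_step_quadratic}.

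For the second inequality, we use convexity of $t\mapsto t^{2p}$:
\[
\|x_i-x_j\|^{2p}\le 2^{2p-1}\big(\|x_i\|^{2p}+\|x_j\|^{2p}\big).
\]
Every index appears in exactly one pair of $M$, so summing over $M$ gives
\[
\sum_{(i,j)\in M}\|x_i-x_j\|^{2p}\le 2^{2p-1}\sum_{i=1}^{2n}\|x_i\|^{2p}=2^{2p-1}\cdot 2n\,\Phi_{2p}(x).
\]
Dividing by $n$ bounds the bracket by $4^p\Phi_{2p}(x)$. Its $1/p$ root is then at most $4\,\Phi_{2p}(x)^{1/p}$, and the prefactor $K_p/4$ turns this into $K_p\Phi_{2p}(x)^{1/p}$.

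Neither step is a real obstacle. The only point that needs care is the mixture-coupling step: the two mixtures must be indexed compatibly, with $\mu_y$ written as a uniform mixture over the pairs of $M$ rather than over the $2n$ repeated points. Once they are, the gluing is routine, and everything else is Corollary~\ref{cor:two_gaussian} plus elementary convexity.
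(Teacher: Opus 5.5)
Your proposal is correct and follows essentially the same route as the paper: both mixtures are written as uniform mixtures over the pairs of $M$, each pair is coupled via Corollary~\ref{cor:two_gaussian}, and the couplings are glued by choosing a uniformly random pair. The second inequality then follows, as in the paper, from $\|x_i-x_j\|^{2p}\le 2^{2p-1}(\|x_i\|^{2p}+\|x_j\|^{2p})$ summed over $M$, and your arithmetic checks out.
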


\begin{proof}
Fix $(i,j)\in M$ and set
\[
\nu_{ij}:=\tfrac12\mathcal{N}(x_i, I_d)+\tfrac12\mathcal{N}(x_j, I_d),
\qquad
\tilde\nu_{ij}:=\mathcal{N}\left(\tfrac{x_i+x_j}{2},I_d\right).
\]
By Corollary~\ref{cor:two_gaussian}, there exists a coupling $(X_{ij},Y_{ij})$ with
$X_{ij}\sim \nu_{ij}$ and $Y_{ij}\sim \tilde\nu_{ij}$ such that
\[
\left(\mathbb{E}\|X_{ij}-Y_{ij}\|^p\right)^{1/p}
\leq K_p\frac{\|x_i-x_j\|^{2}}{4}.
\]

Now define a coupling $(Z,W)$ of $(\mu_x,\mu_y)$ by first sampling a pair $(I,J)$ uniformly from $M$
(i.e.\ $\mathbb{P}((I,J)=(i,j))=1/n$ for each $(i,j)\in M$), and then, conditional on $(I,J)=(i,j)$,
sampling $(Z,W)$ according to the coupling $(X_{ij},Y_{ij})$. By construction, $Z\sim\mu_x$ and
$W\sim\mu_y$. Moreover, by the tower property,
\[
\mathbb{E}\|Z-W\|^p
=
\frac1n\sum_{(i,j)\in M}\mathbb{E}\|X_{ij}-Y_{ij}\|^p
\le
\frac1n\sum_{(i,j)\in M}
\Bigl[K_p\frac{\|x_i-x_j\|^2}{4}\Bigr]^p.
\]
Taking $p$th roots and using the definition of $W_p$ gives the first bound in \eqref{eq:one_step_quadratic}.
Finally, for each pair $(i,j)$ we have $\|x_i-x_j\|\leq \|x_i\|+\|x_j\|$ and hence
\[
\|x_i-x_j\|^{2p}\leq 2^{2p-1}\left(\|x_i\|^{2p}+\|x_j\|^{2p}\right).
\]
Summing over $(i,j)\in M$ and using that each index appears exactly once yields
\[
\frac1n\sum_{(i,j)\in M}\|x_i-x_j\|^{2p}
\le
2^{2p}\Phi_{2p}(x),
\]
which substituted above gives the final bound in \eqref{eq:one_step_quadratic}.
\end{proof}

\paragraph{Chaining one-step couplings to the base Gaussian.}
We now iterate the midpoint replacement procedure to compare the original mixture with the standard Gaussian.
Let $x^{(0)}:=x$. Given $x^{(t)}=(x^{(t)}_1,\dots,x^{(t)}_{2n})$, choose a perfect matching $M_t$ such that
\begin{equation}\label{eq:choose_Mt}
S_{2p}(M_t)\ge\frac12\sum_{i=1}^{2n}\|x^{(t)}_i\|^{2p},
\end{equation}
which is possible by Theorem~\ref{thm:large_energy}. Let $x^{(t+1)}$ be obtained from $x^{(t)}$ by midpoint replacement along $M_t$ (with repetition, so $x^{(t+1)}$ again has $2n$ points). {The total sum is preserved by this operation, since each pair $(i,j)$ is replaced by two copies of $(x^{(t)}_{i} + x^{(t)}_{j})/2$. Hence}
\[
{\sum^{2n}_{i=1}x^{(t)}_{i} = 0 \qquad \textnormal{for all } t \geq 0.}
\]

Then for each $t\geq 0$, define the Gaussian mixtures and empirical $2p$-th moment by
\[
\mu_t := \frac{1}{2n}\sum_{i=1}^{2n}\mathcal{N}(x^{(t)}_i, I_d),
\qquad
\Phi_t := \Phi_{2p}(x^{(t)})=\frac{1}{2n}\sum_{i=1}^{2n}\|x^{(t)}_i\|^{2p}.
\]
\begin{theorem}\label{thm:chaining}
Assume \eqref{eq:mean_zero}, fix $p\geq 1$ and let $K_p$ be the constant from Corollary~\ref{cor:two_gaussian}, then
\begin{equation}\label{eq:final_bound_corrected}
W_p\left(\mu_0,\g\right)
\le
\frac{K_{p}}{1-\left(1-2^{-2p}\right)^{1/p}}
\Phi_{2p}(x)^{1/p}.
\end{equation}
Equivalently we have
\[
W_p\left(\frac{1}{2n}\sum_{i=1}^{2n}\mathcal{N}(x_i, I_d),\g\right)
\le
\frac{K_p}{1-\left(1-2^{-2p}\right)^{1/p}}
\left(\frac{1}{2n}\sum_{i=1}^{2n}\|x_i\|^{2p}\right)^{1/p}.
\]
\end{theorem}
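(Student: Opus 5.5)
We prove Theorem~\ref{thm:chaining} by telescoping along the midpoint-replacement sequence $\mu_0,\mu_1,\mu_2,\dots$ built just before the statement. The triangle inequality gives, for each $T\ge 1$,
\[
W_p(\mu_0,\g)\le \sum_{t=0}^{T-1} W_p(\mu_t,\mu_{t+1}) + W_p(\mu_T,\g).
\]

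We bound the summands using the one-step estimate from Proposition~\ref{prop:one_step_mixture}. Applied to the configuration $x^{(t)}$ and the matching $M_t$, it gives
\[
W_p(\mu_t,\mu_{t+1})\le K_p\Phi_t^{1/p}.
\]
The sum $\sum_i x^{(t)}_i$ is preserved and hence stays zero. The matching $M_t$ was chosen via Theorem~\ref{thm:large_energy} to satisfy \eqref{eq:choose_Mt}. So Lemma~\ref{lem:moment_contraction} applies at every step and yields
\[
\Phi_{t+1}\le (1-2^{-2p})\Phi_t,
\qquad\text{hence}\qquad
\Phi_t^{1/p}\le \rho^t\,\Phi_0^{1/p},\quad \rho:=(1-2^{-2p})^{1/p}<1.
\]
Consequently
\[
\sum_{t=0}^{T-1}W_p(\mu_t,\mu_{t+1})\le K_p\Phi_0^{1/p}\sum_{t\ge 0}\rho^t=\frac{K_p}{1-\rho}\Phi_{2p}(x)^{1/p}.
\]
This is exactly the constant in \eqref{eq:final_bound_corrected}.

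Next we show the remainder $W_p(\mu_T,\g)$ tends to zero as $T\to\infty$. Couple $\mu_T$ with $\g$ by using a common Gaussian $Z$ and an index $I$ chosen uniformly from $\{1,\dots,2n\}$, i.e.\ compare $x^{(T)}_I+Z$ with $Z$. This gives
\[
W_p(\mu_T,\g)\le \Big(\tfrac{1}{2n}\sum_i\|x^{(T)}_i\|^p\Big)^{1/p}\le \Phi_T^{1/(2p)}
\]
by Jensen (or Hölder). Since $\Phi_T\to 0$ geometrically, the remainder vanishes. Letting $T\to\infty$ in the telescoped inequality proves the theorem.

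The main obstacle has already been handled upstream: a perfect matching must exist that simultaneously gives a geometric contraction of $\Phi_{2p}$ (Theorem~\ref{thm:large_energy} with Clarkson's inequality \eqref{eq:clarkson}) and a one-step cost that is quadratic in the separations. The quadratic one-step cost is Corollary~\ref{cor:two_gaussian}, which rests on Lemma~\ref{lem:1Dtwocomp}. In this proof the only care needed is to check that the hypotheses of Lemma~\ref{lem:moment_contraction} are re-established at every iteration. This requires the preserved zero mean, so that Theorem~\ref{thm:large_energy} can be reapplied, and the retention of exactly $2n$ points, with repetitions, so that a perfect matching always exists.
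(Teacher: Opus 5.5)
Your proposal is correct and follows the paper's proof essentially step for step. It uses the same telescoping triangle inequality, the one-step bound $W_p(\mu_t,\mu_{t+1})\le K_p\Phi_t^{1/p}$ from Proposition~\ref{prop:one_step_mixture} combined with the geometric decay $\Phi_t\le(1-2^{-2p})^t\Phi_0$ from Lemma~\ref{lem:moment_contraction}, and the same common-Gaussian coupling giving $W_p(\mu_T,\g)\le\Phi_T^{1/(2p)}\to 0$. The only cosmetic difference is that you carry $W_p(\mu_T,\g)$ as an explicit remainder in the triangle inequality, whereas the paper first bounds $W_p(\mu_0,\mu_T)$ and then passes to the limit using $\mu_T\to\g$ in $W_p$.
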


\begin{proof}
By Lemma~\ref{lem:moment_contraction} and the choice \eqref{eq:choose_Mt},
\[
\Phi_{t+1}\leq (1-2^{-2p})\Phi_t,
\]
hence
\begin{equation}\label{eq:Phi_decay}
\Phi_t\leq (1-2^{-2p})^t \Phi_0.
\end{equation}
By Proposition~\ref{prop:one_step_mixture},
\[
W_p(\mu_t,\mu_{t+1})
\le
K_p\Phi_t^{1/p}.
\]
Combining with \eqref{eq:Phi_decay} gives
\begin{equation}\label{eq:Wp_step_decay}
W_p(\mu_t,\mu_{t+1})
\le
K_p(1-2^{-2p})^{t/p}\Phi_0^{1/p}.
\end{equation}
By the triangle inequality for $W_p$,
\[
W_p(\mu_0,\mu_T)\leq \sum_{t=0}^{T-1}W_p(\mu_t,\mu_{t+1})
\le
K_p\Phi_0^{1/p}\sum_{t=0}^{T-1}(1-2^{-2p})^{t/p}.
\]
Letting $T\to\infty$ and summing the geometric series yields
\[
\sum_{t=0}^{\infty}(1-2^{-2p})^{t/p}
=
\frac{1}{1-(1-2^{-2p})^{1/p}}.
\]
Since $\Phi_t\to 0$, we have $x^{(t)}_I\to 0$ in $L^{2p}$ when $I$ is uniform on $\{1,\dots,2n\}$.
Couple $\mu_t$ with $\mathcal{N}(0,I_d)$ by taking $G\sim \mathcal{N}(0,I_d)$ and
\[
Z_t:=x^{(t)}_I+ G,\qquad Z_\infty:= G,
\]
so that $Z_t\sim \mu_t$ and $Z_\infty\sim \g$. Then
\[
W_p(\mu_t,\g)
\leq (\mathbb{E}\|Z_t-Z_\infty\|^p)^{1/p}
= (\mathbb{E}\|x^{(t)}_I\|^p)^{1/p}
\leq (\mathbb{E}\|x^{(t)}_I\|^{2p})^{1/(2p)}
=\Phi_t^{1/(2p)}\to 0.
\]
Therefore $\mu_t\to \g$ in $W_p$. Letting $T\to\infty$ in the bound for
$W_p(\mu_0,\mu_T)$ yields \eqref{eq:final_bound_corrected}.
\end{proof}

\begin{remark}
Taking $p=2$ in \eqref{eq:final_bound_corrected} gives {$C_{2} = K_{2} = 1$ and hence}
\[
W_2\left(\frac{1}{2n}\sum_{i=1}^{2n}\mathcal{N}(x_i, I_d),\g\right)
\le
\frac{1}{1-(1-2^{-4})^{1/2}}
\left(\frac{1}{2n}\sum_{i=1}^{2n}\|x_i\|^{4}\right)^{1/2} \leq 32\left(\frac{1}{2n}\sum_{i=1}^{2n}\|x_i\|^{4}\right)^{1/2}.
\]
\end{remark}
\convGeneral*

\begin{proof}
Let $X_1,X_2,\dots$ be i.i.d.\ with law $\mu$ and define for each $n\geq 1$
\[
\bar X_{2n}:=\frac{1}{2n}\sum_{i=1}^{2n}X_i,
\qquad
x^{(n)}_i:=X_i-\bar X_{2n}\quad (1\leq i\leq 2n),
\qquad
\mu^{(n)}:=\frac{1}{2n}\sum_{i=1}^{2n}\delta_{x^{(n)}_i}.
\]
Then $\sum_{i=1}^{2n}x^{(n)}_i=0$, so Theorem~\ref{thm:chaining} applies to the Gaussian mixture
$\mu^{(n)}*\g=\frac{1}{2n}\sum_{i=1}^{2n}\mathcal{N}(x^{(n)}_i, I_d)$.
By Theorem~\ref{thm:chaining},
\begin{equation}\label{eq:apply_chaining_atomic}
W_p(\mu^{(n)}*\g,\g)
\le
\frac{K_p}{1-\left(1-2^{-2p}\right)^{1/p}}\left(\frac{1}{2n}\sum_{i=1}^{2n}\|x^{(n)}_i\|^{2p}\right)^{1/p}.
\end{equation}
First, we note that convolution is $1$-Lipschitz in $W_p$, i.e., for any $\alpha,\beta$,
\[
W_p(\alpha*\g,\beta*\g)\leq W_p(\alpha,\beta),
\]
by coupling $\alpha,\beta$ optimally and adding the same $\mathcal{N}(0, I_d)$ noise. Hence
\begin{equation}\label{eq:conv_contraction}
W_p(\mu^{(n)}*\g,\mu*\g)\leq W_p(\mu^{(n)},\mu).
\end{equation}
Now write $\tilde\mu^{(n)}:=\frac{1}{2n}\sum_{i=1}^{2n}\delta_{X_i}$. Then
$W_p(\mu^{(n)},\tilde\mu^{(n)})\leq \|\bar X_{2n}\|$, and
by the strong law and the standard fact that $\tilde\mu^{(n)}\to\mu$ in $W_p$ a.s.\ under a finite $p$-moment,
we get $W_p(\mu^{(n)},\mu)\to 0$ a.s. Therefore by \eqref{eq:conv_contraction},
\begin{equation}\label{eq:Wp_conv_limit}
W_p(\mu^{(n)}*\g,\mu*\g)\to 0
\qquad\text{a.s.}
\end{equation}
Next, since $\bar X_{2n}\to 0$ a.s.\ and $\mathbb{E}\|X\|^{2p}<\infty$, one has
\[
\frac{1}{2n}\sum^{2n}_{i=1}\|x^{(n)}_i\|^{2p}
=\frac{1}{2n}\sum_{i=1}^{2n}\|X_i-\bar X_{2n}\|^{2p}
\to \mathbb{E}\|X\|^{2p}
=\int\|x\|^{2p}\mu(dx)
\qquad\text{a.s.}
\]
Finally, by the triangle inequality and \eqref{eq:apply_chaining_atomic},
\[
W_p(\mu*\g,\g)
\le
W_p(\mu*\g,\mu^{(n)}*\g)+W_p(\mu^{(n)}*\g,\g).
\]
Let $n\to\infty$ and use \eqref{eq:Wp_conv_limit} and the $2p$-moment convergence in \eqref{eq:apply_chaining_atomic}
to obtain \eqref{eq:general_mu_convolution_bound_main}.
\end{proof}

\section{A stochastic-localization proof of a Gaussian convolution bound for $p\geq 2$}
\label{app:stochastic-localization-wp}

In this appendix, we give an alternative proof of a Gaussian convolution inequality for $W_p$, $p\geq 2$, based on the planted, or Bayesian, form of stochastic localization \cite{ChenEldanSL}. This argument does not replace Theorem~\ref{thm:conv_general}, since the latter also covers $1\leq p<2$, which is important for the $W_1$ applications. For $p\geq 2$, however, the stochastic-localization argument gives a sharper dimension-free estimate with a simple constant.

We first recall a standard consequence of the Benamou--Brenier formulation of optimal transport \cite{benamou2000computational}; see also \cite[Chapter 8]{ambrosio2008gradient} or \cite[Chapter 7]{villani2009optimal} for the formulation in terms of absolutely continuous curves in Wasserstein space.

\begin{lemma}
\label{lem:ce-length}
Let $p>1$, let $(\rho_t)_{0\leq t\leq 1}$ be a narrowly continuous curve in $\mathcal P_p(\mathbb{R}^d)$, and let $v:(0,1)\times\mathbb{R}^d\to\mathbb{R}^d$ be a Borel vector field, with $v_t=v(t,\cdot)$, such that
\[
\int_0^1
\left(
\int_{\mathbb{R}^d}\|v_t(x)\|^p\rho_t(dx)
\right)^{1/p}dt
<\infty.
\]
Assume that, for every $\varphi\in C_c^\infty(\mathbb{R}^d)$, the map
\[
t\longmapsto \int_{\mathbb{R}^d}\varphi(x)\rho_t(dx)
\]
is absolutely continuous and satisfies
\[
\frac{d}{dt}\int_{\mathbb{R}^d}\varphi(x)\rho_t(dx)
=
\int_{\mathbb{R}^d}
\langle\nabla\varphi(x),v_t(x)\rangle\rho_t(dx)
\]
for a.e. $t\in(0,1)$. Then
\[
W_p(\rho_1,\rho_0)
\leq
\int_0^1
\left(
\int_{\mathbb{R}^d}\|v_t(x)\|^p\rho_t(dx)
\right)^{1/p}dt.
\]
\end{lemma}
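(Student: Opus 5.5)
The plan is to build the transport map directly from the curve, via the flow of $v$. Assume first that $v$ is smooth, say locally Lipschitz in $x$ with enough integrability to generate a well-defined flow $X_t$ solving $\dot X_t=v_t(X_t)$ with $X_0\sim\rho_0$.

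By uniqueness for the continuity equation with a Lipschitz field, $\mathrm{Law}(X_t)=\rho_t$. Then $(X_0,X_1)$ is a coupling of $(\rho_0,\rho_1)$. Using $X_1-X_0=\int_0^1 v_t(X_t)\,dt$ and Minkowski's integral inequality in $L^p(\mathbb{P})$, we get
\[
W_p(\rho_1,\rho_0)\leq \Big\|\int_0^1 v_t(X_t)\,dt\Big\|_{L^p}\leq \int_0^1 \big(\mathbb{E}\|v_t(X_t)\|^p\big)^{1/p}\,dt,
\]
and the right-hand side equals the claimed bound.

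For a general Borel field, the flow need not exist, and I would use the superposition principle instead (Ambrosio--Gigli--Savar\'e, Theorem 8.2.1). Under the integrability hypothesis, $\int_0^1\int\|v_t\|\,d\rho_t\,dt\leq \int_0^1(\int\|v_t\|^p\,d\rho_t)^{1/p}dt<\infty$. The superposition principle then gives a probability measure $\boldsymbol\eta$ on $C([0,1];\mathbb{R}^d)$ concentrated on absolutely continuous curves $\gamma$ with $\dot\gamma(t)=v_t(\gamma(t))$ for a.e. $t$, and with $(e_t)_\#\boldsymbol\eta=\rho_t$.

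Taking $(e_0,e_1)_\#\boldsymbol\eta$ as the coupling, the same Minkowski estimate applies. We have $\|\gamma(1)-\gamma(0)\|\leq\int_0^1\|v_t(\gamma(t))\|\,dt$, and hence
\[
\Big(\int\|\gamma(1)-\gamma(0)\|^p\,d\boldsymbol\eta\Big)^{1/p}\leq \int_0^1\Big(\int\|v_t(\gamma(t))\|^p\,d\boldsymbol\eta\Big)^{1/p}dt=\int_0^1\Big(\int\|v_t\|^p\,d\rho_t\Big)^{1/p}dt.
\]
Here the last equality follows from $(e_t)_\#\boldsymbol\eta=\rho_t$, and Fubini is justified because everything is nonnegative.

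An alternative route is the metric derivative characterisation. By AGS Theorem 8.3.1, the curve is absolutely continuous in $W_p$ with $|\dot\rho|(t)\leq\|v_t\|_{L^p(\rho_t)}$, and $W_p(\rho_0,\rho_1)\leq\int_0^1|\dot\rho|(t)\,dt$. I would cite this as the clean version of the argument.

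The main obstacle is verifying the hypotheses of the superposition principle, or of Theorem 8.3.1. Those results are stated with $\int_0^1\|v_t\|^p_{L^p(\rho_t)}\,dt<\infty$ (the $L^p$-in-time condition), whereas we only assume the $L^1$-in-time condition $\int_0^1\|v_t\|_{L^p(\rho_t)}\,dt<\infty$. I would handle this by reparametrising time. Set $s(t)\propto\int_0^t(\|v_r\|_{L^p(\rho_r)}+\epsilon)\,dr$, which is strictly increasing and maps $[0,1]$ onto $[0,1]$. The rescaled curve $\tilde\rho_s=\rho_{t(s)}$ then solves the continuity equation with the field $\tilde v_s=t'(s)\,v_{t(s)}$. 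This field has bounded $L^p(\tilde\rho_s)$ norm, so the theorem applies to it.

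The length $\int\|\tilde v_s\|_{L^p(\tilde\rho_s)}\,ds$ is invariant under the change of variables. Letting $\epsilon\to0$ then gives the claim. Narrow continuity of $\rho_t$ together with the weak formulation ensures that the time-changed curve still satisfies the distributional continuity equation.
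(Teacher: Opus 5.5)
Your proof is correct. Your ``alternative route'' is exactly the paper's proof: the paper notes that the hypotheses say $(\rho_t,v_t)$ solves the continuity equation with $\|v_t\|_{L^p(\rho_t)}\in L^1(0,1)$. It then invokes the converse part of Theorem~8.3.1 of Ambrosio--Gigli--Savar\'e to get $|\rho'|(t)\le\|v_t\|_{L^p(\rho_t)}$ a.e., and integrates.

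Your main route is genuinely different. You build an explicit coupling $(e_0,e_1)_\#\boldsymbol{\eta}$ from the superposition principle and apply Minkowski's integral inequality along characteristics. This bypasses metric derivatives and makes the mechanism transparent; the same argument with $(e_s,e_t)$ also gives absolute continuity of the curve in $W_p$. The price is a heavier theorem, whose AGS statement (Theorem~8.2.1) asks for $\int_0^1\int\|v_t\|^p\,d\rho_t\,dt<\infty$. So on your route the time reparametrisation is genuinely needed, and you carry it out correctly: the rescaled field satisfies $\|\tilde v_s\|_{L^p(\tilde\rho_s)}\le c$, and the weak formulation transfers by the standard time-rescaling lemma.

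Two small corrections. First, the converse of Theorem~8.3.1 is already stated under the $L^1$-in-time condition $\|v_t\|_{L^p(\rho_t)}\in L^1$, which is exactly the lemma's hypothesis. The paper can therefore cite it directly, with no reparametrisation. Second, the limit $\epsilon\to0$ is superfluous. For each fixed $\epsilon>0$ the endpoints are unchanged ($\tilde\rho_0=\rho_0$, $\tilde\rho_1=\rho_1$), and $\int_0^1\|\tilde v_s\|_{L^p(\tilde\rho_s)}\,ds=\int_0^1\|v_t\|_{L^p(\rho_t)}\,dt$ holds exactly by the change of variables.

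Your smooth-flow warm-up plays no role in the final argument. If you wanted to avoid superposition altogether, the standard alternative is to mollify $\rho_t$ and $v_t\rho_t$, so that Jensen gives $\|v^\varepsilon_t\|_{L^p(\rho^\varepsilon_t)}\le\|v_t\|_{L^p(\rho_t)}$. You would then run your flow argument on the regularised field and let $\varepsilon\to0$.
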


\begin{proof}
The assumptions state that $(\rho_t,v_t)$ solves the continuity equation
\[
\partial_t\rho_t+\nabla\cdot(\rho_tv_t)=0
\]
in the distributional sense, with integrable $L^p(\rho_t)$ velocity. The continuity-equation characterization of absolutely continuous curves in Wasserstein space therefore shows that $(\rho_t)_{0\leq t\leq1}$ is absolutely continuous in $\mathcal P_p(\mathbb{R}^d)$ and that its metric derivative satisfies
\[
|\rho'|(t)
\leq
\left(
\int_{\mathbb{R}^d}\|v_t(x)\|^p\rho_t(dx)
\right)^{1/p}
\]
for a.e. $t\in(0,1)$. Integrating this estimate gives the result; see \cite[Theorem~8.3.1]{ambrosio2008gradient}.
\end{proof}

We next establish the martingale estimate used below.

\begin{lemma}
\label{lem:vector-martingale-estimate}
Let $p\geq2$, let $W$ be a standard Brownian motion in $\mathbb{R}^d$, and let $H$ be a predictable $\mathbb{R}^{d\times d}$-valued process satisfying
\[
\int_0^T\|H_s\|_{\mathrm{HS}}^2ds<\infty
\qquad\text{a.s.}
\]
For
\[
M_T:=\int_0^T H_s\,dW_s,
\]
we have
\[
\|M_T\|_{L^p}^2
\leq
(p-1)\int_0^T\|H_s\|_{L^p(\mathrm{HS})}^2ds,
\]
where
\[
\|H_s\|_{L^p(\mathrm{HS})}
:=
\left(\mathbb{E}\|H_s\|_{\mathrm{HS}}^p\right)^{1/p},
\qquad
\|H_s\|_{\mathrm{HS}}^2
=\operatorname{Tr}(H_sH_s^\top).
\]
\end{lemma}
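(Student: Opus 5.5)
We must prove the vector BDG-type bound $\|M_T\|_{L^p}^2\le (p-1)\int_0^T\|H_s\|_{L^p(\mathrm{HS})}^2ds$ for $p\ge2$. The plan is to apply It\^o's formula to $f(x)=\|x\|^p$ (smooth for $p\ge2$, with $\nabla f(x)=p\|x\|^{p-2}x$ and $\nabla^2 f(x)=p\|x\|^{p-2}I_d+p(p-2)\|x\|^{p-4}xx^\top$), obtain a differential inequality for $\phi(t):=\mathbb{E}\|M_t\|^p$, and then integrate it in the variable $\phi^{2/p}$.

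\textbf{Step 1: localisation.} First reduce to bounded integrands. If $\int_0^T\|H_s\|_{L^p(\mathrm{HS})}^2ds=\infty$ there is nothing to prove. Otherwise introduce stopping times $\tau_n:=\inf\{t:\|M_t\|\ge n\ \text{or}\ \int_0^t\|H_s\|_{\mathrm{HS}}^2ds\ge n\}\wedge T$. Work with $M_{t\wedge\tau_n}$, so that the stochastic integral term in It\^o's formula is a true martingale. Conclude at the end by Fatou's lemma as $n\to\infty$, noting that the bound applied to $H\mathbf 1_{[0,\tau_n]}$ is dominated by the bound for $H$.

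\textbf{Step 2: It\^o's formula and the trace bound.} It\^o's formula gives
\[
\mathbb{E}\|M_t\|^p=\frac12\int_0^t\mathbb{E}\operatorname{Tr}\big(\nabla^2f(M_s)H_sH_s^\top\big)ds .
\]
Using $xx^\top\preceq\|x\|^2I_d$ and $H_sH_s^\top\succeq0$, the integrand satisfies
\[
\operatorname{Tr}(\nabla^2 f(x)HH^\top)\le p(p-1)\|x\|^{p-2}\|H\|_{\mathrm{HS}}^2 .
\]
Then apply H\"older's inequality with exponents $p/(p-2)$ and $p/2$ (for $p=2$ this step is trivial and in fact gives the It\^o isometry):
\[
\phi'(s)\le\frac{p(p-1)}{2}\,\phi(s)^{(p-2)/p}\,\|H_s\|_{L^p(\mathrm{HS})}^2 .
\]
Under the localisation, $\phi$ is absolutely continuous.

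\textbf{Step 3: integrating the inequality (the main obstacle).} The delicate point is dividing by $\phi^{(p-2)/p}$ where $\phi$ may vanish. To handle it, set $\psi_\varepsilon(t):=(\phi(t)+\varepsilon)^{2/p}$. Then
\[
\psi_\varepsilon'=\frac2p(\phi+\varepsilon)^{(2-p)/p}\phi'\le(p-1)\|H_t\|_{L^p(\mathrm{HS})}^2,
\]
using $\phi^{(p-2)/p}\le(\phi+\varepsilon)^{(p-2)/p}$. Integrate from $0$ to $T$ and use $\phi(0)=0$, which gives $\psi_\varepsilon(T)\le\varepsilon^{2/p}+(p-1)\int_0^T\|H_s\|^2_{L^p(\mathrm{HS})}ds$. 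Letting $\varepsilon\to0$ yields $\|M_{T\wedge\tau_n}\|_{L^p}^2\le(p-1)\int_0^T\|H_s\|_{L^p(\mathrm{HS})}^2ds$, and Fatou's lemma completes the proof.
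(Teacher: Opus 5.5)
Your proposal is correct and follows essentially the same route as the paper: It\^o's formula for $\|x\|^p$ with the Hessian bound $\nabla^2 f(x)\preceq p(p-1)\|x\|^{p-2}I_d$, H\"older's inequality, and integration of the differential inequality through the regularised quantity $(\phi+\varepsilon)^{2/p}$. The only difference is the localisation. You stop the process at $\tau_n$, which directly handles the merely almost-sure integrability of $H$; your H\"older step still works because $\mathbb{E}[\mathbf 1_{\{s\le\tau_n\}}\|M_s\|^{p}]\le\phi(s)$. The paper instead first treats bounded $H$, then truncates to $H\mathbf 1_{\{\|H_s\|_{\mathrm{HS}}\le m\}}$ and passes to the limit via It\^o's isometry, an a.s.\ convergent subsequence and Fatou. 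One small correction: $\|x\|^p$ is only $C^2$, not smooth, at the origin for general $p\ge 2$, and $C^2$ is all that It\^o's formula requires.
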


\begin{proof}
For $p=2$, the assertion follows directly from It\^o's isometry. Suppose henceforth that $p>2$, assume first that $H$ is bounded, and set
\[
M_t:=\int_0^tH_s\,dW_s,
\qquad
U(t):=\mathbb{E}\|M_t\|^p.
\]
For $m\neq0$,
\[
D^2(\|\cdot\|^p)(m)
=
p\|m\|^{p-2}I_d
+p(p-2)\|m\|^{p-4}mm^\top
\preceq
p(p-1)\|m\|^{p-2}I_d.
\]
The function $m\mapsto\|m\|^p$ is $C^2$ at the origin, with Hessian zero there. Thus the preceding upper bound extends to $m=0$. It\^o's formula therefore shows that $U$ is absolutely continuous and, for a.e. $t\in(0,T)$,
\[
U'(t)
\leq
\frac{p(p-1)}{2}
\mathbb{E}\left[\|M_t\|^{p-2}\|H_t\|_{\mathrm{HS}}^2\right].
\]
H\"older's inequality gives
\[
U'(t)
\leq
\frac{p(p-1)}{2}
U(t)^{(p-2)/p}
\left(\mathbb{E}\|H_t\|_{\mathrm{HS}}^p\right)^{2/p}.
\]
For $\varepsilon>0$, define
\[
V_\varepsilon(t):=(U(t)+\varepsilon)^{2/p}.
\]
Then, for a.e. $t\in(0,T)$,
\begin{align*}
V_\varepsilon'(t)
&\leq
(p-1)
\left(\frac{U(t)}{U(t)+\varepsilon}\right)^{(p-2)/p}
\|H_t\|_{L^p(\mathrm{HS})}^2\\
&\leq
(p-1)\|H_t\|_{L^p(\mathrm{HS})}^2.
\end{align*}
Since $U(0)=0$, integration over $[0,T]$ yields
\[
(U(T)+\varepsilon)^{2/p}-\varepsilon^{2/p}
\leq
(p-1)\int_0^T\|H_t\|_{L^p(\mathrm{HS})}^2dt.
\]
Letting $\varepsilon\downarrow0$ proves the estimate for bounded $H$.

We now remove the boundedness assumption. Since $p\geq2$, H\"older's inequality and Tonelli's theorem give
\[
\mathbb{E}\int_0^T\|H_s\|_{\mathrm{HS}}^2ds
\leq
\int_0^T\|H_s\|_{L^p(\mathrm{HS})}^2ds
<\infty.
\]
For $m\geq1$, set
\[
H_s^m
:=
H_s\mathbf 1_{\{\|H_s\|_{\mathrm{HS}}\leq m\}},
\qquad
M_T^m:=\int_0^T H_s^m\,dW_s.
\]
Then $H^m$ is bounded and predictable, and It\^o's isometry gives
\[
\mathbb{E}\|M_T^m-M_T\|^2
=
\mathbb{E}\int_0^T
\|H_s\|_{\mathrm{HS}}^2
\mathbf 1_{\{\|H_s\|_{\mathrm{HS}}>m\}}ds
\longrightarrow0.
\]
After passing to an almost surely convergent subsequence, Fatou's lemma on the left-hand side of the bounded estimate and monotone convergence on its right-hand side yield
\[
\|M_T\|_{L^p}^2
\leq
(p-1)\int_0^T\|H_s\|_{L^p(\mathrm{HS})}^2ds,
\]
as required.
\end{proof}

We now prove the convolution bound via a stochastic localization process known as the Bayesian, or planted, version of Eldan's stochastic localization \cite{eldan2013thin}; see also \cite{lee2017eldans}.

\begin{theorem}
\label{thm:stochastic-localization-wp}
Let $d\geq1$, $p\geq2$, and let $\mu\in\mathcal P(\mathbb{R}^d)$ be centred with
\[
\int_{\mathbb{R}^d}\|x\|^{2p}\mu(dx)<\infty.
\]
Let $\g=\mathcal N(0,I_d)$, then
\[
W_p(\mu*\g,\g)
\leq
\frac{\sqrt{p-1}}{2}
\left(
\int_{\mathbb{R}^d}\|x\|^{2p}\mu(dx)
\right)^{1/p}.
\]
\end{theorem}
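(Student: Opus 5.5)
The plan is to connect $\g$ to $\mu*\g$ by the linear interpolation of the signal, control the length of this curve in $W_p$ with Lemma~\ref{lem:ce-length}, and bound the velocity field with the planted stochastic localization martingale together with Lemma~\ref{lem:vector-martingale-estimate}.

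\textbf{The curve and its velocity.} Let $X\sim\mu$ and $G\sim\g$ be independent. For $t\in[0,1]$ set $Z_t:=tX+G$ and $\rho_t:=\mathrm{Law}(Z_t)$, so that $\rho_0=\g$ and $\rho_1=\mu*\g$. The curve is narrowly continuous and stays in $\mathcal P_p$. For $\varphi\in C_c^\infty$, differentiating under the expectation gives
\[
\frac{d}{dt}\mathbb{E}\varphi(Z_t)=\mathbb{E}\langle\nabla\varphi(Z_t),X\rangle=\mathbb{E}\langle\nabla\varphi(Z_t),v_t(Z_t)\rangle,
\qquad v_t(z):=\mathbb{E}[X\mid Z_t=z].
\]
Hence $(\rho_t,v_t)$ solves the continuity equation. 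Lemma~\ref{lem:ce-length} then yields
\[
W_p(\mu*\g,\g)\le\int_0^1\big\|\mathbb{E}[X\mid Z_t]\big\|_{L^p}\,dt,
\]
provided the integrand is integrable, which the bound below shows.

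\textbf{Stochastic localization.} Let $B$ be a standard Brownian motion independent of $X$, and set $Y_s:=sX+B_s$. Since $Y_s/\sqrt s\stackrel d=\sqrt s X+G$ jointly with $X$, we have
\[
\mathbb{E}[X\mid Z_t]\stackrel d=m_{t^2},\qquad m_s:=\mathbb{E}[X\mid \mathcal F^Y_s].
\]
This is the planted stochastic localization of \cite{eldan2013thin,ChenEldanSL}. By standard nonlinear filtering, $m_s$ is a martingale with $m_0=\mathbb{E}X=0$ and
\[
dm_s=A_s\,d\bar W_s,\qquad A_s=\mathrm{Cov}(X\mid\mathcal F^Y_s),\qquad \bar W_s=Y_s-\int_0^s m_r\,dr,
\]
where the innovation process $\bar W$ is an $\mathcal F^Y$-Brownian motion. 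Lemma~\ref{lem:vector-martingale-estimate} gives
\[
\|m_s\|_{L^p}^2\le(p-1)\int_0^s\|A_r\|_{L^p(\mathrm{HS})}^2\,dr.
\]

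\textbf{Bounding the covariance term.} Since $0\preceq A_r\preceq\mathbb{E}[XX^\top\mid\mathcal F^Y_r]$, we get
\[
\|A_r\|_{\mathrm{HS}}\le\operatorname{tr}A_r\le\mathbb{E}[\|X\|^2\mid\mathcal F^Y_r].
\]
Conditional Jensen then gives $\|A_r\|_{L^p(\mathrm{HS})}\le(\mathbb{E}\|X\|^{2p})^{1/p}=:M$, uniformly in $r$. Therefore
\[
\|m_{t^2}\|_{L^p}\le\sqrt{p-1}\,tM,
\qquad\text{so}\qquad
W_p(\mu*\g,\g)\le\sqrt{p-1}\,M\int_0^1t\,dt=\frac{\sqrt{p-1}}{2}M,
\]
which is the claim.

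\textbf{Main obstacle.} The delicate step is rigorously justifying the filtering SDE $dm_s=A_s\,d\bar W_s$ under only a finite $2p$-th moment, with $\mu$ possibly non-compactly supported. This includes that $\bar W$ is a Brownian motion and that $A$ is predictable with the integrability Lemma~\ref{lem:vector-martingale-estimate} needs. I would first prove the result for compactly supported centred $\mu$, where the Kallianpur--Striebel formula makes every step explicit. I would then pass to general $\mu$ by truncating and recentering, as in the proof of Theorem~\ref{thm:conv_general}. Under this approximation $W_p(\cdot*\g,\g)$ is continuous (since convolution with $\g$ is $1$-Lipschitz in $W_p$), and the $2p$-th moments converge. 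The behaviour near $t=0$ is harmless, because the bound $\sqrt{p-1}\,tM$ is integrable.
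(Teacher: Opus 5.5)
Your proposal is correct and follows essentially the same route as the paper. Both arguments bound the length of the interpolation $tX+G$ via Lemma~\ref{lem:ce-length} and identify its velocity in law with the planted-localization posterior mean at time $t^2$. Both then apply Lemma~\ref{lem:vector-martingale-estimate} with the covariance bound $\|A_r\|_{\mathrm{HS}}\le\operatorname{tr}A_r\le\mathbb{E}[\|X\|^2\mid\mathcal F_r]$, and pass from the bounded case to the general case by truncation and recentring.

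The one step you leave implicit is $\mathbb{E}[X\mid\mathcal F^Y_{t^2}]=\mathbb{E}[X\mid Y_{t^2}]$, that is, that $Y_s$ is sufficient for the observation path. This is immediate from the explicit posterior density $\propto e^{\langle Y_s,x\rangle-s\|x\|^2/2}$ with respect to $\mu$, and the paper states it explicitly.
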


\begin{proof}
Let $X\sim\mu$ and first assume that $X$ is bounded. The general case is obtained at the end by truncation. Let $(W_s)_{s\geq0}$ be a standard Brownian motion in $\mathbb{R}^d$, independent of $X$, and define the Gaussian observation process
\[
S_s=sX+W_s,
\qquad s\geq0.
\]
For $s\geq0$, let $\mathcal F_s:=\sigma(S_u:0\leq u\leq s)$, then the posterior law of $X$ given $\mathcal F_s$ is
\[
\mu_s(dx) := \mathbb{P}(X\in dx\mid\mathcal F_s) = \frac{ \exp\left(\langle S_s,x\rangle-\frac{s}{2}\|x\|^2\right)}{\displaystyle \int_{\mathbb{R}^d} \exp\left(\langle S_s,z\rangle-\frac{s}{2}\|z\|^2\right)\mu(dz)}\mu(dx).
\]
We note that $S_{s}$ is a sufficient statistic for the observation path up to time $s$. Define the posterior mean and covariance by
\[
a_s
:=
\int_{\mathbb{R}^d}x\mu_s(dx)
=
\mathbb{E}[X\mid\mathcal F_s],
\qquad
A_s
:=
\int_{\mathbb{R}^d}(x-a_s)(x-a_s)^\top\mu_s(dx).
\]
Since $\mu$ is centred, $a_0=0$. 

We next derive the stochastic differential equation for $a_s$. For bounded measurable $f$, set
\[
L_s(x) := \exp\left(\langle S_s,x\rangle-\frac{s}{2}\|x\|^2 \right), \qquad Q_s(f) := \int_{\mathbb{R}^d}f(x)L_s(x)\mu(dx),
\]
then $\mu_s(f)=Q_s(f)/Q_s(1)$. Since $[S]_s=sI_d$, It\^o's formula gives
\[
dL_s(x)=L_s(x)\langle x,dS_s\rangle.
\]
Consequently,
\[
dQ_s(f)=Q_s(fx)\cdot dS_s,
\qquad
dQ_s(1)=Q_s(x)\cdot dS_s,
\]
where $Q_s(fx)$ denotes the vector with coordinates $Q_s(fx_i)$. The quotient rule yields
\[
d\mu_s(f)
=
\big(\mu_s(fx)-\mu_s(f)a_s\big)
\cdot(dS_s-a_sds).
\]

Define the innovation process
\[
B_s:=S_s-\int_0^s a_u\,du.
\]
We proceed by verifying that $B$ is an $(\mathcal F_s)$-Brownian motion. For $u\geq s$,
\[
B_u-B_s
=
\int_s^u(X-a_r)dr+(W_u-W_s).
\]
Since $a_r=\mathbb{E}[X\mid\mathcal F_r]$ is an $(\mathcal F_r)$-martingale, conditional Fubini gives
\[
\mathbb{E}\left[
\int_s^u(X-a_r)dr
\,\middle|\,
\mathcal F_s
\right]
=
\int_s^u
\left(a_s-\mathbb{E}[a_r\mid\mathcal F_s]\right)dr
=0.
\]
Moreover,
\[
\mathbb{E}[W_u-W_s\mid\mathcal F_s]=0,
\]
since $\mathcal F_s\subseteq\sigma(X,W_r:0\leq r\leq s)$. Hence the preceding calculation shows that $B$ is a continuous $(\mathcal F_s)$-martingale. Since $[B]_s=[S]_s=sI_d$, L\'evy's characterization implies that $B$ is an $(\mathcal F_s)$-Brownian motion.

Therefore
\[
d\mu_s(f)=\big(\mu_s(fx)-\mu_s(f)a_s\big)\cdot dB_s.
\]
Taking $f(x)=x_i$, $i=1,\ldots,d$, yields the localization martingale identity
\[
da_s=A_sdB_s,\qquad a_0=0.
\]

We now estimate $a_s$; via Lemma~\ref{lem:vector-martingale-estimate} we have
\[
\|a_s\|_{L^p}^2
\leq
(p-1)\int_0^s \|A_u\|_{L^p(\mathrm{HS})}^2du .
\]
Since $A_u\succeq0$,
\[
\|A_u\|_{\mathrm{HS}}
\leq
\operatorname{Tr}(A_u)
=
\mathbb E[\|X-a_u\|^2\mid\mathcal F_u]
=
\mathbb E[\|X\|^2\mid\mathcal F_u]-\|a_u\|^2
\le
\mathbb E[\|X\|^2\mid\mathcal F_u].
\]
By conditional Jensen's inequality,
\[
\|A_u\|_{L^p(\mathrm{HS})}
\le
\left(\mathbb E\left[\mathbb E(\|X\|^2\mid\mathcal F_u)^p\right]\right)^{1/p}
\le
\left(\mathbb E\|X\|^{2p}\right)^{1/p}.
\]
Consequently,
\[
\|a_s\|_{L^p}
\le
\sqrt{p-1}\sqrt{s}
\left(\mathbb E\|X\|^{2p}\right)^{1/p}.
\]

We now connect this posterior-mean estimate to Gaussian convolution. Let $Z\sim \g$ be independent of $X$, and define
\[
Y_r:=Z+rX,\qquad \rho_r:=\mathcal L(Y_r),\qquad 0\leq r\leq 1,
\]
then $\rho_0=\g$ and $\rho_1=\mu*\g$. Now define
\[
v_r(y):=\mathbb E[X\mid Y_r=y],
\]
and note that for every $\varphi\in C_c^\infty(\mathbb R^d)$, boundedness of $X$ justifies differentiating under the expectation and gives
\[
\frac{d}{dr}\int_{\mathbb R^d}\varphi(y)\rho_r(dy)
=
\frac{d}{dr}\mathbb E\varphi(Z+rX)
=
\mathbb E\langle X,\nabla\varphi(Z+rX)\rangle
=
\int_{\mathbb R^d}\langle v_r(y),\nabla\varphi(y)\rangle\rho_r(dy).
\]
Thus $(\rho_r,v_r)$ solves the continuity equation and Lemma~\ref{lem:ce-length} gives
\[
W_p(\mu*\g,\g)
\le
\int_0^1
\left(\mathbb E\|\mathbb E[X\mid Y_r]\|^p\right)^{1/p}
dr.
\]
For $r>0$, set $s=r^2$, then since
\[
\frac{S_{r^2}}{r}
=
rX+\frac{W_{r^2}}{r},
\qquad
\frac{W_{r^2}}{r}\sim N(0,I_d),
\]
and $W_{r^2}/r$ is independent of $X$, the joint laws of $(X,S_{r^2}/r)$ and $(X,Y_r)$ coincide. By the sufficiency of $S_{r^2}$ for the observation path up to time $r^2$,
\[
\left(\mathbb E\|\mathbb E[X\mid Y_r]\|^p\right)^{1/p}
=
\|a_{r^2}\|_{L^p}.
\]
Combining the preceding estimates,
\[
W_p(\mu*\g,\g)
\le
\int_0^1 \|a_{r^2}\|_{L^p}dr
\le
\int_0^1
\sqrt{p-1}r
\left(\mathbb E\|X\|^{2p}\right)^{1/p}
dr,
\]
and hence
\[
W_p(\mu*\g,\g)
\le
\frac{\sqrt{p-1}}{2}
\left(\mathbb E\|X\|^{2p}\right)^{1/p}.
\]
This proves the theorem when $X$ is bounded.

We now remove the boundedness assumption. Let
\[
X_n:=X\mathbf 1_{\{\|X\|\leq n\}}
-\mathbb E\left[X\mathbf 1_{\{\|X\|\leq n\}}\right],
\qquad
\mu_n:=\mathcal L(X_n).
\]
Then $X_n$ is bounded and centred, and $X_n\to X$ in $L^{2p}$. The bounded case gives
\[
W_p(\mu_n*\g,\g)
\le
\frac{\sqrt{p-1}}{2}
\left(\mathbb E\|X_n\|^{2p}\right)^{1/p}.
\]
Coupling $X_n+Z$ and $X+Z$ with the same Gaussian $Z$ gives
\[
W_p(\mu_n*\g,\mu*\g)
\le
\left(\mathbb E\|X_n-X\|^p\right)^{1/p}
\longrightarrow 0.
\]
Also, since $X_n\to X$ in $L^{2p}$,
\[
\left(\mathbb E\|X_n\|^{2p}\right)^{1/p}
\longrightarrow
\left(\mathbb E\|X\|^{2p}\right)^{1/p},
\]
and passing to the limit proves the result.
\end{proof}

\end{appendices}
\end{document}